\DeclareMathAlphabet{\mathfrak}{U}{jkpmia}{m}{it}
\SetMathAlphabet{\mathfrak}{bold}{U}{jkpmia}{bx}{it}
\newcommand{\cube}{\ensuremath{\mathsf{Cube}}}
\colorlet{shadecolor}{orange!15}
\newcommand{\red}[1]{{\color{red} {#1}}}
\newcommand{\cyan}[1]{{\color{NavyBlue} {#1}}}
\newcommand{\var}{{\sf{var}}}
\newcommand{\fg}{\mathfrak{G}}
\newcommand{\fh}{\mathfrak{H}}
\newcommand{\fJ}{\mathfrak{J}}
\newcommand{\fD}{\mathfrak{D}}
\newcommand{\sU}{{{\mathsf{\overline{R}}}}}
\newcommand{\sR}{{{\mathsf{R}}}}
\newcommand{\sV}{{{\mathsf{V}}}}
\newcommand{\sY}{{\mathsf{Y}}}
\newcommand{\sX}{{{\mathsf{X}}}}
\newcommand{\sS}{{{\mathsf{S}}}}
\def\testeq{\algolink{Test-Equivalence}}
\def\inpool{\algolink{In-Pool}}
\newcommand{\He}{{\sf Heavy}}
\newcommand{\rd}{{\sf reldist}}
\newcommand{\SAMP}{\mathrm{SAMP}}
\newcommand{\MQ}{\mathrm{MQ}}
\newcommand\Algphase[1]{%
\vskip0.05in
\Statex\hspace*{-\algorithmicindent}\textbf{#1}%
\vskip0.05in
}
\newcommand{\pointdist}{\mathrm{dist}}
\newcommand{\Approximator}{\textsc{DNF-Approx}}
\newcommand{\algolink}[1]{\protect\hyperlink{#1}{\textbf{\color{violet}{#1}}}}
\newcommand{\SimSAMPA}{\algolink{Sim-SAMP-J}}
\newcommand{\Extract}{\algolink{Extract}}
\newcommand{\TrimCan}{\algolink{FindCandidate}}
\newcommand{\CheckLit}{\algolink{CheckLit}}
\newcommand{\SimMQA}{\algolink{Sim-MQ-J}}
\newcommand{\ConsCheck}{\algolink{ConsCheck}}
\newcommand{\SampleSub}{\algolink{Sim-SAMP}} 
\newcommand{\MQGamma}{\algolink{Sim-MQ-$\Gamma$}} 
\newcommand{\DNFLearner}{\algolink{DNFLearner}}
\newcommand{\FRB}{\algolink{Approximate}}
\newcommand{\CreateOracles}
{\algolink{Find-Factored-DNFs}}
\newcommand{\testdnf}{\algolink{Test-Factored-DNF}}
\newcommand{\Clustering}
{\textsf{Clustering}}
\newcommand{\bAA}{A}
\newcommand{\ConjTest}{\algolink{ConjTest}}
\newcommand{\UniformJunta}{\algolink{UniformJunta}}
\newcommand{\maindnf}{\algolink{Test-DNF}}
\author{Xi Chen \\ Columbia University \and William Pires \\ Columbia University \and Toniann Pitassi \\ Columbia University \and Rocco A. Servedio\\
Columbia University}
\title{DNF formulas are efficiently testable with relative error}
\begin{document}

\maketitle

\begin{abstract}
We give a $\poly(s,1/\eps)$-query algorithm for testing whether an unknown and arbitrary function $f: \zo^n \to \zo$ is an $s$-term DNF, in the challenging \emph{relative-error} framework for Boolean function property testing that was recently introduced and studied in a number of works \cite{CDHLNSY25,rel-error-conj-DL,rel-error-junta,rel-error-LTF}.  
This gives the first example of a rich and natural class of functions which may depend on a super-constant number of variables and yet is efficiently testable in the relative-error model with constant query complexity.

A crucial new ingredient enabling our approach is a novel decomposition of any $s$-term DNF formula into ``local clusters'' of terms. Our results demonstrate that this new decomposition can be usefully exploited for algorithms even when the $s$-term DNF is not explicitly given; we believe that this decomposition may have applications in other contexts.

\end{abstract}

\pagenumbering{gobble}

\newpage

\setcounter{tocdepth}{2}

\begin{spacing}{1}
\setlength{\cftbeforesecskip}{3pt}       
\setlength{\cftbeforesubsecskip}{0.75pt}    
\setlength{\cftbeforesubsubsecskip}{0pt} 

 {\small
      \tableofcontents}
\end{spacing}

\newpage

\renewcommand{\listfigurename}{List of Algorithms with simplified description (detailed descriptions are provided in the caption of each algorithm).}
\renewcommand{\figurename}{Algorithm}

\listoffigures
\newpage

\newpage

\pagenumbering{arabic}

\section{Introduction}

Property testing, {starting with the works of \cite{BLR93,RS96,GGR98}}, provides a general algorithmic framework for determining whether a given object satisfies a property or is far from satisfying it, by making a small number of queries to the object. 
The key idea and motivation is that for many combinatorial structures such as graphs, functions, or distributions, global properties of the structure can often be approximately tested by examining only a very tiny fraction of the input. One of the most intensively studied property testing domains is  Boolean function classes where efficient testers have been designed  for a broad range of function classes;  moreover, the techniques developed have  led to new insights and connections with other areas, including Fourier analysis of Boolean functions and learning theory.

In the {\it standard} %
testing model for a class of Boolean functions,  %
the tester is given query access to an input function $f$ and parameter $\epsilon$, and should distinguish between $f$ belonging to the class, versus $f$ having Hamming distance at least $\epsilon$ from every function in the class. 
(See \Cref{sec:related-work} below for more background.)
While property testing in the standard model has been incredibly  successful,
a significant drawback of the standard model is that it is poorly suited for testing \emph{sparse} Boolean functions, which have only a very small fraction of all $2^n$ points in $\zo^n$ as satisfying assignments, simply because any such function is trivially close to the constant-0 function.  This limitation --- that sparse functions are trivial to test under the uniform distribution --- arguably sidesteps much of the richness and complexity of a number of interesting Boolean function classes, including the fundamental class of \emph{$s$-term DNF formulas} which is the subject of this work.  For example, the class of $s$-term DNF formulas is notoriously challenging to deal with in the context of efficient PAC learning, and the source of difficulty is precisely the fact that DNF formulas may have terms that are ``wide'' (containing many variables).
However, %
$s$-term DNF formulas with only ``wide'' terms are sparse, and hence are trivial to test in the standard model since they are close to the constant-$0$ function.

In an effort to meaningfully extend the scope of property testing to encompass questions about testing sparse Boolean functions, \cite{CDHLNSY25} introduced a new variant of Boolean function property testing that is known as \emph{relative-error} testing.  
The relative-error testing model defines the distance from the unknown $f: \zo^n \to \zo$ that is being tested to another function $g: \zo^n \to \zo$ to be 
\[
\rd(f,g) := {\frac {|f^{-1}(1) \hspace{0.08cm} \triangle \hspace{0.08cm} g^{-1}(1)|}{|f^{-1}(1)|}}.
\]

In contrast to the standard model,
in the relative error model, the distance from $f$ to $g$ is measured as a function of $f$'s  support size, $f^{-1}(1)$, and not just globally across all $2^n$ inputs. 
This distinction is crucial in applications like feature selection, or fault detection where the underlying functions often describe rare but important events, or in large but sparse data domains.
Relative error property testing also connects property testing to several other areas where distances are measured in a relative sense e.g.,  
PAC learning under general distributions, noise stability and approximate counting.

Note that if only black-box $\MQ(f)$ queries were allowed, as in the standard model, then an enormous number of queries could be required to obtain any information about $f$ at all for very sparse $f$. Hence, the model also allows the testing algorithm to obtain independent uniform elements of $f^{-1}(1)$ from a sampling oracle $\SAMP(f)$. 
(See \Cref{sec:relative-error-testing} for a detailed description of the relative-error testing model.)

It is intuitively obvious and simple to show (see Fact~3.1 of \cite{CDHLNSY25}) that relative-error testing is at least as hard as standard-model testing.  
{Moreover, an easy example of an (artificial) class of functions {(see \cite{CDHLNSY25})} shows that relative-error testing can be hard, requiring $2^{\Omega(n)}$ calls to $\MQ(f)$ or $\SAMP(f)$, even for classes that are easily testable with $O(1)$ queries in the standard model.}  However, it is \emph{a priori} unclear how easy (or hard) it is to do relative-error testing for natural function classes of interest.

\subsection{Our Main Result: Relative-Error Testing for DNFs} 

In this paper we focus on the class of $s$-term DNF formulas; the class of all functions $f: \zo^n \to \zo$ that can be expressed as $f=T_1 \vee \cdots \vee T_{s'}$ for some $s' \leq s$, where each $T_i$ is a conjunction of Boolean literals.
DNF formulas are a rich and intensively studied class of functions with a long history in logic and theoretical computer science.  As a universal model that can express any Boolean function, DNF formulas are one of the most important classes of Boolean functions, central in the study of Boolean function analysis, learning theory, computational complexity, and testing.

DNFs are also particularly interesting from the vantage point of relative-error property testing.
On the one hand, $s$-term DNF formulas are a highly expressive  class which unlike juntas (and subclasses of juntas), can depend on a superconstant number of variables.
On the other hand, they appear to be quite challenging in the relative-error testing model compared to the standard model.
In the standard-error model, $s$-term DNFs are testable with only
$\tilde{O}(s/\eps)$ queries (\cite{Bshouty20,DLM+:07,CGM11}).
In contrast, 
 a straightforward consequence of previous work in learning theory gives a relative-error tester for $s$-term DNFs that makes 
$n^{O(\log(s/\eps))}$ queries. (See Appendix \ref{ap:DDS} for details.)

A key challenge in improving relative-error testing of DNFs is the lack of smoothness: small changes to a DNF such as deleting a single variable from a term can cause large changes in the number of satisfying assignments\footnote{For example, deleting a single variable from a term can potentially double the number of satisfying assignments of the DNF.}, complicating both testing and approximation.
Understanding the complexity of DNF testing in the relative error model thus provides a window into how combinatorial structure interacts with multiplicative accuracy guarantees, and whether efficient testers can exist in settings where absolute and relative notions of distance diverge.  

Our main result 
is a relative-error testing algorithm for DNF formulas with query complexity \emph{independent} of the ambient dimension $n$: 
\begin{theorem}[Relative-error testing of $s$-term DNF] \label{thm:main}
For $0<\eps\leq 1/2,$
there is an $\eps$-relative-error testing algorithm for the class of $s$-term DNF formulas over $\{0,1\}^n$ which makes $\poly(s,1/\eps)$ queries.

\end{theorem}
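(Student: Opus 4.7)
The plan is to leverage the ``local cluster'' decomposition highlighted in the abstract. The first step is a structural lemma asserting that every $s$-term DNF $f = T_1 \vee \cdots \vee T_s$ admits, after discarding terms that contribute at most an $\eps/s$ fraction of $f^{-1}(1)$, a decomposition of the form $f \approx \bigvee_{c=1}^{k} C_c$ with $k \leq s$ in which each cluster $C_c$ is a ``factored'' sub-DNF of the shape $L_c \wedge g_c$: here $L_c$ is a conjunction of literals common (in a strong probabilistic sense) to essentially every satisfying assignment of $C_c$, and $g_c$ is a small DNF in the remaining variables. The clusters should moreover be nearly disjoint, in that their satisfying sets partition $f^{-1}(1)$ up to an $\eps$ fraction. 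Such a decomposition is produced greedily by grouping terms whose satisfying sets overlap heavily and absorbing lightly-supported residual terms into the discard pile.

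Once the decomposition is in place, the tester proceeds in four stages. \emph{Cluster discovery}: draw $\poly(s,1/\eps)$ samples from $\SAMP(f)$; by a coupon-collector argument, every cluster of mass at least $\eps/s$ is hit. \emph{Core extraction}: for each sampled positive example $x$, use $\MQ(f)$ in the style of the relative-error conjunction tester of \cite{rel-error-conj-DL} to identify the literals of a candidate core $\hat L_c$, namely those coordinates whose value in $x$ is ``pinned'' by $f$ when randomly probed. \emph{Inner testing}: restricted to the subcube fixed by $\hat L_c$, the function $f|_{\hat L_c}$ should agree with a small DNF $\hat g_c$ on most of its positive inputs; since the cluster size is bounded, this inner task can be handled by a sub-tester on small DNFs using only $\poly(s,1/\eps)$ queries. \emph{Global consistency}: combine the pieces into $\hat f = \bigvee_c \hat L_c \wedge \hat g_c$ and certify $\rd(f,\hat f) \leq \eps$ by checking fresh samples from $\SAMP(f)$ against $\hat f$, and by using $\MQ(f)$ on points drawn from $\hat f^{-1}(1)$ to upper-bound the size of $\hat f^{-1}(1) \setminus f^{-1}(1)$.

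The main obstacle, I anticipate, is making the core-extraction and inner-testing steps robust under the promise that $f$ is only close to an $s$-term DNF rather than equal to one. Deciding membership of a literal in $L_c$ amounts to estimating a conditional probability under $\SAMP(f)$ restricted to a subcube whose relative mass may shrink rapidly as more literals get fixed---precisely the ``lack of smoothness'' flagged in the introduction, where one literal can halve or double a subcube's measure. Controlling this, and showing that an $O(\eps)$-perturbation of $f$ induces only an $O(\eps)$-perturbation of the discovered cluster decomposition, is where the bulk of the technical work is likely to live; the novelty of the local cluster decomposition is precisely that it should be stable enough---term-by-term as well as cluster-by-cluster---to survive this analysis and permit a tolerant realization of each sub-procedure above.
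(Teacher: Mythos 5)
There is a genuine gap in the \emph{core extraction} step that blocks the $\poly(s,1/\eps)$ bound. In your decomposition each cluster has the form $L_c \wedge g_c$ where the head $L_c$ is a conjunction of literals ``common to essentially every satisfying assignment of $C_c$''; this is exactly the paper's notion of a factored-DNF, in which the head $H$ is an \emph{arbitrary-width} term (see \Cref{def:factored-DNF}). Your plan is to ``identify the literals of a candidate core $\hat L_c$ \ldots coordinates whose value in $x$ is pinned by $f$,'' and then to construct the explicit approximation $\hat f = \bigvee_c \hat L_c \wedge \hat g_c$. But the head can contain $\Omega(n)$ literals, so an algorithm that produces $\hat L_c$ explicitly, or even decides membership coordinate-by-coordinate, cannot run in $\poly(s,1/\eps)$ queries. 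The paper never learns the head. Instead (\Cref{alg:testfactoredDNF}) it draws samples, sets $\sR$ to be the coordinates on which the samples disagree, and shows the samples agree on $\var(H)\subseteq \overline{\sR}$; it then only \emph{implicitly learns} the tail restricted to $\sR$ (via \DNFLearner), and verifies via \ConsCheck\ --- built around \ConjTest\ from \cite{rel-error-conj-DL} --- that $h$ is relative-error close to $C(x_{\overline{\sR}}) \wedge \fJ(x_\sR)$ for \emph{some} conjunction $C$, without ever naming $C$'s literals. The global-consistency step in your plan (``checking fresh samples against $\hat f$, and using $\MQ(f)$ on points drawn from $\hat f^{-1}(1)$'') inherits the same problem: you cannot sample from $\hat f^{-1}(1)$ without an explicit $\hat L_c$.

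A second, smaller gap: your ``inner testing'' invokes a ``sub-tester on small DNFs,'' but since the tail $g_c$ is an $r$-term DNF over $\poly(s)$ variables it may be exponentially sparse (e.g.~every term of width $\Theta(s)$), so a standard-model DNF tester is useless there --- you need a \emph{relative-error} subroutine, which the paper obtains by adapting the implicit-learning junta machinery of \cite{rel-error-junta}. Your clustering criterion (``terms whose satisfying sets overlap heavily'') also differs from the paper's, which clusters by bounded set-difference of literal sets and is calibrated so that a uniform satisfying assignment of one cluster is, with high probability, $\Omega(K)$-far from every term of any other cluster (\Cref{thm:narrow-term-any-term-outside-cluster-far-from-random-a-OLD}); this quantitative ``anti-concentration across clusters'' is what lets the paper build the \testeq-based pooling and the approximate oracle simulators $\MQ^*,\SAMP^*$ whose error is confined to the structured \red{Type 1}/\cyan{Type 2} queries made by \testdnf. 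That last design --- co-engineering the tester and the imperfect simulators so that the only queries made are ones the simulators can answer --- is the mechanism by which the paper sidesteps exactly the ``robustness under approximation'' worry you correctly flag at the end, but your proposal does not supply an analogue of it.
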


This is a dramatic improvement to the $n^{O(\log(s/\epsilon))}$ query relative-error DNF tester mentioned above.
We remark that our new tester has query-complexity that is equivalent to the state-of-the-art DNF tester in the standard model (\cite{Bshouty20,DLM+:07,CGM11}), up to polynomial factors.

\subsection{Background and Related Work}\label{sec:related-work}

\noindent{\bf Standard-Error Testing.} Over the past thirty or so years, a great deal of research effort has been dedicated to understanding which properties of Boolean functions (equivalently, which function classes) are efficiently testable in the standard model. Starting with the fundamental works of \cite{BLR93,GGR98}, a wide range of different function classes have been considered, and in a surprising number of cases it turns out that constant-query testability (i.e.,~query complexity independent of the ambient dimension $n$) is indeed possible.    
A partial list of function classes that are now known to be constant-query testable  would include literals \cite{PRS02}; conjunctions \cite{PRS02,GoldreichRon20}; parity functions \cite{BLR93}; decision lists \cite{DLM+:07,Bshouty20}; linear threshold functions \cite{MORS10}; $s$-juntas \cite{FKRSS03,Blais08,Blaisstoc09}; subclasses of $s$-juntas (this covers many natural classes including size-$s$ decision trees, size-$s$ branching programs, size-$s$ Boolean formulas, and functions with Fourier degree at most $d$ \cite{DLM+:07,CGM11,Bshouty20}); and classes of functions each of which can be well-approximated under the uniform distribution by some junta (this covers many other classes including $s$-term monotone DNF \cite{PRS02}, $s$-term DNF, $s$-sparse polynomials over $\F_2$, and size-$s$ Boolean circuits with unbounded fan-in \cite{DLM+:07,CGM11,Bshouty20}).

\bigskip

\noindent{\bf Relative-Error Testing.} 
Recall that in the relative-error testing model, 
the distance from $f$ to $g$ is measured at the ``relative'' scale of $|f^{-1}(1)|$ rather than at the ``absolute'' scale of~$2^n =$ $ |\zo^n|$.
The model is thus similar in spirit to well-studied models in graph property testing (see \cite{GoldreichRon02,ParnasRon02} and Chapters~9 and~10 of \cite{PropertyTestingICS}) which are suitable for testing sparse $N$-vertex graphs that have $o(N^2)$ edges.

Several recent works have given efficient relative error testers for a variety of well-studied classes.  \cite{rel-error-junta} showed that $s$-juntas and subclasses of $s$-juntas that are closed under renaming variables (and satisfy an additional mild technical condition, see \cite{rel-error-junta} for details) can be efficiently tested in the relative-error model, using only $\poly(s,1/\eps)$ oracle calls. Building on very different techniques, another recent work \cite{rel-error-conj-DL} showed that the classes of Boolean conjunctions and $1$-decision lists, of any length, are also efficiently relative-error testable, using only $\tilde{O}(1/\eps)$ queries. The \cite{rel-error-conj-DL} result provides an interesting contrast to \cite{rel-error-junta} because neither conjunctions nor decision lists are subclasses of juntas; however, both conjunctions and decision lists are rather simple and inexpressive types of Boolean functions.
On the negative side, a slightly richer class which generalizes both conjunctions and 1-decision lists, namely the class of \emph{linear threshold functions}, was recently shown \emph{not} to be constant-query testable: \cite{rel-error-LTF} proved that any relative-error testing algorithm for LTFs must have $\tilde{\Omega}(\log n)$ query complexity.
\bigskip

{\noindent {\bf Relation to distribution-free testing.}}

We remark that relative-error testing is not the first Boolean function property testing model to measure the distribution between two functions $f$ and $g$ using a distance notion other than the fraction of points in $\zo^n$ on which they disagree.  
In the distribution-free testing model, the distance between two functions $f$ and $g$ is measured according to $\Pr_{\bx \sim {\cal D}}[f(\bx) \ne g(\bx)]$, where ${\cal D}$ is an unknown and arbitrary distribution over the domain $\zo^n$.  
In addition to making black-box oracle queries, distribution-free testing algorithms can also draw i.i.d.~samples from the unknown distribution~${\cal D}$.  
However, it turns out that many classes of functions which are constant-query testable in the standard model, such as decision lists, linear threshold functions, and conjunctions (which are $s$-term DNF for $s=1$), require a superconstant --- in fact, polynomial in $n$ --- number of queries in the distribution-free setting, see e.g. \cite{GlasnerServedio:09toc,CX16,CP22,CFP24}.  

We further remark that the relative-error testing model is incomparable to distribution-free testing; one reason for this is because the distance between functions is measured differently in the two settings. In the distribution-free model, the distance $\Pr_{\bx \sim {\cal D}}[f(\bx) \neq g(\bx)]$ between two functions $f,g$ is measured vis-a-vis the distribution ${\cal D}$ that the testing algorithm can sample from, and these samples will typically include both positive and negative examples of the function $f$ that is being tested. In contrast, in the relative-error setting the distribution ${\cal D}$ that the tester can sample from is the uniform distribution over $f^{-1}(1)$, i.e.~it is supported entirely on positive examples of $f$. Moreover, the relative distance $\rd(f,g)$ is \emph{not} equal to $\smash{\Pr_{\bx \sim f^{-1}(1)}[f(\bx)\neq g(\bx)]}$: a function $g$ can have large relative distance from $f$ because $g$ disagrees with $f$ on many points outside of $f^{-1}(1)$, which are points that have zero weight under the uniform distribution over $f^{-1}(1)$.
  
To concretely specify two function classes demonstrating that the two models are incomparable, we note that the class of conjunctions is known to require $\poly(n)$ queries in the distribution-free setting \cite{GlasnerServedio:09toc,CX16}, whereas it is $O(1/\eps)$-query testable in the relative-error model \cite{rel-error-conj-DL}. In the other direction, in  \Cref{sec:incomparable} we give a simple example of a function class that is trivially distribution-free testable but requires $2^{\Omega(n)}$ queries for any relative-error tester.

\def\bx{{\boldsymbol{x}}}
\def\by{{\boldsymbol{y}}}
\def\testeq{\algolink{Test-Equivalence}}

\section{Technical Overview} \label{sec:techniques}

In this section we  explain the main ideas behind our algorithm, \maindnf,  for relative-error testing of DNFs. 
In \Cref{sec:highlevel}, we begin with a conceptual overview of the high-level ideas underlying our approach and describe its two algorithmic components, \CreateOracles\ and \testdnf. 
We then proceed from that to discuss some of the most significant hurdles that need to be overcome to make the approach work in \Cref{sec:hurdle1,sec:hurdle2}.
See \hyperref[fig:diagram]{Figure 2} for an illustration of the detailed structure of our testing algorithm.

{
\subsection{High-level ideas}\label{sec:highlevel}

\noindent {\bf Factored-DNFs.}  To explain our approach, a crucial preliminary notion is that of a \emph{factored-DNF}.
 Roughly speaking, a factored-DNF is a function of the form $H \wedge f_1,$ where $H$ is a term (conjunction) of \emph{arbitrary} length and $f_1$ is a $(\le s)$-term DNF \emph{that depends on only $\poly(s)$ many variables}; see \Cref{def:factored-DNF} for a more precise definition.
Note that we can think of the conjunction $H$ (which again may be of arbitrary length) as a common part that has been ``factored out'' of each of the terms of an initial DNF.  We sometimes refer to $H$ as the ``\emph{head}'' and $f_1$ as the ``\emph{tail}'' of the factored-DNF.

With the notion of a factored-DNF in hand, we are ready to describe our new decomposition.
}
\medskip

\noindent {\bf A new decomposition of DNFs.}
Our approach crucially employs a new technical and conceptual ingredient, which is a \emph{novel structural decomposition result for arbitrary $s$-term DNFs}.  
This structural decomposition result, informally speaking, says the following:

\noindent
\begin{center}

    \fbox{\begin{minipage}{40em}\vspace{0.2cm}
          Given any $s$-term DNF $f$, there is a sub-DNF $f'$ of $f$, created by removing some terms\\ of $f$ (so $\smash{f'^{-1}(1) \subseteq f^{-1}(1)}$), which has the following properties:
\begin{flushleft}\begin{enumerate}
    \item[(i)] Almost all satisfying assignments of $f$ are also satisfying assignments of $f'$;
    \item[(ii)] $f'$ can be partitioned into an OR of no more than $s$ many sub-DNFs $h_1,\dots,h_{\le s}$\\ (so each term in each $h_i$ is one of the original $s$ terms of $f$), each of which is a \emph{factored-DNF} as described earlier;
    \item[(iii)]
    A uniform random satisfying assignment of $f'$ (satisfying some sub-DNF $h_i$) is extremely likely to disagree with every term in every other sub-DNF $\smash{h_j, j \neq i}$,\\ in a large number of coordinates.\vspace{0.2cm}
\end{enumerate}\end{flushleft}
\end{minipage}
}
\begin{figure}[H]
    \caption*{Figure 1: Structural decomposition}
\end{figure} \label{fig:decomposition}
\end{center}

The above is a rough statement; see \Cref{sec:clustering}, in particular \Cref{thm:narrow-term-any-term-outside-cluster-far-from-random-a-OLD}, for more details.
Given the existence of such a decomposition, we now give an overview of the first stage of our algorithm.

Given an input function $f:\{0,1\}^n\rightarrow \{0,1\}$ (with access to oracles $\MQ(f)$ and $\SAMP(f)$),
the first stage of $\maindnf$ runs a procedure called \CreateOracles\ to generate a pair of oracles $\SAMP(h_i)$ and $\MQ(h_i)$ for each factored-DNF $h_i$~in the above decomposition, when $f$ is an $s$-term DNF. On the other hand,
when $f$ is~far from $s$-term DNFs~in relative error distance, \CreateOracles\ either rejects or returns pairs of oracles for a sequence of functions $h_1,\ldots,h_{\le s}$ that also satisfy $\smash{\bigcup h_i^{-1}(1)\subseteq f^{-1}(1)}$ and that $\bigcup h_i^{-1}(1)$ contains almost all satisfying assignments of $f$ (though, unlike the yes case, $h_i$'s here are not necessarily factored-DNFs). 
The query complexity of \CreateOracles\ is $\poly(s/\eps)$.

We emphasize that the performance guarantees of \CreateOracles\ described above have been substantially simplified to give intuition about how $\maindnf$ works.
In particular, being able to return \emph{perfect} oracles for the factored-DNF $h_i$'s in the decomposition may sound too good to be true.
Indeed, as discussed in \Cref{sec:hurdle1} as one of the hurdles that need to be~overcome when these oracles are used during the second stage, \CreateOracles\ returns ``\emph{approximate simulators}'' for
$\MQ(h_i)$ and $\SAMP(h_i)$: They are randomized algorithms built by $\CreateOracles$ that make $\poly(s/\eps)$ calls to $\MQ(f)$ and $\SAMP(f)$ to simulate a call to $\MQ(h_i)$ or $\SAMP(h_i)$; for clarity, we will refer to them later as $\MQ^*(h_i)$ and $\SAMP^*(h_i)$ to highlight that they are simulators and may potentially make mistakes. %
For the overview in this subsection though, the reader may assume that they are perfect simulators so we continue to denote them by $\MQ(h_i)$ and $\SAMP(h_i)$.\medskip 

\noindent {\bf Testing factored-DNFs.} Given that $h_1,\ldots, h_{\le s}$
  form a partition of a sub-DNF $f'$ of the $s$-term DNF $f$,
  there exist positive integer $r_i$'s such that
  $\sum_i r_i\le s$ and 
  each $h_i$ is a factored-DNF with $r_i$ terms in its tail.
On the other hand, when $f$ is far from $s$-term DNFs in relative distance, using that the $h_i$'s returned by \CreateOracles\ satisfy
  $\bigcup h_i^{-1}(1)\subseteq f^{-1}(1)$ and that $\bigcup h_i^{-1}(1)$ contains almost all satisfying assignments of $f$,
  it is not difficult to show that there cannot exist  integer $r_i'$'s 
  such that  $\sum_i r_i'\le s$ and 
  each $h_i$ is very close in relative distance to a factored-DNF with $r_i'$ terms in its tail.
The second stage of the algorithm is to distinguish between these two cases, for which it is natural to run a tester for factored-DNFs
  on each $h_i$.

This is achieved by our second algorithmic component, $\testdnf$.
Given a function $h$ and a parameter $r\le s$, it uses $\poly(s/\eps)$ many queries to $\MQ(h)$ and $\SAMP(h)$ to distinguish the two cases when $h$ is a factored-DNF with at most $r$ terms in the tail and when $h$ is far in relative distance from any factored-DNF with at most $r$ terms in the tail.
We discuss challenges behind the design of 
  such an efficient tester later in  \Cref{sec:hurdle2}.
We remark that, given that \CreateOracles\ only generates ``approximate simulators,'' \testdnf\ needs to be robust enough to
accommodate their inaccuracies. %
We will return to this point later in \Cref{sec:hurdle2}.
\medskip

\noindent{\textbf{Overview of the main algorithm.}}
In light of the above discussion, we can now give a  high-level overview of our relative-error testing algorithm \maindnf\ for $s$-term DNFs: %
\begin{flushleft}\begin{enumerate}
\item[]\textbf{Stage 1:} Given the input function $f$, run \CreateOracles; it either rejects or   generates pairs of oracles $\SAMP(h_i)$ and $\MQ(h_i)$ for a sequence of functions $h_1,\ldots, h_{\le s}$.

\item[]\textbf{Stage 2:}
For each $h_i$, run \testdnf\ on $h_i$  with parameter $r$ set to $1,2,\ldots,s$ (using oracles generated by \CreateOracles). Let $r_i'$ denote the smallest integer such that $\testdnf$ accepts $h_i$ with $r_i'$.
Accept if $\sum_i r_i'\le s$ and reject otherwise.
\end{enumerate}\end{flushleft}
The overall query complexity of \maindnf\ to $\MQ(f)$ and $\SAMP(f)$ is clearly $\poly(s/\eps)$.

When $f$ is an $s$-term DNF, 
  $\CreateOracles$ returns oracles for $h_i$'s given in the decomposition of $f$.
Let $r_i$ be the number of terms in the tail of each $h_i$, satisfying that $\sum_i r_i\le s$.
We have that $r_i'\le r_i$ in the second stage for each $i$ 
  (note that $h_i$ could be close to a factored-DNF with less than 
  $r_i$ terms in the tail) and thus,
  $\maindnf$ accepts.

When $f$ is far from every $s$-term DNF in relative distance,  
  $\CreateOracles$ either rejects or returns oracles for functions $h_i$ for which, as discussed earlier, there cannot exist integer $r_i'$'s
  such that  $\sum_i r_i'\le s$ and 
  each $h_i$ is close in relative distance to a factored-DNF with $r_i'$ terms in its tail.
From this one can conclude that $f$ is rejected in Stage 2 of \maindnf.

\subsection{Finding factored DNFs}\label{sec:hurdle1}

We now give more details on how \CreateOracles\ generates oracle simulators for factored-DNFs 
$h_1,\ldots,h_{\le s}$ that form a decomposition of $f$ described earlier,  when $f$ is an $s$-term DNF.

This starts with \Cref{sec:clustering}, where we describe a
  simple greedy clustering procedure that,
  given \emph{explicitly} the terms $T_1,\ldots,T_s$ of $f$, partitions 
  them into clusters $D_1,\ldots,D_{\le s}$ such that by setting $g_i$ to be the sub-DNF defined using terms in $D_i$, we would 
  obtain a decomposition of $f$ into
  $g_1,\ldots,g_{\le s}$ that satisfy all three
  properties given earlier in \hyperref[fig:decomposition]{Figure 1}.
(Here we used $g_i$'s instead of $h_i$'s because these are not exactly the functions returned by \CreateOracles.)
This is done by making sure that, roughly speaking, (1) every~two terms in the same cluster are ``close'' and (2) every~two terms in two distinct clusters are ``far,'' where the reader may consider the distance between two terms measured by their symmetric difference (though it needs to be done in a more careful manner in \Cref{sec:clustering}).
Naturally, (1) leads to property (ii) and (2) leads to property (iii) in the decomposition of \hyperref[fig:decomposition]{Figure 1}.
The latter is because every two terms in the same cluster $D_\ell$ are close to each other and thus, they can be factored out to form a factored-DNF
(see \Cref{thm:narrow-term-any-term-outside-cluster-far-from-random-a-OLD}).  
We emphasize that our  algorithm does not and cannot carry out this procedure, since it does not ``know'' the terms. This clustering $D_1,\ldots,D_{\le s}$ should be considered as the ideal goal for $\CreateOracles$,  which will play  a crucial conceptual role in its design and analysis.
For convenience, we will refer to $D_1,\ldots,D_{\le s}$ as the \emph{canonical} clusters in the rest of this section.

At a high level, \CreateOracles\ starts by drawing $\poly(s/\eps)$ samples from $f^{-1}(1)$.
For clarity of discussion within this section, let's assume that every point $\bx$ that is~sampled satisfies a unique term in $f$; we denote it by $T_{\bx}$ and denote the canonical cluster that $T_\bx$ lies in by $D_{\bx}$.~Using property (iii) on $g_1,\ldots,g_{\le s}$ (or equivalently on the canonical clustering $D_1,\ldots,D_{\le s}$), the following can be shown to hold, with high probability, for any two points $\bx,\by$ that are sampled: 
\begin{flushleft}\begin{enumerate}[label=(\alph*)]
\item If $\bx$ and $\by$ come from the same term $T_{\bx}=T_{\by}$, then every point $z$ in the cube defined by $\bx$ and $\by$\footnote{The cube defined by $\bx$ and $\by$ is the set of points $z$ with $z_i \in \{\bx_i, \by_i\}$ for every $i \in [n]$. } satisfies $T$ and thus, is a satisfying assignment of $f$ (this is trivially always true);
\item If $\bx$ and $\by$ come from two different canonical clusters (i.e., $D_{\bx}\ne D_\by$), then most points in the cube defined by $\bx$ and $\by$ are not satisfying assignments of $f$.
To see intuitively that this should be the case, by property (iii) of the decomposition, we have that most likely, $\bx$ is far from satisfying every term not in $D_\bx$ and $\by$ is far from satisfying every term not in $D_\by$, from which one can argue that most points in the cube they define do not satisfy any term of $f$.
\end{enumerate}\end{flushleft}
Distinguishing these two cases can be done  by 
  querying a small number of random  points  from~the cube.
We run  this subroutine, $\testeq$, on every pair of samples $\bx,\by$  and
 add~an~edge $(\bx,\by)$ between them when
 $\testeq$ is convinced that the cube defined by $\bx,\by$
 is full of satisfying assignments of $f$.
With high probability, we have for every pair $\bx,\by$: (a) If $T_\bx=T_\by$, then $(\bx,\by)$ is an edge and (b) If $D_\bx\ne D_\by$, then $(\bx,\by)$ is not an edge. 
This graph partitions our samples into ``pools'' $\calP_1,\ldots,\calP_{\le s}$, where each pool $\calP_i$ is a connected component of the graph.

We know the number of pools must be at most $s$ because of (a) and the fact that $f$ has no more than $s$ terms.
Letting $C_i$ be the set of terms of $f$ that are each satisfied by at least one point in $\calP_i$, then we have from (b) that $C_1 , \ldots,C_{\le s}$ is a refinement of the canonical clustering $D_1,\ldots,D_{\le s}$ and thus, the sub-DNF that each $C_i$ defines must also be a 
  factored-DNF, which we denote by $h_i$. 
  {%
We hope that $h_1,\ldots,h_{\le s}$ can give us the desired 
  decomposition of $f$.
However, unlike $g_i$'s (as sub-DNFs of the canonical $D_i$'s), it is no longer clear if property (iii) of \hyperref[fig:decomposition]{Figure 1}  continues to hold:
  $C_i$ and $C_j$ may come
  from the same $D_\ell$, in which case
  their terms can be close to each other and thus, it could hold that a sample from $f^{-1}(1)$ that satisfies a term in $C_i $ is likely to satisfy a term in $C_j$ as well.

To overcome this obstacle, we perform a crucial process of \emph{merging pools} using~a batch of fresh samples from $f^{-1}(1)$ and running $\testeq$ between them and points in every pool.
We observe that if a sample drawn from $f^{-1}(1)$ that satisfies a term in $C_i$ is likely to satisfy a term in $C_j$ as well, then we are likely to get such a point $\bz$ in the newly drawn batch of samples, %
{and the}
\testeq\ subroutine would
  be convinced that the cube defined between $\bz$ and some point in $C_i$ and the cube defined between $\bz$ and some point in $C_j$ are both full of satisfying assignments.
In this case, we \emph{merge} $\calP_i$ and $\calP_j$ into a single pool $\calP_i\cup \calP_j$
  and by doing so, $C_i$ and $C_j$ are merged as well.
On the other hand, if $C_i$ and $C_j$ come from different canonical clusters, then since (iii) is satisfied by the canonical clustering, they are unlikely to be merged by $\testeq$. 

At the end we obtain pools $\calP_1, \ldots,\calP_{\le s}$ such that
  the $h_i$'s they define form the desired
  decomposition of $f$ into factored-DNFs.
The rest of \CreateOracles\ is to 
  build simulators for oracles of each $h_i$ using pool $\calP_i$, denoted from now on by $\MQ^*(h_i)$ and $\SAMP^*(h_i)$. %
We focus on the case when $f$ is an $s$-term DNF~and each $h_i$ is a factored-DNF defined from
  a pool $\calP_i$. 
Ideally, we hope $\MQ^*(h_i)$ and $\SAMP^*(h_i)$ behave {similarly to}
$\MQ(h_i)$ and $\SAMP(h_i)$ so that the probability of $\testdnf$ accepting $h_i$ on $\MQ^*(h_i)$ and $\SAMP^*(h_i)$ is at least the 
  probability of it accepting $h_i$ on $\MQ(h_i)$ and $\SAMP(h_i)$ (which is at least $2/3$) minus $0.01$. 
To this end, standard analysis based on coupling suggests 
  the following desired performance guarantees: %
\begin{enumerate}
\item Each call to $\SAMP^*(h_i)$ returns with 
  probability at least $1-\delta$ a draw from $h_i^{-1}(1)$;\vspace{-0.1cm}
\item Each call to $\MQ^*(h_i)$ about a point $x\in \{0,1\}^n$ returns $h_i(x)$ with probability at least $1-\delta.$ 
\end{enumerate}
We would be done if $1/\delta$ is larger than the  query complexity of $\testdnf$.

The good news is that   $\SAMP^*(h_i)$  achieves exactly the above. It starts by drawing $\bx\sim f^{-1}(1)$ and runs $\testeq$ between $\bx$ and every point in $\calP_i$; it returns $\bx$ if $\testeq$ is convinced that the cube defined by $\bx$ and some point in $\calP_i$ is full of satisfying assignments of $f$; otherwise, it draws a new sample from $f^{-1}(1)$ and repeats.
Whenever $\bx\in h_i^{-1}(1)$, $\testeq$ would always tell us this is the case; on the other hand, by property (iii) of the decomposition on $h_i$'s, it is very unlikely to draw an $\bx\notin h^{-1}(1)$ that can fool $\testeq$.\footnote{{To bound the number of samples that $\SAMP^*(h_i)$ needs,  we need to perform an additional preprocessing step of pool trimming, by discarding pools that are ``light'' and only keeping those that are ``heavy,'' {meaning that} the probability of  $\bz \sim f^{-1}(1)$ satisfying $h_i(\bz)=1$ is at least $1/\poly(s/\eps)$. We skip the details but note that trimming light pools is ok because property (i) in the decomposition only requires $f'$ to cover most of $f^{-1}(1)$.} 
 }

The bad news is that implementing $\MQ^*(h_i)$ to achieve the above (worst-case type) performance guarantees seems infeasible. Consider an $x\in f^{-1}(1)$ {for which} $h_i(x)=0$ but the cube between $x$~and some $y\in \calP_i$ contains mostly points in $f^{-1}(1)$. It is unlikely {that an algorithm can}  distinguish this from the case when $x\in h_i^{-1}(1)$.
On the other hand, as property (iii) of the decomposition {of \hyperref[fig:decomposition]{Figure 1}} suggests, 
answering $h_i(\bx)$ correctly for a ``\emph{random-ish}'' point $\bx$ is easy. 
As an example, consider $\bx\sim f^{-1}(1)$. By property (iii), most likely either   $h_i(\bx)=1$ (and $\bx$ satisfies at least one term in $C_i$), or 
  $h_i(\bx)=0$ and $\bx$ is far from satisfying any term in $C_i$. These two cases can be
  easily distinguished by $\testeq$ with high probability.
The question left is whether this is sufficient for our purpose? 
In particular, in the yes case,
  would $\testdnf$ still accept $h_i$ with high probability when running on $\MQ^*(h_i)$ instead of $\MQ(h_i)$?

{To this end, we now turn to how \testdnf\ works, starting with the simpler case when it has access to  \emph{perfect} oracles $\MQ(h_i)$ and $\SAMP(h_i)$.}

\subsection{Testing factored DNFs}\label{sec:hurdle2}
{How can we design an efficient
  relative-error tester  for factored-DNFs?  
A clue to this is provided by the recent work of \cite{rel-error-conj-DL}, which gave a $\tilde{O}(1/\eps)$-query relative-error tester for \emph{decision lists}.  
The connection between decision lists and factored-DNFs is as follows:  as is clear from a moment's thought, any nontrivial decision list %
can be viewed as being of the form $H \wedge L$, where $H$ is a conjunction (of arbitrary length) and $L$ is another decision list that is ``\emph{dense}'':  %
$L$ outputs $1$ on a constant fraction of all input assignments. 
At a high level, the \cite{rel-error-conj-DL} algorithm essentially works by running a standard-model algorithm for testing decision lists on the list $L$ 
(note that this is not at all straightforward to do because of the presence of the head $H$; however, we omit discussion of those issues in this overview).} %

{In the current context, instead of a decision list of the form $H \wedge L$ where $L$ is a ``dense'' decision list, we have a factored-DNF of the form $H \wedge f_1$, where $f_1$ is an $s$-term DNF depending on $\poly(s)$ variables.  Standard-model testing algorithms are known for the class of $s$-term DNFs \cite{Bshouty20}, and it is natural to think that by swapping in such an algorithm for the standard-model decision-list testing algorithm used in \cite{rel-error-conj-DL}, we might be able to obtain a testing algorithm for factored-DNFs.  However, the standard-model $s$-term DNF testers do not suffice for our purpose, because an $s$-term, $\poly(s)$-variable DNF $f_1$ need not be ``dense;'' it could have only a $2^{-\poly(s)}$ fraction of satisfying assignments (this would be the case if, for example, each term were of length $s$). }

Instead, %
our approach to testing factored-DNFs brings in ideas and tools from a different paper \cite{rel-error-junta}.  
The main result of \cite{rel-error-junta} is a $\poly(k/\eps)$-algorithm~for~relative-error testing of any subclass of $k$-juntas which is closed under permuting input variables; note that the class of $s$-term DNFs which depend on only $k=\poly(s)$ many variables clearly meets this condition.  At a high level, our approach to testing factored-DNFs is to use a \emph{relative-error} tester for $s$-term, $\poly(s)$-variable DNFs in place of the (standard-model) decision list tester {to ``\emph{implicitly learn}'' an approximation of $f_1$ (up to small relative error) within the framework of the overall \cite{rel-error-conj-DL} tester.}

We give more details on how all this is done later in \Cref{sec:factored-tester}, but before moving on, let's give a bit more detail on the concept of ``implicit learning,'' or ``learning up to the names of the relevant variables.''
By ``implicit learning'' an approximation 
of the tail $f_1$ (i.e., an $s$-term, $\poly(s)$-variable DNF), %
we mean that if $f_1$ depends on $k$ variables from $x_1,\dots,x_n$, the implicit learning algorithm identifies a function $g:\zo^k \to \zo$, depending on variables which we call $y_1, \ldots, y_k$, such that after some (unknown) mapping of the $y_i$ variables to $\{x_1,\dots,x_n\}$, the function $g$ is close to $f$.\footnote{We mention that the ``implicit learning'' approach has been used in a number of prior works, see e.g.~\cite{rel-error-junta,Bshouty20,DLM+:07}.}  In particular, this ``implicit learning'' lets us essentially determine how many terms there are in the tail DNF $f_1$ (having this value is crucial for us, as described earlier). Calling this approximation of the tail $f_1'$, our factored-DNF testing algorithm then checks that   $f$ is indeed relative-error close to $H' \land f'_1$ for some conjunction $H'$. This checking is performed by a \ConsCheck\
subroutine; we remark that this check is more involved than an analogous check in the framework of \cite{rel-error-conj-DL}, since the function $f_1'$ can be sparse (while the analogous function $L$ in the \cite{rel-error-conj-DL} framework is dense).   Full details are presented in  \Cref{sec:checking}.

Finally, we turn to the real scenario: $\testdnf$ running on $h_i$ with access to app\-ro\-ximate simulator oracles $\MQ^*(h_i)$ and $\SAMP^*(h_i)$ only. Let's focus on the case when $h_i$ is a factored-DNF, where we need $\testdnf$ to still accept with probability close to $2/3$ (as it would on $\MQ(h_i)$ and $\SAMP(h_i)$).
As discussed earlier in \Cref{sec:hurdle1}, the main concern is about $\MQ^*(h_i)$: {could the fact that} $\MQ^*(h_i)$ only works well on ``random-ish'' query points significantly lower the probability of $\testdnf$ accepting $h_i$?

Surprisingly, even though $\testdnf$, as discussed above, is a highly involved tester with many algorithmic components (see e.g. \hyperref[fig:diagram]{Figure 2}), 
{all of the membership queries it makes are of}  the following two types:
\begin{flushleft}
\begin{enumerate}
    \item Queries that are ``\emph{random-ish}.''
    Roughly speaking, such a query is for a point $z$ that lies in the cube between
    a point $\bx$ drawn from ${h_i^{-1}(1)}$
    and a point $\by$ drawn uniformly at random from $\{0,1\}^n$, except at a small set of randomly picked coordinates $i$, where {$z_i$} can be set arbitrarily with no randomness.
        We defer the detailed definition (of these ``\emph{type-1}'' queries) to \Cref{sec:testersummary}.
    Abstracting this formulation from the inner workings of $\testdnf$, while ensuring that $\MQ^*(h_i)$ works correctly with such queries, are major challenges we need to overcome.

\item Queries that are not necessarily random-ish but the value returned must be $1$, or else $\testdnf$ rejects immediately. 
It turns out that such (so-called ``\emph{type-$2$}'') queries are easy to handle because 
  the implementation of $\MQ^*(h_i)$ is \emph{one-sided}:
  it always return $1$ on a query $z$ if $h_i(z)=1$ (because roughly speaking, $\testeq$
  never errs when the cube contains satisfying assignments only).
As a result, for type-$2$ queries, running $\testdnf$ on $\MQ^*(h_i)$ can only increase its acceptance probability.
\end{enumerate}\end{flushleft}

In summary, while $\MQ^*(h_i)$ is not a general-purpose membership oracle for $h_i$, it is built specifically to work in tandem with our relative-error tester for factored-DNFs. As a result,  
 the~probability of $\testdnf$ accepting $h_i$  using simulators $\MQ^*(h_i)$ and $\SAMP(h_i)^*$ is at least that of it running on the perfect oracles $\MQ(h_i)$ and $\SAMP(h_i)$ minus $0.01$. %

\medskip

\section{Preliminaries}

\subsection{Relative-error testing}
\label{sec:relative-error-testing}

We recall that in the {\it standard} testing model for a class ${\cal C}$ of $n$-variable Boolean functions, the testing algorithm is given oracle access $\MQ(f)$ to an unknown and arbitrary function $f: \zo^n \to \zo$.
The algorithm must output ``yes'' with high probability (say at least 2/3; this can be amplified using standard techniques) if $f \in {\cal C}$, and must output ``no'' with high probability (again, say at least 2/3) if the distance between $f$ and every function $g \in {\cal C}$ is at least $\eps$, where the distance between $f$ and $g$ is measured as
$$ {\frac {|f^{-1}(1) \ \triangle \ g^{-1}(1)|}{2^n}}.
$$

In contrast, a \emph{relative-error} testing algorithm for ${\cal C}$ has oracle access to $\MQ(f)$ and also has access to a $\SAMP(f)$ oracle which, when called, returns a uniform random element $\bx \sim f^{-1}(1)$.
A relative-error testing algorithm for ${\cal C}$ must output ``yes'' with high probability (say at least 2/3; again this can be easily amplified) if $f \in {\cal C}$ and must output ``no'' with high probability (again, say at least 2/3) if $\rd(f,{\cal C}) \geq \eps$, where
$$\rd(f,{\cal C}):=\min_{g \in {\cal C}}\hspace{0.05cm}\rd(f,g)\ \quad\text{and}\ \quad 
\rd(f,g) := {\frac {|f^{-1}(1) \ \triangle \ g^{-1}(1)|}{|f^{-1}(1)|}}.
$$

As mentioned above, a relative-error testing algorithm can call both a $\MQ(f)$ oracle and a $\SAMP(f)$ oracle.  We use the term ``query complexity'' to refer to the total number of calls made to either oracle.

\begin{remark} \label{rem:eps-small}
It is immediate from the definition of relative-error testing that if an algorithm is an $\eps$-relative-error testing algorithm for a class ${\cal C}$, then it is also an $\eps'$-relative-error testing algorithm for ${\cal C}$, for any parameter $\eps' \in [\eps,1/2].$ This observation is useful for us because it allows us to assume, without loss of generality, that we are $\eps$-testing the class of $s$-term DNF for values of $\eps$ that are at most some sufficiently small absolute constant (which in turn means that quantities like $\log(s/\eps)$ are at least some sufficiently large absolute constant).
\end{remark}

We recall the following  ``approximate symmetry'' and ``approximate triangle inequality'' properties of relative distance (see Lemma~9 and~10 of \cite{rel-error-junta} for the straightforward proofs):

\begin{lemma}[Approximate symmetry of relative distance]\label{lem:approx-symetric}
    Let $f,g:\{0,1\}^n \to \{0,1\}$ be functions  such that $\rd(f,g)\leq \epsilon$ where $\epsilon \leq 1/2$. Then $\rd(g,f) \leq 2\epsilon$.
\end{lemma}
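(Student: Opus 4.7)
The plan is to exploit the fact that the numerator $|f^{-1}(1) \triangle g^{-1}(1)|$ in the definitions of $\rd(f,g)$ and $\rd(g,f)$ is exactly the same, so the only thing that changes when we swap $f$ and $g$ is the denominator. Thus the entire task reduces to showing that $|g^{-1}(1)|$ cannot be much smaller than $|f^{-1}(1)|$ under the hypothesis $\rd(f,g)\leq \eps$.

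First I would establish the denominator comparison $|g^{-1}(1)| \geq (1-\eps)\,|f^{-1}(1)|$. The key observation is that $|g^{-1}(1)| \geq |f^{-1}(1) \cap g^{-1}(1)| = |f^{-1}(1)| - |f^{-1}(1) \setminus g^{-1}(1)|,$ and since $f^{-1}(1) \setminus g^{-1}(1) \subseteq f^{-1}(1) \triangle g^{-1}(1)$, the hypothesis gives $|f^{-1}(1) \setminus g^{-1}(1)| \leq \eps\,|f^{-1}(1)|$. Rearranging yields the desired lower bound on $|g^{-1}(1)|$.

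Next I would plug this into the definition of $\rd(g,f)$:
\[
\rd(g,f) = \frac{|f^{-1}(1) \triangle g^{-1}(1)|}{|g^{-1}(1)|} \leq \frac{\eps\,|f^{-1}(1)|}{(1-\eps)\,|f^{-1}(1)|} = \frac{\eps}{1-\eps}.
\]
Finally, using the hypothesis $\eps \leq 1/2$, we have $1-\eps \geq 1/2$, hence $\eps/(1-\eps) \leq 2\eps$, which completes the argument.

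There is no real obstacle here; the one subtlety worth noting is that the factor of $2$ in the conclusion is essentially tight and arises from the worst-case shrinkage of the support when passing from $f$ to $g$, which is why the hypothesis $\eps \leq 1/2$ is used to turn $\eps/(1-\eps)$ into $2\eps$. (One should also note that $|f^{-1}(1)| > 0$ is implicitly assumed throughout so that the ratios are well-defined, which is standard in this model.)
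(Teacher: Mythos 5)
Your proof is correct, and it follows the standard (indeed essentially the only natural) argument: since the symmetric difference in the numerator is unchanged under swapping $f$ and $g$, it suffices to lower-bound $|g^{-1}(1)|$ by $(1-\eps)|f^{-1}(1)|$, which gives $\rd(g,f) \leq \eps/(1-\eps) \leq 2\eps$ using $\eps \leq 1/2$. The paper itself defers the proof to Lemma~9 of \cite{rel-error-junta}, and your argument matches that straightforward route.
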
 

\begin{lemma}[Approximate triangle inequality for relative distance]
\label{lem: approx triangle ineq}
    Let $f,g,h:\{0,1\}^n \to \{0,1\}$ be such that $\rd(f,g)\leq \epsilon$ and $\rd(g,h)\leq \epsilon'$. Then $\rd(f,h) \leq \epsilon+(1+\epsilon)\epsilon'$.
\end{lemma}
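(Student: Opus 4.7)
The plan is to reduce the claim to the \emph{exact} triangle inequality for symmetric differences of subsets of $\{0,1\}^n$, and then handle the asymmetry built into the definition of relative distance by inflating a $|g^{-1}(1)|$ factor back to $|f^{-1}(1)|$ using the first hypothesis. The $(1+\epsilon)$ factor in the conclusion should emerge naturally from this inflation step, and there should be no essential obstacle beyond bookkeeping.

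First I would set $F := f^{-1}(1)$, $G := g^{-1}(1)$, and $H := h^{-1}(1)$, so that the two hypotheses read $|F \triangle G| \leq \epsilon |F|$ and $|G \triangle H| \leq \epsilon' |G|$, and the goal becomes the set-theoretic statement $|F \triangle H| \leq (\epsilon + (1+\epsilon)\epsilon')|F|$. Then I would invoke the standard (exact) triangle inequality $F \triangle H \subseteq (F \triangle G) \cup (G \triangle H)$, which immediately yields
\[
|F \triangle H| \;\leq\; |F \triangle G| + |G \triangle H| \;\leq\; \epsilon |F| + \epsilon' |G|.
\]

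The one substantive step is then to convert the $\epsilon'|G|$ term into a bound in terms of $|F|$, since the right-hand side of the lemma is normalized by $|f^{-1}(1)|$ rather than by $|g^{-1}(1)|$. This is where I would use that $G \subseteq F \cup (F \triangle G)$, which gives $|G| \leq |F| + |F \triangle G| \leq (1+\epsilon)|F|$ by the first hypothesis. Plugging this back in gives $|F \triangle H| \leq \epsilon |F| + (1+\epsilon)\epsilon' |F|$, and dividing through by $|F|$ finishes the proof.

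There is no real obstacle here; the only subtlety is noticing that the $(1+\epsilon)$ factor is precisely what is needed to correct for the change of denominator in $\rd$, and this same observation is what drives the factor of $2$ in the companion approximate-symmetry statement \Cref{lem:approx-symetric}. I would be mildly careful about the trivial degenerate case $|F|=0$, in which both sides of the inequality are defined to be (or can be treated as) $0$, so that the division by $|F|$ in the last step causes no issue.
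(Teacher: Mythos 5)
Your proof is correct. The paper itself does not include a proof and simply cites Lemma~10 of~\cite{rel-error-junta} as "straightforward"; your argument via $F\triangle H\subseteq(F\triangle G)\cup(G\triangle H)$ followed by the bound $|G|\leq(1+\epsilon)|F|$ is exactly the kind of bookkeeping argument being deferred to, and every step checks out.
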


\subsection{Oracles and simulators}
Throughout the $\maindnf$ algorithm, we will instantiate different algorithms that are supposed to simulate $\MQ(g)$ or $\SAMP(g)$ for various functions $g$. In particular, an instance of an algorithm that is supposed to simulate the behavior of a $\MQ$ oracle would be denoted $\MQ^*$, and similarly an instance of an algorithm that is supposed to simulate a $\SAMP$ oracle would be denoted $\SAMP^*$. %

A simple example of such a simulator, which we will use later, is the following:
\begin{figure}[h]
\begin{algorithm}[H]
\caption{\protect\hypertarget{Sim-SAMP}{\SampleSub}}\label{alg:Sampfh}
\textbf{Input: } $\SAMP(h)$ of some function $h:\zo^n \to \zo$ and $\emptyset \neq \mathsf{Y} \subseteq [n]$  \\
\textbf{Output: }A point in $\{0,1\}^\mathsf{Y}$.\\
\begin{tikzpicture}
\draw [thick,dash dot] (0,1) -- (16.5,1);
\end{tikzpicture}
    \begin{algorithmic}[1]
    \State Draw $\bz \sim \SAMP(h)$.
    \State Return $\bz_\mathsf{Y}$.
    \end{algorithmic}
\end{algorithm}
\caption[$\SampleSub$]{$\SampleSub$. An example of an algorithm to simulate a $\SAMP$ oracle. {This algorithm will be used later in \Cref{sec:checking}, to simulate a $\SAMP$ oracle for the function $\Gamma$ defined in \ConsCheck. }}
\end{figure}

An algorithm could then execute a statement of the form ``$\SAMP^* \leftarrow \SampleSub({\SAMP(h)},\mathsf{Y})$'' (for appropriately chosen ${\SAMP(h)}, \textsf{Y}$). The algorithm can then call $\SAMP^*$ to get some string in $\bx \in \zo^{\textsf{Y}}$, where the hope is that $\SAMP^*$ behaves exactly like a $\SAMP(g)$ oracle of some function $g:\zo^{\mathsf{Y}} \to \zo$. 

\begin{samepage}
{\begin{definition}
    Let $g:\zo^{\sY} \to \zo$  be an arbitrary function. 
    \begin{flushleft}\begin{enumerate}
        \item We say an algorithm $\textsf{ALG}$ is a \emph{$\delta$-accurate simulator for $\MQ(g)$} if for any $x$, we have that  $\textsf{ALG}$ outputs $g(x)$ with probability at least $1-\delta$. 
        \item Similarly, we say an algorithm $\textsf{ALG}$ is a \emph{$\delta$-accurate simulator for $\SAMP(g)$} if any time it is called,  $\textsf{ALG}$ outputs $z \sim g^{-1}(1)$ with probability at least $1-\delta$.  
    \end{enumerate}\end{flushleft}
If an algorithm is a $0$-accurate simulator for $\MQ(g)$ ($\SAMP(g)$), we say it \emph{perfectly simulates} $\MQ(g)$ ($\SAMP(g)$).  
\end{definition}}
\end{samepage}

\subsection{Notation about strings and functions} \label{sec:prelim-string-function}
We write $[n]$ to denote the set $\{1, \ldots, n\}.$ For a subset $\sS \subseteq [n]$, we write $\overline{\sS}$ to denote $[n] \setminus \sS$. 
We use $\circ$ to denote concatenation.

Given $\sS \subseteq [n]$ and $z \in \{0,1\}^n$, we denote by $z_\sS$ the string in $\{0,1\}^\sS$ that agrees with $z$ on every coordinate in $\sS$. We also use $\overline{z}$ to denote the string with each bit of $z$ flipped, so $\overline{z}_i=1-z_i$ for every $i \in [n]$.

We use standard notation for  restrictions of functions.  For $f: \zo^n \to \zo$ and $u\in \{0,1\}^\sS$ for some $\sS\subseteq [n]$, the function $f{\upharpoonleft_{u}}: \zo^{[n] \setminus \sS}\rightarrow \{0,1\}$ is defined as 
the function 
\[
f{\upharpoonleft_{u}}(x)=f(z),
\quad \text{where} \quad
z_i = 
\begin{cases}
    u_i & \text{if~}i \in \sS,\\
    x_i & \text{if~}i \in [n]\setminus \sS.
\end{cases}
\]
For a finite set $A$ we write ``$\bx \sim A$'' to indicate that $\bx$ is uniform random over $A$.

Given a bijection $\pi:\sX \to \mathsf{Y}$ where $\sX,\sY \subseteq [n]$ and $x \in \zo^\sX$, we denote by $\pi(x)$ the string in $\{0,1\}^{\sY}$ such that $(\pi(x))_j=x_{\pi^{-1}(j)}$ for each $j\in Y$.  Given $f:\zo^{\mathsf{Y}} \to \zo$, we denote by $f_{\pi}: \zo^\sX \to \zo$ the function $f_{\pi}(x)=f(\pi(x))$.

Given $n \geq r \geq 1$ and an injective map $\sigma:[r]\rightarrow [n]$ and an assignment $y \in \{0,1\}^n$, we write $\sigma^{-1}(y)$ to denote the string
  $x \in \{0,1\}^r$
   with $x_i=y_{\sigma(i)}$ for each $i\in [r]$.
Given a $z\in \{0,1\}^r$, we write $\sigma(z)$ to denote
  the string $u\in \{0,1\}^\sS$, where $\sS=\{\sigma(i):i\in [r]\}$ is the image set of $\sigma$ and $u_{\sigma(i)}=z_i$ for each $i\in [r]$.
Given a function $g:\{0,1\}^r\rightarrow \{0,1\}$ over
  $h$ variables and an injective map $\sigma:[r]\rightarrow [n]$,
  we write $g_\sigma$ to denote the Boolean function over
  $\{0,1\}^n$ with
  $g_\sigma(x)=g(\sigma^{-1}(x))$. %

  Finally, we note that simply remapping coordinates does not change relative distance:
\begin{fact}\label{fact:remap} Consider functions $f,g:\zo^\sX \to \zo$ and a bijection $\pi:\sY \to \sX$. We have $${\rd(f_\pi,g_\pi)=\rd(f,g)}.$$ 
\end{fact}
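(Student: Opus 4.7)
The plan is to observe that the bijection $\pi:\sY \to \sX$ induces, coordinatewise, a bijection $\Pi:\{0,1\}^\sY \to \{0,1\}^\sX$ given by $\Pi(x) = \pi(x)$ (using the notation from Section~\ref{sec:prelim-string-function}, where $(\pi(x))_j = x_{\pi^{-1}(j)}$). The key identity that I would use is that $f_\pi = f \circ \Pi$ by the very definition of $f_\pi$, and likewise $g_\pi = g \circ \Pi$. Since $\Pi$ is a bijection, this forces
\[
f_\pi^{-1}(1) \;=\; \Pi^{-1}\!\left(f^{-1}(1)\right) \qquad\text{and}\qquad g_\pi^{-1}(1) \;=\; \Pi^{-1}\!\left(g^{-1}(1)\right).
\]

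From these equalities I would deduce two cardinality facts. First, $|f_\pi^{-1}(1)| = |f^{-1}(1)|$, since $\Pi$ restricts to a bijection between $f_\pi^{-1}(1)$ and $f^{-1}(1)$. Second, because preimage under a bijection commutes with the symmetric difference operator, we have
\[
f_\pi^{-1}(1) \,\triangle\, g_\pi^{-1}(1) \;=\; \Pi^{-1}\!\left(f^{-1}(1) \,\triangle\, g^{-1}(1)\right),
\]
and so $|f_\pi^{-1}(1)\, \triangle\, g_\pi^{-1}(1)| = |f^{-1}(1)\, \triangle\, g^{-1}(1)|$. Substituting both equalities into the definition
\[
\rd(f_\pi,g_\pi) \;=\; \frac{|f_\pi^{-1}(1)\, \triangle\, g_\pi^{-1}(1)|}{|f_\pi^{-1}(1)|}
\]
immediately yields $\rd(f_\pi, g_\pi) = \rd(f, g)$, concluding the proof.

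There is essentially no obstacle here: the statement is a purely bookkeeping consequence of the fact that coordinate renaming is a bijection on $\{0,1\}^n$ and that cardinality-based distances are invariant under bijections on the ground set. The only mild subtlety is making sure that the domain of $\pi$ really does extend to a bijection between the full Boolean cubes (i.e., that $|\sX| = |\sY|$, which follows from $\pi$ being a bijection between these index sets), so that both $f_\pi$ and $g_\pi$ are functions on the same cube $\{0,1\}^\sY$ and their relative distance is well-defined.
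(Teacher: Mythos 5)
Your proof is correct, and it is exactly the routine bookkeeping argument one would write: the induced bijection $\Pi$ on the cube preserves cardinalities and commutes with symmetric differences, so both the numerator and denominator of $\rd$ are preserved. The paper itself states this as a Fact without giving a proof, treating it as immediate, so there is no competing approach to compare against.
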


\subsection{Terms, cubes, and distances} \label{sec:TCD}

A \emph{term} $T=\ell_1 \land \cdots \land \ell_k$ is a conjunction of Boolean literals over variables $x_1,\dots,x_n$. The \emph{width} of a term is the number of literals that it contains. We assume without loss of generality that no term in any DNF contains both a variable $x_i$ and its negation $\overline{x_i}$.  

It will often be convenient for us to view a term as a set of literals, $T=\{ \ell_1, \ldots, \ell_k \}$; it will always be clear from context whether we are viewing $T$ as a conjunction or as a set of literals. Hence given two terms $T,T',$ we write ``$T \cap T'$'' to indicate the term which is the conjunction of literals $\land_{\ell \in T \cap T'} \ell$ (note that this is different from the conjunction $T \land T'$ of the two terms, which is the term $\land_{\ell \in T \cup T'} \ell$). Similarly, we write ``$T \setminus T'$'' to denote the term which is the conjunction $\land_{\ell \in T \setminus T'} \ell$.
Finally, for $A$ a set of literals, we write $\var(A)$ to denote the set of variables corresponding to the literals in $A$ (so, for example, if $A=\{x_2,\overline{x_2},x_3\}$ we would have $\var(A)=\{x_2,x_3\}$).

\medskip

It will be useful to have notation for the subcube of $\zo^n$ that is ``spanned'' by two points:

\begin{definition}[Cube]
\label{defn:cube}
Given two $n$-bit strings $a,b\in \{0,1\}^n$ we denote by $\cube(a,b)$ the subcube of $\{0,1\}^n$ that has $a$ and $b$ as antipodal points, i.e.,
$$\cube(a,b):=\big\{z \in \{0,1\}^n: z_i=a_i=b_i\ \text{for all $i$ such that $a_i=b_i$} \big\}$$

\end{definition}

It is easy to see that if $a,b\in \{0,1\}^n$ satisfy the same term $T$ of a DNF $f$, then $\cube(a,b) \subseteq f^{-1}(1)$. The other direction does not necessarily hold, though; as a simple example, consider the DNF $f=(\overline{x_2} \land \overline{x_3}) \vee (\overline {x_1} \land x_3)$ and the points $a=000,b=001.$  We have $\cube(a,b)=\{a,b\} \subseteq f^{-1}(1)$, but $a$ and $b$ do not satisfy the same term of $f$.

We will sometimes measure the distance from a term $T$ to a term $T'$ as the size $|T' \setminus T|$ of the set of all literals that belong to $T'$ but not to $T.$
We will also use the following definition capturing how far a point is from satisfying a term:
\begin{definition}[Distance from a point to a term $T$]
    Given $x \in \zo^n$ and a term $T$, we denote by $$\pointdist(x,T):=\left| \{ \ell : \ell \text{ is a literal in } T \text{~with~} \ell(x)=0 \} \right|.$$
    We say that $x$ is \emph{$k$-far from $T$} if $\pointdist(x,T)\geq k$; an equivalent way of expressing this is to say that ``$x$ disagrees with $T$ on at least $k$ coordinates''.
\end{definition}

\begin{lemma}\label{lem:cubeterm}
    Let $a,b \in \{0,1\}$ and $T$ be a term such that at least one of $a,b$ is $k$-far from $T$.
    Then $$\Prx_{\bz \sim \cube(a, b)}\big[T(\bz)=1\big]\leq 2^{-k}.$$
\end{lemma}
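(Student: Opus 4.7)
The plan is to argue that if $a$ is $k$-far from $T$ (the other case is symmetric), then there are at least $k$ literals of $T$ that fail on $a$, and for $T(\bz)=1$ to occur, each of these literals must flip value between $a$ and $\bz$; each such flip is possible only at a free coordinate of the cube, and conditional on being free, happens with probability exactly $1/2$, independently across coordinates. Multiplying gives the $2^{-k}$ bound.

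More concretely, first I would set up the notation. Without loss of generality assume $\pointdist(a,T)\geq k$, and pick $k$ distinct literals $\ell_{i_1},\dots,\ell_{i_k}$ of $T$ (on distinct variables $x_{i_1},\dots,x_{i_k}$, using the assumption that $T$ never contains both a variable and its negation) such that $\ell_{i_j}(a)=0$ for each $j\in[k]$. The event $T(\bz)=1$ is contained in the event $E:=\{\ell_{i_j}(\bz)=1\text{ for all }j\in[k]\}$, so it suffices to bound $\Pr_{\bz\sim\cube(a,b)}[E]\leq 2^{-k}$.

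Second, I would analyze each coordinate $i_j$ separately. Since $\ell_{i_j}(a)=0$, the event $\ell_{i_j}(\bz)=1$ is equivalent to $\bz_{i_j}\neq a_{i_j}$. There are two cases: if $a_{i_j}=b_{i_j}$, then by definition of $\cube(a,b)$ we have $\bz_{i_j}=a_{i_j}$ deterministically, so $\Pr[\ell_{i_j}(\bz)=1]=0$, making $\Pr[E]=0$ and the lemma trivial. Otherwise $a_{i_j}\neq b_{i_j}$, the coordinate $i_j$ is free in the cube, and $\bz_{i_j}$ is uniform over $\{0,1\}$, giving $\Pr[\bz_{i_j}\neq a_{i_j}]=1/2$.

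Finally, because the $i_j$'s are distinct and $\bz$ is uniform over $\cube(a,b)$, the free coordinates are mutually independent, so the $k$ events $\{\bz_{i_j}\neq a_{i_j}\}$ are independent Bernoulli$(1/2)$'s. Multiplying yields $\Pr[E]=2^{-k}$, completing the argument. There is essentially no obstacle here—the main thing to be careful about is the distinctness of the variables (which is where the standing assumption that no term contains both $x_i$ and $\overline{x_i}$ enters) and handling the degenerate case where a witnessing coordinate happens to be fixed rather than free in the cube.
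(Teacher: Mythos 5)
Your proof is correct and follows essentially the same route as the paper: assume WLOG $a$ is $k$-far from $T$, note that if any of the $\geq k$ disagreeing coordinates is fixed in the cube the probability is zero, and otherwise use the independence and uniformity of the free coordinates to get the $2^{-k}$ bound. Your write-up is just slightly more explicit about the distinctness of variables and the independence argument.
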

\begin{proof}
    Assume without loss of generality that $a$ disagrees with $T$ on a set $\sS$ of coordinates with $|\sS| \geq k$.
 If $a_i=b_i$ for any coordinate $i \in S$, then no point in $\cube(a, b)$ can satisfy $T$, so the probability in question is zero. So we may assume that  $b$ disagrees with $a$ on all coordinates in $\sS$. In order for a point $z \in \cube(a, b)$ to satisfy $T$, it must set all of the coordinates in $\sS$ correctly. It follows that
    $\Prx_{\bz \in \cube(a, b)}[T(\bz)=1]\leq 1/2^{|\sS|} \leq 2^{-k}.$
\end{proof}

\section{Clustering terms of a DNF}
\label{sec:clustering}

The goal of this section is to establish our  structural result about $s$-term DNFs. This  result shows that given any representation of an $s$-term DNF $f$, and any positive integer $K$, there is a unique partitioning (``$K$-clustering'') of the terms of $f$, denoted $\Clustering(f,K)$,
into disjoint ``clusters'' (sub-DNFs) with useful properties that we will employ later in the ``yes''-case analysis of our relative-error testing algorithm for DNFs.  
 Roughly speaking, the properties we  need about $\Clustering(f,K)$ are
\begin{flushleft}\begin{enumerate}
\item For any cluster $C$ of terms in $\Clustering(f,K)$, all terms
  ``look similar'' to each other (see \Cref{lem:Sstar-size-bound}).
More formally, the sub-DNF of $C$ can be written as
  $T\land (T_1 \lor \cdots \lor T_{|C|})$ such that 
  the number of variables involved in $T_1,\ldots,T_{|C|}$ is bounded
  by $O(s^2K)$. Note that this bound is polynomial in $s$ and $K$ but is independent of $n$.
\item For any $T$ and $T'$ in two \emph{different} clusters of
$\Clustering(f,K)$,  where $T$ is not too wide, we have that $|T'\setminus T|=\Omega(K)$ (see \Cref{cor:K-far-terms-are-in-dif-clusters}).
This implies that it is very unlikely for a uniform random satisfying assignment $\ba \sim \SAMP(f)$ that satisfies a term $T$ to have small distance $\pointdist(\ba,T')$ from any term $T'$ that is not in the same cluster as $T$.  (See \Cref{thm:narrow-term-any-term-outside-cluster-far-from-random-a-OLD} for a precise statement.)
\end{enumerate}\end{flushleft}
Throughout this section we fix $f$ to be an $s$-term DNF, and we fix a particular representation of $f$ as an $s$-term DNF.

\subsection{The $K$-clustering of an $s$-term DNF}

In this subsection we define and explain what is the \emph{$K$-clustering}  of an $s$-term DNF $f$.

\begin{remark}
As mentioned above, given a representation of $f$ as a $s$-term DNF, the $K$-clustering of $f$ is unique: as we describe below, it is obtained by carrying out a deterministic procedure on the set of terms of $f$.  We remark that this deterministic procedure is \emph{not} carried out by our testing algorithm; it is only used for the conceptual purpose of defining and analyzing the $K$-clustering of $f$.
\end{remark}

\begin{definition} \label{def:cluster-label-clustering}
A \emph{cluster $C$ of $f$} is a nonempty subset of the terms of $f$.  Every cluster $C$ has an associated \emph{label} $(T^*,S^*)$,
  where both $T^*$ and $S^*$ are sets of literals,
defined as follows:
\begin{flushleft}\begin{enumerate}
\item $T^*$ is the set of literals shared by all terms in $C$ (equivalently, if we view terms in $C$ as sets of literals, then $T^*$ is their intersection); and 
\item $S^*$ consists of both literals $x_i$ and $\overline{x_i}$ for all $i\in [n]$ such that either (1) $x_i$ appears in some but not all
  terms in $C$ or (2) $\overline{x_i}$ appears in some but not all terms in $C$.
(Connecting the label $(T^*,S^*)$ of a cluster with the first property we need about $\Clustering(f,K)$ mentioned at the beginning of the section, 
  note that the sub-DNF of a cluster $C$ can be written as 
  $T^*\land(T_1\lor \cdots\lor T_{|C|})$ where literals in
  $T_1,\ldots, T_{|C|}$ all come from $S^*$. This is why \Cref{lem:Sstar-size-bound} is about upperbounding $|S^*|$ by $O(s^2K)$ for every cluster in $\Clustering(f,K)$.)
\end{enumerate}\end{flushleft}
From the definition above it is clear that (1) $S^*$ is a set of literals that is closed under negations; and (2) $T^*$ and $S^*$ are disjoint.

Finally, a \emph{clustering of $f$} is a collection of disjoint clusters whose terms partition the $s$ terms of $f$.
As it becomes clear in \Cref{alg:clustering}, the label of each cluster is 
  generated as the cluster is created.
\end{definition}

\Cref{alg:clustering} presents a deterministic procedure that takes as input an $s$-term DNF $f$ as well as a positive integer parameter $K$, and outputs a particular clustering, called the \emph{$K$-clustering of $f$}, denoted $\Clustering(f,K)$. Very roughly speaking, the clustering is formed as follows: A cluster $C$ is formed by starting with the narrowest term as the initial cluster and iteratively adding to it other terms that are ``not too far'' from the cluster (see the while-loop comprising lines 5-9 of \Cref{alg:clustering} for details). 
When no such term exists, the cluster $C$ is finished and is added to $\Clustering(f,K)$, and the procedure repeats (i.e., by starting a new cluster with the narrowest term left) until all terms in $f$ are clustered.
{We further remark that by inspection of lines 4 and 8 (the only lines that set $T^*$ or $S^*$), it is easy to see  that each cluster's label $(T^*,S^*)$ is maintained as stipulated in \Cref{def:cluster-label-clustering}.}

Let $C$ be a cluster in $\Clustering(f,K)$, and let $T$ be the 
  first term added to $C$ (so it must be one of the narrowest terms
  in $C$).
Let $(T^*,S^*)$ be the label of $C$.
Our first lemma about the $K$-clustering of $f$ states that 
  $|T^*\cap S^*|$ cannot be too large compared to $|T|$:

\begin{figure}[t!]
\begin{algorithm}[H]
\label{algo:cluster}
\vspace{0.15cm}\textbf{Input: } An $s$-term DNF $f$ and a positive integer $K$. \\
\textbf{Output: }The $K$-clustering $\Clustering(f,K)$  of the terms of $f$.\\
\begin{tikzpicture}
\draw [thick,dash dot] (0,1) -- (15.9,1);
\end{tikzpicture}
    \begin{algorithmic}[1]
        \State Let $L$ be the set of terms of $f$ and let
           $\Clustering=\emptyset$ initially. 
        \While{$L\ne \emptyset$}

        \State Among terms in $L$ with the smallest width, pick the lexicographically first term $T$.
        \State Remove $T$ from $L$ and create a new cluster $C=\{T\}$ labeled with $(T^*=T, S^*=\emptyset)$. 
        \While{there exists a term $T' \in L$ such that:
        $$\left|\{\ell \in T' \text{ and } \ell \not \in T^* \cup S^* \}\right| \leq 2K $$} %
        \State Pick the lexicographically first such term $T'$ from $L$.
    \State Remove $T'$ from $L$ and add it to $C$.
    \State Update the label of $C$ as follows: $T^* \leftarrow T^* \cap T'$ and
    $S^* \leftarrow  S^* \cup \{\ell, \overline{\ell} : \ell \in (T^* \hspace{0.05cm} \triangle \hspace{0.05cm} T') \}$ 
        \EndWhile
        \State Add $C$ to $\Clustering$.
        \EndWhile
    \State Output $\Clustering$.\vspace{0.1cm}
    \end{algorithmic}
    \caption{Procedure to construct the $K$-clustering of a DNF $f$} \label{alg:clustering}
\end{algorithm}
\caption{The clustering algorithm}
\end{figure}

\begin{lemma}[Cluster label size bound]\label{lem:num-distinct-vars}
Let $C$ be any cluster in $\Clustering(f,K)$ labelled by $(T^*, S^*)$, and let $T$ be the first term added to $C$. Then we have 
$$|T^* \cup S^*|\le |T|+ 4sK.$$ 
\end{lemma}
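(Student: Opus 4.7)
The plan is to track the evolution of $|T^* \cup S^*|$ as the inner while-loop of \Cref{alg:clustering} successively adds terms to $C$, and to bound the growth per iteration.

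As a preliminary step, I would verify two structural invariants maintained by the algorithm: (i) $T^*$ and $S^*$ are disjoint; (ii) $S^*$ is closed under negation (i.e.\ $\ell \in S^*$ iff $\overline{\ell} \in S^*$). Both follow by induction on the iteration count, starting from the initialization $T^* = T$, $S^* = \emptyset$ on line~4 and using the update rule on line~8, together with the fact that $T^* \subseteq T$ stays conflict-free throughout. A consequence is that for each variable $x_i$, the number of $x_i$-literals contained in $T^* \cup S^*$ is exactly $1$ when $x_i \in \var(T^*)$, exactly $2$ when $x_i \in \var(S^*)$, and $0$ otherwise, with $\var(T^*)$ and $\var(S^*)$ being disjoint subsets of $[n]$.

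Next, I would track the variable set $V := \var(T^* \cup S^*)$ across iterations. Initially $|V| = |\var(T)| = |T|$, since $T$ is conflict-free. Each time the inner loop admits a new term $T'$, the condition on line~5 ensures that $|\{\ell \in T' : \ell \notin T^* \cup S^*\}| \le 2K$; since $T'$ is itself conflict-free, those $\le 2K$ literals span $\le 2K$ distinct variables, so at most $2K$ new variables can enter $V$ per iteration. Crucially, no variable is ever removed from $V$: whenever a literal $\ell \in T^*_{\mathrm{old}} \setminus T'$ is dropped from $T^*$, the line~8 update adds both $\ell$ and $\overline{\ell}$ to $S^*$, so $\var(\ell)$ remains in $V$. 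Because $|C| \le s$, the loop runs at most $s-1$ times, which gives $|V| \le |T| + 2K(s-1) \le |T| + 2sK$.

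To conclude, I would convert this variable-level bound into the claimed literal-level bound on $|T^* \cup S^*|$. The direct estimate $|T^* \cup S^*| \le 2|V|$ already yields a bound of order $|T| + sK$; to sharpen the constants to match $|T|+4sK$, I would split $|T^* \cup S^*| = |T^*| + |S^*|$, use $|T^*| \le |T|$ trivially (since $T^* \subseteq T$), and write $|S^*| = 2|\var(S^*)| = 2(|V| - |\var(T^*)|)$, then carefully account per iteration for how many variables genuinely enter $\var(S^*)$ as new variables (bounded by the $2K$ budget of line~5) versus how many migrate from $\var(T^*)$ as literals are shed from $T^*$. The main obstacle is this last piece of bookkeeping: the variables that migrate out of $\var(T^*)$ must be charged against either the $2K$ per-step line~5 budget or against the initial $|T|$ from $\var(T)$, without double counting, and ensuring the aggregate bound is $|T| + 4sK$ rather than a weaker $2|T| + 4sK$.
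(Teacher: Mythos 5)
Your plan follows the same track-the-growth-per-iteration route as the paper's own proof, and the worry you flag at the end is exactly right: the sharper bound $|T| + 4sK$ does not follow from this kind of argument, and in fact it can fail. The paper's proof asserts that each new literal entering $T^* \cup S^*$ must lie in $T' \setminus (T^* \cup S^*)$ or be the negation of such a literal, thereby charging at most $4K$ new literals per added term via the line-5 budget. But this misses a second source of growth. When a literal $\ell \in T^*_{\mathrm{old}} \setminus T'$ is moved from $T^*$ into $S^*$ (as required by the label semantics of \Cref{def:cluster-label-clustering}, and by \Cref{claim:subset_of_cluster_is_factored} which needs every literal of a cluster term to lie in $T^*\cup S^*$), its negation $\overline{\ell}$ is also placed in $S^*$, and $\overline{\ell}$ is genuinely new; since $T'$ need not mention $\var(\ell)$ at all, $\overline{\ell}$ is neither in $T'\setminus(T^*\cup S^*)$ nor the negation of anything there. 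In your variable-level bookkeeping this is exactly the migration effect you identified: a variable moving from $\var(T^*)$ to $\var(S^*)$ picks up a second literal, and the number of such migrations in one step is $|T^*_{\mathrm{old}}\setminus T'|$, which the line-5 condition alone does not cap at $2K$. A small instance makes this concrete: take $K=1$, $s=3$, $T_1=x_1x_2x_3x_4x_5x_6$, $T_2=x_1x_2x_3x_4y_1y_2$, $T_3=\overline{x}_5\overline{x}_6\overline{y}_1\overline{y}_2z_1z_2$. All three terms land in one cluster, $T^*$ ends up $\emptyset$, and $S^*$ contains all $20$ literals on its $10$ variables, while $|T|+4sK = 6+12 = 18$.

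The bound that your first estimate delivers, $|T^*\cup S^*| \le 2|T|+4sK$ (at most $|T|$ variables originally in $\var(T)$, up to two literals each, plus at most $2K$ genuinely new variables per step, also two literals each), is what this style of argument actually proves; the bookkeeping you were hoping would save a factor cannot close the gap, because per-step migration is simply not bounded by $O(K)$ in the worst case. That weaker bound is also not by itself enough for \Cref{lem:Sstar-size-bound}, since it carries a term proportional to $|T|$. To recover the needed $|S^*|\le O(s^2K)$ one should control migration differently: combining $|T'|\ge |T|$ with the line-5 condition shows that at most $2K + N_{j-1}$ variables migrate out of $\var(T^*)$ when the $j$-th term beyond $T$ is added, where $N_{j-1}\le 2K(j-1)$ is the number of new variables introduced so far; summing over the at most $s-1$ steps yields total migration $O(s^2K)$ and hence $|S^*| = O(s^2K)$, independent of $|T|$.
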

\begin{proof}
 When the cluster $C$ is first created in line~5 of \Cref{alg:clustering}, the initial $T^* \cup S^*$ contains $|T|$ literals since  $T^*=T$ and $S^*=\emptyset$. 
    Each time the cluster $C$ grows by having a new term $T'$ added to it (in lines 6-8), each new literal that appears in $T^* \cup S^*$ must either belong to $T' \setminus (T^* \cup S^*)$ or be the negation of such a literal. So by virtue of the condition in line~6, each time the cluster grows the number of new literals in $T^* \cup S^*$ can increase by at most $4K$. %
    The lemma follows since the cluster $C$ can grow at most $s$ times.
\end{proof}

\begin{lemma} \label{lem:Sstar-size-bound}
    Let $C$ be any cluster in $\Clustering(f,K)$ labelled by $(T^*, S^*)$.  Then $|S^*| \leq {16s^2K}$.
\end{lemma}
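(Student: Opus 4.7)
The plan is to analyze how $|S^*|$ evolves as terms are iteratively added to $C$. Label the terms of $C$ as $T = T^{(0)}, T^{(1)}, \ldots, T^{(m)}$ in the order they are added by \Cref{alg:clustering} (so $m+1 \le s$), and let $(T^*_i, S^*_i)$ denote the label after the $i$th addition, with $(T^*_0, S^*_0) = (T, \emptyset)$. For each $i \ge 1$ set $A_i := T^*_{i-1} \setminus T^{(i)}$ and $X_i := T^{(i)} \setminus (T^*_{i-1} \cup S^*_{i-1})$, so $|X_i| \le 2K$ by the while-loop condition on line~5. A crucial observation is that since $T$ was picked as a narrowest term in $L$ at the moment $C$ was created, and every $T^{(i)}$ with $i \ge 1$ was still in $L$ at that moment, we have $|T^{(i)}| \ge |T|$ for each $i \ge 1$.

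The main move is to track two non-negative integers: $u_i := |T| - |T^*_i| = \sum_{j\le i}|A_j|$ (the number of literals of $T$ evicted from $T^*$), and $v_i := |\var(S^*_i)| = |S^*_i|/2$. Two inequalities follow directly from the algorithm. First, $u_i \le 2K + v_{i-1}$: combining $|T^{(i)}| \ge |T|$ with the algorithm's condition yields $|T^{(i)} \cap (T^*_{i-1} \cup S^*_{i-1})| \ge |T| - 2K$, and since $T^*_{i-1}, S^*_{i-1}$ are disjoint this splits as $|T^{(i)} \cap T^*_{i-1}| + |T^{(i)} \cap S^*_{i-1}| \ge |T| - 2K$; now $|T^{(i)} \cap T^*_{i-1}| = |T^*_i|$ and $|T^{(i)} \cap S^*_{i-1}| \le v_{i-1}$ (since $T^{(i)}$ picks at most one literal per variable of $\var(S^*_{i-1})$), giving $|T^*_i| \ge |T| - 2K - v_{i-1}$. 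Second, $v_i - v_{i-1} \le |A_i| + 2K$: the update rule adds literals from $T^*_{i-1} \triangle T^{(i)} = A_i \cup (T^{(i)} \cap S^*_{i-1}) \cup X_i$, and contributions from $T^{(i)} \cap S^*_{i-1}$ introduce no new variables to $\var(S^*)$, so only $\var(A_i \cup X_i)$ contributes, bounded by $|A_i| + |X_i| \le |A_i| + 2K$.

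The main obstacle is that naively iterating these two inequalities produces the recursion $v_i \le 2v_{i-1} + 4K$, which is exponential in $i$ and would only yield $|S^*| = O(2^s K)$ rather than the claimed $O(s^2 K)$. I would overcome this by tracking the ``deficit'' $D_i := v_i - u_i$ (which is non-negative, since each literal evicted from $T^*$ contributes a fresh variable to $\var(S^*)$). The second inequality together with $u_i - u_{i-1} = |A_i|$ gives $D_i - D_{i-1} \le 2K$, so $D_i \le 2Ki$. Plugging $v_{i-1} \le u_{i-1} + 2K(i-1)$ back into the first inequality yields $u_i \le u_{i-1} + 2Ki$, and a straightforward induction shows $u_i \le Ki(i+1)$. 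Hence $v_m \le u_m + 2Km \le Km(m+3)$, and since $m \le s-1$ we conclude $|S^*| = 2v_m \le 2K(s-1)(s+2) \le 16 s^2 K$.
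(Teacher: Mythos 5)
Your proof is correct, and it is genuinely different from the one in the paper. The paper first proves an auxiliary lemma (\Cref{lem:num-distinct-vars}) bounding $|T^* \cup S^*| \leq |T| + 4sK$, and then uses that bound to argue that each addition of a term to $C$ enlarges $S^*$ by at most $16sK$ literals, which gives $|S^*| \le 16s^2K$ over at most $s$ additions. You instead work directly with the two quantities $u_i = |T| - |T^*_i|$ and $v_i = |\var(S^*_i)|$, observe that $u_i \le 2K + v_{i-1}$ (from the width comparison $|T^{(i)}| \ge |T|$ together with the line-5 constraint) and $v_i - v_{i-1} \le |A_i| + 2K$, and then control the deficit $D_i = v_i - u_i$, which grows by at most $2K$ per step. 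Plugging back gives $u_i \le u_{i-1} + 2Ki$, hence $u_m \le Km(m+1)$ and $v_m \le Km(m+3)$.

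Beyond being a different route, your argument is more careful at a point where the paper's auxiliary lemma appears to be loose. The proof of \Cref{lem:num-distinct-vars} asserts that every new literal entering $T^* \cup S^*$ lies in $T' \setminus (T^* \cup S^*)$ or is the negation of such a literal, from which it concludes that $|T^* \cup S^*|$ grows by at most $4K$ per addition. But a literal $\ell \in T^* \setminus T'$ (your $A_i$) whose variable does not appear in $T'$ at all contributes its negation $\overline{\ell}$ as a genuinely new element of $S^*$ that is neither in $T' \setminus (T^* \cup S^*)$ nor a negation of a literal there; so the per-step growth is really $|A_i| + 4K$, not $4K$, and $|A_i|$ is not bounded by $O(K)$ in a single step. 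Your bookkeeping via $u_i$ and $v_i$ accounts for exactly this $|A_i|$ contribution and shows it cannot compound: the deficit $D_i$ absorbs $|A_i|$ on both sides, yielding a quadratic (in the number of added terms) rather than exponential recursion, and the final bound $|S^*| = 2v_m \le 2Km(m+3) \le 16s^2K$ follows since $m \le s-1$. In short, you reach the same conclusion, but by a tighter and, in my view, more cleanly justified argument that does not rely on the problematic intermediate bound on $|T^* \cup S^*|$.
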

\begin{proof}

Let $T$ be the first term added to $C$ when it was created.
For any additional term $T'$ added to $C$, at the time when it is added,  
  by the condition in line~5 we must have $|T' \setminus (T^* \cup S^*)| \leq 2K$. On the other hand, $|T^* \cup S^*|\le |T|+4sK$ by \Cref{lem:num-distinct-vars}. Since $|T'|\ge |T|$, we have $$|T' \cap (T^* \cup S^*)| \geq |T| - 2K$$ and thus, 
   \[
   |(T^* \cup S^*) \setminus T'| \leq (|T| + 4sK) - (|T| - 2K) \leq 6sK
   \] %
It follows that the number of literals in $(T^* \hspace{0.06cm} \triangle \hspace{0.06cm} T') \setminus S^*$ is at most ${6sK+2K\le 8sK}$; recalling line~8, we get that each time a term $T'$ is added to $C$, the size of $S^*$ grows by at most ${16sK}$.
\end{proof}

\begin{lemma} \label{lem:narrow-dist-same-cluster}
If any two terms $A,B$ of $f$ satisfy $|A|\le |B|+K$ and $|B\setminus A| \leq K$, then they must lie in the same cluster of $\Clustering(f,K).$
\end{lemma}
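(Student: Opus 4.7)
The plan is to argue by contradiction: suppose $A$ and $B$ lie in distinct clusters $C_A$ and $C_B$ of $\Clustering(f,K)$, and show that whichever cluster is constructed first must in fact absorb the term placed in the other.

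First I would establish a simple invariant of \Cref{alg:clustering}: whenever a term $T$ is placed into a cluster $C$---either as the seed in line~4 or via lines~7--8---we have $T \subseteq T^* \cup S^*$ thereafter. The seed case is trivial, since $T^* = T$ and $S^* = \emptyset$ at that moment. For a term added via lines~7--8, each literal $\ell \in T$ either lies in $T^* \cap T$ (and hence in the updated $T^*$) or in $T \setminus T^* \subseteq T^* \triangle T$ (and hence joins $S^*$). Moreover $T^* \cup S^*$ is monotonically non-decreasing as further terms are added: the literals dropped from $T^*$ by the intersection $T^* \leftarrow T^* \cap T'$ are precisely those in $T^* \setminus T' \subseteq T^* \triangle T'$, which the $S^*$-update immediately re-absorbs.

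Now let $C_X$ be whichever of $C_A, C_B$ is constructed first by the outer while loop, and let $X \in \{A,B\}$ be its corresponding element; write $Y$ for the other term, which belongs to the later cluster $C_Y$. By the invariant, from the iteration at which $X$ enters $C_X$ onward we have $X \subseteq T^*_{C_X} \cup S^*_{C_X}$, and so at every subsequent moment
\[
\big|\{\ell \in Y : \ell \notin T^*_{C_X} \cup S^*_{C_X}\}\big| \;\le\; |Y \setminus X|.
\]
The hypothesis directly gives $|B \setminus A| \le K$, and the companion bound
\[
|A \setminus B| \;=\; |A| - |A \cap B| \;=\; |A| - |B| + |B \setminus A| \;\le\; K + K \;=\; 2K
\]
follows from $|A| \le |B| + K$. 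Hence $|Y \setminus X| \le 2K$ regardless of which of $A, B$ plays the role of $X$.

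To close the contradiction, observe that $Y$ is only removed from the pool $L$ when it enters $C_Y$, which happens strictly after $C_X$ is finished; in particular, $Y \in L$ throughout the construction of $C_X$. By the displays above, from the moment $X$ joins $C_X$ onward $Y$ satisfies the line~5 condition $|\{\ell \in Y : \ell \notin T^* \cup S^*\}| \le 2K$, and this continues to hold forever thereafter since $T^* \cup S^*$ only grows. But the inner while loop terminates only when \emph{no} term in $L$ satisfies its condition, so it cannot terminate with $Y$ still in $L$; hence $Y$ must have been chosen and added to $C_X$, contradicting $Y \in C_Y \ne C_X$. The only nontrivial step is verifying the containment-and-monotonicity invariant of $T^* \cup S^*$; everything else is a short combinatorial chase, so I do not anticipate a serious obstacle.
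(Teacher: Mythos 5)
Your proof is correct and follows essentially the same approach as the paper's: identify the first of $A,B$ to be placed into a cluster, note that from then on it is contained in $T^* \cup S^*$ (which only grows), bound $|Y \setminus X|$ by $2K$ in both cases, and conclude that the inner loop cannot terminate without absorbing the other term. The paper leaves the containment and monotonicity of $T^* \cup S^*$ as an unproved assertion ("by the definition of labels of clusters\dots and this will continue to hold"), whereas you verify that invariant explicitly; this is a welcome bit of rigor but not a different argument.
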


\begin{proof}
At some point in the execution of \Cref{alg:clustering}, it must be the case that one of $A$, $B$ is added to the current cluster $C$ while the other is still in the list $L$. 

Suppose first that $A$ is added to the current cluster $C$ while $B$ is still in $L$. 
Let $(T^*, S^*)$ be the label of $C$ after $A$ is added.
Then by the definition of labels of clusters, we have 
  $A\subseteq T^*\cup S^*$, and this will continue to hold 
  throughout the execution of the loop spanning lines~5-9.
Since $|B \setminus A| \leq K$, $B$ will eventually be added to $C$ during this loop.

The other possibility is that  $B$ is added to the current cluster $C$
  while $A$ is still in $L$. %
Since $|B \setminus A| \leq K$, we have that $|A \setminus B| \leq K +(|A|-|B|) \leq 2K$.
A similar argument shows that $A$ will eventually be added to $C$ during the while loop.
\end{proof}

Recall that $w_{\min}$ is the minimum width of any term in $f$.

\begin{corollary}\label{cor:K-far-terms-are-in-dif-clusters}
    Let $T,T'$ be two terms of $f$, where $T$ satisfies $|T|\le w_{\min} +K$ and  $T,T'$ are in two different clusters of $\Clustering(f,K)$.
    Then $|T' \setminus T| > K$.
\end{corollary}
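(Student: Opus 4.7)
The plan is to prove the corollary by contradiction, directly invoking \Cref{lem:narrow-dist-same-cluster}. Suppose for contradiction that $|T' \setminus T| \leq K$. I want to set up the hypotheses of \Cref{lem:narrow-dist-same-cluster} with the roles $A := T$ and $B := T'$, and then conclude that $T$ and $T'$ must lie in the same cluster of $\Clustering(f,K)$, contradicting the assumption of the corollary.

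To verify the hypotheses of \Cref{lem:narrow-dist-same-cluster}: first, $|B \setminus A| = |T' \setminus T| \leq K$ by our contradiction assumption. Second, since $w_{\min}$ is the minimum width across all terms of $f$, we have $|T'| \geq w_{\min}$, hence
\[
|A| \;=\; |T| \;\leq\; w_{\min} + K \;\leq\; |T'| + K \;=\; |B| + K.
\]
Both conditions of \Cref{lem:narrow-dist-same-cluster} hold, so $T$ and $T'$ lie in the same cluster of $\Clustering(f,K)$, a contradiction.

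This proof is essentially routine given \Cref{lem:narrow-dist-same-cluster}; the only subtle point is noticing that we do \emph{not} need to case-split on whether $|T'|$ is larger or smaller than $|T|$. A naive first attempt might try to set $A$ to be the narrower of the two terms, but when $|T'| < |T|$, the quantity $|T \setminus T'|$ could be as large as $2K$ (since $|T| - |T'| \leq K$ and $|T' \setminus T| \leq K$), which is too weak. The right move is to always assign $A = T$, exploiting the fact that the narrow-width hypothesis $|T| \leq w_{\min} + K$ already gives us $|A| \leq |B| + K$ for free via $w_{\min} \leq |T'|$. I do not anticipate any real obstacle; the proof should occupy just a few lines.
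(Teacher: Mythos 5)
Your proof is correct and is essentially identical to the paper's: both derive $|T| \le |T'| + K$ from $|T| \le w_{\min} + K$ and $w_{\min} \le |T'|$, and then apply \Cref{lem:narrow-dist-same-cluster} (with $A=T$, $B=T'$) contrapositively to conclude. The extra remark about not case-splitting on which of $|T|,|T'|$ is larger is a nice observation but doesn't represent a different approach.
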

\begin{proof}
Given that $|T|\le w_{\min}+K$, we have $|T|\le |T'|+K$.
Then $|T'\setminus T|\le K$ would imply that $T,T'$ lie in the same cluster of 
  $\Clustering(f,K)$ by \Cref{lem:narrow-dist-same-cluster}.
\end{proof}

Looking ahead, we will set $K$ to be $\log^2(s/\eps)$ in the rest of the paper. With this choice of $K$, we can use the following lemma to show that if $\ba\sim f^{-1}(1)$ and $\ba$ satisfies a term $T$ of width $\leq w_{\min}+K$, then for any $T'$ that lies in a different cluster, it is very unlikely (i.e., with probability quasipolynomially small in $s/\eps$) for $\ba$ to be ``close'' to $T'$.

\begin{lemma} \label{thm:narrow-term-any-term-outside-cluster-far-from-random-a}
    Let $g$ be a DNF, $T$ a term of $g$ and $T'$ an arbitrary term such that $|T' \setminus T|=k$. 
    Then, $$\Prx_{\ba \sim g^{-1}(1)}\left[ \pointdist(\ba,T') \leq (k/4) \:  \bigg| \: T(\ba)=1 \right] \leq e^{-k/8} \,.$$
\end{lemma}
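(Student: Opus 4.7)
The first observation is that conditioning a uniform sample $\ba \sim g^{-1}(1)$ on $T(\ba)=1$ yields the uniform distribution over $T^{-1}(1)$, because $T(\ba)=1$ already implies $g(\ba)=1$. Thus, under this conditioning, the coordinates $\ba_i$ with $i \in \var(T)$ are pinned down by the literals of $T$, while the coordinates $i \notin \var(T)$ are independent uniform bits.

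Next, I would split the $k$ literals in $T' \setminus T$ according to where their underlying variables lie. Let $A \subseteq T' \setminus T$ denote the literals whose variable is in $\var(T)$ and let $B \subseteq T' \setminus T$ denote the remaining ones; set $a = |A|$ and $b = |B|$, so $a + b = k$. By definition of $T' \setminus T$, any $\ell \in A$ satisfies $\ell \notin T$ but $\var(\ell) \in \var(T)$; since without loss of generality $T$ never contains both $x_i$ and $\overline{x_i}$, this forces $\overline{\ell} \in T$, so $\ell(\ba)=0$ deterministically. Meanwhile each literal in $B$ depends only on one of the free coordinates and is therefore satisfied with probability exactly $1/2$, independently of all the other literals in $B$. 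Finally, literals in $T' \cap T$ are always satisfied and contribute $0$ to $\pointdist(\ba,T')$.

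Putting these together, $\pointdist(\ba,T') = a + X$ where $X \sim \mathrm{Binomial}(b,1/2)$. If $a > k/4$ then the event $\pointdist(\ba,T') \leq k/4$ is impossible and the claim is trivial, so assume $a \leq k/4$, and in particular $b \geq 3k/4$. Then
\[
\Pr[\pointdist(\ba,T')\leq k/4 \mid T(\ba)=1] = \Pr\!\left[X \leq \tfrac{k}{4}-a\right] = \Pr\!\left[X - \tfrac{b}{2} \leq -\left(\tfrac{k}{4}+\tfrac{a}{2}\right)\right].
\]
Applying Hoeffding's inequality with deviation $t = k/4 + a/2 \geq 0$ gives an upper bound of $\exp(-2t^2/b) = \exp\bigl(-(k+2a)^2/(8(k-a))\bigr)$. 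A one-line algebra check shows $(k+2a)^2 \geq k(k-a)$ (expanding both sides reduces to $5ka + 4a^2 \geq 0$), so the exponent is at least $k/8$, yielding the desired bound $e^{-k/8}$.

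The only mildly delicate step is the partition argument: one must be careful that a literal in $T' \setminus T$ whose variable already appears in $T$ is necessarily \emph{opposite} to the corresponding literal in $T$, which is exactly what forces its value under the conditioning and prevents any accidental cancellation between $a$ and the random contribution from $B$. Once that is in place, the remainder is a routine Chernoff/Hoeffding calculation.
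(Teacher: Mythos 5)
Your proof is correct and takes essentially the same approach as the paper: condition on $T(\ba)=1$ to get the uniform distribution over $T^{-1}(1)$, observe that each literal in $T'\setminus T$ is either deterministically falsified (if its variable lies in $\var(T)$) or falsified with probability exactly $1/2$ (independently, if not), and then apply a Chernoff/Hoeffding bound. The paper merely states ``$\E[\pointdist(\ba,T')]\geq k/2$, so by a standard Chernoff bound\ldots'' whereas you spell out the Hoeffding calculation including the case split on $a$, which is slightly more careful but not a different argument.
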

\begin{proof}
    Let $\sS$ be the set $\var(T' \setminus T)$. Conditioned on $\ba \sim g^{-1}(1)$ satisfying $T$, we have that {independently} for each $i \in \sS$, either $\ba_i$ is distributed uniformly at random in $\ba_i$ or else $\ba_i$ has the wrong value for $T'$ with probability 1.
For each $i \in \sS$, let $\bold{X}_i \in \zo$ be the binary random variable that takes value $0$ if $\ba_i$ has the right value for $T'$ and takes value $1$ otherwise, and note that the $\bold{X}_i$'s are mutually independent.

    We have that $\pointdist(\ba,T') \geq \sum_{i \in \sS} \bold{X}_i$, and that $\E[\pointdist(\ba,T') ]\geq \E\left[ {\sum_{i \in \sS} \bold{X}_i} \right] \geq |\sS|/2 = k/2$. So by a standard Chernoff bound, we have that
    $$\Prx_{\ba \sim g^{-1}(1)}\left[ \pointdist(a,T') \leq k/4 \:  \bigg| \: T(\ba)=1 \right] \leq e^{-k/8} \,.
    \qedhere$$
\end{proof}

Specializing \Cref{thm:narrow-term-any-term-outside-cluster-far-from-random-a} to our context in which $f$ is an $s$-term DNF, we have the following:
\begin{corollary} \label{thm:narrow-term-any-term-outside-cluster-far-from-random-a-OLD}
    Let $T$ be a term of $f$ with $|T|\le w_{\min}+K$ and let 
    $T'$ be any term of $f$ that is in a different cluster of $T$ in $\Clustering(f,K)$.
    Then, $$\Prx_{\ba \sim f^{-1}(1)}\left[ \pointdist(\ba,T') \leq (K/4) \:  \bigg| \: T(\ba)=1 \right] \leq e^{-K/8} \,.$$
\end{corollary}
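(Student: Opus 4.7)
The plan is to derive this corollary directly by combining the two preceding results: \Cref{cor:K-far-terms-are-in-dif-clusters} and \Cref{thm:narrow-term-any-term-outside-cluster-far-from-random-a}. There is essentially no additional structural work to do; the corollary is a specialization of the lemma to the setting where $f$ is an $s$-term DNF and where the gap between two terms is guaranteed by the clustering procedure.

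First, I would invoke \Cref{cor:K-far-terms-are-in-dif-clusters} with the given $T$ and $T'$. Since $T$ is a term of $f$ satisfying $|T|\le w_{\min}+K$, and $T'$ is a term of $f$ lying in a different cluster of $\Clustering(f,K)$ from $T$, the corollary directly yields $|T'\setminus T|>K$. Let $k:=|T'\setminus T|$, so $k>K$.

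Next, I would apply \Cref{thm:narrow-term-any-term-outside-cluster-far-from-random-a} with $g=f$, with the same terms $T,T'$, and with this value of $k$. This gives
\[
\Prx_{\ba \sim f^{-1}(1)}\!\left[\,\pointdist(\ba,T') \leq k/4 \ \Big|\ T(\ba)=1\right] \leq e^{-k/8}.
\]
Since $k>K$, we have $K/4 < k/4$, and hence the event $\{\pointdist(\ba,T')\le K/4\}$ is contained in the event $\{\pointdist(\ba,T')\le k/4\}$. Together with the fact that $e^{-k/8}\le e^{-K/8}$ (again because $k>K$), monotonicity of probability yields the claimed bound. Since each step is immediate, there is no real obstacle; the only thing worth noting is the use of $|T|\le w_{\min}+K$ in the hypothesis, which is precisely what is needed to invoke \Cref{cor:K-far-terms-are-in-dif-clusters}.
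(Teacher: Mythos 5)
Your proposal is correct and follows exactly the same route as the paper: invoke \Cref{cor:K-far-terms-are-in-dif-clusters} to get $|T'\setminus T|>K$, then apply \Cref{thm:narrow-term-any-term-outside-cluster-far-from-random-a} with $g=f$. The only difference is that you spell out the monotonicity step (replacing $k/4$ by $K/4$ and $e^{-k/8}$ by $e^{-K/8}$) which the paper leaves implicit.
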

\begin{proof}
   Since $T,T'$ are in different clusters and $T$ has width $\leq w_{\min}+K$, by \Cref{cor:K-far-terms-are-in-dif-clusters} we have that $|T' \setminus T| \geq K$. 
   The claim now follows from \Cref{thm:narrow-term-any-term-outside-cluster-far-from-random-a}.
\end{proof}

\subsection{Clusters correspond to Factored-DNFs} \label{sec:clusters-factored-DNFs}

As mentioned in the introduction, this clustering gives rise to a representation of an $s$-term DNF $f$ as a disjunction of {\it factored-DNF}s. We formally define these next:
 
\begin{definition} \label{def:factored-DNF}
  A function $h:\{0,1\}^n\rightarrow \{0,1\}$ is a \emph{$(r,\mu)$-factored-DNF} if it is of the form
  $H\land (T_1\lor \cdots \lor T_{\le r})$ where
\begin{enumerate}
\item $H$ is a term, called the {\it head} of the factored-DNF. (Note that there is no upper bound on the width (number of literals) in $H$.) 
\item $T_1 \lor \ldots \lor T_{\le r}$ is a disjunction of terms, called the {\it tail} of the factored-DNF. 
\item The variables in $H$  are disjoint from the variables in the tail, and the total number of variables appearing in the tail is at most $\mu$.

\end{enumerate}
\end{definition}
In informal intuitive discussions, we sometimes refer to ``factored-DNFs'' without specifying the values of $r$ and $\mu$.
It is easy to check that for each cluster, the disjunction of the terms in the cluster can be equivalently written as an $(r,\mu)$-factored DNF, where the literals in $T^*$ form the head, $r$ is the size of the cluster, and $\mu \leq |S^*|$: 

\begin{claim}\label{claim:subset_of_cluster_is_factored}
    Let $h:\zo^n \to \zo$ be a DNF made of $k$ terms which are all in the same cluster $C$ of $\Clustering(f,K)$. Then $h$ is a $(k,\mu)$-factored-DNF for $\mu=16s^2K$.
\end{claim}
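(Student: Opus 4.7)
The plan is to exploit the structure of the cluster label $(T^*, S^*)$ directly. Write the given DNF as $h = T_1 \lor \cdots \lor T_k$ where each $T_i \in C$. First I would recall that by \Cref{def:cluster-label-clustering}, $T^*$ is precisely the set of literals common to \emph{all} terms of $C$, so in particular $T^* \subseteq T_i$ (viewing terms as sets of literals) for every $T_i$ in our sub-DNF. Define $T_i' := T_i \setminus T^*$. Then each $T_i = T^* \land T_i'$, and consequently
\[
h \;=\; T^* \;\land\; \bigl(T_1' \lor \cdots \lor T_k'\bigr).
\]
I would take $H := T^*$ as the head and $T_1' \lor \cdots \lor T_k'$ as the tail.

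Next I would verify the two conditions of \Cref{def:factored-DNF}. For the disjointness of variables between head and tail: any literal $\ell \in T_i'$ appears in $T_i$ but not in $T^*$, hence appears in some but not all terms of $C$, so by definition $\ell \in S^*$ (and moreover its negation is also in $S^*$). Thus every literal appearing in the tail lies in $S^*$. Since $T^*$ and $S^*$ are disjoint, no literal of the head appears in the tail. Moreover, if $x_i$ (or $\overline{x_i}$) appears in $T^*$, then $x_i$ is in every term of $C$ and $\overline{x_i}$ is in none (using the standing assumption that no term contains both a variable and its negation), so neither $x_i$ nor $\overline{x_i}$ is in $S^*$ and hence the variable $x_i$ does not appear in the tail at all. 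Therefore $\var(H) \cap \var(\text{tail}) = \emptyset$.

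Finally, for the bound on the number of tail variables: since every literal in the tail lies in $S^*$, the number of variables appearing in the tail is at most $|S^*|$, which by \Cref{lem:Sstar-size-bound} is at most $16 s^2 K$. Combined with the fact that the tail contains $k$ terms, this shows $h$ is a $(k, 16 s^2 K)$-factored-DNF, as required.

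I do not anticipate a substantive obstacle here; this is essentially a bookkeeping exercise that unpacks the definitions of $T^*$ and $S^*$. The only spot requiring care is confirming that a variable appearing in the head cannot sneak into the tail in negated form, which the ``no term contains both $x_i$ and $\overline{x_i}$'' convention rules out.
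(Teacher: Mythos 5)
Your proof is correct and follows essentially the same route as the paper's: factor out $T^* $ as the head, observe the tail literals lie in $S^*$, invoke \Cref{lem:Sstar-size-bound} for the bound $|S^*|\le 16s^2K$, and use the structural properties of the label $(T^*,S^*)$ for the head/tail variable-disjointness. The only difference is that you unpack the disjointness claim from first principles (via the ``no term contains $x_i$ and $\overline{x_i}$'' convention), whereas the paper cites the fact $\var(T^*)\cap\var(S^*)=\emptyset$ directly from \Cref{def:cluster-label-clustering}.
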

\begin{proof}
    Let $(T^*,S^*)$ be the label of $C$. We have that $h$ can be written as $T^* \land (T_1' \vee \cdots \vee T'_k)$, where $T_1', \ldots, T'_k$ are all written using literals in $S^*$, and by \Cref{lem:Sstar-size-bound} we have $|S^*| \leq \mu$. By \Cref{def:cluster-label-clustering} we have that $\var(T^*) \cap \var(S^*) = \emptyset$ and thus the claim follows. 
\end{proof}

\def\symd{\hspace{0.05cm}\triangle\hspace{0.05cm}}
\def\ALG{\textsf{ALG}}

\begin{figure}
\begin{algorithm}[H]
\caption{ \protect\hypertarget{Test-DNF}{\maindnf}}%
\label{alg:maintester}
\vspace{0.15cm}\textbf{Input: }$\MQ(f)$ and $\SAMP(f)$ of some function $f: \zo^n \to \zo$, $s$ and $\eps$\\
\textbf{Output: }Accept or reject.\\
\begin{tikzpicture}
\draw [thick,dash dot] (0,1) -- (15.9,1);
\end{tikzpicture}
\begin{algorithmic}[1] 

\State Let $K=\log^2 (s/\eps)$ and $\mu=16s^2 K$.
  
\State Run $\CreateOracles(\MQ(f),\SAMP(f),s,\eps)$.
\State Reject if it rejects;
  otherwise, let $(\MQ^*_i,\SAMP^*_i)$, $i\in [s']$, denote the $s'\le s$ pairs returned. %
\For{each $i\in [s']$}
\For{each $r\in [s]$}
\State Run $\testdnf(\MQ_i^*,\SAMP_i^*,r,\mu,\eps/(2s))$
  for $O(\log s)$ times.
\EndFor
\State Set $r_i$ to be the smallest $r$ such that
  the majority of runs accept; reject
  if no such $r$ exists.
\EndFor
\State Accept if $\sum_{i\in [s']} r_i\le s$; reject otherwise.
\end{algorithmic} 
\end{algorithm}
\caption[\maindnf: Our main algorithm to relative error test DNFs]{\maindnf. An algorithm to test if a function is $\epsilon$ close in relative-distance to an $s$-term DNF}
\end{figure}

\section{Main algorithm and its analysis}

In this section we present the main algorithm, $\maindnf$,
  for testing $s$-term DNFs under relative distance and use it to prove \Cref{thm:main}.
For this purpose, we will state and assume 
  two theorems (\Cref{thm:factoredDNF,thm:createoracles}) that will be established in the rest of the paper.
 
\subsection{Testing factored-DNFs in relative distance}
Looking ahead, the main result of \Cref{sec:factored-tester,sec:learning,sec:checking} combined is the following theorem about testing $(r,\mu)$-factored-DNFs under relative distance.

\begin{theorem}\label{thm:factoredDNF}
There is a randomized algorithm $\testdnf$ that takes inputs 
  (1) $\MQ(h)$ and $\SAMP(h)$ of some function 
  $h:\{0,1\}^n\rightarrow \{0,1\}$,
  (2) positive integers $r$ and $\mu$,
  and (3) a distance parameter $\eps$.
It has query complexity $\poly(r/\eps, {\mu})$  
  and satisfies the following properties:
\begin{flushleft}\begin{enumerate}
\item If $h$ is an $(r,\mu)$-factored-DNF, then it accepts with probability at least $0.9$; and
\item If $\rd(h,h')>\eps$ for every $(r,\mu)$-factored-DNF $h'$, then it rejects with probability at least $0.9$. 
\end{enumerate}\end{flushleft}
\end{theorem}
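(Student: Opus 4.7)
The plan is to follow the outline sketched in Section 2.3: combine an \emph{implicit learning} step for the tail with a \emph{consistency check} for the head. Write the target form as $h = H \wedge f_1$, where $H$ is a conjunction of arbitrary width (the head) and $f_1$ is an $r$-term DNF depending on at most $\mu$ variables disjoint from those of $H$ (the tail). The overall structure of $\testdnf$ is: (i) produce a candidate tail $f_1' : \{0,1\}^\mu \to \{0,1\}$ on abstract coordinates together with a candidate injection $\sigma:[\mu]\to[n]$ identifying tail variables; (ii) produce a candidate head $H'$; (iii) invoke a consistency check to verify $\rd(h,\, H' \wedge (f_1')_\sigma) \leq \eps$.

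For step (i), the key point is that every positive sample $\bx \sim \SAMP(h)$ satisfies $H$, so the head coordinates are always fixed and the tail coordinates carry a draw from $f_1^{-1}(1)$. Hence the candidate relevant-variable set for the tail consists of those coordinates in which two samples from $\SAMP(h)$ ever disagree, which can be collected with $\poly(\mu, r, 1/\eps)$ samples. On this implicit coordinate set of size at most $\mu$, I invoke the relative-error tester/implicit learner for the class of $r$-term DNFs on $\mu$ variables in the style of \cite{rel-error-junta}; its query complexity is $\poly(r,\mu,1/\eps)$ and it returns an explicit $f_1'$ on abstract variables $y_1,\ldots,y_\mu$ together with a mapping $\sigma$. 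Crucially we use a \emph{relative-error} implicit learner, not a standard-model DNF tester, because the tail can have as few as $2^{-\poly(r)}$ satisfying assignments on its variables, which is why the standard-model DNF testers fail, as pointed out in the overview.

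For steps (ii) and (iii), I construct $H'$ by drawing a single sample $\bw \sim \SAMP(h)$ and setting $H'$ to fix every coordinate in $[n]\setminus \sigma([\mu])$ to its value in $\bw$; in the yes case this exactly recovers the head, since any satisfying assignment of $h$ forces those literals. Then the \ConsCheck\ subroutine tests $\rd(h,\, H' \wedge (f_1')_\sigma) \leq \eps$ by checking both inclusions: the forward inclusion, that most of $h^{-1}(1)$ lies in the positive set of the candidate, is verified by drawing samples from $\SAMP(h)$ and evaluating the candidate; the reverse inclusion, that most positive assignments of the candidate lie in $h^{-1}(1)$, is verified by sampling from the positive set of $H' \wedge (f_1')_\sigma$ via $\SampleSub$-style rejection sampling on $(f_1')^{-1}(1)$ combined with the fixed head assignment, then evaluating $h$ with $\MQ(h)$. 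Standard coupling, the approximate-symmetry and approximate-triangle inequalities (\Cref{lem:approx-symetric,lem: approx triangle ineq}), and a union bound over the $\poly(r,\mu,1/\eps)$ samples then give both completeness (if $h$ is an $(r,\mu)$-factored-DNF then with probability $\geq 0.9$ we recover $f_1' \equiv f_1$ up to remapping and a correct $H'$) and soundness (if no $(r,\mu)$-factored-DNF is $\eps$-close to $h$, then no candidate $(f_1',\sigma,H')$ produced by the first two steps can pass \ConsCheck).

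The main obstacle I foresee is making \ConsCheck\ work despite the possible sparsity of the tail. In the analogous step of \cite{rel-error-conj-DL}, the "tail" decision list $L$ is dense, so an assignment from $h^{-1}(1)$ projected to the tail coordinates directly yields a near-uniform draw from $L^{-1}(1)$, enabling a one-pass comparison. Here, however, $(f_1')^{-1}(1)$ can be exponentially small in $\mu$, so care is required to sample from it efficiently (relying on the fact that $\mu$ is polynomial in $r$, so an explicit computation on $f_1'$ is affordable) and to keep the error propagation tight: the approximation error introduced by the implicit learner, by the single-sample head reconstruction, and by the two-sided distance test must each be a small constant fraction of $\eps$, so that the triangle inequality still yields the desired bound. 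A further subtlety is that in the larger \maindnf\ algorithm, $\testdnf$ will be invoked with simulators $\MQ^*,\SAMP^*$ in place of perfect oracles; the algorithm must therefore be written so that every membership query it makes is either of the "random-ish" form or is a query whose non-$1$ answer causes an immediate reject, matching the type-1/type-2 dichotomy described in Section 2.3. Ensuring this discipline throughout \ConsCheck\ and the implicit-learning subroutine will be the most delicate engineering task in the proof.
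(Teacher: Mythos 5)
Your high-level architecture -- implicitly learn the tail $f_1'$ up to small relative error, then run a consistency check against the head -- matches the paper's $\DNFLearner$ + $\ConsCheck$ plan, and your identification of the candidate tail-variable set as the coordinates on which positive samples ever disagree (the set $\bR$) is exactly what the paper does in the initialization of $\testdnf$. However, there is a genuine gap in your head-reconstruction step. You propose setting $H'$ to fix \emph{every} coordinate outside $\sigma([\mu])$ to its value in one draw $\bw \sim \SAMP(h)$ and assert that ``in the yes case this exactly recovers the head, since any satisfying assignment of $h$ forces those literals.'' This is false: a satisfying assignment of $h$ only forces the coordinates in $\var(H)$. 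Coordinates in $[n]\setminus\sigma([\mu])$ that are \emph{irrelevant} to $h$ (neither in $\var(H)$ nor in the tail) are not forced, and some of them will, with high probability, end up in $\overline{\bR}$: with $D = \poly(\mu,r,1/\eps)$ samples, each irrelevant coordinate is unanimously set with probability $2^{1-D}$, and since $n$ can be arbitrarily larger than $2^{D}$, the union bound goes the wrong way and some irrelevant coordinates will agree across all samples by accident. Your $H'$ then fixes these accidental coordinates, so $H'\wedge (f_1')_\sigma$ covers only a $2^{-t}$ fraction of $h^{-1}(1)$ (where $t$ is the number of accidentally-fixed irrelevant coordinates), and your forward-inclusion check rejects a genuine $(r,\mu)$-factored-DNF. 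Completeness fails.

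The paper avoids this trap precisely by \emph{never constructing an explicit candidate head}. Instead, $\ConsCheck$ defines an auxiliary function $\Gamma:\{0,1\}^{\overline{\sR}}\to\{0,1\}$ by $\Gamma(\alpha)=1$ iff $\Pr_{\bz\sim\fJ^{-1}(1)}[h(\alpha\circ\bz)=1]\ge 0.9$, and then runs the relative-error conjunction tester $\ConjTest$ of \cite{rel-error-conj-DL} on $\Gamma$, using $\MQGamma$ (estimates of $\Gamma$ via membership queries to $h$) and $\SampleSub$ (projections of positive samples of $h$) as simulated oracles. In the yes case $\Gamma$ is close to the conjunction $C^*(\alpha)=[\alpha_\sV = u]$ on the variables $\sV = \overline{\sR}\cap(\sS\cup\var(H))$, and crucially $C^*$ places no constraint on irrelevant coordinates that landed in $\overline{\sR}$, so it is accepted by $\ConjTest$; in the no case (\Cref{lem: Gamma far from conj}) one shows $\Gamma$ must be far from every conjunction. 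Testing existence of a good conjunction, rather than committing to one, is the key idea you are missing. (As a secondary remark, the paper also builds its own $\DNFLearner$ with the explicit $\Approximator(\ell)$ candidate set rather than invoking the \cite{rel-error-junta} tester as a black box, but this is an implementation choice rather than a conceptual gap; the engineering discipline around Type-1/Type-2 queries that you flagged at the end is indeed where much of the paper's technical effort goes.)
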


\subsection{The procedure \CreateOracles}

Assuming \Cref{thm:factoredDNF}, ideally we can use $\testdnf$ to test general $s$-term DNFs under relative distance if we can design a procedure to ``\emph{partition}'' 
  the input function $f:\{0,1\}^n\rightarrow \{0,1\}$ in question as follows:
Given access to $\MQ(f)$ and $\SAMP(f)$,
  we hope the ``partitioning'' procedure, which we will name $\CreateOracles$, can use $\poly(s/\eps)$ queries to create~a list of at most $s$ pairs $(\MQ^*_i,\SAMP^*_i)$, $i\in [s']$ for some $s'\le s$, where 
   $\MQ^*_i$ and $\SAMP^*_i$ are a pair of randomized 
  algorithms  (with access to $\MQ(f)$ and $\SAMP(f)$ and with $\poly(s/\eps)$ query complexity) that implement the 
  membership and sampling oracles of some function $h_i:\{0,1\}^n\rightarrow \{0,1\}$,\footnote{As the reader may suspect, the randomized algorithms
  behind $\MQ_i^*$ and $\SAMP_i^*$ will not be able to implement the desired oracles $\MQ(h_i)$ and $\SAMP(h_i)$ perfectly; we will discuss this issue  in detail later, and we always highlight imperfect oracles simulated by randomized algorithms using $*$ as in $\MQ^*_i$ and $\SAMP_i^*$. In the current overview, the reader may assume that $\MQ_i^*$ and $\SAMP_i^*$ are indeed faithful oracles for $h_i$.} respectively, such that
\begin{flushleft}\begin{enumerate}
\item When $f$ is an $s$-term DNF, the number $s'$ of pairs $(\MQ^*_i,\SAMP^*_i)$ matches the number of clusters in $\Clustering(f,K)$ with $K=\log^2(s/\eps)$, and each 
  cluster $C$  corresponds to a pair $(\MQ^*_i,\SAMP^*_i)$ such that the function $h_i$
  is the sub-DNF formed by terms in cluster $C$.

\item When $f$ is $\eps$-far from $s$-term DNFs in relative distance, the functions $h_i$ satisfy that $h_1^{-1}(1),\dots,h_{s'}^{-1}(1)\subseteq f^{-1}(1)$ and their union is $f^{-1}(1)$.
\end{enumerate}\end{flushleft}
Under these ideal performance guarantees of $\CreateOracles$, we explain below that \hyperlink{Test-DNF}{\textcolor{violet}{\textbf{Test-}\textbf{DNF}}}, presented in \Cref{alg:maintester}, is a $\poly(s/\eps)$-complexity tester for $s$-term DNFs under relative distance.

To see this is the case, we first note that the query complexity 
  of $\maindnf$ is $\poly(s/\eps)$ given that the query complexity of $\testdnf$ is $\poly(s/\eps)$ and 
  each call to $\MQ_i^*$ and $\SAMP_i^*$ makes $\poly(s/\eps)$ queries to $\MQ(f)$ and $\SAMP(f)$.
For the correctness of $\maindnf$, we consider the following two cases:
\begin{flushleft}\begin{itemize}
\item [I.] When $f$ is an $s$-term DNF, 
  it follows from \Cref{lem:Sstar-size-bound} and performance guarantees of $\CreateOracles$ that
  each $h_i$ is a $(|C_i|,\mu)$-factored-DNF, where we write $C_i$ to denote the cluster in $\Clustering(f,K)$ that corresponds to $h_i$. 
It follows from \Cref{thm:factoredDNF}, a standard
  Chernoff bound and a union bound that with high probability,
  $r_i\le |C_i|$ for all $i\in [s']$.
When this happens, $\maindnf$ accepts because
  $\sum_i r_i\le \sum_i |C_i|\le s$.
  
\item [II.] When $f$ is $\eps$-far from every $s$-term DNF in relative distance, let $t_i$ be the smallest integer such that $h_i$ is $(\eps/(2s))$-close to a $(t_i,\mu)$-factored-DNF in relative distance. 
It follows from \Cref{thm:factoredDNF}, a standard
  Chernoff bound and a union bound that with high probability, we have
  $r_i\ge t_i$ for all $i\in [s']$ and thus, $\sum_i r_i\ge \sum_i t_i$.
On the other hand, the following simple lemma shows that, 
  given that $f$ is $\eps$-far from $s$-term DNFs in relative distance, we must have $\sum_i t_i>s$.
As a result, $\maindnf$ rejects with high probability.
\end{itemize}\end{flushleft}

\begin{lemma}\label{lem:partition}
Let $h_1,\dots,h_{s'}$ be functions such that $h^{-1}_1(1),\ldots,h^{-1}_{s'}(1)\subseteq f^{-1}(1)$ and their union is $f^{-1}(1)$. If $h_i$ is $(\eps/(2s))$-close to an $t_i$-term DNF in relative distance for each $i\in [s']$ and $\sum_i t_i\le s$, then $f$ must be $\eps$-close to an $s$-term DNF in relative distance.
\end{lemma}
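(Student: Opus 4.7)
The plan is to build an explicit $s$-term DNF $g$ that is $\eps$-close to $f$ in relative distance, by taking the disjunction of near-witnesses for each $h_i$.

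First, for each $i\in[s']$, invoke the hypothesis to fix a $t_i$-term DNF $g_i$ with $\rd(h_i,g_i)\le \eps/(2s)$, which unfolds to
\[
\bigl|h_i^{-1}(1)\symd g_i^{-1}(1)\bigr| \le \frac{\eps}{2s}\cdot \bigl|h_i^{-1}(1)\bigr|.
\]
Define $g := g_1\lor g_2\lor \cdots \lor g_{s'}$. Since $\sum_i t_i \le s$, $g$ is an $s$-term DNF (after dropping any trivially false terms), so it suffices to show $\rd(f,g)\le \eps$.

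The key observation is a standard set-theoretic inclusion for unions: I would prove
\[
f^{-1}(1)\symd g^{-1}(1) \ \subseteq\ \bigcup_{i\in[s']} \bigl(h_i^{-1}(1)\symd g_i^{-1}(1)\bigr).
\]
Indeed, if $x\in f^{-1}(1)\setminus g^{-1}(1)$, then $x\in h_i^{-1}(1)$ for some $i$ (by the covering hypothesis $f^{-1}(1)=\bigcup_i h_i^{-1}(1)$) but $x\notin g_i^{-1}(1)$, placing $x$ in the $i$-th set on the right. Symmetrically, if $x\in g^{-1}(1)\setminus f^{-1}(1)$, pick $i$ with $x\in g_i^{-1}(1)$; since $x\notin f^{-1}(1)\supseteq h_i^{-1}(1)$, we again get $x\in h_i^{-1}(1)\symd g_i^{-1}(1)$.

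Combining the inclusion with the per-index bound, using $h_i^{-1}(1)\subseteq f^{-1}(1)$ and $s'\le s$,
\[
\bigl|f^{-1}(1)\symd g^{-1}(1)\bigr| \ \le\ \sum_{i\in[s']}\bigl|h_i^{-1}(1)\symd g_i^{-1}(1)\bigr| \ \le\ \frac{\eps}{2s}\sum_{i\in[s']}\bigl|h_i^{-1}(1)\bigr| \ \le\ \frac{\eps}{2s}\cdot s\cdot \bigl|f^{-1}(1)\bigr| \ =\ \frac{\eps}{2}\bigl|f^{-1}(1)\bigr|.
\]
Dividing by $|f^{-1}(1)|$ (nonzero since the $h_i$'s cover it) gives $\rd(f,g)\le \eps/2\le\eps$, completing the proof.

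There is no real obstacle here: the only substantive step is the symmetric-difference-of-unions inclusion, which depends critically on the covering hypothesis $\bigcup_i h_i^{-1}(1)=f^{-1}(1)$ (without it, a point in $g^{-1}(1)\setminus f^{-1}(1)$ might not be matched to any single $h_i$). The bound $s'\le s$ (guaranteed by \CreateOracles) is what lets us absorb the union over $i$ into the $1/(2s)$ slack factor.
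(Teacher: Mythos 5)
Your proof is correct and takes essentially the same approach as the paper: both construct $g$ as the disjunction of the near-witness DNFs $g_i$ (the paper's $h_i'$), bound $|f^{-1}(1)\symd g^{-1}(1)|$ by $\sum_i |h_i^{-1}(1)\symd g_i^{-1}(1)|$, and conclude using $h_i^{-1}(1)\subseteq f^{-1}(1)$ and $s'\le s$. You spell out the symmetric-difference-of-unions inclusion as an explicit set-containment step where the paper uses it implicitly in its chain of inequalities, but the argument is otherwise identical.
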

\begin{proof}
Let $h_i'$ be a $t_i$-term DNF such that 
  $\rd(h_i,h_i')\le \eps/(2s)$.
Let $g$ be the disjunction of $h_i'$, $i\in [s']$.
Then $g$ is an $s$-term DNF given that $\sum_i t_i\le s$, and we have 
$$
\rd(f,g)=\frac{|f^{-1}(1)\symd g^{-1}(1)|}{|f^{-1}(1)|}
\le \frac{\sum_i |h^{-1}_i(1)\symd {h'}_i^{-1}(1)|}{|f^{-1}(1)|}\le 
\sum_i \frac{|h^{-1}_i(1)\symd {h'}_i^{-1}(1)|}{|h^{-1}_i(1)|}\le s'\cdot \frac{\eps}{2s}< \eps,
$$
using $s'\le s$. This finishes the proof of the lemma.
\end{proof}

Unfortunately we were not able to obtain a procedure $\CreateOracles$ that achieves exactly the ideal performance guarantees presented above as items (1.) and (2.)
Before presenting the theorem for $\CreateOracles$,
  we discuss the differences between the real versus ideal guarantees on the $h_i$ functions:
\begin{flushleft}\begin{itemize}
\item [$1'$.] When $f$ is an $s$-term DNF, $h_i$ (i.e., the function implemented by the $i$th pair
  $(\MQ^*_i,\SAMP^*_i)$) is still a DNF but is not exactly the sub-DNF of a cluster in $\Clustering(f,K)$. 
Instead, letting $D_i$ be the set of terms comprising $h_i$,
  we have that $D_i\subseteq C$ for some cluster $C$ in $\Clustering(f,K)$ and we also have that the $D_i$'s are pairwise disjoint, from which we have $\sum_{i} |D_i|\le s$.
(Note that the union of the $D_i$'s does not necessarily 
  cover all terms in $f$. Intuitively, this is because $f$ may contain some ``very wide'' terms, that have very few satisfying assignments, which we are unable to find.) 

\item [$2'$.] When $f$ is $\eps$-far from
  $s$-term DNFs in relative distance, we still have $h_i^{-1}(1)\subseteq f^{-1}(1)$ but\\
  the union of $h^{-1}_i(1)$'s only covers $f^{-1}(1)$ approximately, i.e., 
 $|f^{-1}(1)\setminus (\cup_i h^{-1}_i(1))|$ is small   
 compared to $|f^{-1}(1)|$.
\end{itemize}\end{flushleft}
It is easy to check that if these were the only differences, then the correctness of $\maindnf$ would follow from similar arguments as above.
(In particular, the proof of \Cref{lem:partition} is robust enough to tolerate the approximation in item 2 above.)

The main challenge, however, is that the randomized algorithms behind $(\MQ^*_i,\SAMP^*_i)$ are not perfect implementations of oracles for $h_i$. 
To this end, we introduce the following definition:

\begin{definition}\label{def:adequate}
Let $\ALG$ be an algorithm that has access to 
  $\MQ(h)$ and $\SAMP(h)$ of 
  some function $h:\{0,1\}^n\rightarrow \{0,1\}$, which 
  either accepts or rejects at the end.
We say $(\MQ^*,\SAMP^*)$ is an \emph{adequate implementation of oracles for  $h:\{0,1\}^n\rightarrow \{0,1\}$ with respect to} $\ALG$ if
\begin{flushleft}\begin{enumerate}
\item $\MQ^*$ is a randomized algorithm that, given
  any $x\in \{0,1\}^n$, returns either $0$ or $1$; 
\item $\SAMP^*$ is a randomized algorithm that always returns a point $x\in \{0,1\}^n$; and
  
\item We have 
$$
\Big|\Pr\left[\ALG\big(\MQ(h),\SAMP(h)\big)\ \text{accepts} \right]-\Pr\left[\ALG\big(\MQ^*,\SAMP^*\big)\ \text{accepts}\right]\Big|\le 0.1.$$
\end{enumerate}
\end{flushleft}
\end{definition}

Let $K=\log^2(s/\eps)$ and $\mu=16s^2K$. 
With \Cref{def:adequate} in place, we state the main theorem about $\CreateOracles$, which we prove in
  \Cref{sec:pooling}.

\begin{theorem}\label{thm:createoracles}
$\CreateOracles$ takes as inputs (1) access to $\MQ(f)$ and $\SAMP(f)$ of some function $f:\{0,1\}^n\rightarrow \{0,1\}$ and (2) two parameters $s$ and $\eps$. It makes~$\poly(s/\eps)$ queries to $\MQ(f),\SAMP(f)$ and at the end, either rejects or returns a list of pairs $(\MQ_1^*,\SAMP_1^*),$ $\ldots, (\MQ_{s'}^*,\SAMP_{s'}^*)$ 
such that the following conditions hold:

\begin{flushleft}\begin{enumerate}
\item For each $i\in [s']$, $\MQ_i^*$ and $\SAMP_i^*$ are two
  randomized algorithms that have access to $\MQ(f)$ and $\SAMP(f)$ and have query 
  complexity $\poly(s/\eps)$;
\item 
When $f$ is an $s$-term DNF, with probability at least $0.9$, $\CreateOracles$ does not reject and 
there exist $s' \leq s$ Boolean functions $\smash{h_1,\ldots,h_{s'}:\{0,1\}^n\rightarrow \{0,1\}}$ such that (a) Each $h_i$ is a $k_i$-term DNF such that $\sum_{i=1}^{s'}k_i \leq s$;
(b) Each $h_i$ is a $(k_i,\mu)$-factored-DNF; and 
(c) For each $i\in [s']$, $(\MQ^*_i,\SAMP^*_i)$ is an adequate implementation of oracles for $h_i$ with respect to $\testdnf(\cdot,\cdot, {k_i},\mu,\eps/(2s))$.

\item When $f$ is $\eps$-far from every $s$-term DNF in relative distance, with probability at least $0.9$, 
  $\CreateOracles$ either rejects or there exist $s' \leq s$
  functions $h_1,\ldots,h_{s'}:\{0,1\}^n\rightarrow \{0,1\}$ such that
  (a$'$) For each $i\in [s']$, we have $h^{-1}_i(1)\subseteq f^{-1}(1)$ and
\begin{equation}\label{eq:hehe1}
\frac{|f^{-1}(1)\setminus (\cup_{i} h^{-1}_i(1))|}{|f^{-1}(1)|}\le \frac{\eps}{2}; 
\end{equation}
and (b$'$) For each $i\in [s']$ and for every $r \in [s]$, $(\MQ^*_i,\SAMP_i^*)$ is an adequate  implementation of oracles for $h_i$ with respect to $\testdnf(\cdot,\cdot,r,\mu,\eps/(2s))$.
\end{enumerate}\end{flushleft}
\end{theorem}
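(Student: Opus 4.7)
}
The plan is to implement \CreateOracles\ along the lines sketched in \Cref{sec:hurdle1}, analyze it against the canonical clustering $\Clustering(f,K)$ from \Cref{sec:clustering}, and finally verify adequacy of the simulators against $\testdnf$. First I would draw $N=\poly(s/\eps)$ independent samples $\bx^{(1)},\dots,\bx^{(N)}\sim\SAMP(f)$ and, for each pair $(\bx^{(j)},\bx^{(k)})$, run a $\testeq$ subroutine that queries $O(\log(s/\eps))$ uniform points from $\cube(\bx^{(j)},\bx^{(k)})$ via $\MQ(f)$ and ``accepts'' the pair if every queried point satisfies $f$. Place an edge between any accepted pair, and take the connected components $\calP_1,\dots,\calP_m$ of the resulting graph as the initial \emph{pools}. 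Then draw another batch of $N$ fresh samples and, for each new sample $\bz$, merge any two pools $\calP_i,\calP_j$ such that $\testeq$ accepts both $(\bz,y)$ and $(\bz,y')$ for some $y\in\calP_i,y'\in\calP_j$. Next, perform \emph{trimming}: discard pools that contain fewer than $\poly(\eps/s)\cdot N$ samples (these are the ``light'' pools), and retain the at most $s/\poly(\eps/s)$ many ``heavy'' pools $\calP_1,\dots,\calP_{s'}$. For each retained pool $\calP_i$, define $\SAMP^*_i$ to be the rejection-sampling procedure ``draw $\bx\sim\SAMP(f)$; accept and output $\bx$ iff $\testeq$ accepts $(\bx,y)$ for some $y\in\calP_i$, otherwise retry,'' and define $\MQ^*_i(x)$ similarly but answering $1$ iff $\testeq$ accepts $(x,y)$ for some $y\in\calP_i$. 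Reject if at any stage the procedure exceeds its query budget or produces more than $s$ heavy pools.

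For the yes-case analysis (item~2), I would condition on the $\poly(s/\eps)$ calls to $\testeq$ across the run being jointly accurate, which holds with probability $\geq 0.95$ by standard concentration. In the yes case, \Cref{thm:narrow-term-any-term-outside-cluster-far-from-random-a-OLD} ensures that for every sample $\bx\in f^{-1}(1)$ satisfying a narrow term $T$, every other term outside $\bx$'s canonical cluster is $\Omega(K)$-far from $\bx$, so \Cref{lem:cubeterm} gives that a cube between $\bx$ and another point $\by$ from a different canonical cluster is dominated by zeros of $f$; conversely, samples satisfying a common term trivially yield cubes inside $f^{-1}(1)$. Hence after the initial phase, each pool is contained in a single canonical cluster, and after the merging phase (using the second batch of samples to exercise all ``heavy enough'' cross-cluster edges), any two pools whose corresponding sets $C_i,C_j$ of satisfied terms cluster together will with high probability be merged. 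Trimming removes pools whose union of represented terms contributes a negligible fraction to $f^{-1}(1)$; I would then set $h_i$ to be the sub-DNF of $f$ given by the terms represented in $\calP_i$, giving $k_i$-term DNFs with $\sum k_i\le s$, and \Cref{claim:subset_of_cluster_is_factored} immediately promotes each $h_i$ to a $(k_i,\mu)$-factored-DNF.

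For the no-case analysis (item~3), I would argue that the coverage bound \eqref{eq:hehe1} follows from the trimming step together with a simple accounting: any sample from $f^{-1}(1)$ that is not put into any heavy pool falls either into a light pool or into no pool at all, and in both cases the expected total mass of such samples is bounded by $\eps/2$ times $|f^{-1}(1)|$, or else trimming is violated and the algorithm rejects. The subcube property of $\testeq$ guarantees $h_i^{-1}(1)\subseteq f^{-1}(1)$ as long as $\testeq$ is not fooled, which is where the one-sided reliability of $\testeq$ is important: whenever $\testeq$ accepts a cube, that cube is very likely contained in $f^{-1}(1)$, so the function $h_i$ accepted by $\SAMP_i^*$/$\MQ_i^*$ can be taken to be exactly those $x$ for which a $\calP_i$-neighbor cube lies (whp) in $f^{-1}(1)$.

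The main obstacle, and the subtlest part of the argument, is establishing the adequacy clauses (2c) and (3b$'$), since $\MQ^*_i$ cannot reliably answer arbitrary membership queries about $h_i$. The plan here follows \Cref{sec:hurdle2}: I would first catalogue the membership queries made by $\testdnf$ and classify each as either (i) a ``random-ish'' query whose marginal distribution is within total-variation $O(1/\poly(s/\eps))$ of the distribution ``draw $\bx\sim h_i^{-1}(1)$ plus random perturbation,'' or (ii) a ``type-2'' query on which $\testdnf$ rejects unless the answer is $1$. For type-(i) queries, the one-sided error analysis of $\testeq$ combined with \Cref{thm:narrow-term-any-term-outside-cluster-far-from-random-a-OLD} shows that, with probability $1-1/\poly(s/\eps)$ per query, $\MQ^*_i$ agrees with $h_i$; a union bound over the $\poly(s/\eps)$ queries of $\testdnf$ bounds total disagreement by~$0.01$. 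For type-(ii) queries, the same one-sided guarantee of $\testeq$ implies $\MQ^*_i(x)\geq h_i(x)$ always, so $\testdnf$'s acceptance probability can only increase. Combining with a standard coupling argument between $\SAMP^*_i$ and $\SAMP(h_i)$ via the rejection-sampling analysis of the previous paragraph yields the $0.1$ bound in \Cref{def:adequate}. Finally, the overall query complexity follows by multiplying: $N=\poly(s/\eps)$ samples, $N^2$ pairwise $\testeq$ invocations of cost $O(\log(s/\eps))$ each, plus $s\cdot\poly(s/\eps)$ simulator-query costs during the Stage~2 invocations of $\testdnf$.
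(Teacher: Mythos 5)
Your proposal captures the high-level architecture of the paper's proof—pool building, merging, trimming, oracle simulation, and the split of $\testdnf$'s membership queries into two types—but there is a gap that would break the yes-case adequacy argument.

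The paper deliberately uses \emph{two} different cube-testing primitives, and you conflate them. The paper's $\testeq$ draws $N^3$ points from $\cube(a,b)$ and returns \textbf{True if \emph{any} sample lies in $f^{-1}(1)$}; this is a \emph{density} detector (it reliably fires when a $1/N^2$-fraction of the cube is satisfying; see \Cref{lem:testeq}). The paper's $\inpool$ draws $10\alpha$ points per cube and returns \textbf{True only if \emph{every} sample satisfies $f$}; this is a (soft) \emph{full-cube} detector. The algorithm uses the loose $\testeq$ for pool-building (Phase~1) and—crucially—for merging (Phase~3), and only uses the strict $\inpool$ in Phase~4 and inside the simulators $\MQ^*,\SAMP^*$. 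Your $\testeq$ ("accepts the pair if every queried point satisfies $f$" with $O(\log(s/\eps))$ samples) is semantically $\inpool$. You then use this strict primitive uniformly, including in the merging step, and this is where the argument fails.

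The reason the merging phase must use the loose density criterion is \Cref{lem2:goodphase3}: after Phase~3, no medium term $T'\notin\covered(\calP)$ is \emph{attracted} to $\calP$, where attraction is defined in terms of \emph{dense} cubes (at least a $1/N^2$-fraction of $T'^{-1}(1)$ has some $a\in\calP$ with $\cube(a,b)$ dense). This is achieved precisely because $\testeq$ (loose version) reliably fires on dense cubes: when the $N^5$ fresh samples hit a point $\bz\in T'^{-1}(1)$ whose cube to $\calP$ is dense, $\testeq$ detects it and triggers a merge. With your strict criterion, such a $\bz$ would be detected only if the cube were essentially \emph{fully} satisfying, so many dense-but-not-full cubes would go undetected and the "no attraction" property would not hold at the end of Phase~3. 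This property is the load-bearing ingredient for \Cref{lem2:littleextra} (bounding $|g_\calP^{-1}(1)\setminus h_\calP^{-1}(1)|$) and for \Cref{lem:cubex} (correctness of Type-1 query answers): if medium terms outside $\covered(\calP)$ remain attracted, then $\MQ^*_\calP$ will return false positives on a non-negligible fraction of inputs—note that $\inpool$ with $O(\log N)$ samples also accepts cubes whose satisfying fraction is merely close to $1$ (e.g.\ $p=1-1/N$), so the attraction bound, not $\inpool$ alone, is what controls the false-positive mass. A secondary issue: you do a single pass of merging over one fresh batch, whereas the paper re-draws fresh samples and repeats until a fixed point (at most $s$ iterations), because merging two pools can create new attractions needing further merges; a single pass does not clearly suffice. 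Finally, your trimming rule (discard pools holding too few of the original $\calS$-samples) is not the paper's rule, which re-samples $\tilde O(s/\eps)$ fresh points and counts $\inpool$ acceptances; the distinction matters because the original pool sizes need not track the final $\Pr_{\bz\sim f^{-1}(1)}[g_\calP(\bz)=1]$ that \Cref{lem2:light-heavy} requires.
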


\subsection{Proof of \Cref{thm:main} assuming \Cref{thm:factoredDNF} and \Cref{thm:createoracles}}

First we note that the query complexity 
  of $\maindnf$ is clearly $\poly(s/\eps)$ given that  by \Cref{thm:factoredDNF} (and the choice of $\mu=16s^2K$ and $K=\log^2 (s/\eps)$) the query complexity of $\testdnf$ is $\poly(s/\eps)$ and 
  each call to $\MQ_i^*$ and $\SAMP_i^*$ makes $\poly(s/\eps)$ queries to $\MQ(f)$ and $\SAMP(f)$~by \Cref{thm:createoracles}.
For the correctness of $\maindnf$, we consider the following two cases:
\begin{itemize}
    \item When $f$ is an $s$-term DNF, 
  it follows from \Cref{thm:createoracles} and \Cref{lem:Sstar-size-bound} that
  each $h_i$ is a $(k,\mu)$-factored-DNF. It then follows from \Cref{thm:factoredDNF} and \Cref{thm:createoracles} (the adequate implementation part), a  
  Chernoff bound and a union bound that with high probability,
  $r_i\le k_i$ for all $i\in [s']$.
When this happens, $\maindnf$ accepts because
  $\sum_i r_i\le \sum_i k_i \le s$, using the guarantee from \Cref{thm:createoracles}
  that $\sum_{i=1}^{s'} k_i \leq s$.

\item When $f$ is $\eps$-far from every $s$-term DNF in relative distance, by \Cref{thm:createoracles}, with probability~at least $0.9$ $\CreateOracles$ either rejects (in which case $\maindnf$ rejects) or returns a list of pairs of oracles that satisfy the conditions detailed in \Cref{thm:createoracles} for this case with corresponding functions $h_1,\dots,h_{s'}$. Let $t_i$ be the smallest integer such that $h_i$ is $(\eps/(2s))$-close to a $(t_i,\mu)$-factored-DNF in relative distance.~It~then follows from \Cref{thm:factoredDNF}, a  
  Chernoff bound and a union bound that with high probability, we have
  $r_i\ge t_i$ for all $i\in [s']$ and thus, $\sum_i r_i\ge \sum_i t_i$.
On the other hand, the following generalization of \Cref{lem:partition}
  shows that, 
  given that $f$ is $\eps$-far from $s$-term DNFs in relative distance, 
  under the conditions about $h^{-1}_i(1)$ vs $f^{-1}(1)$ given in  
  \Cref{thm:createoracles},
  we must have $\sum_i t_i>s$.
As a result, $\maindnf$ rejects with high probability.

\end{itemize}

\begin{lemma} 
\label{lem:partition2}
Let $h_1,\dots,h_{s'}$ be functions that satisfy $h_i^{-1}(1)\subseteq f^{-1}(1)$ for each $i$ and \Cref{eq:hehe1}. 
If $h_i$ is $(\eps/(2s))$-close to a $t_i$-term DNF in relative distance for each $i\in [s']$ and $\sum_i t_i\le s$, then $f$ must be $\eps$-close to an $s$-term DNF in relative distance.
\end{lemma}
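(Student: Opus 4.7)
\bigskip

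\noindent\textbf{Proof proposal for \Cref{lem:partition2}.} The plan is to adapt the proof of \Cref{lem:partition} by carefully accounting for the extra ``uncovered'' slack that \Cref{eq:hehe1} permits. For each $i\in [s']$, fix a $t_i$-term DNF $h_i'$ witnessing $\rd(h_i,h_i')\le \eps/(2s)$, and let $g:=\bigvee_{i\in [s']} h_i'$. Since $\sum_i t_i\le s$, the disjunction $g$ is an $s$-term DNF, so it suffices to show that $\rd(f,g)\le \eps$.

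I would bound the symmetric difference by splitting it into $f^{-1}(1)\setminus g^{-1}(1)$ and $g^{-1}(1)\setminus f^{-1}(1)$. For a point $x\in f^{-1}(1)\setminus g^{-1}(1)$, either $x\notin \bigcup_i h_i^{-1}(1)$, in which case $x$ lies in the ``uncovered'' set controlled by \Cref{eq:hehe1}, or else $x\in h_i^{-1}(1)$ for some $i$ while $h_i'(x)=0$, in which case $x\in h_i^{-1}(1)\symd h_i'^{-1}(1)$. For a point $x\in g^{-1}(1)\setminus f^{-1}(1)$, we have $h_i'(x)=1$ for some $i$; since $h_i^{-1}(1)\subseteq f^{-1}(1)$ and $x\notin f^{-1}(1)$ we must have $h_i(x)=0$, so again $x\in h_i^{-1}(1)\symd h_i'^{-1}(1)$.

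Combining these observations yields
\[
|f^{-1}(1)\symd g^{-1}(1)|\ \le\ \big|f^{-1}(1)\setminus \textstyle\bigcup_i h_i^{-1}(1)\big|\ +\ \sum_{i\in [s']} \big|h_i^{-1}(1)\symd h_i'^{-1}(1)\big|.
\]
The first term is at most $(\eps/2)\cdot |f^{-1}(1)|$ by \Cref{eq:hehe1}. For the second term, the definition of relative distance gives $|h_i^{-1}(1)\symd h_i'^{-1}(1)|\le (\eps/(2s))\cdot |h_i^{-1}(1)|$, and since $h_i^{-1}(1)\subseteq f^{-1}(1)$ and $s'\le s$, the sum is at most $s\cdot (\eps/(2s))\cdot |f^{-1}(1)|=(\eps/2)\cdot |f^{-1}(1)|$. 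Dividing by $|f^{-1}(1)|$ gives $\rd(f,g)\le \eps$, completing the proof.

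The only new ingredient compared to \Cref{lem:partition} is the ``uncovered'' term, and the main (but mild) obstacle is just to notice that the set $f^{-1}(1)\setminus \bigcup_i h_i^{-1}(1)$ cleanly absorbs into the $\eps/2$ budget provided by \Cref{eq:hehe1}, leaving the remaining $\eps/2$ budget for the approximation errors $\rd(h_i,h_i')$ exactly as in the original argument.
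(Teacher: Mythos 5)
Your proof is correct and takes essentially the same approach as the paper's: both construct $g=\bigvee_i h_i'$ and bound $|f^{-1}(1)\symd g^{-1}(1)|$ by the uncovered mass $|f^{-1}(1)\setminus\bigcup_i h_i^{-1}(1)|$ plus $\sum_i |h_i^{-1}(1)\symd h_i'^{-1}(1)|$, controlling the first by \Cref{eq:hehe1} and the second by the per-$h_i$ relative-distance hypothesis together with $h_i^{-1}(1)\subseteq f^{-1}(1)$. The only difference is that you spell out the case analysis justifying the set-inclusion bound more explicitly than the paper does.
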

\begin{proof}
For each $i \in [s']$ let $h_i'$ be a $t_i$-term DNF such that 
  $\rd(h_i,h_i')\le \eps/(2s)$. Let $g= h_1' \vee \cdots \vee h_{s'}$. 
Then $g$ is an $s$-term DNF given that $\sum_i t_i\le s$, and we have 
$$
\rd(f,g)=\frac{|f^{-1}(1)\symd g^{-1}(1)|}{|f^{-1}(1)|}
\le  \frac{ |(\cup_i h_i^{-1}(1))\symd g^{-1}(1)|}{ |f^{-1}(1)| }+\frac{\eps}{2},%
$$
using the first and then the second part of \Cref{eq:hehe1}, respectively. 
On the other hand, we have 
$$
\frac{|(\cup_i h_i^{-1}(1))\symd g^{-1}(1) |}{ |f^{-1}(1)|} \le \frac{\sum_i |h^{-1}_i(1)\symd {h'}^{-1}_i(1)|}{|h^{-1}_i(1)|}\le \frac{\eps}{2} 
$$
using $h_i^{-1}(1)\subseteq f^{-1}(1)$ and $s'\le s$. 
As a result we have $\rd(f,g)\le  (\eps/2)+(\eps/2)= \eps$. 
\end{proof}

\def\calS{\mathcal{S}}
\def\calP{\mathcal{P}}
\def\TS{\textsf{TS}}
\def\calG{\mathcal{G}}

\def\XX{\mathbf{X}}
\def\str{\textbf{str}}
\def\covered{\textsf{terms}}
\def\regularstr{\text{str}}

\section{Creating Oracles: 
 $\CreateOracles$ and the proof of 
\Cref{thm:createoracles}} \label{sec:pooling}

In this section we present the procedure $\CreateOracles$ and prove 
  \Cref{thm:createoracles}.

Let $f:\{0,1\}^n\rightarrow \{0,1\}$ be the unknown function that we would like to test, with relative error parameter $\eps$, for being an $s$-term DNF.
As in earlier sections, we may assume that 
  $s/\eps$ is at least some sufficiently large absolute constant.
  
We use the following parameters in the rest of this section:

\begin{itemize}

    \item As defined earlier, we have $K:=\log^2(s/\eps)$ and $\mu:=16s^2 K$.
    
    \item Let $Q=\poly(s/\eps)$ be an upper bound on the query complexity (total number of $\MQ(f)$ and $\SAMP(f)$ calls) 
  of $\testdnf(\cdot,\cdot,r\le s, \mu,\eps/(2s))$.
    \item Let $N$ be a sufficiently large polynomial~of $s/\eps$ such that
\begin{equation}\label{eq:choiceN}
N\ge Q^{10}\quad\text{and}\quad N\ge (s/\eps)^{10}.
\end{equation}

    \item Finally, let $\alpha=\log N=O(\log (s/\eps))$.

\end{itemize}

\subsection{Overview of the key aspects of $\testdnf$: 
\red{Type 1} and \cyan{Type 2} queries}\label{sec:testersummary}

To simplify the connection between $\testdnf$ and $\CreateOracles$, let us summarize the properties that we need about $\testdnf$ to prove the adequateness of the oracles that $\CreateOracles$ outputs for $\testdnf$. In our later arguments (in particular, to establish part (c) of item~2 and part (b$'$) of item~3 of \Cref{thm:createoracles}) it will suffice to consider these properties.

As mentioned above, $Q=\poly(s/\eps)$ is an upper bound on the maximum total query complexity of $\testdnf(\MQ(h),\SAMP(h),r\le s,\mu,\eps/(2s))$ for any function $h:\{0,1\}^n\rightarrow \{0,1\}$.
 By inspection of $\testdnf$ and the subroutines that comprise it, it can be verified that throughout its execution, $\testdnf$ makes two different kinds of queries to $\MQ(h)$, which we call \red{Type 1} and \cyan{Type 2} queries.  In the code of each subroutine of $\testdnf$, each occurrence of one of these types of queries is indicated by the corresponding color (red for \red{Type 1}, blue for \cyan{Type 2}).  
 
 Before making any queries either of \red{Type 1} or \cyan{Type 2}, $\testdnf$ randomly partitions $[n]$ into $\tau:=\Theta(\mu^2)$ many blocks $\bX_1,\ldots,\bX_m$; this is done in Phase~1 of $\FRB$.  Now we can describe each of the two types of queries:

\begin{flushleft}\begin{enumerate} 

\item[] {\bf \red{Type 1} queries:}  A \red{Type 1} query has the following structure:  Once and for all (before making any of the \red{Type 1} queries), $\testdnf$ draws a collection of at most $Q$ points $\by\sim \SAMP(h)$ and at most $Q$ points $\bw$ uniformly from $\{0,1\}^n$.
Let $\mathbf{Y}$ and $\mathbf{W}$ denote these two sets of points, respectively. A \red{Type 1} query is a query to $\MQ(h)$ on a string $z$ that has the following structure:  
There exist $\by\in \mathbf{Y}$, $\bw\in \mathbf{W}$ and one block $\bX_\ell$ such that for any $i\in [n]\setminus \bX_\ell$, we have $z_i\in \{\by_i,\bw_i\}$.  The bits of $z$ that lie in $\bX_\ell$ may be set arbitrarily.
\red{Type 1} queries are made in $\FRB$ (a subroutine of $\DNFLearner$). Furthermore, $\FRB$ returns a list of oracles which will be used throughout \hyperlink{Test-Factored-DNF}{\textcolor{violet}{\textbf{Test-}}\textcolor{violet}{\textbf{Factored-}}\textcolor{violet}{\textbf{DNF}}}; these oracles use \red{Type 1} queries only. \footnote{The reader might notice that $\FRB$ only partitions $\sR \subseteq [n]$ into blocks, and not all of $[n]$. Indeed, for a coordinate $i \in \sU$, we will always have $z_i=\by_i$. So it's easy to see that this indeed is a special case of \red{Type 1} queries.}

\item[] {\bf \cyan{Type 2} queries:}
A \cyan{Type 2} query is a query to 
$\MQ(h)$ on an arbitrary string in $\zo^n$.  The crucial property of \cyan{Type 2} queries is that %
$\testdnf$ rejects if $h(z)=0$ for any  of the \cyan{Type 2} queries.  
\cyan{Type 2} queries are made in $\ConsCheck$.

\item[] More specifically, after $\FRB$ has been run, the algorithm only queries $\MQ(h)$ through the oracles returned by $\FRB$ or using \cyan{Type 2} queries.

\end{enumerate}\end{flushleft}

To prove that our simulators are an adequate implementation for $h_1, \ldots, h_{s'}$ with respect to $\testdnf$ we will set things up so that in the ``no case'', each $\MQ_i^*$  perfectly implements $\MQ(h_i)$ (See \Cref{sec:part-1-no-case}). 
However, for the ``yes case'' we have to analyze both types of queries separately. We will show that if ``things go well'' during the execution of $\CreateOracles$, then $\MQ_i^*$ answers $1$ for all \cyan{Type 2} queries where $h(z)=1$ and with high probability (over the random partition $\bX_1, \ldots, \bX_m$ and the draw of $\mathbf{Y}$ and $\mathbf{W}$) $\MQ^*_i$ answers correctly all \red{Type 1} queries (See \Cref{sec:queries}). 

\subsection{Two useful subroutines and their analysis when $f$ is a DNF}

In this subsection we present and analyze two  subroutines that are heavily used by the \hyperlink{Find-Factored-DNFs}{\textcolor{violet}{\textbf{Find-}\textbf{Factored-}\textbf{DNFs}}} algorithm.  Before presenting these two subroutines, let us provide some intuition.  At an intuitive level, given a draw $\ba \sim \SAMP(f)$, in the yes-case (when $f$ is an $s$-term DNF) it would be helpful to be able to determine the terms of $f$ that are satisfied by $\ba$. This goal is unfortunately out of reach, since we should not expect to be able to identify any of the terms of $f$ with an $O_{s,\eps}(1)$-complexity algorithm.  However, as discussed earlier (see \Cref{defn:cube} and subsequent discussion),
if $a,b\in \{0,1\}^n$ satisfy the same term of $f$ then  $\cube(a,b)$ is guaranteed to be a subset of $f^{-1}(1)$. Building on this,  we define and analyze two simple subroutines, $\testeq$ and $\inpool$, which give us some partial information about ``how close'' $a$ and $b$ are to satisfying the same term of $f$. 

First we need the following simple definition:

\begin{definition}
\label{def:dense} Given two points $a,b\in \{0,1\}^n$, we say that $\cube(a,b)$ is \emph{dense} (in $f$) if at least $(1/N^2)$-fraction of points in $\cube(a,b)$ are satisfying assignments of $f$.
\end{definition}

\begin{figure}[h]

\begin{algorithm}[H]
\caption{\protect\hypertarget{Test-Equivalence}{\testeq}}%
\label{alg:test-weak-equivalence}
\vspace{0.15cm}\textbf{Input: }$a,b \in \zo^n$, $\MQ$ access to $f: \zo^n \to \zo$.\\
\textbf{Output: }True or False.\\
\begin{tikzpicture}
\draw [thick,dash dot] (0,1) -- (15.9,1);
\end{tikzpicture}
\begin{algorithmic}[1] 
\State Repeat %
  ${N^3}$ times: Sample $\bz \sim \cube(a,b)$ and query $f(\bz)$.
\State If $f(\bz)=1$ for some $\bz$ sampled, return True; otherwise, 
  return False.
\end{algorithmic} 
\end{algorithm}
\caption[$\testeq$]{$\testeq$. The algorithm takes as input two points $a,b \in \zo^n$. If $\cube(a,b)$ is dense, the algorithm returns True with high probability. When $f$ is a DNF, if there exists a term $T$ of $f$ such that $T(a)=T(b)=1$ then the algorithm always returns True. }
\end{figure}

\begin{figure}[h]
\begin{algorithm}[H]
\caption{\protect\hypertarget{In-Pool}{\inpool}} \label{alg:in-pool}
\vspace{0.15cm}\textbf{Input: }$a \in \zo^n$, a pool $\mathcal{P}\subseteq f^{-1}(1)$, $\MQ$ access to $f: \zo^n \to \zo$. \\
\textbf{Output: }True or False.\\
\begin{tikzpicture}
\draw [thick,dash dot] (0,1) -- (15.9,1);
\end{tikzpicture}
\begin{algorithmic}[1]

\For{each $b \in \mathcal{P}$}
\State Repeat ${10\alpha}$ times: 
 Sample $\bz \sim \cube(a,b)$ and query $f(\bz)$.  
\State If $f(\bz)=1$ for all $\bz$ sampled, return True.%
\EndFor
\State Return False.

\end{algorithmic} 
\end{algorithm}
\caption[$\inpool$]{$\inpool$. The algorithm takes as input a point $a \in \zo^n$ and a pool $\calP$. When $f$ is a DNF, if there exists a term $T$ of $f$ and $b \in \calP$ such that $T(a)=T(b)=1$, the algorithm returns True. If for every $b \in \calP$ $\cube(a,b)$ is not dense for every $b \in \calP$, the algorithm returns False with high probability.}
\end{figure}

 We will use the following simple lemma about $\testeq$
  when $f$ is an $s$-term DNF:
\begin{lemma} \label{lem:testeq}
Let $f$ be an $s$-term DNF and $a,b \in \{0,1\}^n$
  be two points.
\begin{flushleft}\begin{enumerate}
\item $\testeq(a,b,\MQ(f))$ always 
  returns True if $a,b$ satisfy $T(a)=T(b)=1$ for\\ some term $T$ of~$f$. %

\item If   
  $\cube(a,b)$ is dense in $f$, 
   then $\testeq(a,b,\MQ(f))$ returns True with probability at least $1-2^{-\Omega(N)}$.%
\item If $\max\{\pointdist(a,T) ,\pointdist(b,T)\} \geq K/4$ for every term $T$ of $f$, then,  
 $$\Prx \big[ {\testeq(a, b,\MQ(f) )}\text{ returns True}  \big] \leq 2^{-K/{10}}.$$ 
\end{enumerate}\end{flushleft}
\end{lemma}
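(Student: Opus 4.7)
\textbf{Proof proposal for Lemma~\ref{lem:testeq}.}

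The plan is to handle each of the three parts essentially independently, using direct combinatorial arguments on $\cube(a,b)$ combined with the already-proved \Cref{lem:cubeterm}.

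For part~(1), the key observation is that whenever $T$ is a term and $T(a)=T(b)=1$, the entire subcube $\cube(a,b)$ lies inside $T^{-1}(1)$: for any literal $\ell$ appearing in $T$, both $a$ and $b$ set $\ell$ to $1$, so in particular they agree on the underlying variable, which means every $z \in \cube(a,b)$ is forced to the same value on that variable by the definition of $\cube$. Thus $\cube(a,b) \subseteq T^{-1}(1)\subseteq f^{-1}(1)$, and every sampled $\bz$ in Line~1 of \Cref{alg:test-weak-equivalence} yields $f(\bz)=1$, so $\testeq$ returns True deterministically.

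For part~(2), since $\cube(a,b)$ is dense, a single uniform sample $\bz\sim \cube(a,b)$ satisfies $f$ with probability at least $1/N^2$. The algorithm returns False only if none of the $N^3$ independent samples satisfies $f$, which has probability at most $(1-1/N^2)^{N^3} \le e^{-N}$, giving the claimed $2^{-\Omega(N)}$ bound.

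For part~(3) I plan a double union bound. By the hypothesis, for every term $T$ of $f$ at least one of $a,b$ is $(K/4)$-far from $T$, so \Cref{lem:cubeterm} gives
\[
\Prx_{\bz\sim\cube(a,b)}[T(\bz)=1]\le 2^{-K/4}.
\]
Union-bounding over the (at most $s$) terms of $f$ gives $\Pr_{\bz\sim\cube(a,b)}[f(\bz)=1]\le s\cdot 2^{-K/4}$, and a second union bound over the $N^3$ independent samples in Line~1 yields
\[
\Pr[\testeq\text{ returns True}]\le N^3\cdot s\cdot 2^{-K/4}.
\]
It then remains to check the bookkeeping: with $K=\log^2(s/\eps)$ and $N=\poly(s/\eps)$, we have $\log_2(N^3 s)=O(\log(s/\eps))$, while $K/4-K/10=3K/20=\Omega(\log^2(s/\eps))$. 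For $s/\eps$ above a sufficiently large absolute constant (as assumed throughout \Cref{sec:pooling}), the quadratic term dominates, yielding $N^3\cdot s\cdot 2^{-K/4}\le 2^{-K/10}$. The only mild ``obstacle'' is this final parameter check, and it is resolved entirely by the standing assumption on $s/\eps$; the rest is immediate from \Cref{lem:cubeterm} and elementary probabilistic reasoning.
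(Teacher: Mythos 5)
Your proof is correct and takes essentially the same approach as the paper's: part~(1) uses $\cube(a,b)\subseteq T^{-1}(1)$, part~(2) uses $(1-1/N^2)^{N^3}\le 2^{-\Omega(N)}$, and part~(3) combines \Cref{lem:cubeterm} with a union bound over the $s$ terms and then over the $N^3$ samples. The only difference is that you spell out the final parameter check ($N^3 s\cdot 2^{-K/4}\le 2^{-K/10}$) in detail, which the paper compresses into ``using $K\gg\log(s)$'' and ``using $K\gg\log(N)$.''
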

\begin{proof}
Item (1) holds because every point $z \in \cube(a,b)$ satisfies $T$ and thus, $f(z)=1$. 

Item (2) follows trivially from the fact that \testeq\ returns True as long as one of the $N^3$ samples lies in $f^{-1}(1)$, and $(1-1/N^2)^{N^3}\le 2^{-\Omega(N)}$.

 For item (3), by \Cref{lem:cubeterm} and a union bound over the $s$ terms of $f$, we have that $$\Prx_{\bz \sim \cube(a,b)}\big[f(\bz)=1\big] \leq s 2^{-K/4}\le 2^{-K/5}$$
using $K\gg \log(s)$. Item (3) follows by a union bound over the $N^3$ samples that are drawn by $\testeq$ draws and using $K\gg \log(N)$.
\end{proof}

We will be interested in certain nonempty subsets $\mathcal{P} \subseteq f^{-1}(1)$ of satisfying assignments, which we will refer to as \emph{pools}. Looking ahead, a useful intuition is to think of a pool $\mathcal{P}$ as consisting of a collection of points $\bz \sim \SAMP(f)$ such that the terms they satisfy are ``close'' to each other, in the sense that they all fall in the same cluster of $\Clustering(f,K)$ (see  \Cref{sec:clustering}).

\begin{definition}
\label{def:covered}
Let $f$ be an $s$-term DNF. Given $x\in f^{-1}(1)$, we write $\covered(x)$ to denote the set of terms of $f$ that $x$ satisfies; we sometimes say that these terms are \emph{covered} by $x$. 
Given a pool $\mathcal{P}\subseteq f^{-1}(1)$, we write
  $\covered(\calP)$ to denote the set of terms of $f$ that are satisfied by at least one $x\in \calP$, i.e., it is the union of $\covered(x)$, $x\in \calP$; we sometimes say that these terms are \emph{covered} by $\calP$.

\end{definition}

We prove the following simple lemma about the subroutine $\inpool$:

\begin{lemma}\label{lem:inpool}
Let $f$ be an $s$-term DNF.
Let $\calP\subseteq f^{-1}(1)$ be a pool and $a \in \zo^n$.
\begin{enumerate}
    \item If $T(a)=1$ for some term $T\in \covered(\calP)$, 
      then $\inpool(a, \mathcal{P}, \MQ(f))$ always returns True.
    \item If for every $b\in \calP$, $\cube(a,b)$ is not dense in $f$, then 
    $$\Prx \big[ \inpool(a,\calP,\MQ(f))\ \text{returns True}  \big] \leq \frac{|\calP|}{N^{20 \alpha}}.$$
\end{enumerate}
\end{lemma}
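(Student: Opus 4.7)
The plan is to prove each item directly from the definitions, with part (2) using only a single union bound over the pool.

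For part (1), suppose $T \in \covered(\calP)$ is a term of $f$ with $T(a)=1$. By definition of $\covered(\calP)$, there exists $b^\star \in \calP$ with $T(b^\star)=1$. So $a$ and $b^\star$ both satisfy $T$, and as noted in the discussion following \Cref{defn:cube}, every $z \in \cube(a,b^\star)$ also satisfies $T$ and hence $f(z)=1$. Therefore, during the iteration of the \textbf{for} loop of \inpool\ corresponding to $b^\star$, every one of the $10\alpha$ queries $f(\bz)$ returns $1$, and the algorithm outputs True. (Note that \inpool\ may return True on some earlier $b \in \calP$; either way it returns True, as claimed.)

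For part (2), fix any $b \in \calP$ and condition on the iteration of the \textbf{for} loop processing $b$. By assumption $\cube(a,b)$ is not dense in $f$, i.e.,
\[
\Prx_{\bz \sim \cube(a,b)}\big[f(\bz)=1\big] < \frac{1}{N^2}.
\]
The $10\alpha$ samples $\bz$ drawn inside this iteration are independent, so the probability that $f(\bz)=1$ for all of them is at most
\[
\Big(\tfrac{1}{N^2}\Big)^{10\alpha} = N^{-20\alpha}.
\]
This is exactly the probability that \inpool\ returns True during the iteration for $b$. A union bound over all $b \in \calP$ yields
\[
\Prx\big[\inpool(a,\calP,\MQ(f))\ \text{returns True}\big] \le \frac{|\calP|}{N^{20\alpha}},
\]
as desired.

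No significant obstacle is anticipated: part (1) is purely combinatorial (the cube spanned by two satisfying assignments of a common term is contained in that term's satisfying set), and part (2) is an elementary probability calculation plus a union bound. The only thing to keep straight is the quantitative bookkeeping, $(1/N^2)^{10\alpha} = N^{-20\alpha}$, matching the exponent in the stated bound.
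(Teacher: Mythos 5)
Your proof is correct and matches the paper's argument exactly: part (1) uses that the cube spanned by two points satisfying a common term lies in that term's (hence $f$'s) satisfying set, and part (2) is a per-iteration bound $(1/N^2)^{10\alpha} = N^{-20\alpha}$ followed by a union bound over $\calP$. You have simply spelled out the details that the paper leaves terse.
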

\begin{proof}
Item (1) is trivial, as
$T(a)=1$ for some $T\in \covered(\calP)$ implies that $T(b)=1$ for some~$b\in \calP$ as well and thus,
  $f(z)=1$ for all $z\in \cube(a,b)$. 
Item (2) follows from \Cref{def:dense}, the fact that $\inpool$ draws $10 \alpha$ independent points, and a union bound over the elements of $\calP$.
\end{proof}

\subsection{The \CreateOracles~procedure}

We are now ready to present the \CreateOracles~algorithm, \Cref{algo:s-term-dnf-part-1}. This algorithm has five main phases; in this subsection we give a high-level overview of these phases, before the detailed presentation of the algorithm, and give the simple proof of the algorithm's query complexity.

The first phase, ``Building Pools,'' draws a collection ${\cal S}$ of ${N^3}$ many assignments, each drawn uniformly at random from the set $f^{-1}(1)$ of satisfying assignments. ${\cal S}$ is then partitioned into a collection of pools ${\cal G}$ based on the outcome of running  \textsf{Test-Weak-Equivalence}$(a,b,\MQ(f))$ on each pair $a,b \in {\cal S}.$ (We note that this property, that ${\cal G}$ is a partition of ${\cal S}$, is maintained throughout the \CreateOracles~algorithm.)  Intuitively, if $f$ is an $s$-term DNF then the different pools resulting from this phase will consist of (some of) the satisfying assignments of different factored-DNFs in a representation of $f$ as a disjunction over a family of factored-DNFs.

The second phase, ``Sanity Check,'' halts and rejects if strictly more than $s$ pools were formed in the first phase (as suggested above, this should not occur if $f$ is actually an $s$-term DNF).

The third phase, ``Merging Pools,'' uses \red{\testeq}~and a small number of fresh samples from $f^{-1}(1)$ to merge some pairs of pools in ${\cal G}$. As was discussed in \Cref{sec:hurdle1} this step is carried out to ensure that property (iii) of \hyperref[fig:decomposition]{Figure 1} holds. 

The fourth phase,
``Removing low weight pools,'' uses \textsf{In-Pool} and another small collection of fresh samples from $f^{-1}(1)$ to identify a subset of the pools in ${\cal G}$ that are ``heavy'' pools.
The high-level intuition here is that heavy pools are the ones for which we can have high confidence that random draws from $\SAMP(f)$ in the rest of the algorithm will satisfy some term covered by that pool.  
We remark that in Phase 4, the ``sufficiently long'' random string $\str_{\calP}$ should be viewed as a huge random string, with enough randomness to make all calls to $\inpool$ during and after this phase fully deterministic. In particular, for every $z \in f^{-1}(1)$, $\str_{\calP}$ contains the random bits required to simulate $\inpool$  deterministically for every possible input $z$.
(Having a global random string will be useful in the analysis when we prove \Cref{thm:createoracles}.)

Finally, in the fifth phase, for each ``heavy'' pool ${\cal P}_i$ we run 
\Cref{algo:simulator-for-MQ} and \Cref{algo:simulator-for-samp}
 to construct $\MQ^*$ and $\SAMP^*$ simulators for a function $h_i$ corresponding to that ``heavy'' pool. Recall that these simulators will be subsequently used in line~6 of the main DNF testing algorithm $\maindnf$.

  \medskip
  
 The remainder of \Cref{sec:pooling} is devoted to the proof of \Cref{thm:createoracles}.
 The following easy claim establishes Item~1 of the theorem, which upper bounds the query complexity:

\begin{figure}[t!]
\begin{algorithm}[H] 
\caption{\protect\hypertarget{Find-Factored-DNFs}{\CreateOracles}}\label{algo:s-term-dnf-part-1} 
\textbf{Input: }$\MQ(f), \SAMP(f)$, a positive integer $s$ and $\eps > 0$. \\
\textbf{Output:} Either reject or a list of at most $s$ pairs of oracle simulators $(\MQ_i^*,\SAMP_i^*)$. \\
\begin{tikzpicture}
\draw [thick,dash dot] (0,1) -- (16.5,1);
\end{tikzpicture}
\begin{algorithmic}[1]
    \Algphase{Phase 1:  Building Pools}
    \State Draw a set $\mathcal{S}$ of $N^3$ points $\bz \sim \SAMP(f)$.
    \State Create an undirected graph $\calG$ with vertex set $\calS$ as follows. %
    \For{each pair of distinct samples $\ba, \bb \in {\cal S}$} 
    \State Run $\testeq(\ba,\bb,\MQ(f))$; add edge $\{\ba,\bb\}$ to $\calG$ if it returns true.
    \EndFor 
    \State Let $\{\calP_i\}$ be the connected
      components of $\calG$, which we refer to as \emph{pools} of $\calG$.\vspace{0.1cm}
    \vspace{-0.1cm}
  \Algphase{Phase 2:  Sanity Check }    
   \State If ${\cal G}$ contains more than $s$ pools, then halt and reject.\vspace{0.1cm}
 \vspace{-0.1cm}
\Algphase{Phase 3:  Merging pools}
\State Draw $N^5$ many samples $\bz \sim \SAMP(f)$.
\State Run $\testeq (\ba, \bz,\MQ(f))$ for every $\ba\in \calS$ and every sample $\bz$.
\If{there exists a sample $\bz$ and $\ba\in \calP$, $\ba'\in\calP'$ from two different pools $\calP,\calP'$ of $\calG$ such that \newline \ \ \ \hskip2em\hspace{0.5cm}\testeq\ returns True on both $(\ba,\bz,\MQ(f))$ and $(\ba',\bz,\MQ(f))$} %
\State Remove $\mathcal{P}, \mathcal{P}'$ and replace them by $\calP\cup \calP'$.
\State Go back to the begining of this phase (line 8).
\EndIf\vspace{0.1cm}

    \vspace{-0.2cm}
   \Algphase{Phase 4:  Removing low weight pools}
\State For each pool $\mathcal{P}$, draw a sufficiently long random string $\str_\calP$ and let $\textsf{counter}_{\cal{P}}=0$.
\State Draw $\omega := \tilde{O}(s/\eps)$ many $\bz \sim \SAMP(f)$.
\For{each sample $\bz$ drawn} 
\For{each pool $\mathcal{P}$ in $\calG$}
\State Run \inpool($\bz, \mathcal{P},\MQ(f))$ with randomness from  $\str_{\calP}(\bz)$.
\State Increment $\textsf{counter}_\calP$ if $\inpool$  returns True.
\EndFor 
\State Halt and reject $f$ if all calls to $\inpool$ on $\bz$ return False. %
\EndFor
\State Let $\He$ be the set of those $\calP$ for which
$\textsf{counter}_{\cal{P}}$ is at least  
  ${\frac \epsilon {20s}} \cdot \omega$.\vspace{0.1cm}
 \vspace{-0.1cm}
\Algphase{Phase 5: Returning Oracles}
\State Return the list containing the following pairs for $\calP\in \He$: %
$$\Big(\MQ^*\big(\cdot,\calP,\MQ(f),\str_\calP\big), \hspace{0.08cm}\SAMP^*\big(\calP,\MQ(f),\SAMP(f),\str_\calP\big)\Big),$$
where $\MQ^*$ and $\SAMP^*$ are given in \Cref{algo:simulator-for-MQ} and \Cref{algo:simulator-for-samp}, respectively.
\end{algorithmic} 
\end{algorithm}
\caption[$\CreateOracles$: An algorithm to generate a list of at most $s$ pairs of oracle simulators $(\MQ_i^*,\SAMP_i^*)$, corresponding to factored-DNFs]{$\CreateOracles$.
The algorithm takes as input $\MQ$ and $\SAMP$ oracles for $f$, a size parameter $s$, and $\eps>0$.  The algorithm attempts to construct ``heavy pools'' of draws from $\SAMP$. It either rejects, or else for each heavy pool it returns a pair of simulated oracles $\MQ^*$, $\SAMP^*$ for a subfunction of $f$ induced by that pool.}
\end{figure}
\clearpage

\begin{figure}[t!]
\begin{algorithm}[H]
    \caption{$\MQ^*(x,\calP,\MQ(f),\regularstr)$}\hypertarget{algo:simulator-for-MQ}{algo:simulator-for-MQ}\label{algo:simulator-for-MQ}
    \textbf{Input: }$x\in \{0,1\}^n$, a pool $\calP\subseteq f^{-1}(1)$, $\MQ$ access to $f$, and a sufficiently long string $\regularstr$.  \\
    \textbf{Output: }Either $0$ or $1$.\\
\begin{tikzpicture}
\draw [thick,dash dot] (0,1) -- (16.5,1);
\end{tikzpicture}
    \begin{algorithmic}[1]
    \State Return $0$ if $f(x)=0$; otherwise,
 run $\inpool(x,\calP,\MQ(f))$ with randomness from $\regularstr(x)$. %
    \State Return $1$ if $\inpool$ returns True; return $0$ otherwise.
    \end{algorithmic}
\end{algorithm}
\caption[$\MQ^*$]{{$\MQ^*$. The algorithm takes as input $x$, a pool $\calP$ and a long string str of random bits.
The goal of the algorithm is to simulate a $\MQ$ oracle for $h_\calP$ (as defined in \Cref{eq:hPdef}). }}
\end{figure}

\begin{figure}[t!]
\begin{algorithm}[H]
    \caption{$\SAMP^*(\calP,\MQ(f),\SAMP(f),\regularstr)$}\hypertarget{algo:simulator-for-samp}{algo:simulator-for-samp}\label{algo:simulator-for-samp}
    \textbf{Input: }A pool $\calP\subseteq f^{-1}(1)$, $\MQ$ and $\SAMP$ access to $f$, and a sufficiently long string $\regularstr$.  \\
    \textbf{Output: }A point in $\{0,1\}^n$.\\
\begin{tikzpicture}
\draw [thick,dash dot] (0,1) -- (16.5,1);
\end{tikzpicture}
    \begin{algorithmic}[1]
    \RepeatN{{$100(s/\eps)\alpha=\tilde{O}(s/\eps)$}}
    \State Draw $\bz\sim \SAMP(f)$; halt and return 
      $\bz$ if $\MQ^*(\bz,\calP,\MQ(f),\regularstr)=1$.

        \End
        \State Return $0^n$. \Comment{Default string: The algorithm failed to get a ``good'' $\bz$.}
    \end{algorithmic}
\end{algorithm}
\caption[$\SAMP^*$]{{$\SAMP^*$. The algorithm takes as input a pool $\calP$ and a long string str of random bits. %
The goal of the algorithm is to simulate a $\SAMP$ oracle for $h_\calP$ (as defined in \Cref{eq:hPdef}). }}
\end{figure}

\begin{claim} [Complexity of $\CreateOracles$] \label{rem:pooling-efficiency}
The total number of calls to $\SAMP(f)$ and $\MQ(f)$ that $\CreateOracles$ makes is at most $\poly(s/\eps)$, and the oracle simulators that it returns each have $\poly(s/\eps)$ query complexity.
\end{claim}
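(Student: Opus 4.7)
The plan is to bound the query complexity by walking through each of the five phases of $\CreateOracles$ separately, charging a $\poly(s/\eps)$ bound to each, and then bounding the per-call cost of the simulators $\MQ^*$ and $\SAMP^*$. Throughout, the key observation is that $N = \poly(s/\eps)$, $\alpha = \log N = O(\log(s/\eps))$, and the number of pools never exceeds $s$ (thanks to the Phase~2 sanity check), so every quantity that appears is polynomial in $s/\eps$.

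First I would handle Phase~1: $|\mathcal{S}| = N^3$ samples are drawn from $\SAMP(f)$, and for each of the $\binom{N^3}{2}$ pairs the algorithm runs $\testeq$, which itself makes $N^3$ membership queries. This gives $O(N^9)$ queries total, which is $\poly(s/\eps)$. Phase~2 makes no queries. For Phase~3, each iteration of the merging loop draws $N^5$ fresh samples and runs $\testeq$ on each sample against every $\ba \in \mathcal{S}$; since $|\mathcal{S}| = N^3$ and each $\testeq$ costs $N^3$ queries, a single iteration costs $O(N^{11})$ queries. The loop iterates at most $s$ times since at least two pools are merged per iteration and the initial number of pools is at most $s$, so the total cost of Phase~3 is $s \cdot O(N^{11}) = \poly(s/\eps)$. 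Phase~4 draws $\omega = \tilde O(s/\eps)$ samples from $\SAMP(f)$ and for each one invokes $\inpool$ on each of the (at most $s$) pools, each invocation costing $|\mathcal{P}| \cdot 10\alpha \le N^3 \cdot 10\alpha$ queries; this yields a $\poly(s/\eps)$ bound as well. Phase~5 only returns pointers to the simulators and makes no additional queries to $f$.

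For the per-call complexity of the simulators, $\MQ^*(x,\mathcal{P},\MQ(f),\str)$ makes one query to $\MQ(f)$ followed by one call to $\inpool$, whose cost is $|\mathcal{P}| \cdot 10\alpha \le 10 N^3 \alpha = \poly(s/\eps)$. The simulator $\SAMP^*$ runs $100(s/\eps)\alpha = \tilde O(s/\eps)$ iterations, each of which draws one sample from $\SAMP(f)$ and makes one call to $\MQ^*$; multiplying the iteration count by the per-call cost of $\MQ^*$ established above gives a $\poly(s/\eps)$ upper bound on the cost of a single $\SAMP^*$ call.

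Combining these bounds establishes both parts of the claim. There is no real obstacle here: the argument is a routine bookkeeping exercise, and the only mild subtlety is observing that the Phase~3 loop terminates within $s$ iterations because the Phase~2 sanity check guarantees at most $s$ pools at the start of Phase~3, and each successful merge strictly decreases the pool count.
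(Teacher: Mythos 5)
Your proof is correct and follows essentially the same phase-by-phase bookkeeping approach as the paper's; the only notable difference is that your per-phase tallies are more explicit (the paper's stated $O(sN^6)$ bound on $\testeq$ calls appears to be an undercount, though this doesn't affect the $\poly(s/\eps)$ conclusion, and your phrasing ``at least two pools are merged per iteration'' should read ``two pools are merged into one,'' which still gives at most $s$ iterations). Both arguments hinge on the same observations: the Phase~2 sanity check caps the pool count at $s$, $N=\poly(s/\eps)$, and $\alpha=O(\log(s/\eps))$.
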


\begin{proof}
 First note that since Phase 3 is only reached if $|{\calG}| \leq s$, the loop in Phase 3 can have no more than $s$ iterations. So, the total number of samples drawn on lines 1 and 8 together is no more than $O(sN^5+s^2/\epsilon)$, which is $\poly(s/\eps)$ by our setting of parameters.
Second, the total number of $\MQ$ calls is at most {$N^3$ times} the number of calls of $\testeq$ in Phase 1 and 3, which is at most $O(sN^6)$.
For Phase 4, it is easy to verify that the total query complexity is $\poly(s/\eps)$ at first we get $\poly(s/\epsilon)$ samples from $\SAMP(h)$, then the outer loop is repeated $\poly(s/\eps)$ times, and the inner loop is repeated at most $s$ times, where each iteration of the inner loop uses $\poly(s/\eps)$ calls to $\MQ(h)$.
Finally, the claim about the query complexity of the oracle simulators is easily verified by direct inspection of \Cref{algo:simulator-for-MQ} and \Cref{algo:simulator-for-samp}.
\end{proof}

\subsection{Analyzing $\CreateOracles$ in the yes case}

Throughout this subsection, we assume that $f:\{0,1\}^n\rightarrow \{0,1\}$ is an $s$-term DNF and we fix some representation of $f=T_1 \lor \ldots \lor T_{\leq s}$ --- in particular, the same representation of $f$ as in the clustering procedure (\Cref{algo:cluster}). Our goal in this subsection is to prove the second item of \Cref{thm:createoracles}, restated here for convenience:
\begin{flushleft}\begin{enumerate}
\item[] \textbf{Item 2 of \Cref{thm:createoracles} ($\CreateOracles$, yes case):}

When $f$ is an $s$-term DNF, with probability at least $0.9$, $\CreateOracles$ does not reject and there exist $s' \leq s$ Boolean functions $\smash{h_1,\ldots,h_{s'}:\{0,1\}^n\rightarrow \{0,1\}}$ such that (a) Each $h_i$ is a $k_i$-term DNF such that $\sum_{i=1}^{s'}k_i \leq s$;
(b) Each $h_i$ is a $(k_i,\mu)$-factored-DNF; and 
(c) For each $i\in [s']$, $(\MQ^*_i,\SAMP^*_i)$ is an adequate implementation of oracles for $h_i$ with respect to $\testdnf(\cdot,\cdot, {k_i},\mu,\eps/(2s))$.
\end{enumerate}\end{flushleft}

\subsubsection{High level idea} Given a pool $\calP$, let 
\begin{equation}
    \label{eq:hPdef}
h_{\calP} := \bigvee_{a \in \calP} \bigvee_{T \in \covered(a)} T.
\end{equation}
In other words, recalling \Cref{def:covered}, $h_{\calP}$ is the DNF consisting of all  terms in $\covered(\calP)$.
The functions $h_1, \ldots, h_{s'}$ in Item 2 of \Cref{thm:createoracles} will correspond to the functions $h_\calP$ when \hyperlink{Find-Factored-DNFs}{\textcolor{violet}{\textbf{Find-}\textbf{Factored-}\textbf{DNFs}}} reaches line 24. With this in mind, observe that when $h_i=h_\calP$, then we have $k_i=|\covered(\calP)|$.
To prove part (a), we show that each term of $f$ is covered by at most one assignment in $\calS$. In turn this will imply that the {terms of $h_{\cal P}$, ${\cal P} \in \He$,} form a partition of (a subset of) the terms in our DNF representation of $f$, and therefore the terms in each $h_{\calP}$ are distinct and the total number of  terms summed over all $h_{\calP}$, ${\cal P} \in \He$, is at most $s$.
To prove part (b), we show that each $h_{\calP}$ is a factored-DNF by showing that all its terms must belong to some single cluster $C$ of the clustering $\Clustering(f,K)$. Then since each cluster $C$ corresponds to a factored-DNF (recall \Cref{sec:clusters-factored-DNFs}), it follows that each $h_{\calP}$ is also a factored-DNF. Finally to prove (c), we show that our oracles simulate functions that are very close to the functions $h_{\calP}$. 

\subsubsection{Preliminary definitions and results}
Before turning to the detailed proof of Item~2 of \Cref{thm:createoracles},
we begin with some preliminary definitions and basic lemmas.

\begin{lemma}
\label{fact1-onepoolterms}
    Every term $T \in f$ is covered by at most one pool. 
\end{lemma}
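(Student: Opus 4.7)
The plan is to trace through the pool construction, starting with Phase 1 of $\CreateOracles$, and show that two samples in $\calS$ that both satisfy a common term $T$ of $f$ are always placed in the same pool. Concretely, I would argue as follows: suppose for contradiction that some term $T$ of $f$ is covered by two distinct pools $\calP, \calP'$ at the end of $\CreateOracles$. Then there exist $a \in \calP$ and $b \in \calP'$ with $T(a) = T(b) = 1$. By item (1) of \Cref{lem:testeq}, $\testeq(a,b,\MQ(f))$ always returns True on such a pair, so the edge $\{a,b\}$ is added to $\calG$ in line 4 of Phase 1. Hence $a$ and $b$ lie in the same connected component of $\calG$, and therefore in the same pool at the end of Phase 1.

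Next I would observe that Phases 3 and 4 never split pools: Phase 3 only ever replaces a pair of pools by their union, and Phase 4 merely discards entire pools deemed light. Thus the ``same pool'' relation among samples in $\calS$ only gets coarser as the algorithm proceeds. In particular, if $a$ and $b$ are in the same pool at the end of Phase 1, then at every subsequent point (including at line 24 when the list of pools is finalized) they remain in the same pool. This contradicts $\calP \neq \calP'$.

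The key step is the observation that \testeq\ is guaranteed to return True (not merely with high probability) whenever the two input points share a satisfying term, which is what makes this statement deterministic rather than probabilistic. I do not expect any real obstacle: the only subtlety is making sure to explicitly check that no phase of the procedure can separate two samples that were once in the same pool, which is immediate from inspection of the code.
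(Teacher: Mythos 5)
Your proof is correct and follows essentially the same route as the paper: both invoke item~(1) of \Cref{lem:testeq} to get determinism, then observe that Phases~3 and~4 only merge or discard whole pools and so cannot separate two samples that were joined in Phase~1. The only (immaterial) wording nit is that after Phase~4 your points $a,b$ might both be \emph{discarded} rather than literally ``remain in the same pool''; the accurate statement is that they can never end up in two distinct surviving pools, which is exactly what the contradiction needs.
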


\begin{proof}
Fix a term $T$ of $f$, and assume we have $\ba,\bb \in \calS$ with $T(\ba)=T(\bb)=1$. By part (1) of \Cref{lem:testeq}, on line~4 of $\CreateOracles$ an edge is added between $\ba$ and $\bb$. Hence, by line~6, we have that  all $\ba \in \calS$ with $T(\ba)=1$ are in the same connected component of $\calG$. This implies that on line~6, $T$ can only be covered by a single pool $\mathcal{P}$.
After line~6, the only way pools are changed is by merging them during Phase~3, so it still holds that $T$ can only be covered by one pool.
\end{proof}

\begin{definition} \label{def:mediumnarrow}
Let $w_{\min}$ be the minimum width of any term in our DNF representation of $f$.
We use $w_{\min}$ and $\alpha$ to classify terms of $f$ according to their width as follows: We say a term of $f$ is \emph{medium} if its width is between $w_{\min}$ and $w_{\min}+2\alpha$; and we say that it is \emph{narrow} if its width is between  $w_{\min}$ and $w_{\min}+\alpha$. (Note that a term of length between $w_{\min}$ and $w_{\min}+\alpha$ is both medium and narrow.)
Finally, if a term of $f$ has width at least $w_{\min}+K$, we say that it is \emph{wide}.
\end{definition}

\begin{lemma}
    \label{oldlemma74}
    With probability at least $0.99$, every medium term of $f$ is covered by at least one $a \in {\cal S}$.
\end{lemma}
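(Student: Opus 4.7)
The plan is a straightforward two-step calculation: first lower-bound the probability that a single $\bz \sim \SAMP(f)$ covers a given medium term $T$, and then use the fact that $\calS$ consists of $N^3$ independent samples, together with a union bound over the (at most $s$) medium terms of $f$.

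For the single-sample probability, fix a medium term $T$ of width $w$, so $w_{\min} \le w \le w_{\min}+2\alpha$. Since $|T^{-1}(1)|=2^{n-w}$ and $|f^{-1}(1)| \le \sum_{i} 2^{n-|T_i|} \le s \cdot 2^{n-w_{\min}}$ by a trivial union bound (every term has width at least $w_{\min}$), I expect
\[
\Prx_{\bz \sim \SAMP(f)}[T(\bz)=1] \;\ge\; \frac{2^{n-w}}{s\cdot 2^{n-w_{\min}}} \;\ge\; \frac{2^{-2\alpha}}{s} \;=\; \frac{1}{s\,N^2},
\]
using $\alpha=\log N$.

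For the $N^3$ independent draws in $\calS$, the probability that none of them satisfies $T$ is at most $(1-1/(sN^2))^{N^3} \le e^{-N/s}$. There are at most $s$ medium terms in $f$, so a union bound gives failure probability at most $s\cdot e^{-N/s}$, which is well below $0.01$ by our choice of $N$ as a sufficiently large polynomial in $s/\eps$ (in particular $N \ge (s/\eps)^{10}$). This yields the claimed $0.99$ lower bound.

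There is no real obstacle here; the only thing to double-check is that the bound $|f^{-1}(1)| \le s\cdot 2^{n-w_{\min}}$ is used with the correct orientation (it is an upper bound, giving a lower bound on the sampling probability), and that the definition of ``medium'' uses the slack of $2\alpha$ so that $2^{w_{\min}-w} \ge 2^{-2\alpha} = 1/N^2$ exactly cancels against $N^3$ samples to leave an exponentially small miss probability per term.
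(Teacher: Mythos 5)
Your proof is correct and follows essentially the same route as the paper's: lower-bound the per-sample probability of hitting a medium term by $1/(sN^2)$ via $|f^{-1}(1)| \le s\cdot 2^{n-w_{\min}}$ and the width bound $w \le w_{\min}+2\alpha$ with $\alpha=\log N$, then union-bound over $N^3$ samples and at most $s$ medium terms. You spell out the single-sample calculation a little more explicitly, but there is no substantive difference.
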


\begin{proof}
Fix a medium term $T$ of $f$.  A draw $\bz\sim f^{-1}(1)$ satisfies $T$ with probability at least 
$$
\frac{2^{n-(w_{\min}+2\alpha)}}{s\cdot 2^{n-w_{\min}}}
\ge \frac{1}{sN^2},
$$
recalling that $\alpha = \log N$.
As $\calS$ has $N^3$ samples, the probability that no draw in $\calS$ satisfies $T$ is at most $(1-{\frac 1 {sN^2}})^{N^3} \leq e^{-N/s}$; since $N \gg s$, the lemma follows by a union bound over all (at most $s$) possibilities for the medium term $T$.
\end{proof}

\begin{definition} \label{def:good}
We say that a set $A \subseteq f^{-1}(1)$ is {\it good} if every  $a \in A$ satisfies: 
\begin{itemize}
    \item[(i)] $\covered(a)$ does not contain any wide terms;
    \item[(ii)] Let $T \in \covered(a)$ be such that $|T| \leq w_{\min} +K$. Then for all $T' \in f$ such that $|T' \setminus  T| \geq K$, it holds that $\pointdist(a,T') \geq K/4$. 
\end{itemize}
\end{definition}

\begin{lemma}
    \label{S-is-good}
With probability at least $0.99$, the set ${\cal S}$ drawn on line~1 of $\CreateOracles$ is good.
\end{lemma}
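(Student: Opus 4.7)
The plan is to bound, for each individual $\ba \in \mathcal{S}$, the probability that $\ba$ violates either condition (i) or condition (ii) of \Cref{def:good}, and then take a union bound over the $N^3$ samples. By our choice of $K = \log^2(s/\eps)$ and $N = \poly(s/\eps)$, each per-sample failure probability will be quasi-polynomially small in $s/\eps$, so the union bound over $N^3$ samples still leaves us with total failure probability at most $0.01$.

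For condition (i), I would first note that $f$ has at least one term of width $w_{\min}$, so $|f^{-1}(1)| \geq 2^{n-w_{\min}}$. Hence for any fixed wide term $T$ (with $|T| \geq w_{\min}+K$), a draw $\ba \sim f^{-1}(1)$ satisfies
\[
\Prx_{\ba \sim f^{-1}(1)}\bigl[T(\ba) = 1\bigr] \leq \frac{2^{n - (w_{\min}+K)}}{2^{n - w_{\min}}} = 2^{-K}.
\]
Taking a union bound over the at most $s$ wide terms of $f$ and the $N^3$ samples in $\mathcal{S}$, the probability that some $\ba \in \mathcal{S}$ satisfies some wide term is at most $s \cdot N^3 \cdot 2^{-K}$. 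Since $K = \log^2(s/\eps)$ we have $2^{-K} = (s/\eps)^{-\log(s/\eps)}$, which is much smaller than $1/(s N^3)$ once $s/\eps$ is a sufficiently large constant (as we may assume by \Cref{rem:eps-small}); hence this bound is at most $0.005$.

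For condition (ii), I would fix a sample $\ba$ and a pair of terms $(T, T')$ of $f$ with $|T| \leq w_{\min}+K$ and $|T' \setminus T| \geq K$. Writing $k := |T' \setminus T| \geq K$ and applying \Cref{thm:narrow-term-any-term-outside-cluster-far-from-random-a} with $g = f$,
\[
\Prx_{\ba \sim f^{-1}(1)}\bigl[T(\ba)=1 \text{ and } \pointdist(\ba,T') \leq K/4\bigr] \leq \Prx_{\ba \sim f^{-1}(1)}\bigl[\pointdist(\ba,T') \leq k/4 \,\big|\, T(\ba)=1\bigr] \leq e^{-k/8} \leq e^{-K/8},
\]
where the first inequality uses $K/4 \leq k/4$ and $\Pr[A \cap B] \leq \Pr[A \mid B]$. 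Taking a union bound over the at most $s^2$ pairs $(T,T')$ of terms of $f$ and the $N^3$ samples in $\mathcal{S}$, the probability that condition (ii) fails is at most $N^3 s^2 \cdot e^{-K/8}$, which is again at most $0.005$ since $e^{-K/8}$ is quasi-polynomially small in $s/\eps$ while $N^3 s^2 = \poly(s/\eps)$.

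A final union bound over the two types of failure gives that $\mathcal{S}$ is good with probability at least $1 - 0.01 = 0.99$. There is no significant obstacle here: the work is essentially a bookkeeping exercise showing that the quasi-polynomial savings from the choice $K = \log^2(s/\eps)$ beat the polynomial blowups from $N^3$ and $s^2$. The only point that requires a moment's care is the reduction in condition (ii) from the joint event $\{T(\ba)=1,\, \pointdist(\ba,T') \leq K/4\}$ to the conditional probability that \Cref{thm:narrow-term-any-term-outside-cluster-far-from-random-a} actually bounds, and the observation that $k \geq K$ lets us replace $k/4$ in the threshold with the (smaller) $K/4$.
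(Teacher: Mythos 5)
Your proposal is correct and follows essentially the same approach as the paper's proof: bound the probability of a wide term being satisfied by $2^{-K}$ using $|f^{-1}(1)| \geq 2^{n-w_{\min}}$, apply \Cref{thm:narrow-term-any-term-outside-cluster-far-from-random-a} for condition (ii), and union bound over the $N^3$ samples and at most $s^2$ term pairs. Your explicit passage from the joint event to the conditional probability (via $\Pr[A\cap B]\le \Pr[A\mid B]$ and $K/4\le k/4$) makes precise a step the paper treats somewhat tersely, but the underlying argument is identical.
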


\begin{proof}
The probability that $\ba \sim f^{-1}(1)$ satisfies a fixed wide term $T$ is at most 
$$
\frac{|T^{-1}(1)|}{|f^{-1}(1)|}\le \frac{1}{2^K}.
$$
Therefore, by a union bound, the probability that any $a \in {\cal S}$ satisfies any wide term is at most $s \cdot N^3 2^{-K}< .005$.
For part (ii), assume that every $a \in {\cal S}$ satisfies property (i).  Let $T \in \covered(a)$, and let $T' \in f$ satisfy $|T' \setminus T| \geq K$. 
By \Cref{thm:narrow-term-any-term-outside-cluster-far-from-random-a},   $\pointdist(\ba,T') \leq K/{4}$ for a uniform random $\ba \in f^{-1}(1)$ with probability at most $e^{-K/{8}}$.
Therefore by a union bound over all
$N^3$ elements of ${\cal S}$ and all $s^2$ pairs of terms $T,T' \in f$ such that $T$ is not wide and $|T' \setminus T| \geq K$,%
the set ${\cal S}$ falsifies point (ii) with probability at most $s^2 {N^3} e^{-K/{8}} < .005$. Combining the two $0.005$ failure probabilities, we get that the overall probability that ${\cal S}$ is not good is at most $.01$. 
\end{proof}

\subsubsection{$\CreateOracles$ accepts with high probability}
With the above preliminary definitions and lemmas out of the way, we turn to the proof of Item~2 of \Cref{thm:createoracles}. We first show, in this subsubsection, that
$\CreateOracles$ accepts with high probability.
To upper bound the probability that $\CreateOracles$ rejects, we separately upper bound the two places where it could potentially reject: Phase 2 (line 7) and Phase 4 (line 21).
First, by \Cref{fact1-onepoolterms}, each term $T \in f$ is covered by at most one pool; thus ${\cal G}$ contains at most $s$ pools, so Phase 2 never rejects. 
For Phase 4, a necessary condition for line 21 to reject is that one of the samples $\bz$ drawn in line~15 causes $\inpool(\bz,\calP,\MQ(f))$ to return False for every pool $\calP$ of $\calG$. Given that (with probability at least 0.99) every medium term is covered by a pool (\Cref{oldlemma74}), recalling \Cref{lem:inpool}, a necessary condition for this to happen is that $T(\bz)=1$ for some  $T$ of $f$ that is not medium.
But~the fraction of points in $f^{-1}(1)$ that satisfy a term that is not medium is at most
$$
\frac{s\cdot 2^{n-w_{\min}-2\alpha}}{2^{n-w_{\min}}}
= \frac{s}{N^2}.
$$
Therefore it follows from a union bound over the $\tilde{O}(s/\eps)$ many draws in line~15 that \hyperlink{Find-Factored-DNFs}{\textcolor{violet}{\textbf{Find-}\textbf{Factored-}\textbf{DNFs}}} rejects with probability at most {$.02$}. %

\medskip

It remains to establish parts (a), (b) and (c) of Item~2; we do that in the following subsubsections. 

\subsubsection{Proof of Item 2, part (a)}
    Our aim is to show that if $\CreateOracles$  reaches Phase~5, then it must be the case that $\sum_{\calP \in \He}|\covered(\calP)| \leq s$.
    By \Cref{fact1-onepoolterms}, each term is covered by at most one pool, and since there are at most $s$ terms, the sum of $|\covered(\calP)|$ over all pools $\calP$ created in Phase 1 is at most $s$.  Then since Phase 3 merges pools, and Phase 4 only deals with a subset of the pools, this property is preserved throughout the run of the algorithm. 

\subsubsection{Proof of  Item 2, part (b)}
To prove this, we will show that after Phase~3, for every pool $\mathcal{P}$ we have that  $\covered(\calP) \subseteq C$ for some cluster $C$ in $\Clustering(f,K)$. This suffices to prove Item 2 (b), since each $h_\calP$ is made of $|\covered(\calP)|$ terms and by \Cref{claim:subset_of_cluster_is_factored}, a DNF made of $k$ terms from the same cluster $C$ in $\Clustering(f,K)$ is a $(k,\mu)$-factored-DNF. 

Towards this end, assume that ${\cal S}$ satisfies all of the previous properties from \Cref{oldlemma74,S-is-good}; in particular ${\cal S}$ is good and every medium term is covered by at least one $a \in {\cal S}$.
We first show that for each $a \in {\cal S}$, there exists a cluster $C$ such that $\covered(a) \subseteq C$. 
Let $a \in {\cal S}$, and suppose there exists $T,T' \in \covered(a)$ such that $T,T'$ are in different clusters. Since ${\cal S}$ is good, by property (i) of \Cref{def:good}, $|T|,|T'| \leq w_{\min} + K$, so by \Cref{cor:K-far-terms-are-in-dif-clusters}, we have $|T'\setminus T| > K$. Then by property (ii) of \Cref{def:good}, we have $\pointdist(a,T') \geq K/4$. But this contradicts the fact that for every $T'$ that satisfies $a$, $\pointdist(a,T')=0$. Therefore all terms in $\covered(a)$ lie in the same cluster.

We say that a point $\ba\in \calS$ lies in cluster $C$ if all terms in $\covered(a)$ lie in $C$. 
 Next, we argue that at the end of Phase 1, for every pool $\calP$, all points in $\calP$ lie in the same cluster. 
 The next lemma shows that with high probability we never add an edge between points that lie in different clusters:

\begin{lemma}\label{lem2:goodphase2}
With probability at least $0.99$,
for every $a,b\in \calS$ that lie in different clusters of $\Clustering(f,K)$, $\testeq(a,b,\MQ(f))$ returns False and $(a,b)$ is not an edge in $\calG$.
\end{lemma}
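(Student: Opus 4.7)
The overall strategy is to combine the structural guarantees about $\calS$ from \Cref{S-is-good} with the quantitative bound of \Cref{lem:testeq}(3), and then union bound over pairs. First, I would condition on the event (of probability at least $0.99$) that $\calS$ is good. Then it suffices to show that, conditioned on goodness, for every fixed pair $a,b \in \calS$ lying in different clusters of $\Clustering(f,K)$, the call $\testeq(a,b,\MQ(f))$ returns True with probability at most $2^{-K/10}$; a union bound over the at most $|\calS|^2 \leq N^6$ such pairs, together with the choice $K=\log^2(s/\eps)$ and $N=\poly(s/\eps)$, drives the total failure probability well below $0.01$.

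\textbf{Key step.} To invoke \Cref{lem:testeq}(3), I must verify that $\max\{\pointdist(a,T),\pointdist(b,T)\} \geq K/4$ for \emph{every} term $T$ of $f$. Fix $a,b\in \calS$ in different clusters, and pick arbitrary $T_a \in \covered(a)$ and $T_b \in \covered(b)$; by goodness property (i), both $T_a$ and $T_b$ satisfy the width bound $\le w_{\min}+K$ required by \Cref{cor:K-far-terms-are-in-dif-clusters}. Also note (as shown earlier in the subsection, using goodness) that all terms in $\covered(a)$ lie in a single cluster, which we call the cluster of $a$, and similarly for $b$; by hypothesis these clusters differ.

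Now fix an arbitrary term $T$ of $f$ and split into two cases. If $T$ does not lie in the cluster of $a$, then since $T_a$ is narrow and $T_a,T$ are in different clusters, \Cref{cor:K-far-terms-are-in-dif-clusters} gives $|T \setminus T_a| > K$, and then goodness property (ii) applied to $T_a \in \covered(a)$ yields $\pointdist(a,T) \geq K/4$. Otherwise $T$ is in the cluster of $a$, hence not in the cluster of $b$, and the symmetric argument using $T_b$ in place of $T_a$ gives $\pointdist(b,T)\geq K/4$. Either way, the hypothesis of \Cref{lem:testeq}(3) is met.

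\textbf{Main obstacle.} The only subtle point is handling terms $T$ that are themselves wide (so that the symmetry between $T$ and the term from $\covered(\cdot)$ is broken): here one leverages the fact that \Cref{cor:K-far-terms-are-in-dif-clusters} only requires \emph{one} of the two terms to be narrow, and the narrow witness $T_a$ (resp. $T_b$) supplied by the goodness of $\calS$ is enough. Putting everything together, the failure probability is at most $\Pr[\calS \text{ not good}] + N^6 \cdot 2^{-K/10} \leq 0.01 + o(1) \leq 0.01$ by our parameter settings, establishing the lemma.
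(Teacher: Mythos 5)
Your proposal takes essentially the same approach as the paper: condition on $\calS$ being good, argue that for any term $T$ of $f$ at least one of $\pointdist(a,T),\pointdist(b,T)$ is $\geq K/4$ (since $T$ can belong to at most one of the two clusters, and goodness supplies a non-wide witness term to feed into \Cref{cor:K-far-terms-are-in-dif-clusters} and goodness property~(ii)), then apply \Cref{lem:testeq}(3) and a union bound over the $\leq N^6$ pairs. The only blemish is the final chain $\Pr[\calS\text{ not good}]+N^6\cdot 2^{-K/10}\le 0.01+o(1)\le 0.01$, which is formally false; the paper's statement is implicitly conditioned on $\calS$ being good so the $\Pr[\text{not good}]$ term is not charged here, and the tiny $N^6\cdot 2^{-K/10}$ is all that remains.
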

\begin{proof}
Let $C$ be the cluster of $\Clustering(f,K)$ such that $\covered(a)\subseteq C$ and let $C'$ be the cluster with $\covered(b) \subseteq C'$.
Given that $\calS$ is good and $a,b \in \calS$ lie in different clusters, either $\pointdist(a,T)\ge K/4$ or $\pointdist(b,T)\ge K/4$ for any term $T$ of $f$. (This is because the former holds, as argued two paragraphs above, whenever $T\notin C$ and the latter holds whenever $T\notin C'$.) The lemma then follows from item (3) of \Cref{lem:testeq} and a union bound over the at most $N^6$ pairs of $a,b$ belonging to $\calS$.
\end{proof}

From now on we fix $\calS$ and $\calG$ (and thus the pools of $\calG$ at the end of Phase 1) and assume that they satisfy all conditions so far; in particular ${\cal S}$ is good, and the conclusion of 
\Cref{lem2:goodphase2} holds.  %
Under these assumptions, by the end of Phase~1, for each pool $\calP$, all points in $\calP$ lie in the same cluster $C_{\calP}$ in $\Clustering(f,K)$, so $\covered(\calP) \subseteq C$. 
It is left to show that when we merge pools during Phase~3, we still have that for every pool $\calP$, $\covered(\calP)$ is a subset of some cluster $C$ in $\Clustering(f,K)$.

We start with a lemma that is similar to %
\Cref{lem2:goodphase2}:

\begin{lemma}\label{lem2:goodphase4}
With probability at least $0.99$, the set of all points $\bz\sim \SAMP(f)$ sampled in Phase 3 is a good set, and every call to $\testeq(a,z,\MQ(f))$ in Phase 3 returns False when $a,z$ lie in different clusters.
\end{lemma}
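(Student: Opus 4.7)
The plan is to mirror the proofs of \Cref{S-is-good} and \Cref{lem2:goodphase2}, adapting the union bounds to account for the (slightly) larger number of samples and \testeq\ calls that occur during Phase~3. The key quantitative observation is that Phase~3 can iterate at most $s$ times: each successful merge reduces the number of pools by one, and there are at most $s$ pools when Phase~3 begins. Hence at most $sN^5$ fresh samples are drawn across all iterations, and at most $sN^5\cdot|\calS|\leq sN^8$ pairs $(a,\bz)$ with $a\in\calS$ are fed to $\testeq$ throughout the phase.

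For the first claim (goodness of all fresh samples), I would repeat the calculation in \Cref{S-is-good} sample-by-sample. For a single $\bz\sim\SAMP(f)$, the probability that $\bz$ satisfies any wide term is at most $s\cdot 2^{-K}$ (union bound over the at most $s$ wide terms and \Cref{def:mediumnarrow}), and the probability that $\bz$ violates property~(ii) of \Cref{def:good} is at most $s^2 e^{-K/8}$ by \Cref{thm:narrow-term-any-term-outside-cluster-far-from-random-a} combined with a union bound over the at most $s^2$ relevant pairs of terms. A union bound over the $sN^5$ fresh draws yields a failure probability of at most $s^2 N^5\cdot 2^{-K} + s^3 N^5\cdot e^{-K/8}$, which is $o(1)$ (in fact much smaller than $0.005$) because $K=\log^2(s/\eps)$ and $N=\poly(s/\eps)$, so $2^{-\Omega(K)}$ dominates any fixed polynomial in $s/\eps$.

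For the second claim, condition on $\calS$ being good (from \Cref{S-is-good}) and on every fresh sample being good (just established). Then for any $a\in\calS$ and any fresh $\bz$ that lie in different clusters of $\Clustering(f,K)$, I can replay the argument of \Cref{lem2:goodphase2} verbatim: goodness of both points plus \Cref{cor:K-far-terms-are-in-dif-clusters} forces $\max\{\pointdist(a,T),\pointdist(\bz,T)\}\geq K/4$ for every term $T$ of $f$ (terms in the cluster of $a$ are handled by the goodness of $\bz$, terms in the cluster of $\bz$ by the goodness of $a$, and all other terms are either wide or handled by property~(ii)). Item~(3) of \Cref{lem:testeq} then gives $\Pr[\testeq(a,\bz,\MQ(f))\ \text{returns True}]\leq 2^{-K/10}$, and a union bound over the at most $sN^8$ such pairs yields failure probability $sN^8\cdot 2^{-K/10}=o(1)$. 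Combining the two failure probabilities and bumping constants gives the claimed $0.99$ success probability.

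The only non-routine step is the bookkeeping around the iteration count of Phase~3; once we observe that at most $s$ iterations can occur, all the quantitative bounds follow by essentially the same calculations as in the earlier lemmas. I do not foresee any genuine obstacle beyond this accounting.
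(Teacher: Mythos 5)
Your proposal is correct and follows exactly the route the paper intends: the paper's own proof is a one-liner deferring to the arguments of \Cref{S-is-good} and \Cref{lem2:goodphase2}, and you reproduce those arguments faithfully, correctly bounding the number of Phase-3 iterations by $s$ (so at most $sN^5$ fresh samples and $sN^8$ \testeq\ calls) and checking that the union bounds still close because $2^{-\Omega(K)}$ with $K=\log^2(s/\eps)$ dominates any fixed polynomial in $s/\eps$. No gaps.
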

\begin{proof}
This follows from arguments similar to those in the proof of %
\Cref{S-is-good,lem2:goodphase2}.
\end{proof}

Consider an iteration of Phase~3, where on reaching line~8 we have that for every pool $\calP$, all its points lie in the same cluster. Suppose that $\ba\in \calP$, $\ba' \in \calP'$ for two different pools $\calP,\calP'$ are such that $\testeq(\ba,\bz, \MQ(f))$ and $\testeq(\ba',\bz, \MQ(f))$ both return True in line~10. By \Cref{lem2:goodphase4} it must be that $\ba, \ba'$ and $\bz$ all lie in the same cluster $C$. By assumption, all points in $\calP$ and $\calP'$ also lie in $C$. So, when we create the pool $\calP \cup \calP'$, every point in this new pool also lies in cluster $C$. So, when we go back to line~8, our assumption about the pools still holds.
Thus, when the algorithm reaches Phase~4, it must be the case that for every pool $\calP$ of $\calG$, $\covered(\calP) \subseteq C$ for some cluster $C$ in $\Clustering(f,K)$. Hence by \Cref{claim:subset_of_cluster_is_factored}, $h_\calP$ is a $(|\covered(\calP)|,\mu)$-factored-DNF, so Item 2, part (b) holds with probability at least 0.97.

\subsubsection{Proof of Item 2, part (c)}\label{sec:queries}

This is the part of Item~2 that requires the most involved argument.
Given a pool $\calP$ and a term $T$ not in $\covered(\calP)$, we say that \emph{$T$ is attracted by $\calP$} if at least a $(1/N^2)$-fraction of points $b\in T^{-1}(1)$ are such that $\cube(a,b)$ is dense for some $a\in \calP$. (Recall that $\cube(a,b)$ is dense if at least $(1/N^2)$-fraction of its points are in $f^{-1}(1)$.)

We begin with two preliminary lemmas giving some useful properties of $\calS$ and $\calG$:

\begin{lemma}\label{lem2:goodphase3}
Conditioned on $\calS$ being good, with probability at least ${0.97}$, the following does not hold  at the end of Phase 3:
There exist a pool $\calP$ and a medium term $T\notin \covered(\calP)$ of $f$ such that $T$ is attracted by $\calP$.
\end{lemma}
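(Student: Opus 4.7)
The plan is to argue by contradiction combined with a union bound over iterations of Phase~3. Suppose that at the end of Phase~3 there exist a pool $\calP$ and a medium term $T\notin \covered(\calP)$ that is attracted by $\calP$. Since Phase~3 halts exactly when no merge is triggered, this configuration must have been present at the start of the final iteration and no merge was triggered during that iteration; I will show each such event is quantitatively rare. First I would invoke \Cref{oldlemma74} to assert that, with probability at least $0.99$, every medium term is covered by some pool $\calP_T$, and by \Cref{fact1-onepoolterms} this pool is unique; in particular, in any bad configuration we must have $\calP_T \neq \calP$, so the true triple of interest is $(\calP, T, \calP_T)$ with $T \in \covered(\calP_T)$.

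Next, I would carry out the key single-iteration estimate. Fix any iteration of Phase~3 and any triple $(\calP, T, \calP')$ with $\calP \neq \calP'$ present at the start of that iteration, $T$ medium with $T \in \covered(\calP')$, and $T$ attracted by $\calP$. I will argue that the $N^5$ fresh samples $\bz \sim \SAMP(f)$ drawn in that iteration trigger a merge between $\calP$ and $\calP'$ with overwhelming probability. For a single $\bz$: (i) $T$ being medium gives $\Pr[T(\bz)=1] \geq |T^{-1}(1)|/|f^{-1}(1)| \geq 2^{-2\alpha}/s = 1/(sN^2)$ using $|T| \leq w_{\min} + 2\alpha$ and $\alpha = \log N$; (ii) conditioning on $T(\bz)=1$ makes $\bz$ uniform on $T^{-1}(1)$, so the attraction hypothesis yields $\cube(\ba,\bz)$ dense for some $\ba \in \calP$ with probability at least $1/N^2$; (iii) given both events, item~(2) of \Cref{lem:testeq} gives that $\testeq(\ba,\bz,\MQ(f))$ returns True with probability $1 - 2^{-\Omega(N)}$, while item~(1) of \Cref{lem:testeq} guarantees $\testeq(\ba',\bz,\MQ(f))$ returns True for any $\ba' \in \calP'$ with $T(\ba')=1$, since $T(\bz)=T(\ba')=1$ places $\cube(\ba',\bz)$ inside $T^{-1}(1) \subseteq f^{-1}(1)$. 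Multiplying, a single $\bz$ triggers the $(\calP,\calP')$ merge with probability at least $1/(2sN^4)$, so the probability that none of the $N^5$ independent samples triggers it is at most $(1 - 1/(2sN^4))^{N^5} \leq e^{-N/(2s)}$.

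Finally I would close with a union bound. Phase~3 runs for at most $s$ iterations, since each iteration that executes a merge strictly decreases the number of pools and at most $s$ pools exist entering Phase~3. Within each iteration there are at most $s$ pools and at most $s$ medium terms, so at most $s^2$ candidate triples $(\calP, T, \calP')$. Summing, the probability that some iteration contains a bad triple yet fails to trigger the corresponding merge is at most $s^3 \cdot e^{-N/(2s)}$, which is negligible for our choice $N \geq (s/\eps)^{10}$; combined with the $0.01$ loss from \Cref{oldlemma74}, this yields the claimed $0.97$ success probability. The main delicacy I anticipate is that the bad event is quantified over the unknown final pool structure rather than a fixed one; the resolution is precisely the observation that a ``bad pair at the end of Phase~3'' forces a bad triple to have been present in and missed by the last iteration, which reduces the analysis to the per-iteration, per-triple bound that can then be union-bounded as above.
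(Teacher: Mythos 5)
Your proposal is correct and follows essentially the same approach as the paper's proof: the per-iteration estimate showing that an attracted medium term $T\notin\covered(\calP)$ triggers a merge with the pool covering $T$ except with probability $e^{-\Omega(N/s)}$ (via the $1/(sN^4)$-fraction of $f^{-1}(1)$ that hits $T^{-1}(1)$ inside a dense cube with $\calP$, together with items (1) and (2) of \Cref{lem:testeq}), and then a union bound over the at most $s$ iterations, $s$ pools, and $s$ medium terms, coupled with the $O(0.01)$ loss from \Cref{oldlemma74}. The only cosmetic difference is that you frame the reduction as ``a bad pair at the end of Phase~3 must have been present and missed in the final round,'' whereas the paper phrases it directly as a union bound over rounds; these are the same argument.
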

\begin{proof}
We first recall that by \Cref{S-is-good}, $\calS$ is good with probability at least 0.99; for the rest of the argument, we condition on $\calS$ being good.
By \Cref{oldlemma74}, with probability at least $1 - 0.01 \cdot {\frac {100}{99}}$ every medium term is covered by a unique pool at the beginning of Phase 3 (the probability bound is as stated because we have conditioned on the at-least-0.99-probability event that $\calS$ is good). This still holds after any number of rounds of merging pools, so we suppose that this holds throughout the rest of the argument below.

Now consider the following event: At the beginning of the $i$-th round of executing lines~8-12, $\calP$ is a current pool that attracts some medium term $T\notin \covered(\calP)$ but $\calP$ does not get merged with the pool that covers $T$ during this round. We show that this happens with probability at most $e^{-\Omega(N/s)}$; given this, the lemma follows by a union bound over all (at most $s$) choices of $i,$ all (at most $s$) choices of $\calP$ and all (at most $s$) choices of $T$.

To this end, since every medium term is covered let $\calP'$ be the pool at the beginning of round $i$ with $T\in \covered(\calP')$  and let $b$ be any point in $\calP'$ with $T\in \covered(b)$.
Given that $T$ is medium and is attracted by $\calP$, we have that the number of points $b\in T^{-1}(1)$ such that $\cube(a,b)$ is dense for some $a\in \calP$ is at least 
$$
\frac{1}{N^2}\cdot 2^{n-w_{\min}-2\alpha},
$$
which is at least {$1/(sN^4)$}-fraction of $|f^{-1}(1)|$. 
Given that we draw $N^5$ samples $\bz$ in this round,
  at least one $\bz$ satisfies that $T(\bz)=1$ and $\cube(a,\bz)$ is dense for some $a\in \calP$ with probability at least $1- e^{-\Omega(N/s)}$.
For this $\bz$, we have from \Cref{lem:testeq} that $\testeq(b,\bz,\MQ(f))$ always returns True and 
  $\testeq(a,\bz,\MQ(f))$ returns True with probability at least $1-2^{-\Omega(N)}$.
As a result, $\calP$ does not get merged with $\calP'$ in this round with probability at most $e^{-\Omega(N/s)}$. 
\end{proof}

\begin{lemma}\label{lem2:farfromnarrow}
    Fix $\calS, \calG$ such that all the {conclusions of} the previous lemmas in this subsection hold.
    Fix a pool $\calP$ of $\calG$ at the beginning of Phase~4 of \CreateOracles. For any narrow $T\in \covered(\calP)$ and any term $T'\notin \covered(\calP)$, we have $|T'\setminus T|\ge \alpha$. 
\end{lemma}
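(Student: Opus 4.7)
The plan is to argue by contradiction: assume $|T'\setminus T|<\alpha$, show that $T'$ must be medium and attracted by $\calP$, and invoke \Cref{lem2:goodphase3} to conclude that $T'\in\covered(\calP)$, contradicting the hypothesis that $T'\notin\covered(\calP)$. That $T'$ is medium follows immediately: viewing terms as sets of literals, $|T'|\le |T\cap T'|+|T'\setminus T|\le|T|+\alpha\le w_{\min}+2\alpha$, while $|T'|\ge w_{\min}$ because $T'$ is a term of $f$ (\Cref{def:mediumnarrow}). It then suffices to exhibit, for some fixed $a\in\calP$ with $T(a)=1$ (which exists because $T\in\covered(\calP)$), a $1/N^2$-fraction of $b\in T'^{-1}(1)$ for which $\cube(a,b)$ is dense in $f$.

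Partition the literals of $T$ into three disjoint sets: $T_{eq}=T\cap T'$ (literals common to $T$ and $T'$), $T_{neq}$ (literals of $T$ whose variable is in $\var(T')$ but with opposite polarity), and $T_{free}$ (literals of $T$ whose variable is not in $\var(T')$). A direct calculation in the spirit of \Cref{lem:cubeterm} shows that $\Pr_{z\sim\cube(a,b)}[T(z)=1]=2^{-d(b)}$, where $d(b)$ is the number of literals of $T$ falsified by $b$. For a uniform $b\sim T'^{-1}(1)$, every literal in $T_{eq}$ is always satisfied, every literal in $T_{neq}$ is always falsified, and every literal in $T_{free}$ is falsified independently with probability $1/2$. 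Hence $d(b)=|T_{neq}|+X$, where $X\sim\text{Binomial}(|T_{free}|,1/2)$.

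The key inequality is $|T_{free}|\le 2\alpha$: since $|T_{eq}|=|T'|-|T'\setminus T|>w_{\min}-\alpha$, we get $|T_{free}|\le|T|-|T_{eq}|<(w_{\min}+\alpha)-(w_{\min}-\alpha)=2\alpha$. Combined with $|T_{neq}|\le|T'\setminus T|<\alpha$, this yields $\Pr[X=0]\ge 2^{-|T_{free}|}\ge 2^{-2\alpha}=1/N^2$, and on the event $X=0$ we have $d(b)<\alpha$, so the density of $\cube(a,b)$ in $T^{-1}(1)\subseteq f^{-1}(1)$ is at least $2^{-\alpha}>1/N^2$. Thus $\cube(a,b)$ is dense for at least a $1/N^2$-fraction of $b\in T'^{-1}(1)$, so $T'$ is attracted by $\calP$, the contradiction we seek. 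The main conceptual obstacle is that $X$ concentrates near $|T_{free}|/2$, so a direct binomial tail bound fails whenever $|T_{free}|$ is much larger than $\alpha$; the argument succeeds precisely because the combined hypotheses ``$T$ is narrow'', ``$T'\in f$ so $|T'|\ge w_{\min}$'', and ``$|T'\setminus T|<\alpha$'' force $T$ and $T'$ to share nearly all of $T'$'s variables, collapsing $|T_{free}|$ down to $O(\alpha)$ and making the ``$X=0$'' event large enough to qualify as attraction.
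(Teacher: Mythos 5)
Your proof is correct, and it shares the same high-level mechanism as the paper's: assume for contradiction that $T'$ is close to $T$, deduce that $T'$ is medium, show that $T'$ is attracted by $\calP$, and derive a contradiction with \Cref{lem2:goodphase3}. The density argument, however, is more elaborate than it needs to be. You split $T$ into $T_{eq},T_{neq},T_{free}$ and run a binomial argument over $b\sim T'^{-1}(1)$, showing that a $\ge 1/N^2$-fraction of $b$ (the event $X=0$) gives a dense cube; you flag the concentration of $X$ around $|T_{free}|/2$ as the chief obstacle. But under your own contradiction hypothesis and the narrowness of $T$, you already have $|T\cap T'| = |T'| - |T'\setminus T| > w_{\min}-\alpha$, hence $|T\setminus T'| = |T|-|T\cap T'| < (w_{\min}+\alpha)-(w_{\min}-\alpha)=2\alpha$, so for \emph{every} $b\in T'^{-1}(1)$ the number of disagreement coordinates between $a$ and $b$ inside $\var(T)$ is at most $|T\setminus T'|<2\alpha$ and the density of $\cube(a,b)$ in $T^{-1}(1)\subseteq f^{-1}(1)$ is at least $2^{-|T\setminus T'|}>1/N^2$, with no probabilistic reasoning about $b$ needed. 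That is essentially the paper's argument (the paper organizes it a bit differently — it first establishes the intermediate fact $|T\setminus T'|\ge 2\alpha$ by the attraction contradiction, then finishes by the algebraic identity $|T'\setminus T|=|T'|-(|T|-|T\setminus T'|)\ge\alpha$ — but the substance is the same). The loss in your version comes from bounding $|T_{neq}|$ and $|T_{free}|$ separately ($|T_{neq}|<\alpha$, $|T_{free}|<2\alpha$, giving only $|T\setminus T'|<3\alpha$ by addition), which forces the binomial patch; combining them via $|T_{neq}|+|T_{free}|=|T\setminus T'|=|T|-|T\cap T'|$ gives the sharp $2\alpha$ bound directly. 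So: correct, but a detour — the ``conceptual obstacle'' you describe disappears once $|T\setminus T'|$ is bounded in one shot rather than piecewise.
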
 
\begin{proof}
 This is trivial if $T'$ is not medium given that $|T'\setminus T|\ge |T'|-|T|\ge \alpha$. When $T'$ is medium, we claim that we have $|T\setminus T'|\ge 2\alpha$ because otherwise, $T'$ would be attracted by $\calP$, which contradicts the conclusion of \Cref{lem2:goodphase3}.
To see why this is the case, note that for any $a$ with $T(a)=1$ and $b$ with $T'(b)=1$, the assumption that $|T\setminus T'| < 2 \alpha$ implies that the fraction of points in $\cube(a,b)$ that satisfy $T$ is at least $1/2^{2\alpha}=1/N^2$.
This implies that, letting $a\in \covered(\calP)$ be a point with $T(a)=1$, every $b\in {T'}^{-1}(1)$ 
  satisfies that $\cube(a,b)$ is dense, so indeed $T'$ is attracted by $\calP$.
  
Now, using $|T\setminus T'|\ge 2\alpha$ and that $T$ is narrow, we have
$$
|T'\setminus T|=|T'|-|T\cap T'|=|T'|-(|T|-|T\setminus T'|)\ge \alpha,
$$
and the lemma is proved.
\end{proof}

In what follows, we assume that all of the {conclusions}
of the previous lemmas hold.
First note  that after drawing $\str_\calP$ at the beginning of Phase 4, $\MQ^*(\cdot,\calP,\MQ(f),\str_\calP)$ (which we will denote by $\MQ^*_\calP$ for short, and similarly we denote $\SAMP^*(\calP,\MQ(f),\SAMP(f),\str_\calP)$ by $\SAMP^*_\calP$) becomes a deterministic algorithm and corresponds to a Boolean function over $\{0,1\}^n$. We write $g_\calP:\{0,1\}^n\rightarrow \{0,1\}$ (after $\str_\calP$ has been drawn) to denote this function underlying $\MQ^*_\calP$.
{By inspection of \Cref{algo:simulator-for-MQ},} it is clear that we always have $$h_\calP^{-1}(1)\subseteq g_\calP^{-1}(1)\subseteq f^{-1}(1).$$
In particular we have that $\MQ^*_\calP$ is exactly the membership oracle for $g_\calP$ so it always answers $1$ when queried on a point with $h_\calP(x)=1$.
\begin{observation}\label{obs:type2}
      $\MQ^*_\calP$ always answers $1$ on a \cyan{Type 2} query if $h_\calP(z)=1$. 
\end{observation}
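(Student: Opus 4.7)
The plan is to unfold the definitions of $\MQ^*_\calP$, $h_\calP$, and $\inpool$, and then invoke the one-sided guarantee from \Cref{lem:inpool} (item 1). Suppose $z \in \{0,1\}^n$ is a Type 2 query point with $h_\calP(z) = 1$. By the definition of $h_\calP$ in \Cref{eq:hPdef}, there exists some $a \in \calP$ and some $T \in \covered(a)$ with $T(z) = 1$; in particular $T \in \covered(\calP)$, and since $T$ is a term of $f$, we have $f(z) = 1$.

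Next I would inspect \Cref{algo:simulator-for-MQ}. On input $z$, the first thing $\MQ^*_\calP$ does is check whether $f(z) = 0$; since $f(z) = 1$, it does not return $0$ at this step. It then proceeds to run $\inpool(z, \calP, \MQ(f))$ using the randomness extracted from $\str_\calP(z)$, and returns $1$ if and only if $\inpool$ returns True.

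The final step is to apply item~1 of \Cref{lem:inpool}: since $T(z) = 1$ for some term $T \in \covered(\calP)$, $\inpool(z, \calP, \MQ(f))$ \emph{always} returns True, regardless of the randomness supplied (the key point is that for any $b \in \calP$ with $T(b) = 1$, every $\bz \sim \cube(z, b)$ satisfies $T$ and hence $f$, so the inner loop of $\inpool$ succeeds deterministically on this $b$). Consequently $\MQ^*_\calP$ returns $1$, as required.

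No step here is a real obstacle: the observation is essentially a bookkeeping consequence of the one-sided guarantee of $\inpool$, together with the fact that $\MQ^*_\calP$ only ever returns $0$ when either $f(z) = 0$ (ruled out by $h_\calP(z) = 1 \Rightarrow f(z) = 1$) or $\inpool$ returns False (ruled out by item~1 of \Cref{lem:inpool}). The point of stating it as an observation is to isolate this one-sided behavior of the simulator for later use (in particular, to justify in \Cref{sec:queries} that Type 2 queries cannot cause $\testdnf$ to reject on a function $h_\calP$ that it would have accepted under a perfect oracle).
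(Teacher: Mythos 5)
Your proof is correct and follows essentially the same route as the paper: the paper establishes the observation by noting (by inspection of Algorithm~\ref*{algo:simulator-for-MQ}) that $h_\calP^{-1}(1)\subseteq g_\calP^{-1}(1)$, which you unpack via the one-sided guarantee of \inpool\ (item~1 of \Cref{lem:inpool}) exactly as the paper's earlier analysis does. Your write-up is slightly more explicit than the paper's ``clear by inspection,'' but the argument is the same.
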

The next lemma shows that, with high probability over $\str_\calP$, $g_\calP^{-1}(1)$ is very close to 
  $h_\calP^{-1}(1)$: 

\begin{lemma}\label{lem2:littleextra}
With probability at least $.99$ over $\str_\calP$,  for every pool $\calP$ the following inequality holds:  
$$
\frac{|g_\calP^{-1}(1)\setminus h_\calP^{-1}(1)|}{|f^{-1}(1)|}\le O\left(\frac{s^2}{N}\right).
$$
\end{lemma}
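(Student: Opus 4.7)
The plan is to bound $\E_{\str_\calP}[|g_\calP^{-1}(1) \setminus h_\calP^{-1}(1)|]$ for a fixed pool $\calP$ by a first-moment argument, and then apply Markov's inequality together with a union bound over the at most $s$ pools. Recalling \Cref{algo:simulator-for-MQ}, a point $x$ lies in $g_\calP^{-1}(1) \setminus h_\calP^{-1}(1)$ exactly when $f(x) = 1$, $x$ satisfies no term of $\covered(\calP)$, and $\inpool(x,\calP,\MQ(f))$ nevertheless returns True on the randomness provided by $\str_\calP(x)$. I would define $p_\calP(x) := \Pr_{\str_\calP}[\inpool(x,\calP,\MQ(f)) \text{ returns True}]$ for such $x$, so that by linearity the expectation of interest equals $\sum_{x \in f^{-1}(1)\setminus h_\calP^{-1}(1)} p_\calP(x)$.

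I would then split this sum according to the widths of the terms of $f$ satisfied by $x$. In the easy case, $x$ satisfies some term $T$ of width more than $w_{\min} + 2\alpha$; the number of such $x$ is at most $s \cdot 2^{n - w_{\min} - 2\alpha} \le (s/N^2)\cdot |f^{-1}(1)|$, and the crude bound $p_\calP(x) \le 1$ contributes $O(s/N^2)\cdot |f^{-1}(1)|$. In the harder case every term of $f$ satisfied by $x$ is medium, and since $x \notin h_\calP^{-1}(1)$, each such term $T$ lies outside $\covered(\calP)$. Here I would invoke \Cref{lem2:goodphase3}, whose conclusion we have assumed, to deduce that no such $T$ is attracted by $\calP$: fewer than a $1/N^2$ fraction of $b \in T^{-1}(1)$ have $\cube(a,b)$ dense for some $a \in \calP$. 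For those $b$ the trivial bound $p_\calP(b) \le 1$ suffices; for the rest, \Cref{lem:inpool}(2) gives $p_\calP(b) \le |\calP|/N^{20\alpha} \le N^3/N^{20\alpha}$, which is negligible relative to $1/N^2$ given $\alpha = \log N$. Summing over the at most $s$ offending medium terms contributes another $O(s/N^2)\cdot |f^{-1}(1)|$.

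Together these two bounds yield $\E_{\str_\calP}[|g_\calP^{-1}(1)\setminus h_\calP^{-1}(1)|] \le O(s/N^2)\cdot |f^{-1}(1)|$. Markov's inequality applied with failure slack $\delta = 0.01/s$ then shows that for this particular $\calP$, with probability at least $1 - 0.01/s$ over $\str_\calP$ the ratio $|g_\calP^{-1}(1)\setminus h_\calP^{-1}(1)|/|f^{-1}(1)|$ is $O(s^2/N^2)$, which is certainly $O(s^2/N)$. Since the strings $\str_\calP$ for different pools are drawn independently and there are at most $s$ pools, a union bound yields the simultaneous claim with total failure probability at most $0.01$.

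The main obstacle I anticipate is aligning the ``attraction'' hypothesis of \Cref{lem2:goodphase3} with exactly what we need: that lemma only controls \emph{medium} terms outside $\covered(\calP)$, so wide terms must be handled by a separate counting bound (using only their small volume in $f^{-1}(1)$) rather than via $\inpool$'s accuracy; one must verify carefully that every $x \in f^{-1}(1)\setminus h_\calP^{-1}(1)$ either falls into this wide bucket or witnesses some medium $T\notin \covered(\calP)$ on which the attraction bound applies. The remaining bookkeeping, in particular using $|\calP|\le N^3$ to make the $|\calP|/N^{20\alpha}$ contribution from \Cref{lem:inpool}(2) negligible relative to the Case A and Case B contributions, is routine given our parameter choices $\alpha=\log N$ and $N\ge (s/\eps)^{10}$.
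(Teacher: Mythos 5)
Your proof follows essentially the same route as the paper's: bound the expectation of $|g_\calP^{-1}(1)\setminus h_\calP^{-1}(1)|$ for a fixed pool by splitting the offending terms into non-medium ones (whose $T^{-1}(1)$ is already tiny relative to $|f^{-1}(1)|$) and medium ones (which, by the conclusion of \Cref{lem2:goodphase3}, are not attracted to $\calP$, so \Cref{lem:inpool}(2) makes $\inpool$ almost always return False), then apply Markov and a union bound over the at most $s$ pools. The only cosmetic difference is that you partition the points $x$ by whether they hit a non-medium term and then overcount within Case~B, whereas the paper directly overcounts via $\sum_{T\notin\covered(\calP)}|g_\calP^{-1}(1)\cap T^{-1}(1)|$; both give the same $O(s/N^2)\cdot|f^{-1}(1)|$ expectation bound and the same conclusion.
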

\begin{proof}
We consider a fixed pool $\calP$ and apply a union bound over the at most $s$ pools at the end of the argument.

Recalling \Cref{eq:hPdef} and \Cref{algo:simulator-for-MQ}, for a point $x\in \{0,1\}^n$ to have $g_\calP(x)=1$ but $h_\calP(x)=0$, it must be the case that $T(x)=1$  for some term $T$ of $f$ with $T\notin\covered(\calP)$\footnote{Since $g_{\calP}(x)=1$ we must have $f(x)=1$, so $T(x)=1$ for some term $T$ of $f$. But since $h_\calP(x)=0$, we must have $T \notin\covered(\calP)$. }.  
So we have
\begin{equation}\label{hehe100}
|g^{-1}_\calP(1)\setminus h^{-1}_\calP(1)|\le 
\sum_{T\notin \covered(\calP)} \big|g_\calP^{-1}(1) \cap T^{-1} (1)\big|.
\end{equation}
The sum over terms $T\notin \covered(\calP)$ that are not medium can be easily upper bounded by
$$
\frac{s}{N^2}\cdot |f^{-1}(1)|.
$$
given that $|T^{-1}(1)|\le |f^{-1}(1)|/N^2$ for each term that is not medium.

On the other hand, consider any medium term $T\notin \covered(\calP)$.
By \Cref{lem2:goodphase3}, we have that $T$ is not attracted to $\calP$, so at most a $1/N^2$ fraction of points in $b\in T^{-1}(1)$ are such that $\cube(a,b)$ is dense in $f$ for some $a\in \calP$. Hence, using \Cref{lem:inpool}, we have that the expected size of $|g_\calP^{-1}(1)\cap T^{-1}(1)|$ is at most 
$$
|T^{-1}(1)|\cdot \left(\frac{1}{N^2}+\left(1-\frac{1}{N^2}\right)\frac{|\calP|}{N^{{20 \alpha}}}\right).
$$
Using $|\calP|\le N^3$ and summing over all medium terms $T \notin \covered(\calP)$, the sum over medium terms $T \notin \covered(\calP)$ is at most \[
{\frac 2 {N^2}} \cdot \sum_{\text{medium terms~}T \notin \covered(\calP)}|T^{-1}(1)| \leq
\frac{2{s}}{N^2}\cdot |f^{-1}(1)|.
\]
So the expectation on the RHS of \Cref{hehe100} is at most $$O(s/N^2)\cdot |f^{-1}(1)|.$$
By Markov's inequality, we have that $|g^{-1}_\calP(1)\setminus h_\calP^{-1}(1)|> O(s^2/N)\cdot |f^{-1}(1)|$ with probability at most $1/(100s)$. The lemma follows from a union bound over the at most $s$ pools.
\end{proof}

The next lemma shows that with high probability, 
for every pool $\calP$ that is in $\He$ at the end of Phase 4,  $|g^{-1}_\calP(1)|$ is fairly large.

\begin{lemma} \label{lem2:light-heavy}
With probability at least $0.99$, every pool $\mathcal{P} \in \He$ at the end of Phase 4 satisfies 
     \begin{equation} 
     \label{eq:inpool-lb}\Prx_{\bz \sim f^{-1}(1)}\big[g_\calP(\bz)=1\big] \geq \epsilon/(30s). %
     \end{equation}
\end{lemma}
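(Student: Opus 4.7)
The plan is to observe that the decision rule ``$\calP \in \He$'' is, up to a relabeling, a sample-based estimator of the very probability $\Pr_{\bz \sim f^{-1}(1)}[g_\calP(\bz)=1]$ that we wish to lower bound. Concretely, fix a pool $\calP$ and condition on the random string $\str_\calP$, so that $\MQ^*_\calP$ (and hence $g_\calP$) is a deterministic function. By inspection of \Cref{algo:simulator-for-MQ}, on input $\bz$ with $f(\bz)=1$ the simulator returns $1$ iff $\inpool(\bz,\calP,\MQ(f))$ with randomness $\str_\calP(\bz)$ returns True. Since the $\omega$ samples $\bz$ in Phase~4 are all drawn from $\SAMP(f)$ (hence from $f^{-1}(1)$), this means $\textsf{counter}_\calP = \sum_{j=1}^{\omega} \mathbbm{1}[g_\calP(\bz^{(j)}) = 1]$ where $\bz^{(1)},\dots,\bz^{(\omega)}$ are the samples in Phase~4.

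Thus $\textsf{counter}_\calP/\omega$ is the empirical mean of $\omega$ i.i.d.\ Bernoulli random variables with success probability $p_\calP := \Pr_{\bz \sim f^{-1}(1)}[g_\calP(\bz)=1]$. The plan is to contrapose: I fix the draw of $\str_\calP$ (so $g_\calP$ and $p_\calP$ are determined) and then take probabilities over the fresh samples drawn in line~15. If $p_\calP < \eps/(30s)$, then $\E[\textsf{counter}_\calP] = p_\calP \cdot \omega < \omega \cdot \eps/(30s)$, which is strictly less than the threshold $\omega \cdot \eps/(20s)$ by a constant multiplicative gap. Since $\omega = \tilde{O}(s/\eps)$ can be taken to satisfy $\omega \cdot \eps/s = \Omega(\log s)$ with a sufficiently large hidden constant, a standard multiplicative Chernoff bound shows
\[
\Pr\!\left[\textsf{counter}_\calP \ge \tfrac{\eps}{20s}\cdot \omega \right] \le \exp\!\left(-\Omega(\omega \cdot \eps/s)\right) \le \frac{1}{100s}.
\]

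Finally, I take a union bound over the at most $s$ pools present at the start of Phase~4 (there are at most $s$ pools by \Cref{fact1-onepoolterms} and the ``Sanity Check'' of Phase~2, and Phase~3 only merges pools). With probability at least $0.99$, simultaneously every pool $\calP$ with $p_\calP < \eps/(30s)$ has $\textsf{counter}_\calP < \omega \cdot \eps/(20s)$ and therefore is \emph{not} placed in $\He$. Contrapositively, every $\calP \in \He$ satisfies \Cref{eq:inpool-lb}. I do not anticipate any real obstacle here; the only point that requires a little care is to note that, since $\str_\calP$ and the Phase~4 samples are independent, the Chernoff bound applies for each fixed outcome of $\str_\calP$, and the $0.99$-probability conclusion then holds unconditionally after averaging over $\str_\calP$ as well.
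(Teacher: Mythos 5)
Your proposal is correct and takes essentially the same approach as the paper: observe that $\textsf{counter}_\calP$ is (once $\str_\calP$ is fixed) a binomial count with success probability $p_\calP$, apply a multiplicative Chernoff bound against the threshold $\eps/(20s)$ (which exceeds $\eps/(30s)$ by a constant multiplicative factor), and union bound over the at most $s$ pools. Your write-up is a bit more careful than the paper's terse two-sentence argument in one respect — you explicitly note that you condition on $\str_\calP$ to make $g_\calP$ deterministic, apply Chernoff over the independent Phase-4 samples, and then average over $\str_\calP$ — which is a valid and clarifying refinement of the same argument rather than a different route.
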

\begin{proof}
Let ${\cal P}$ be a pool for which the ``true frequency'' $\Prx_{\bz \sim f^{-1}(1)}\big[g_\calP(\bz)=1\big]$ is at most $\epsilon/(30s).$ By a standard Chernoff multiplicative bound, the probability that the empirical frequency %
is at least $\epsilon/(20s)$ after $O((s/\eps)\log(s/\eps))$ samples is at most $1/(100s)$.
The lemma follows from a union bound.
\end{proof}

Let us assume that all the  {conclusions} hold from all of the previous lemmas,
which happens with probability at least $0.9$. 
Thus, $\CreateOracles$~reaches Phase 5 with a collection $\He$ of at most $s$ pools and it returns oracle simulators $(\MQ^*_\calP, \SAMP^*_\calP)$ for each $\calP \in \He$.
Since $\testdnf(\SAMP(h_{\calP}),\MQ(h_{\calP}))$ accepts with probability at least 0.9,

to finish the proof of Part 2, item (c) 
it suffices to show that
\begin{align}\label{eq:step0}
&\Pr\big[\testdnf\big(\SAMP_{\calP_i}^*,\MQ_{\calP_i}^*\big)\ \text{accepts}\big] \nonumber \\[0.5ex]
&\quad\quad\quad\ge  \Pr\big[\testdnf\big(\SAMP(h_{\calP_i}),\MQ(h_{\calP_i})\big)\ \text{accepts}\big]-0.1 
\end{align}
for every pool $\calP_i\in \He,$ 
where throughout this sentence we omitted the parameters ${k_i}\le s,\mu$
  and $\eps/(2s)$ in $\testdnf$ for typographical convenience, {where $k_i = |\covered(\calP_i)|$}.
 We will break the proof of \Cref{eq:step0} into two steps. First we replace $\SAMP_\calP^*$ by $\SAMP(h_\calP)$ and then $\MQ^*_\calP$ by $\MQ(h_\calP)$; in each of the two steps we work to bound the change in the probability that $\testdnf$ accepts.

Recall that $\MQ^*_\calP$ is exactly the membership oracle for $g_\calP$ by the definition of $g_\calP$. On the other hand we show below that $\SAMP^*_\calP$ is very close to the sampling oracle for $g_{\calP}$ which in turn is very close to the sampling oracle $\SAMP(h_\calP)$ for $h_\calP$:

\begin{lemma}
With probability $1-O(s^3/(\eps N))$, $\SAMP^*_\calP$ returns a uniform draw from $h_\calP^{-1}(1)$.
\end{lemma}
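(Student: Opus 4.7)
The plan is to view $\SAMP_\calP^*$ as implementing rejection sampling: the inner oracle $\MQ_\calP^*$ is deterministic once $\str_\calP$ is fixed (this is exactly how we defined $g_\calP$), so $\SAMP_\calP^*$ simply draws i.i.d.~samples $\bz \sim \SAMP(f)$ and returns the first one for which $g_\calP(\bz) = 1$ (defaulting to $0^n$ if none is found within the allotted $100(s/\eps)\alpha$ tries). Under this viewpoint, conditioned on any single sample being accepted, the output is distributed uniformly over $g_\calP^{-1}(1)$, because $\SAMP(f)$ returns a uniform point of $f^{-1}(1)$ and we restrict to the subset $g_\calP^{-1}(1) \subseteq f^{-1}(1)$.

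From here I will bound two failure events whose union covers the bad cases. First, the event that none of the $100(s/\eps)\alpha$ draws is accepted: by \Cref{lem2:light-heavy}, $\calP \in \He$ gives $\Pr_{\bz \sim f^{-1}(1)}[g_\calP(\bz)=1] \geq \eps/(30s)$, so this event has probability at most $(1 - \eps/(30s))^{100(s/\eps)\alpha} \leq e^{-(10/3)\alpha} = N^{-10/3}$, which is $O(s^3/(\eps N))$ by our choice of $N$. Second, the event that the first accepted sample lies in $g_\calP^{-1}(1) \setminus h_\calP^{-1}(1)$: since that sample is uniform over $g_\calP^{-1}(1)$, this has probability at most
$$\frac{|g_\calP^{-1}(1) \setminus h_\calP^{-1}(1)|}{|g_\calP^{-1}(1)|} = \frac{|g_\calP^{-1}(1) \setminus h_\calP^{-1}(1)|/|f^{-1}(1)|}{|g_\calP^{-1}(1)|/|f^{-1}(1)|} \leq \frac{O(s^2/N)}{\eps/(30s)} = O\!\left(\frac{s^3}{\eps N}\right),$$
using \Cref{lem2:littleextra} for the numerator and \Cref{lem2:light-heavy} for the denominator.

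Finally, on the complement of both failure events, the output is the first accepted sample, which is uniform over $g_\calP^{-1}(1)$ \emph{and} lies in $h_\calP^{-1}(1)$. Conditioning a uniform draw from $g_\calP^{-1}(1)$ on falling in the subset $h_\calP^{-1}(1)$ yields a uniform draw from $h_\calP^{-1}(1)$, so we may couple $\SAMP_\calP^*$'s output with a genuine $\SAMP(h_\calP)$-draw on this good event. Summing the two failure probabilities gives the claimed $O(s^3/(\eps N))$ bound.

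The only step that requires any care is verifying that the chosen number of rejection-sampling attempts, together with our parameter regime $N \geq (s/\eps)^{10}$ and $\alpha = \log N$, is large enough to absorb the ``no accept'' event into the $O(s^3/(\eps N))$ budget; this reduces to the inequality $N^{-10/3} \leq O(s^3/(\eps N))$, which is immediate. Everything else is essentially bookkeeping on top of the two preceding structural lemmas, so I do not expect a substantive obstacle.
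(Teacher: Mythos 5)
Your proof is correct and follows essentially the same route as the paper: the same decomposition into the ``no acceptance'' event and the ``accepted a point of $g_\calP^{-1}(1)\setminus h_\calP^{-1}(1)$'' event, controlled by \Cref{lem2:light-heavy} and \Cref{lem2:littleextra} respectively. Your rejection-sampling viewpoint for the second event (bounding $|g_\calP^{-1}(1)\setminus h_\calP^{-1}(1)|/|g_\calP^{-1}(1)|$ directly) is a touch cleaner than the paper's union bound over all $\tilde O(s/\eps)$ repetitions, avoiding a spurious $\alpha$ factor, but the substance is the same.
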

\begin{proof}
By inspection of \Cref{algo:simulator-for-samp} we can see that $\SAMP^*_\calP$ returns the first sample $\bz \in g_\calP^{-1}(1)$ it receives, and if no such sample exists it returns the default value $0^n$. Since $\calP \in \He$ by  \Cref{lem2:light-heavy}, the probability that $\SAMP^*_{\calP}$  fails to draw a ``good'' $\bz$ and returns $0^n$ by default is at most
$$
(1-\eps/(30s))^{100(s/\eps)\alpha}\le 1/N.
$$
On the other hand by a union bound over the $100(s/\eps)\alpha$ repetitions in \Cref{algo:simulator-for-samp} and using \Cref{lem2:littleextra}, the probability the algorithm receives a point  $\bz \in \left(g_\calP^{-1}(1) \setminus  h_\calP^{-1}(1)\right)$ is at most 
$$
{O}\big((s/\eps)\alpha\big)\cdot {O(s^2/N)}=O\left(\frac{s^3}{\eps N}\right).
$$
When neither of these two low-probability events occurs, $\SAMP_\calP^*$ must return a uniform draw from $h_\calP^{-1}(1)$.
\end{proof}

Now recall that $Q$ is an upper bound on the number of samples that $\testdnf$ draws from the sampling oracle it is given. As a result, we have 
\begin{align}\label{eq:step1}
&\Pr\big[\testdnf\big(\SAMP_\calP^*,\MQ_\calP^*\big)\ \text{accepts}\big] \nonumber \\ 
&\quad\quad\ge  \left(1-O\left(\frac{s^5Q}{\eps^3 N}\right)\right)\cdot \Pr\big[\testdnf\big(\SAMP(h_\calP),\MQ_\calP^*\big)\ \text{accepts}\big] \nonumber\\
&\quad\quad\ge \Pr\big[\testdnf\big(\SAMP(h_\calP),\MQ_\calP^*\big)\ \text{accepts}\big]-0.01,
\end{align}
using the choice of $N$ that $N\gg Q(s/\eps)^5$.

It remains for us to replace $\MQ^*_{\calP}$ by $\MQ(h_{\calP})$. We will need to handle \red{Type 1} queries, to do this, the following new notation will be helpful:

\begin{definition} \label{def:cubeX}
Given $a,b\in \{0,1\}^n$ and $\sX\subseteq [n]$, we define $\cube_\sX(a,b)$ to be the set of points $y\in \{0,1\}^n$ such that $y_i\in \{a_i,b_i\}$ for all $i\notin \sX$.
\end{definition}
(Note that  unlike the original definition of $\cube(a,b)$, there is no condition on $y_i$ for $i\in \sX$.)

Next, inspired by the summary of how $\testdnf$ works in \Cref{sec:testersummary}, we prove the following lemma. This lemma roughly says that with high probability (over $\by, \bw$ and the random partition of $[n]$), when the algorithm makes a \red{Type 1} query,  regardless of how bits $i \in \bX_\ell$ are set, for the queried point $z$ we don't have $h_i(z)=0$ but $f(z)=1$.  %

\begin{lemma} \label{lem:cubex}
Let $\bX_1,\ldots,\bX_\tau$ be a random $\tau$-way partition of $[n]$ with ${\tau=\Theta(\mu^2)}$.
Let $\by\sim h_\calP^{-1}(1)$ and $\bw\sim \{0,1\}^n$.
Then with probability at least $1-e^{-\Omega(\alpha)}$, it is the case that for every block $\bX_\ell$ we have 
$$
\cube_{\bX_\ell}(\by,\bw)\cap \big(f^{-1}(1)\setminus h_\calP^{-1}(1)\big)=\emptyset.
$$
\end{lemma}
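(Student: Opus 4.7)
The plan is to union-bound, over the at most $s$ terms $T$ of $f$ with $T\notin \covered(\calP)$ and over the $\tau$ blocks, the probability that $\cube_{\bX_\ell}(\by,\bw)\cap T^{-1}(1)$ contains a point of $f^{-1}(1)\setminus h_\calP^{-1}(1)$. Two useful first observations: if $T\supseteq T_j$ (as sets of literals) for some $T_j\in \covered(\calP)$, then $T^{-1}(1)\subseteq T_j^{-1}(1)\subseteq h_\calP^{-1}(1)$ and such a ``safe'' $T$ contributes nothing to the bad set. For any other ``unsafe'' $T$, the existence of $z\in \cube_{\bX_\ell}(\by,\bw)\cap T^{-1}(1)$ is equivalent to the condition that every literal $\ell'\in T$ with $\var(\ell')\notin \bX_\ell$ is satisfied by $\by$ or by $\bw$; equivalently, setting $G(T,\by):=\{\ell'\in T:\ell'(\by)=0\}$, every literal in $G(T,\by)$ must be either satisfied by $\bw$ or have its variable in $\bX_\ell$.

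Since $\bw$ is uniform and the partition places each variable into a block independently, a direct calculation shows that for fixed $\by$ this event has probability $((1+1/\tau)/2)^{|G(T,\by)|}\leq (2/3)^{|G(T,\by)|}$ (using $\tau\geq 3$) over $\bw$ and the block assignments. Summing over the $\tau$ blocks and the $\leq s$ unsafe $T$'s, the total bad probability is at most $s\tau\cdot \mathbb{E}_\by[(2/3)^{|G(T,\by)|}]$, which is $e^{-\Omega(\alpha)}$ provided that for every unsafe $T$, $|G(T,\by)|=\Omega(\alpha)$ holds with probability at least $1-e^{-\Omega(\alpha)}$ over $\by\sim h_\calP^{-1}(1)$ (using $\log(s\tau)=O(\alpha)$). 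The heart of the proof is therefore to establish this last concentration statement.

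To do so, I will condition on the term $T_j\in\covered(\calP)$ that $\by$ satisfies; under this conditioning, $\by$ is uniform on $T_j^{-1}(1)$, and so \Cref{thm:narrow-term-any-term-outside-cluster-far-from-random-a} applied with $g=h_\calP$, $T=T_j$, and $T'=T$ gives $|G(T,\by)|\geq |T\setminus T_j|/4$ with conditional probability $\geq 1-e^{-|T\setminus T_j|/8}$. It therefore suffices to show $|T\setminus T_j|=\Omega(\alpha)$ for every $T_j\in\covered(\calP)$. When $T$ lies in a different cluster from $T_j$, \Cref{cor:K-far-terms-are-in-dif-clusters} immediately gives the much stronger bound $|T\setminus T_j|>K$; when $T$ lies in the same cluster as $T_j$ and $T_j$ is narrow, \Cref{lem2:farfromnarrow} gives the needed $|T\setminus T_j|\geq \alpha$.

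The main obstacle is the same-cluster case in which $T_j\in\covered(\calP)$ is not narrow, since \Cref{lem2:farfromnarrow} does not directly apply. I anticipate handling this by a symmetric application of the attraction argument used in \Cref{lem2:goodphase3}: assuming $T$ is medium, \Cref{oldlemma74} implies $T$ is covered by some (possibly different) pool $\calP'$, and re-running the attraction argument with the roles of $T$ and $T_j$ swapped (using that $T_j\notin\covered(\calP')$ cannot be attracted by $\calP'$) yields $|T\setminus T_j|\ge 2\alpha$. For the remaining edge cases of wide $T$ or wide $T_j$, the bound $|T\setminus T_j|\geq |T|-|T_j|$ is typically large enough on its own; the detailed case analysis should be straightforward but tedious. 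Once the bound $|T\setminus T_j|=\Omega(\alpha)$ is established in all relevant cases, the union bound over terms and blocks completes the proof.
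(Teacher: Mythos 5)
Your high-level structure (union-bound over unsafe terms $T$ and blocks $\ell$, then bound the per-pair probability over $\bw$ and the random partition in terms of $\pointdist(\by,T)$) is sound and runs parallel to the paper's proof, which also fixes a single $T'\notin\covered(\calP)$ and a single block and union-bounds at the end. Your use of \Cref{thm:narrow-term-any-term-outside-cluster-far-from-random-a} to convert a lower bound on $|T\setminus T_j|$ into a lower bound on $\pointdist(\by,T)$ is also exactly what the paper does. The gap is in establishing $|T\setminus T_j|=\Omega(\alpha)$ for every term $T_j\in\covered(\calP)$ that $\by$ might satisfy: this is simply false. Since $\calS$ is good, a covered term $T_j$ is merely guaranteed to be non-wide, i.e.\ $|T_j|<w_{\min}+K$, but $K=\log^2(s/\eps)$ whereas $\alpha=O(\log(s/\eps))$, so $T_j$ can have width anywhere in $(w_{\min}+2\alpha,w_{\min}+K)$, far beyond ``medium.'' For such $T_j$ and a term $T$ in the same cluster of slightly larger width, $|T\setminus T_j|$ can be $O(1)$. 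Your proposed symmetric attraction argument cannot rescue this case: \Cref{lem2:goodphase3} only rules out attraction of \emph{medium} terms $T_j\notin\covered(\calP')$, and here $T_j$ need not be medium. Likewise the fallback bound $|T\setminus T_j|\geq |T|-|T_j|$ is useless (it can be negative, or as small as $1$) when $|T_j|$ is close to $w_{\min}+K$ and $T$ is not much wider.

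The ingredient you are missing is the one thing the paper's proof spends most of its length on: using that $\calP\in\He$ (\Cref{lem2:light-heavy}) together with \Cref{lem2:littleextra} to lower bound $\Pr_{\bz\sim f^{-1}(1)}[h_\calP(\bz)=1]\geq \eps/(60s)$, and then comparing against $\Pr_{\bz\sim f^{-1}(1)}[\bz\text{ satisfies a non-narrow term of }f]\leq s/N$ to conclude that $\by\sim h_\calP^{-1}(1)$ satisfies a \emph{narrow} term in $\covered(\calP)$ with probability at least $1-O(s^2/(\eps N))$. Once $\by$ is known to satisfy a narrow $T_j$, \Cref{lem2:farfromnarrow} applies directly and gives $|T\setminus T_j|\geq\alpha$, and there is no need to handle non-narrow $T_j$ at all — they are absorbed into the small failure probability. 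Your union-bound over $T_j\in\covered(\calP)$ cannot work without conditioning out the non-narrow $T_j$'s in this way, because the distance lower bound you need simply does not hold for them.
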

\begin{proof}
Fix an $\ell\in [\tau]$ and a term $T'\notin \covered(\calP)$.
We upperbound the probability of 
$$
\cube_{\bX_\ell}(\by,\bw)\cap {T'}^{-1}(1)\ne \emptyset
$$
by $e^{-\Omega(\alpha)}$ and then apply a union bound over the at most $\tau s$ many choices of $\ell$ and $T'$.
First, we show that with probability at least $1-{O(s^2/(\eps N))}-{e^{-\alpha/8}}$, a draw of $\by\sim h_\calP^{-1}(1)$ satisfies the following two conditions:
\begin{flushleft}\begin{enumerate}
\item $\by$ satisfies a narrow term $T$ in $\covered(\calP)$;
\item $\pointdist(\by,T')\ge \alpha/4$.
\end{enumerate}\end{flushleft}

Since $\calP \in \He$, we have by \Cref{lem2:light-heavy} that $\Pr_{z \sim f^{-1}(1)}[g_\calP(\bz)=1]\geq \epsilon/30s$. On the other hand, \Cref{lem2:littleextra} gives us that 
$$
\Prx_{\bz \sim f^{-1}(1)}[g_\calP(\bz)=1 \land h_\calP(\bz)=0] \leq O(s^2/N).
$$
Since $O(s^2/N) \ll \eps/(60s)$, combining the two equations above we have that $$\Prx_{z \sim f^{-1}(1)}[h_\calP(\bz)=1]\geq \epsilon/60s.$$
Now, for $\bz \sim f^{-1}(1)$, the probability that $\bz$ doesn't satisfy any narrow terms of $f$ is at most $s \times 2^{-\alpha}=s/N$. 
Combining this with the preceding inequality, we get that $$\Prx_{\bz \sim f^{-1}(1)}[T(\bz)=1 \text{ for some narrow $T \in \covered(\calP)$}] \geq \Prx_{z \sim f^{-1}(1)}[h_\calP(\bz)=1]-s/N.$$
Now combining the previous two inequalities, we get that
$$\Prx_{\by \sim h_\calP^{-1}(1)}[T(\by)=1 \text{ for some narrow $T \in \covered(\calP)$}] \geq 1-  60s^2/(\eps N),$$
as desired in item (1.) above.

So let us assume that $\by \sim h_\calP^{-1}(1)$ satisfies some narrow term $T$ in $h_\calP$. By \Cref{lem2:farfromnarrow} we have that $|T'\setminus T|\geq \alpha$. So using \Cref{thm:narrow-term-any-term-outside-cluster-far-from-random-a} with probability at least $1-e^{-\alpha/8}$ we have $\pointdist(\by,T') \geq \alpha/4$, giving (2.)

Given condition (2.) above, by a simple Chernoff bound (now over the randomness in the partition of $\bX_1,\ldots,\bX_\tau$),  taking $\tau$ to be sufficiently large\footnote{Since $\tau$ depends on $\epsilon$, this is without loss of generality as we discussed earlier.}, we have that with probability at least $1-e^{-\alpha/8}$ the following holds:

\begin{itemize}
    \item [3.] the piece $\bX_\ell$ of the random partition contains no more than half of the of indices contributing to $\pointdist(\by,T')$. Call these indices $\sY$.
\end{itemize}
When all of (1.), (2.) and (3.) occur, we have 
$$
\cube_{\bX_\ell}(\by,\bw)\cap {T'}^{-1}(1)= \emptyset
$$
as long as $\ell(\bw)=0$ for some literal $\ell \in T'$ with $\ell(\by)=0$ and the variable for $\ell$ is outside $\bX_\ell$. Hence, this happens with probability at least $1-2^{-\alpha/8}$ over the draw of $\bw$.

Putting these bounds together, the probability that $\cube_{\bX_\ell}(\by,\bw)\cap {T'}^{-1}(1)\ne \emptyset$ is at most 
  $O(s^2/(\eps N))+2e^{-\alpha/8} \leq e^{-\Omega(\alpha)}$.
\end{proof}

We now explain the connection between \Cref{lem:cubex} and \red{Type 1} queries.
Recall how \red{Type 1} queries work: the algorithm first randomly partitions $[n]$ into blocks $\bX_1, \ldots, \bX_{\tau}$ and draws a set $\mathbf{Y}$ of at most $Q$ many strings $\by \sim \SAMP(h)$ and a set $\bW$ of at most $Q$ strings $\bw \sim \zo^n$. A \red{Type 1} query then correspond to a point $z$ where 
$z \in \cube_{\bX_\ell}(\by, \bw)$ for some $\by \in \mathbf{Y}$ and $\bw \in \bW$. Hence, if  $\cube_{\bX_\ell}(\by, \bw) \cap (f^{-1}(1) \setminus h_\calP^{-1}(1))=\emptyset$ we have the following: either $f(z)=h_\calP(z)=0$ or $h_\calP(z)=1$. In the first case, $\MQ^*_\calP$ will always answer $0$, in the later case $\MQ^*_\calP$ will always answer $1$. In particular, this \red{Type 1} query will be answered correctly by $\MQ^*_\calP$. 

It follows that by \Cref{lem:cubex} and a union bound over all $\tau Q^2$ pairs $(\by,\bw) \in [\tau] \times \mathbf{Y} \times \bW$, every \red{Type 1} query is answered correctly with probability at least $1-\tau Q^2e^{-\Omega(\alpha)}$. %
We have already argued that \cyan{Type 2} queries are always answered with $1$ if $h_\calP(z)=1$ (See \Cref{obs:type2}). Since $\testdnf$ immediately rejects when a \cyan{Type 2} query is answered by $0$, it follows that using $\MQ^*_\calP$ instead of $\MQ_\calP$ to answer \cyan{Type 2} queries cannot decrease the probability $\testdnf$ accepts. %
\begin{align}\label{eq:step2}
&\Pr\big[\testdnf\big(\SAMP(h_\calP),\MQ_\calP^*\big)\ \text{accepts}\big] \nonumber \\ 
&\quad\quad\ge    \Pr\big[\testdnf\big(\SAMP(h_\calP),\MQ(\calP)\big)\ \text{accepts}\big] -\tau Q^2\cdot e^{-\Omega(\alpha)}.%
\end{align}
By making $N$ (which is $2^\alpha$) a sufficiently large 
  polynomial of $Q$,

  we can make $\tau Q^2\cdot e^{-\Omega(\alpha)}\le 0.01$. 
\Cref{eq:step0} then follows from \Cref{eq:step1,eq:step2}.
This completes the proof of \Cref{thm:createoracles}, Item 2, part (c), and thus the proof of Item 2 (the yes case of $\CreateOracles$).

\subsection{Analyzing \CreateOracles\ in the no case} \label{sec:part-1-no-case}

Throughout this subsection 
  we take $f$ to be any function $f:\{0,1\}^n\rightarrow \{0,1\}$ that is $\eps$-far from $s$-term DNFs in relative distance, and we prove Item~3
   of \Cref{thm:createoracles}. 
  We restate Item 3 here for convenience:

\begin{flushleft}\begin{enumerate}
\item[] \textbf{Item 3 of \Cref{thm:createoracles} ($\CreateOracles$, no case):}

When $f$ is $\eps$-far from every $s$-term DNF in relative distance, with probability at least $0.9$, $\CreateOracles$ either rejects or there exist $s' \leq s$
  functions $h_1,\ldots,h_{s'}:\{0,1\}^n\rightarrow \{0,1\}$ such that
  (a$'$) For each $i\in [s']$, we have $h^{-1}_i(1)\subseteq f^{-1}(1)$ and
\begin{equation}
\frac{|f^{-1}(1)\setminus (\cup_{i} h^{-1}_i(1))|}{|f^{-1}(1)|}\le \frac{\eps}{2}; \tag{1}%
\end{equation}%
and (b$'$) For each $i\in [s']$ {and for every $r \in [s]$,} $(\MQ^*_i,\SAMP_i^*)$ is an adequate  implementation of oracles for $h_i$ with respect to $\testdnf(\cdot,\cdot,r,\mu,\eps/(2s))$.
\end{enumerate}\end{flushleft}

First we observe that after drawing $\str_\calP$, $\MQ^*(\cdot,\calP,\MQ(f),\str_\calP)$ becomes a deterministic function which we denote by $h_\calP:\{0,1\}^n\rightarrow \{0,1\}$. (So in the no case that we are considering here,
  $\MQ^*(\cdot,\calP,\MQ(f),\str_\calP)$ is exactly the 
  membership oracle $\MQ(h_\calP)$ of $h_\calP$.)
Trivially (by how $\MQ^*$ works) we have that $h^{-1}_\calP(1)\subseteq f^{-1}(1)$ for every pool.

The following lemma shows that it is very unlikely for 
  $\CreateOracles$ to reach Phase 5 (not rejecting) unless the union of
  $h^{-1}_\calP(1)$, $\calP\in \He$, covers almost all $f^{-1}(1)$:

\begin{lemma}
With probability at least $0.9$, $\CreateOracles$ either rejects or else reaches Phase 5 with $\He$ such that both
$$
\frac{|f^{-1}(1)\setminus (\cup_{\calP\in \He} h^{-1}_\calP(1))|}{|f^{-1}(1)|}\le \frac{\eps}{2},
$$ 
and moreover, every $\calP\in \He$ satisfies
\begin{equation}\label{hehe44}
\Pr_{\bz\sim f^{-1}(1)} \big[h_\calP(\bz)=1\big]\ge \frac{\eps}{30s}.
\end{equation}
\end{lemma}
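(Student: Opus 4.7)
The plan is to derive both conclusions from Chernoff concentration on the counters $\textsf{counter}_\calP$ maintained during Phase~4, together with the rejection test on line~21. The first step I would take is to observe that in the ``no'' case, once $\str_\calP$ is sampled at line~14, the map $x\mapsto \MQ^*(x,\calP,\MQ(f),\str_\calP)$ is a deterministic Boolean function, which is exactly the $h_\calP$ that appears in the statement. Since every sample used in Phase~4 comes from $\SAMP(f)$ (so $f(\bz)=1$ automatically), inspection of \Cref{algo:simulator-for-MQ} shows that $\textsf{counter}_\calP$ at the end of Phase~4 equals a sum of $\omega$ i.i.d.\ Bernoulli random variables with mean $p_\calP := \Pr_{\bz\sim f^{-1}(1)}[h_\calP(\bz)=1]$. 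All of the relevant randomness is therefore reduced to independent Bernoulli trials, one per pool and per sample.

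Next I would apply a two-sided multiplicative Chernoff bound to each counter and union-bound over the at most $s$ pools that survive Phase~2. Choosing $\omega = \tilde{\Theta}(s/\eps)$ sufficiently large, this yields that with probability at least $0.95$ the following holds simultaneously: every $\calP\in\He$ satisfies $p_\calP \ge \eps/(30s)$, and every $\calP\notin\He$ satisfies $p_\calP \le \eps/(10s)$. The first of these is exactly the displayed inequality \eqref{hehe44}, disposing of the ``moreover'' clause.

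The last step handles the coverage bound. Let $q := \Pr_{\bz\sim f^{-1}(1)}[h_\calP(\bz)=0 \text{ for every pool } \calP]$. The check on line~21 rejects unless every one of the $\omega$ samples drawn in Phase~4 satisfies $h_\calP(\bz)=1$ for some pool. So if $q > \eps/4$, the probability of not rejecting is at most $(1-\eps/4)^\omega \le 0.05$, once $\omega$ is a sufficiently large $\tilde{\Theta}(1/\eps)$. Intersecting this event with the one from the previous paragraph yields overall probability $\ge 0.9$; then either $\CreateOracles$ rejects and the lemma is immediate, or both $q\le \eps/4$ and $p_\calP \le \eps/(10s)$ for every $\calP\notin\He$. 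In the latter case, classifying each $\bz\in f^{-1}(1)\setminus\bigcup_{\calP\in\He}h_\calP^{-1}(1)$ by whether it contributes to $q$ or is covered by some $h_\calP$ with $\calP\notin\He$ gives
\[
\frac{|f^{-1}(1)\setminus\bigcup_{\calP\in\He}h_\calP^{-1}(1)|}{|f^{-1}(1)|}
\le q + \sum_{\calP\notin\He} p_\calP \le \frac{\eps}{4} + s\cdot\frac{\eps}{10s} < \frac{\eps}{2}.
\]
The main (and only) obstacle I foresee is arithmetic bookkeeping: pinning down the $\tilde{\Theta}(\cdot)$ in $\omega = \tilde{\Theta}(s/\eps)$ so that the Chernoff ``slack'' between the $\He$-threshold $\eps/(20s)$ and the target levels $\eps/(30s)$ and $\eps/(10s)$ makes all failure probabilities sum to at most $0.1$; no idea beyond standard concentration and a union bound is required.
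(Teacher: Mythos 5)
Your proposal is correct and follows essentially the same route as the paper: reduce everything to Chernoff concentration of the Phase~4 counters (after noting that, with $\str_\calP$ fixed, $h_\calP$ is deterministic and the counter is a sum of i.i.d.\ Bernoullis with mean $p_\calP$), and use the rejection test on line~21 to dispose of the uncovered mass. The only cosmetic difference is the coverage-failure threshold: you work with $\eps/4$ where the paper works with $\eps/(10s)$; both are forced to be small by the same $\omega=\tilde{O}(s/\eps)$ samples and both lead to a final bound of at most $\eps/2$.
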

\begin{proof}
We assume that the collection of pools at the beginning of Phase 4 satisfies
\begin{equation}\label{hehe222}
\frac{|f^{-1}(1)\setminus (\cup_{\calP\in \cal} h^{-1}_\calP(1))|}{|f^{-1}(1)|}\le \frac{\eps}{10s};
\end{equation}
otherwise, given that we draw $\tilde{O}(s/\eps)$ samples, Phase 4 rejects with probability at least $0.99$ so 
we are done. 

On the other hand, by a Chernoff bound and a union bound over the randomness of the samples $\bz$ drawn in Phase~4, with probability at least $0.99$ we have 
$$
\Pr_{\bz\sim f^{-1}(1)} \big[h_\calP(\bz)=1\big]\ge \frac{\eps}{30s}.
$$ 
for all $\calP\in \He$ and 
\begin{equation}\label{hehe33}
\Pr_{\bz\sim f^{-1}(1)} \big[h_\calP(\bz)=1\big]\le  \frac{\eps}{10s}.
\end{equation}
for all $\calP$ not picked in $\He$.

Combining \Cref{hehe222} and \Cref{hehe33}, we have 
$$
\frac{|f^{-1}(1)\setminus (\cup_{\calP\in \He} h^{-1}_\calP(1))|}{|f^{-1}(1)|}\le \frac{\eps}{10s}+s\cdot \frac{\eps}{10s}\le \frac{\eps}{2}.
$$
This finishes the proof of the lemma.
\end{proof}

It remains to show that, if the algorithm completes Phase 5 in the no case, then for every $\calP\in \He$, we have %
\begin{align}
&\Big|\Pr\big[\testdnf(\MQ^*_\calP,\SAMP^*_\calP))\ \text{accepts}\Big] \nonumber \\ &\quad\quad-\Pr\big[\testdnf(\MQ(h_\calP),\SAMP(h_\calP))\ \text{accepts}\big]\Big|\le 0.1,\label{eq:step0000}
\end{align}
where we omit the parameters $r,\mu$
  and $\eps/(2s)$ in $\testdnf$ for typographical convenience.
  We recall that $\MQ^*_\calP$ is exactly the same as $\MQ(h_\calP)$ so we just need to show that $\SAMP^*_\calP$ behaves similarly to $\SAMP(h_\calP)$. This is given by the following lemma:

\begin{lemma}
With probability at least $1-1/N$, $\SAMP^*_\calP$ returns a uniform draw from $\SAMP(h_\calP)$.
\end{lemma}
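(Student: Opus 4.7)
The plan is to view $\SAMP^*_\calP$ as a standard rejection-sampling procedure and analyze its two possible failure modes.

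First I would record the structural observation that, in the no case, by the very definition of $h_\calP$ in this subsection, the randomized algorithm $\MQ^*(\cdot,\calP,\MQ(f),\str_\calP)$ is (after $\str_\calP$ is fixed) deterministic and computes exactly $h_\calP$; in particular it is a perfect membership oracle. Consequently, inspecting \Cref{algo:simulator-for-samp}, the behavior of $\SAMP^*_\calP$ is simply: repeatedly draw $\bz\sim\SAMP(f)$ for up to $T:=100(s/\eps)\alpha$ iterations, and return the first $\bz$ with $h_\calP(\bz)=1$; if none is found, return $0^n$.

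Next I would separately handle the two ways this can fail to produce a uniform draw from $h_\calP^{-1}(1)$. The only failure event is that all $T$ draws land outside $h_\calP^{-1}(1)$, since $h_\calP^{-1}(1)\subseteq f^{-1}(1)$ and hence a draw from $f^{-1}(1)$ conditioned on lying in $h_\calP^{-1}(1)$ is exactly uniform over $h_\calP^{-1}(1)$ (this is the standard rejection-sampling identity). Setting $p:=\Pr_{\bz\sim f^{-1}(1)}[h_\calP(\bz)=1]$, the failure probability is $(1-p)^T$.

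Finally, I would invoke the preceding lemma, which guarantees (in the event whose analysis we are inside) that every $\calP\in\He$ satisfies $p\ge \eps/(30s)$, cf.~\Cref{hehe44}. Plugging this into the bound,
\[
(1-p)^T \le \exp(-pT) \le \exp\!\left(-\tfrac{100\alpha}{30}\right) = N^{-10/3} \le 1/N,
\]
using $\alpha=\log N$. Conditioning on the complementary event of probability at least $1-1/N$, the output is the first $\bz\sim\SAMP(f)$ satisfying $h_\calP(\bz)=1$, which by the rejection-sampling identity above is distributed uniformly over $h_\calP^{-1}(1)$, completing the proof. No step looks delicate here: the only ingredient beyond a one-line Chernoff/union-style computation is the lower bound on $p$ for heavy pools from the previous lemma, which is already in hand.
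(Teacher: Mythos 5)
Your proposal is correct and follows essentially the same route as the paper: identify that $\MQ^*_\calP$ is a perfect membership oracle for $h_\calP$ in the no case, observe that $\SAMP^*_\calP$ is rejection sampling whose only failure is exhausting all $100(s/\eps)\alpha$ draws without hitting $h_\calP^{-1}(1)$, and bound this by $(1-\eps/(30s))^{100(s/\eps)\alpha}\le 1/N$ using the heaviness guarantee from the preceding lemma. The paper states this more tersely but the argument is the same.
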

\begin{proof}
Given that every $\calP\in \He$ satisfies \Cref{hehe44}, $\SAMP^*_\calP$ returns the default $0^n$ with probability at most $(1-\eps/(30s))^{100(s/\eps)\alpha}\le 1/N$. When this does not happen, $\SAMP_\calP^*$ returns a uniform draw from $\SAMP(h_\calP)$.
\end{proof}

Given that $\testdnf$ draws at most $Q$ samples from $\SAMP^*_\calP$, by a union bound it gets $Q$ uniform draws from $\SAMP(h_\calP)$ with probability at least $1-Q/N$.
As a result, for every $\calP\in \He$ (for convenience letting $A_\calP$ denote the probability from $\MQ^*_\calP$ and $\SAMP^*_\calP$ and $B_\calP$ denote the probability from $\MQ(h_\calP)$ and $\SAMP(h_\calP)$), we have that 
\begin{align}
 (1-Q/N) B_\calP\le A_\calP\le (Q/N)+(1-Q/N)B_\calP\le (Q/N)+B_\calP,
\end{align}
which implies that $|A_\calP-B_\calP|\le Q/N$ so \Cref{eq:step0000} holds as long as $N\gg Q$.
This completes the proof of \Cref{thm:createoracles}.

\section{Testing factored-DNFs}%
\label{sec:factored-tester}

In this section, we present our tester for $(r,\mu)$-factored-DNFs and prove \Cref{thm:factoredDNF}.
The tester, $\testdnf$, is given in 
  \Cref{alg:testfactoredDNF}.

\begin{figure}[t!]
\begin{algorithm}[H]
\caption{\protect\hypertarget{Test-Factored-DNF}{\testdnf}}
\label{alg:testfactoredDNF}
\vspace{0.15cm}\textbf{Input: }$\MQ(h)$ and $\SAMP(h)$ of some function $h: \zo^n \to \zo$, $r$, $\mu$ and $\epsilon$. \\
\textbf{Output: }Accept or reject.\\
\begin{tikzpicture}
\draw [thick,dash dot] (0,1) -- (15.9,1);
\end{tikzpicture}
\begin{algorithmic}[1] 
    \State Draw $O((\mu\log \mu)/\xi)$ samples $\bz \sim \SAMP(h)$, and call the set of samples $\textsf{Samples}$. Let 
    $$\hspace{-1cm}\bR=\big\{i\in [n]  \mid \exists \bz, \bz' \in \textsf{Samples} \text{ with } \bz_i=1, \bz_i'=0 \big\}$$
 Accept if $\bR=\emptyset$. 

\State Run $\DNFLearner({\bR},\MQ(h), \SAMP(h))$. 
\State Rejects if it rejects;
 otherwise, let $(\MQ^*, \SAMP^*)$ be what $\DNFLearner$ returned.
\State Run $\ConsCheck(\bR,\MQ(h),\SAMP(h), \MQ^*, \SAMP^*)$ and return the same output.
\end{algorithmic} 
\end{algorithm}
\caption[$\testdnf$: An algorithm to relative-error test if a function is an $r$-term factored-DNF.]{{$\testdnf$. The algorithm takes as input oracle access to a function $h$, $r,\mu \in \mathbb{N}$ and $\epsilon>0$, and tests if $h$ is $\epsilon$-close in relative distance to some $(r,\mu)$-factored DNF. 
}}
\end{figure}

$\testdnf$ starts with an initialization stage
  (lines 1--2).
After its initialization, the two main steps that $\testdnf$ carries out are $\DNFLearner$ and $\ConsCheck$,
  which we present and analyze in \Cref{sec:learning} and \Cref{sec:checking}, respectively.
In this section, we prove a simple lemma about what the initialization stage achieves in the yes-case (\Cref{thm:learner_first_phase} below).
After that, 
  we explain what the $\DNFLearner$ and $\ConsCheck$ procedures achieve and formally state their performance guarantees.
Delaying their proofs to \Cref{sec:learning} and \Cref{sec:checking}, we use their performance guarantees to prove \Cref{thm:factoredDNF} at the end of this section.
\subsection{Parameters}

For the sake of clarity, we assume that all three input parameters $r,\mu$ and $\eps$ of \hyperlink{Test-Factored-DNF}{\textcolor{violet}{\textbf{Test-}\textbf{Factored-}\textbf{DNF}}} are shared by the $\DNFLearner$ and $\ConsCheck$ procedures as well as their subroutines. So we don't need to pass them down as input parameters. 
We will also use the following two parameters $\kappa$ and $\xi$ in 
  this section as well as 
  \Cref{sec:learning,sec:checking}:
\begin{equation}
\label{eq:xi}
{\kappa:=\eps^2}\quad\text{and}\quad
\xi:=\frac{\kappa}{4000}  \cdot \frac{1} {\log(100  \cdot 2^{2\mu r})}.
\end{equation}
Intuitively, $\xi$ is a quantity which is ``quite small'' but $1/\xi$ remains $\poly(r,\mu,1/\eps)$.
We assume that both $\kappa$ and $\xi$ are shared among all algorithms as well.

\subsection{Initialization}
\label{sec:initialization}

During its initialization, $\testdnf$ draws
  $O((\mu\log\mu)/\xi)$ samples from $\SAMP(h)$
and let $\bR$ be the set of coordinates $i\in [n]$ that these samples do not agree on. 
Note that when 
  $|h^{-1}(1)|=1$ (in which case $h$ is a trivial $(r,\mu)$-factored-DNF), the algorithm always accepts correctly on line 2;
  when $|h^{-1}(1)|>1$, it is extremely unlikely to have $\bR=\emptyset$ at the end of the initialization.
Without loss of generality, we always assume that 
  $|h^{-1}(1)|>1$ and $\bR\ne \emptyset$ in the 
  rest of the proof.
  
We prove the following lemma about $\bR$ when $h$ is a $(r,\mu)$-factored-DNF.
Recall that when $h$ is a $(r,\mu)$-factored-DNF, we can write it as $h=H \land (T_1 \lor \ldots \lor T_{\leq r})$ where:
\begin{itemize}
    \item $H, T_1, \ldots, T_{\leq r}$ are all terms.
    \item There are at most $\mu$ variables in $T_1, \ldots, T_{\leq r}$. We will refer to this set of variables as $\sS$. 
    \item We have $\var(H) \cap \sS=\emptyset$.
\end{itemize}
Given a set $\sR\subseteq [n]$, we say ${\sR}$ is \emph{stable} (with respect to $h$) if $\var(H)\subseteq \overline{\sR}$ and there exists a string $u\in \{0,1\}^{{\sV}}$, 
  where $\sV:= \overline{\sR} \cap (\sS \cup \var(H) )$, such that 
\begin{equation}\label{eq:hehe100}
    \Prx_{\bz \sim h^{-1}(1)}\big[\bz_{\sV} \neq u\big] \leq {{{\xi}}}.
\end{equation}
The string $u\in \{0,1\}^{\sV}$ is clearly 
  unique when ${\sR}$ is stable.

\begin{lemma}\label{thm:learner_first_phase}
When $h$ is an $(r,\mu)$-factored-DNF,  $\bR$ is stable with probability at least $0.99$.
\end{lemma}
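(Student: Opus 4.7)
The plan is to exploit the factored-DNF structure: write $h = H \wedge (T_1 \vee \cdots \vee T_{\le r})$, with tail variables $\sS$ of size at most $\mu$ disjoint from $\var(H)$. The first part of stability, namely $\var(H) \subseteq \overline{\bR}$, is deterministic: every $\bz \sim h^{-1}(1)$ must satisfy $H$, so every coordinate in $\var(H)$ takes the unique value dictated by $H$ across \emph{all} samples, placing $\var(H)$ outside $\bR$ with probability $1$. The work, then, is in exhibiting the string $u \in \{0,1\}^{\sV}$ required by \Cref{eq:hehe100}.

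The main idea is to classify coordinates in $\sS$ by their minority-value probability. For each $i \in \sS$, let $p_i := \Pr_{\bz \sim h^{-1}(1)}[\bz_i = 1]$ and $q_i := \min(p_i, 1-p_i)$. Set the threshold $\xi' := \xi/(2\mu)$ and call $i$ \emph{bad} if $q_i > \xi'$ and \emph{good} otherwise. For a bad coordinate $i$, both values of $\bz_i$ are taken with probability at least $\xi'$, so the probability that all $m = O((\mu \log \mu)/\xi)$ independent samples agree on coordinate $i$ (equivalently, $i \notin \bR$) is at most $(1-q_i)^m + q_i^m \le 2(1-\xi')^m$. Taking $m$ to be a sufficiently large multiple of $(\mu \log \mu)/\xi$ makes $(1-\xi')^m \le e^{-c \log \mu}$ for any desired constant $c$, so this probability is at most $1/(100\mu)$ per bad coordinate. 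By a union bound over the at most $\mu$ bad coordinates in $\sS$, with probability at least $0.99$ every bad coordinate lies in $\bR$.

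Condition on this event. Then $\sV \cap \sS = \overline{\bR} \cap \sS$ consists only of good coordinates. Define $u \in \{0,1\}^\sV$ by setting $u_i$ to the majority value (the one attained with probability $1-q_i \ge 1 - \xi'$) for each $i \in \sV \cap \sS$, and $u_i$ to the value forced by $H$ for each $i \in \var(H) \subseteq \sV$. For a fresh $\bz \sim h^{-1}(1)$ we have $\bz_i = u_i$ deterministically on $\var(H)$, while on each good coordinate $i \in \sV \cap \sS$ we have $\Pr[\bz_i \ne u_i] = q_i \le \xi'$. A union bound over the at most $\mu$ coordinates in $\sV \cap \sS$ gives
\[
\Pr_{\bz \sim h^{-1}(1)}\big[\bz_\sV \ne u\big] \;\le\; \mu \cdot \xi' \;=\; \xi/2 \;\le\; \xi,
\]
which verifies \Cref{eq:hehe100} and shows $\bR$ is stable.

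The only delicate piece is calibrating the sample count $m$ so that bad coordinates are detected w.h.p. while the failure probability for good coordinates remains controlled after a union bound over $\sS$. The choice $\xi' = \xi/(2\mu)$ balances these two union bounds cleanly, and the logarithmic factor in $m = O((\mu \log \mu)/\xi)$ is exactly what is needed to drive the per-coordinate failure probability below $1/(100\mu)$. I do not anticipate any further obstacles beyond bookkeeping the constants.
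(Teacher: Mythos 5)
Your proof is correct and takes essentially the same approach as the paper. The only cosmetic differences are that you classify each coordinate $i \in \sS$ by its minority-value probability $q_i$ and use the population-majority value for $u$, whereas the paper union-bounds over pairs $(i,b)$ with $\Pr[\bz_i=b]\ge\xi/\mu$ and takes $u$ to be the value the samples agreed on; both reductions lead to the same estimate $(1-\xi/\Theta(\mu))^{\Delta}$ per coordinate, the same $O((\mu\log\mu)/\xi)$ sample count, and the same final union bound.
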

\begin{proof}
First, note that every $\bz \in h^{-1}(1)$ must have $H(\bz)=1$. Hence all the samples must agree on  coordinates in $\var(H)$, meaning that $\var(H) \subseteq \bU$. 

For the second part of the lemma, 
  let 
$\smash{\bu\in \{0,1\}^{\bU}}$ be the unique~string
  such that $\bu_i=\bz_i$ for all $i\in \bU$ and $\bz\in \textsf{Samples}$.
Fix any coordinate $i \in \sS$ and bit $b\in \{0,1\}$ such that $$\Prx_{\bz \sim h^{-1}(1)}\big[\bz_i=b\big] \geq \xi/\mu.$$
The probability of having $i \in \bU$ with $\bu_i=1-b$ is at most
$(1-\xi/\mu)^{\Delta} \leq 1/200\mu,$
  where $\Delta=O((\mu\log \mu)/\xi)$ is the number of
  samples drawn.
Therefore, by a union bound over all at most $\mu$ coordinates in $\sS$ and $b\in \{0,1\}$, we have with probability at least $0.99$, every $i\in \sS\cap \overline{\bR}$ satisfies 
$\Prx_{\bz \sim h^{-1}(1)}[\bz_i=\bu_i] \geq 1-\xi/\mu$. Given that samples always agree on coordinates in $\var(H)$, by a union bound over $\smash{i\in \sS\cap \overline{\bR}}$ we can conclude that
 $$\Prx_{\bz \sim h^{-1}(1)}\big[\bz_{\sV} \neq \bu_{\sV}\big] \leq |\sS| \cdot\frac{\xi}{\mu} \leq \xi.$$
This finishes the proof of the lemma.
\end{proof}

\subsection{Performance guarantees of $\DNFLearner$}

Next we explain what $\DNFLearner$ aims to achieve and state its performance guarantees. 

For this purpose, consider the yes case where we  assume that $h$ is an $(r,\mu)$-factored-DNF
  and~$\sR$ is stable.
Let $u\in \{0,1\}^{\sV}$ be the unique string that satisfies \Cref{eq:hehe100}.
By the definition of $\sV$, every variable~in $\overline{\sR}\setminus {\sV}$ is irrelevant. It follows directly from \Cref{eq:hehe100} that there exists a unique function $\fh:\{0,1\}^{\sR}\rightarrow \{0,1\}$ such that 
$$
\Pr_{\bz\sim h^{-1}(1)} \left[
h{\upharpoonleft_{\bz_{\overline{\bR}}}}\not\equiv \fh
\right]\le \xi.
$$
This function  $\fh$ can  be obtained from $h$ by setting variables in $\sV$ according to $u$ and variables in $\overline{\sR}\setminus \sV$ arbitrarily. 
We will refer to $\fh$ as the \emph{dominating}
  function of $h$ over $\sR$, which is well defined in the yes case when $R$ is stable.
Note that since $h$ is an $(r,\mu)$-factored-DNF, $\fh$ must be an
  $r$-term DNF over at most $\mu$ variables 
  in $\sR$.

In particular, given an $r$-term DNF $D$ which depend on at most $\mu$ variables, we will refer to $D$ as a $r$-term, $\mu$-junta DNF.  

The goal~of $\DNFLearner$~is to \emph{learn the dominating function $\fh$}.
It takes $\sR$, oracles $(\MQ(h),$ $\SAMP(h))$ of $h$ and parameters $r,\mu$ and $\eps$ as inputs. %
Upon termination, it either rejects or outputs a pair of oracle simulators $(\MQ^*,\SAMP^*)$~for some function $\fJ:\{0,1\}^\sR\rightarrow \{0,1\}$, with $\MQ^*$ and $\SAMP^*$ both being randomized algorithms with low query complexity on $\MQ(h)$. 

We summarize the performance guarantees of $\DNFLearner$ (which we will establish later) below:

\begin{restatable}{lemma}{learnerComplexity}\label{lem: complexity of Learner}
$\DNFLearner(\sR,\MQ(h), \SAMP(h))$ makes  
$\poly(r,\mu,1/\epsilon)$ calls to $\MQ(h)$, $\SAMP(h)$.
It either rejects or returns a pair $(\MQ^*,\SAMP^*)$, where $\MQ^*$ and $\SAMP^*$ are randomized algorithms that make $O(\mu\log(\mu/\eps))$ queries to $\MQ(h)$ each time they are called.
\end{restatable}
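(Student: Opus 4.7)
The plan is to establish the two stated complexity bounds by inspection of $\DNFLearner$'s description (given in \Cref{sec:learning}), as the lemma is a bookkeeping claim that cleanly separates from the correctness analysis. At a high level, $\DNFLearner$ must carry out two natural subtasks: (i) locate the at most $\mu$ variables in $\sR$ that are relevant to the dominating function $\fh$, and (ii) implicitly learn the structure of the $(\leq r)$-term DNF on those variables (or reject if no such DNF is consistent with the data).

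For the overall query complexity I would bound each subtask separately. Task (i) can be carried out using $\poly(\mu, 1/\xi)$ samples from $\SAMP(h)$ combined with $\MQ(h)$-based influence checks; since $1/\xi = \poly(r,\mu,1/\eps)$ by \Cref{eq:xi}, this contributes $\poly(r, \mu, 1/\eps)$ queries overall. Task (ii) draws a further $\poly(r, \mu, 1/\eps)$ positive examples from $\SAMP(h)$ and performs $\MQ(h)$ queries to verify candidate structures; since at most $\mu$ variables are relevant and at most $r$ terms appear in $\fh$, the bookkeeping yields the claimed polynomial bound. The initialization queries made inside $\testdnf$ itself (line 1 of \Cref{alg:testfactoredDNF}) also fit inside this polynomial envelope by the same argument.

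For the per-call complexity of the returned simulators, the intuition is that $\MQ^*(x)$ for $x \in \{0,1\}^\sR$ should return $\fh(x)$, and that for a single draw $\bz \sim \SAMP(h)$ we have $\bz_\sV = u$ (the canonical setting identified during learning) with probability at least $1 - \xi$, in which case querying $\MQ(h)$ on the pasted string that equals $x$ on $\sR$ and $\bz$ on $\overline{\sR}$ correctly returns $\fh(x)$. To boost this single-trial reliability up to $1 - O(\eps/\mu)$, which is the granularity needed when the downstream tester union-bounds across its queries, $O(\mu \log(\mu/\eps))$ independent repetitions combined with a majority-vote or first-consistent-sample rule suffice, giving exactly the claimed per-call query complexity; $\SAMP^*$ is implemented analogously by drawing $\bz \sim \SAMP(h)$ and returning $\bz_\sR$, with the same amplification. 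I expect the genuine obstacle in \Cref{sec:learning} to be not this accounting but rather the correctness of $\DNFLearner$, i.e.,~verifying that it rejects whenever no $(\leq r)$-term, $\mu$-junta DNF is consistent with $h$ on the stable coordinates and that in the yes-case the simulators faithfully represent $\fh$; that argument is orthogonal to the present bookkeeping lemma.
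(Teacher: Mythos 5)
Your accounting of $\DNFLearner$'s overall query complexity is in the right spirit: the paper likewise bounds each phase separately (\Cref{lem:FBR complexity} for $\FRB$, \Cref{lem:TrimCan_complexity} for $\TrimCan$, plus $O(\mu)$ for $\CheckLit$), and your $\poly(\mu,1/\xi)$ with $1/\xi = \poly(r,\mu,1/\eps)$ matches.

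However, your derivation of the per-call bound on the simulators rests on a proposed implementation that does not match the algorithm, and this is a genuine gap rather than a harmless paraphrase. You suggest that $\MQ^*(x)$ pastes $x$ onto a fresh draw $\bz \sim \SAMP(h)$ and queries $\MQ(h)$ with repetition, and that $\SAMP^*$ draws $\bz \sim \SAMP(h)$ and returns $\bz_\sR$. This has two problems. First, the lemma asserts that the simulators make queries only to $\MQ(h)$, whereas your $\SAMP^*$ would call $\SAMP(h)$; the actual $\SimSAMPA$ never calls $\SAMP(h)$ --- it draws $\by$ uniformly from $\zo^\sR$ and $\bz$ from $\fJ^{-1}(1)$ (the latter is local, since $\fJ$'s truth table is held explicitly), then uses $\Extract$ against the list $L$ of local oracles, each query to which is one $\MQ(h)$ call. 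Second, your $\MQ^*$ would be a noisy oracle for the \emph{true} dominating function $\fh$, not a consistent oracle for a fixed learned function; what the paper needs and constructs ($\SimMQA$) is a simulator for $\fJ_{\sigma_L}$, built by calling $\Extract(L,x,\kappa)$ and then reading off $\fJ$, so the $O(\mu\log(\mu/\eps))$ comes from $\Extract$'s $O(\log(\mu/\delta))$ repetitions per block over $\le\mu$ blocks with $\delta=\kappa=\eps^2$ (see \Cref{lem: Extract}, \Cref{lem: MQ(A)}, \Cref{lem:mapback}), not from boosting the reliability of a paste-and-query to $1-O(\eps/\mu)$. So while you land on the right number, the reasoning that produces it would not survive being matched to the actual algorithm.
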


\begin{restatable}{theorem}{learnerYes}(Performance of $\DNFLearner$ in the yes-case) 
\label{thm: learner yes case}
Assume that $h$ is an $(r,\mu)$-factored-DNF and $\sR\subseteq [n]$ is stable (with respect to $h$).
Then with probability at least $19/20$, 
 $\DNFLearner$ does not reject and the pair $(\MQ^*,\SAMP^*)$ returned are 
    perfect oracle simulators of some function $\fJ:\{0,1\}^\sR\rightarrow \{0,1\}$ with %
    $\rd(\fh,\fJ)\le \kappa,$ where $\fh$ is the dominating function of $h$ over $\sR$.
\end{restatable}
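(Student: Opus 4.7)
My plan is to realize \DNFLearner\ as a short pipeline and analyze each stage. The starting observation is that stability of $\sR$ gives a unique string $u \in \{0,1\}^{\sV}$ with $\Pr_{\bz \sim h^{-1}(1)}[\bz_{\sV} \ne u] \le \xi$, and that the dominating function $\fh:\{0,1\}^{\sR} \to \{0,1\}$ is an $r$-term DNF depending on at most $\mu$ coordinates of $\sR$. The pipeline is: (i) draw $\poly(\mu/\xi)$ fresh samples $\bz \sim \SAMP(h)$ and recover $\bu \in \{0,1\}^{\overline{\sR}}$ by coordinate-wise majority; (ii) build a perfect membership simulator $\MQ^*_\fh$ by returning $h(x \circ \bu)$ on input $x \in \{0,1\}^{\sR}$, and a rejection-based sampling simulator $\SAMP^*_\fh$ by repeatedly drawing $\bz \sim \SAMP(h)$ and returning $\bz_{\sR}$ when $\bz_{\overline{\sR}} = \bu$; (iii) run a relative-error implicit learner for $(r,\mu)$-junta DNFs (in the spirit of \cite{rel-error-junta, Bshouty20, DLM+:07}) on $(\MQ^*_\fh, \SAMP^*_\fh)$ with accuracy $\kappa$ and failure probability at most $0.02$, producing an explicit $r$-term, $\mu$-junta DNF $\fJ$ with $\rd(\fh, \fJ) \le \kappa$; (iv) return simulators for $\fJ$, which are perfect by direct evaluation for $\MQ(\fJ)$ and by enumerating the at most $2^\mu$ assignments to $\fJ$'s relevant variables (and completing the remaining coordinates of $\sR$ uniformly) for $\SAMP(\fJ)$.

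\textbf{Analysis.} Step (i) succeeds with probability at least $0.99$: the bound $\Pr_{\bz \sim h^{-1}(1)}[\bz_i = u_i] \ge 1 - \xi/\mu$ for each $i \in \sV$ established inside the proof of \Cref{thm:learner_first_phase}, combined with a Chernoff bound and union bound over $i \in \sV$, forces $\bu_{\sV} = u$. Conditioned on this, the identity $\MQ^*_\fh(x) = \fh(x)$ holds exactly for every $x \in \{0,1\}^{\sR}$ because coordinates in $\overline{\sR} \setminus \sV$ are irrelevant to $h$, so $\MQ^*_\fh$ is already a \emph{perfect} simulator of $\MQ(\fh)$. Each attempt of $\SAMP^*_\fh$ succeeds with probability at least $1 - \xi$, and every accepted draw is exactly uniform over $\fh^{-1}(1)$. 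A union bound over the $\poly(r,\mu,1/\kappa)$ calls the learner makes in step (iii) shows that with failure probability at most $\xi \cdot \poly(r,\mu,1/\kappa) = o(1)$ the learner sees perfect $\MQ(\fh)$ queries and uniform draws from $\fh^{-1}(1)$, so its original guarantee transfers and produces the desired $\fJ$. Step (iv) is immediate. A final union bound over (i), the rare-rejection event in (ii), and the learner's failure event yields overall success probability at least $19/20$; the query budget in \Cref{lem: complexity of Learner} follows by inspection.

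\textbf{Main obstacle.} The main technical issue is step (iii): exhibiting a relative-error implicit learner for $(r,\mu)$-junta DNFs with $\poly(r,\mu,1/\kappa)$ query complexity that degrades gracefully when its sampling oracle has small multiplicative error. The choice $\xi = \kappa/(4000 \log(100 \cdot 2^{2\mu r}))$ from \Cref{eq:xi} is calibrated precisely so that across all sample calls the learner makes, the total-variation gap between $\SAMP^*_\fh$ and $\SAMP(\fh)$ stays $o(1)$, allowing standard analyses from \cite{rel-error-junta, DLM+:07, Bshouty20} to carry through. A secondary subtlety is that the outer tester \testdnf\ constrains the form of \emph{downstream} queries $\MQ^*$ will be subjected to (the \red{Type~1}/\cyan{Type~2} classification of \Cref{sec:testersummary}), so the learner must be chosen so that its internal membership queries can be realized through a simulator $\MQ^*_\fh$ built on top of \red{Type~1}/\cyan{Type~2} queries to $h$; the \cite{rel-error-junta}-style implicit learner naturally fits this requirement.
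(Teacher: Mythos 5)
Your proposal is not a proof of the stated theorem: it replaces the paper's $\DNFLearner$ (which is the algorithm the theorem is explicitly about) with a different pipeline, and that pipeline has two concrete gaps that the paper's design exists precisely to avoid.

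The first gap is at the interface between step (iii) and step (iv). You appeal to a black-box ``relative-error implicit learner for $(r,\mu)$-junta DNFs'' and then claim it outputs an \emph{explicit} $r$-term, $\mu$-junta DNF $\fJ$ over $\zo^\sR$ whose relevant coordinates you can enumerate to build $\SAMP(\fJ)$. But implicit learning, by definition, only identifies the function up to an unknown injection of abstract variable names into $\sR$; with $\poly(r,\mu,1/\eps)$ queries one cannot determine \emph{which} of the up-to-$n$ coordinates in $\sR$ the target depends on. This is not a technicality — it is the central obstacle. The paper's entire machinery of Phase~1 ($\FRB$ producing a perfect list $\bL$ of variable oracles), Phase~2 ($\TrimCan$ filtering the finite candidate set $\Approximator(\ell)$ via $\Extract$-remapped samples), and Phase~3 ($\SimMQA$/$\SimSAMPA$ using $\Extract$ and the block structure to map $\zo^\sR$-points to and from $\zo^\ell$) exists exactly to construct \emph{oracle simulators for $\fJ_{\sigma_\bL}$ without ever knowing $\sigma_\bL$ explicitly}. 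Your step (iv) assumes away that entire problem.

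The second gap is step (ii): the query $\MQ^*_\fh(x):=h(x\circ \bu)$ for an arbitrary $x\in\zo^\sR$ is a genuinely arbitrary membership query to $h$. As you note yourself, $\testdnf$ must restrict its membership queries to \red{Type~1} and \cyan{Type~2} so that the adequacy guarantee of \Cref{thm:createoracles} carries through when $\MQ(h)$ is actually the approximate simulator $\MQ^*_i$ from $\CreateOracles$. The point $x\circ\bu$ is neither: it does not sit inside a $\cube_{\bX_\ell}(\by,\bw)$ with $\by\sim\SAMP(h)$ and $\bw$ uniform (Type~1), and it can perfectly well have $h(x\circ\bu)=0$ without $\testdnf$ being entitled to reject (Type~2). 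You flag this as a ``secondary subtlety'' that the learner ``naturally fits'', but no argument is given, and it is not true of the proposed construction. The paper's $\FRB$ is carefully written so that every $\MQ(h)$ call — including those inside the binary search and inside each $g^i$ oracle — is a Type~1 query (the colored annotations in \Cref{algo:FRB}), which is what makes the composition with $\CreateOracles$ work. Your alternative $\DNFLearner$ would break that invariant even if the rest of the argument were patched up.

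For completeness: the paper's actual proof of \Cref{thm: learner yes case} is a one-line union bound over \Cref{thm:FRB} (Phase~1 returns a perfect list $\bL$ and an $\fg\in\Approximator(|\bL|)$ with $\rd(\fh,\fg_{\sigma_\bL})\le\xi$), \Cref{thm: a good approx is found} (Phase~2 returns $\fJ\in\Approximator(\ell)$ with $\rd(\fh,\fJ_{\sigma_\bL})\le\kappa$), and \Cref{lem: MQ(A)}, \Cref{lem:mapback} (when $\bL$ is perfect, $\SimMQA$ and $\SimSAMPA$ simulate $\MQ(\fJ_{\sigma_\bL})$ and $\SAMP(\fJ_{\sigma_\bL})$ exactly). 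To salvage your route you would have to both (a) prove the existence of the black-box learner you invoke, with the sampling-robustness you need, and (b) re-derive the oracle simulators without knowledge of $\sigma_\bL$ — at which point you would be re-deriving $\Extract$, $\SimMQA$, and $\SimSAMPA$ anyway.
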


\begin{restatable}{theorem}{learnerNo}
(Performance of $\DNFLearner$ in the no-case) \label{thm: learner no case}
Assume that $h$ is $\eps$-far from $(r,\mu)$-factored-DNFs in relative distance and $\sR\subseteq [n]$.
Then with probability at least $19/20$,
  $\DNFLearner$ either rejects or returns 
    $(\MQ^*,\SAMP^*)$ that are $\kappa$-accurate oracle simulators for some function $\fJ:$ $\{0,1\}^\sR\rightarrow \{0,1\}$ with $\rd(\fJ,\fD)\le \kappa$ for some $r$-term, $\mu$-junta DNF $\fD$.
\end{restatable}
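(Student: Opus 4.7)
The plan is to show that any non-rejecting execution of $\DNFLearner$ in the no-case produces simulators that are built around an explicitly identified $r$-term $\mu$-junta DNF $\fD$, so that defining $\fJ$ to be the function these simulators effectively compute yields both $\kappa$-accuracy of the pair $(\MQ^*,\SAMP^*)$ and closeness $\rd(\fJ,\fD)\le \kappa$. Unlike the yes-case (\Cref{thm: learner yes case}), no structural property of $h$ is available, so the guarantee must be extracted from the design of $\DNFLearner$'s implicit-learning and consistency-checking phases alone. This is a ``conditional'' statement: the algorithm is free to reject often, and we only need to control what comes out when it does not.

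First I would argue that whenever $\DNFLearner$ does not reject, it has identified some particular $r$-term DNF $\fD$ on at most $\mu$ variables in $\sR$, simply because the implicit-learning phase outputs an object of this form by construction. I would then define $\fJ:\{0,1\}^\sR\to\{0,1\}$ to be the function that $\MQ^*$ deterministically computes once its internal coins are fixed; equivalently, $\fJ$ is obtained from $\fD$ by the canonical extension-by-junta that the simulator employs. Under this definition, $\MQ^*$ is a perfect (hence $\kappa$-accurate) simulator for $\fJ$, and $\fJ$ equals $\fD$ up to renaming of coordinates in $\sR$, so $\rd(\fJ,\fD)=0\le \kappa$ by \Cref{fact:remap}.

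The substantive step is showing that $\SAMP^*$ is a $\kappa$-accurate simulator for $\fJ$. This must lean on $\DNFLearner$'s consistency check, which draws samples $\bz\sim\SAMP(h)$, projects them onto $\sR$, and verifies (via $\MQ(h)$ or via comparing $\fD$ to observed labels) that $\fD$'s predictions are consistent with the empirical behavior of $h$ restricted to $\sR$. Non-rejection certifies with high probability that the distribution of $\bz_\sR$ induced by $\SAMP(h)$ is concentrated on $\fD^{-1}(1)$ and is close to uniform there; this in turn implies that the rejection-sampling procedure underlying $\SAMP^*$ (draw from $\SAMP(h)$, project, and filter through $\fJ$) returns a near-uniform element of $\fJ^{-1}(1)$ except with probability at most $\kappa$.

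The main obstacle is the union bound: the consistency check has to simultaneously guard against every one of the roughly $\binom{|\sR|}{\mu}\cdot 2^{O(\mu r)}$ possible candidate DNFs that the implicit-learning phase could in principle latch onto. Standard Chernoff bounds yield per-candidate failure probability decaying like $\exp(-\Omega(\xi\cdot m))$ on $m$ samples, and the choice $\xi=\Theta(\kappa/\log(100\cdot 2^{2\mu r}))$ fixed in \Cref{eq:xi} is calibrated precisely to dominate the $\log(2^{2\mu r})$ union-bound factor while keeping the total number of samples, and hence the query complexity of $\DNFLearner$, at $\poly(r,\mu,1/\eps)$ as required by \Cref{lem: complexity of Learner}. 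The delicate technical point is decoupling the ``accuracy of $\fJ$'' bound from the ``accuracy of the simulators for $\fJ$'' bound: since $\fJ$ is itself defined in terms of the algorithm's output, one must be careful to fix $\fJ$ before invoking accuracy claims about $\SAMP^*$, which is why the analysis proceeds by first conditioning on the (high-probability) non-rejection event and only then defining $\fJ$ from the output.
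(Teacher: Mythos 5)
Your proposal misses the key architectural fact that makes the no-case easy: the closeness $\rd(\fJ_{\sigma_L},\fD)\le\kappa$ is \emph{automatic} for any $\fJ\in\Approximator(\ell)$ by construction (\Cref{lem: dnf approx remapping}), and the $\kappa$-accuracy of the simulators comes from a dedicated subroutine, $\CheckLit$, which you never mention. Several of your intermediate claims are simply false. First, you assert $\rd(\fJ,\fD)=0$ because $\fJ$ ``equals $\fD$ up to renaming,'' but elements of $\Approximator(\ell)$ are generally not $r$-term DNFs themselves: $\fJ_\fD$ is obtained by a majority vote over the dropped variables, and only those $\fD$ with $\rd(\fD,\fD')\le\kappa/2$ are admitted, so the bound is $\kappa$ and not $0$. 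Second, you assert $\MQ^*$ is a perfect simulator in the no case; it is not, precisely because $\Extract$ can err with probability up to $\kappa$ per call when $L$ is merely a list of variable oracles (each $g^i$ only $(1/30)$-close to a literal) rather than a perfect one. This is the entire reason the theorem promises only $\kappa$-accuracy.

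Third, the substantive part of your argument rests on a consistency check inside $\DNFLearner$ that compares $\fJ$'s predictions against $\SAMP(h)$ and on a rejection-sampling implementation of $\SAMP^*$ that ``draws from $\SAMP(h)$, projects, and filters through $\fJ$.'' Neither exists. $\SimSAMPA$ draws $\by\sim\{0,1\}^\sR$ uniformly and $\bz\sim\fJ^{-1}(1)$ from the truth table, then uses $\Extract$ to decide which blocks to flip; it never calls $\SAMP(h)$ at all. The step that does compare the learned function against $h$ is $\ConsCheck$, which is a \emph{separate} phase of $\testdnf$ run after $\DNFLearner$ returns, and its guarantees are stated in \Cref{thm: ConsCheck: no case}, not here. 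Within $\DNFLearner$'s Phase~3, the mechanism guaranteeing simulator accuracy in the no case is $\CheckLit$ (\Cref{thm: checkIfLit}), which tests each $g^i$ for being close to a literal and rejects otherwise; conditioned on $\CheckLit$ passing, $L$ is a list of variable oracles, and then \Cref{lem: MQ(A)} and \Cref{lem:mapback} give $\kappa$-accuracy of both simulators directly. Finally, your union bound over candidate DNFs and the calibration of $\xi$ against $2^{O(\mu r)}$ is a real argument, but it belongs entirely to $\TrimCan$'s yes-case analysis (\Cref{lem: no_small_in_S}); it plays no role here, since in the no case nothing about $\TrimCan$'s output needs to be related to $h$.
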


\subsection{Performance guarantees of $\ConsCheck$}

Roughly speaking, when $h$ is a $(r,\mu)$-factored-DNF, $\sR$ is stable,  
  and $\smash{\fJ}$ satisfies $\rd(\fh,\fJ)\le \kappa$,
  then $h$ can be shown to be close to 
  $\smash{C(x_{\overline{\sR}})\cdot \fJ(x_{\sR})}$,
  where $C$ is some conjunction over $\smash{\{0,1\}^{\overline{\sR}}}$.
On the other hand, in the no case, if $\fJ$
  is close to some $r$-term, $\mu$-junta DNF in relative distance,
  then $h$ must be far from $\smash{C(x_{\overline{\sR}})\cdot \fJ(x_{\sR})}$  for any conjunction $C$ over $\smash{\{0,1\}^{\overline{\sR}}}$ because in the no case $h$ is far from $(r,\mu)$-factored-DNFs in relative distance.

Given the above intuition, the goal of $\ConsCheck$ is to use the oracle simulators $(\MQ^*,\SAMP^*)$ for $\fJ$ that are passed down by $\DNFLearner$ to distinguish these two cases.
We summarize its performance guarantees below:

\begin{restatable}{lemma}{constestComplexity}\label{lem: complexity of ConsCheck}
$\ConsCheck(\sR, \MQ(h),\SAMP(h), \MQ^*, \SAMP^*)$
makes no more than $\tilde{O}(1/\eps)$ many calls to $\MQ(h),\SAMP(h),\MQ^*$ and 
  $\SAMP^*$. At the end it either accepts or rejects.
\end{restatable}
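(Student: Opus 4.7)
The plan is to prove this by direct inspection of the \ConsCheck algorithm combined with a phase-by-phase accounting of its oracle calls. Although the pseudocode of \ConsCheck is presented only later in the paper, its intended structure is clear from the high-level discussion in \Cref{sec:hurdle2}: conceptually it performs two tasks, each realizable with $\tilde{O}(1/\eps)$ queries.

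The first task is to identify the candidate conjunction $C$ over $\{0,1\}^{\overline{\sR}}$ that is to play the role of the ``head'' of the factored-DNF. The natural implementation draws $\tilde{O}(1/\eps)$ samples from $\SAMP(h)$ and, for each coordinate $i \in \overline{\sR}$, reads off the bit value shared by all samples (if any), thereby fixing a literal on each ``fixed'' coordinate and hence $C$. This phase uses $\tilde{O}(1/\eps)$ calls to $\SAMP(h)$ and no calls to $\MQ(h)$, $\MQ^*$, or $\SAMP^*$.

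The second task is to verify that $h$ is close in relative distance to $C(x_{\overline{\sR}}) \land \fJ(x_{\sR})$. This decomposes into two one-sided consistency checks. First, draw $\tilde{O}(1/\eps)$ samples $\bz \sim \SAMP(h)$ and check via $\MQ^*$ that $C(\bz_{\overline{\sR}}) \land \fJ(\bz_{\sR}) = 1$. Second, obtain $\tilde{O}(1/\eps)$ samples approximately distributed as $\SAMP(C \land \fJ)$ by combining \SampleSub\ (applied to $\SAMP^*$ and appropriate coordinate sets) with the fixed head $C$, and verify through $\MQ(h)$ that each such sample lies in $h^{-1}(1)$. Each sample triggers only a constant number of $\MQ(h)$, $\MQ^*$, $\SAMP^*$, or $\SAMP(h)$ calls, so this phase too uses $\tilde{O}(1/\eps)$ oracle calls in total; summing over both tasks yields the claimed bound.

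Termination in an accept/reject decision is immediate from the structure, since each of the finitely many phases either passes a local check (and proceeds to the next) or rejects, and the final consistency check either accepts or rejects outright. There is no substantive obstacle to the argument; the only care needed is to ensure that any internal amplification of error probabilities contributes only polylogarithmic factors, which is precisely what the $\tilde{O}(\cdot)$ notation absorbs.
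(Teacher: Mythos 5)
Your proposal does not actually prove the lemma for the algorithm the paper calls \ConsCheck; it proves a complexity bound for a different algorithm that you invented. The paper's \ConsCheck (\Cref{algo: consistency_checker}) never explicitly constructs the candidate conjunction $C$ by reading off fixed coordinates from samples. Instead, Phase~1 performs $O(1/\eps)$ direct consistency checks using $\SAMP(h)$, $\MQ^*$, $\SAMP^*$, and \cyan{Type 2} queries to $\MQ(h)$; and Phase~2 reduces the remaining task to relative-error \emph{conjunction testing} by defining an implicit function $\Gamma:\{0,1\}^{\sU}\to\{0,1\}$ and invoking \ConjTest\ from \cite{rel-error-conj-DL} with oracle simulators $\MQ'=\MQGamma$ and $\SAMP'=\SampleSub$. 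Your account omits \ConjTest\ entirely, which is the central component of Phase~2 and the only place where the polylogarithmic factor in the $\tilde{O}$ actually arises.

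That omission leads to a concrete error in your accounting: you assert that ``each sample triggers only a constant number of $\MQ(h)$, $\MQ^*$, $\SAMP^*$, or $\SAMP(h)$ calls,'' and conclude the bound is $\tilde{O}(1/\eps)$ because ``internal amplification contributes only polylogarithmic factors.'' But in the actual algorithm the relevant fact is that $\ConjTest$ makes $O(1/\eps)$ calls to $\MQ'$ and $\SAMP'$ (\Cref{thm:a}), each call to $\SAMP'$ is a single $\SAMP(h)$ call, and each call to $\MQ'$ via $\MQGamma$ issues $O(\log(1/\eps))$ calls to $\MQ(h)$ and $\SAMP^*$. This is precisely how one gets $O(\log(1/\eps)/\eps)=\tilde{O}(1/\eps)$; it is not a generic ``amplification'' slack absorbed by the $\tilde{O}$ notation. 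Since the lemma is established by inspecting the pseudocode and tallying calls through $\MQGamma$ and $\SampleSub$, a proof that does not engage with those subroutines (or with the fact that Phase~2 is a call to \ConjTest) cannot be accepted as a proof of the stated lemma, even though your final numeric bound happens to coincide.
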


\begin{restatable}{theorem}{conscheckYes}(Performance of $\ConsCheck$ in the ``yes'' case)
\label{thm: ConsCheck: yes case}
Assume that $h$ is an $(r,\mu)$-factored-DNF, $\sR$ is stable, and   $\smash{(\MQ^*,\SAMP^*)}$ are perfect simulators for some function $\fJ:\{0,1\}^\sR\rightarrow \{0,1\}$ with $\rd(\fh,\fJ)\le \kappa.$
Then $\ConsCheck$ accepts with probability at least ${39/40}$.

\end{restatable}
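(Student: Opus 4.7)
My plan is to first exhibit an explicit $(r,\mu)$-factored-DNF $g$ that is $O(\kappa + \xi)$-close to $h$ in relative distance, then argue that $\ConsCheck$ accepts because each of its statistical subchecks lands on the ``close'' side with overwhelming probability. Concretely, I would take $g := C \wedge \fJ$, where $C$ is the conjunction on $\overline{\sR}$-variables that is consistent with $H$ and forces every coordinate of $\sV$ to equal the stability string $u \in \{0,1\}^\sV$. By $\xi$-stability of $\sR$, $\Pr_{\bz \sim h^{-1}(1)}[\bz_\sV \ne u] \le \xi$, so on all but a $\xi$-fraction of $h^{-1}(1)$ we have $C(\bz_{\overline{\sR}}) = 1$ and $h(\bz) = \fh(\bz_\sR)$; thus $\rd(h,\, C \wedge \fh) = O(\xi)$. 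Combining this with $\rd(\fh,\fJ) \le \kappa$ via the approximate triangle inequality (\Cref{lem: approx triangle ineq}), and using that the distributions are essentially the same on each side, yields $\rd(h, g) \le O(\kappa + \xi)$, where $g$ is a legitimate $(r,\mu)$-factored-DNF with head containing $H$.

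Next, I would go through $\ConsCheck$ and verify that each subcheck it performs passes with high probability in the presence of this close-by $g$. The two kinds of checks one naturally expects are: (i) given $\by \sim \SAMP(h)$, query $\MQ^*(\by_\sR) = \fJ(\by_\sR)$ and ensure the answer is $1$; and (ii) given a point $\bx$ drawn from an auxiliary ``stitched'' distribution $\Gamma$ (where $\bx_{\overline{\sR}}$ is extracted from $\SAMP(h)$ via $\SampleSub$ and $\bx_\sR$ is drawn from $\SAMP^* \equiv \SAMP(\fJ)$), check that $h(\bx) = 1$ via $\MQ(h)$. For (i), with probability $\ge 1-\xi$ a draw $\by \sim h^{-1}(1)$ has $\by_\sV = u$, in which case $\by_\sR$ is distributed as $\fh^{-1}(1)$ and $\fJ(\by_\sR) = 0$ only with probability $\le 2\kappa$ after applying the approximate symmetry of relative distance (\Cref{lem:approx-symetric}). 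For (ii), with probability $\ge 1-\xi$ the $\overline{\sR}$-bits stitched from $\SAMP(h)$ agree with $u$ (so $C(\bx_{\overline{\sR}}) = 1$), and $\bx_\sR \sim \fJ^{-1}(1)$, so $\bx \in g^{-1}(1)$, while $\rd(g,h) = O(\kappa + \xi)$ then gives $h(\bx) = 0$ only with probability $O(\kappa + \xi)$ after one more appeal to approximate symmetry.

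Since $\ConsCheck$ makes $\tilde{O}(1/\eps)$ queries (\Cref{lem: complexity of ConsCheck}) and $\kappa = \eps^2$, a union bound over the individual subcheck failures and a Chernoff bound over any empirical frequency estimates together push the overall failure probability below $1/40$, as required. The main obstacle will be handling subcheck (ii) cleanly, because $\ConsCheck$ does not know $C$ explicitly and can only approximate the target distribution on $g^{-1}(1)$ by splicing $\overline{\sR}$-coordinates from $\SAMP(h)$ with $\sR$-coordinates from $\SAMP^*$; the argument must propagate $\xi$-stability to each of the $\tilde{O}(1/\eps)$ spliced samples, which is precisely why $\xi$ in \Cref{eq:xi} is chosen quasi-polynomially smaller than $\kappa$. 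A secondary subtle point is that the $\sR$-coordinates of $\by \sim h^{-1}(1)$ are not exactly distributed as $\fh^{-1}(1)$ but only conditionally on the $(1-\xi)$-probability event $\by_\sV = u$; this mild distortion is absorbed into the $\xi$ slack throughout.
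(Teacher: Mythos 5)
Your proposal has a genuine gap: it does not engage with Phase~2 of $\ConsCheck$, which runs $\ConjTest$ on the implicitly-defined function $\Gamma$ via simulated oracles $\MQ'$ and $\SAMP'$.  You describe only two kinds of ``statistical subchecks'' — essentially the checks in Phase~1 (lines~1--6) plus the single check made by $\MQGamma$ on a spliced sample — and then assert that ``a union bound over the individual subcheck failures'' controls the rejection probability.  But $\ConjTest$ is a black-box tester that makes its own $\MQ'$ queries on adaptively chosen points (for example, XORs of samples; see \Cref{algo: mono conjunction tester}), not only on points drawn from $\SAMP'$.  To show $\ConjTest$ accepts with high probability, the paper needs the structural robustness result \Cref{thm: conjtest: accept weaker assumption samples}: when $\SAMP'$ only produces strings satisfying the conjunction $C^*$ (here $C^*(\alpha)=1$ iff $\alpha_\sV = u$), $\ConjTest$ \emph{provably never queries $\MQ'$ on a point outside $(C^*)^{-1}(1)$}, and it accepts iff all its $\MQ'$ queries return $1$.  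This is a nontrivial fact about the internal structure of $\ConjTest$ and is the crux of \Cref{lem: conj-test accept whp}.  Your outline gives no mechanism for controlling what $\ConjTest$ queries, so the ``union bound over subcheck failures'' step does not go through: an unanalyzed $\MQ'$ query on some adversarial $\alpha \notin (C^*)^{-1}(1)$ could cause rejection with constant probability.

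A secondary, smaller issue: you claim your comparison function $g := C \wedge \fJ$ is ``a legitimate $(r,\mu)$-factored-DNF,'' but $\fJ$ is only promised to be a Boolean function with $\rd(\fh,\fJ) \le \kappa$; it need not be an $r$-term DNF at all, so $g$ need not be a factored-DNF.  This doesn't sink your argument (you never really use the classification), but constructing $g$ explicitly is also unnecessary — the paper works directly with $\fh$, $\fJ$, and $C^*$.  The parts of your proposal covering the Phase~1 checks (lines~1--6) do track the paper's \Cref{lem: ConsCheck good 1-8}, including the correct use of approximate symmetry to bound $\Pr_{\bw \sim \fJ^{-1}(1)}[\fh(\bw)=0]$ by $O(\kappa)$; that portion is sound.
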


\begin{restatable}{theorem}{conscheckNo}(Performance of $\ConsCheck$ in the ``no'' case) \label{thm: ConsCheck: no case}
Assume that $h$ is $\eps$-far from $(r,\mu)$-factored-DNFs in relative distance, $\sR\subseteq [n]$, and 
  $(\MQ^*,\SAMP^*)$ are $\kappa$-accurate oracle simulators for some function $\fJ:\{0,1\}^\sR\rightarrow \{0,1\}$ with $\rd(\fJ,\fD)\le \kappa$ for some $r$-term, $\mu$-junta DNF
  $\fD$.
Then $\ConsCheck$ rejects with probability at least $19/20$.
\end{restatable}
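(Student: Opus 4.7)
The plan is to argue by contrapositive: we will show that if $\ConsCheck$ accepts with probability strictly greater than $1/20$, then $h$ is $\eps$-close in relative distance to some $(r,\mu)$-factored-DNF, which contradicts the hypothesis. From the high-level description of $\ConsCheck$ given at the start of \Cref{sec:factored-tester}, the subroutine is designed to certify that $h$ is well approximated by a product $C(x_{\overline{\sR}})\cdot \fJ(x_{\sR})$ for some conjunction $C$ over coordinates in $\overline{\sR}$. So the strategy is: (i) replace the imperfect simulators $(\MQ^*,\SAMP^*)$ by perfect oracles for $\fJ$ at negligible cost; (ii) extract from an acceptance event an explicit witness conjunction $C^*$ together with a bound $\rd(h,\, C^*\cdot\fJ)\le O(\eps)$; (iii) upgrade this to a true $(r,\mu)$-factored-DNF using $\rd(\fJ,\fD)\le \kappa$; and (iv) close the loop.

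For step (i), by \Cref{lem: complexity of ConsCheck} the subroutine makes only $\tilde O(1/\eps)$ calls to $\MQ^*$ and $\SAMP^*$, each erring with probability at most $\kappa=\eps^2$, so a standard coupling and union bound show that substituting perfect oracles for $\fJ$ changes the acceptance probability by at most $\tilde O(\kappa/\eps)=\tilde O(\eps)=o(1)$; the idealized run therefore still accepts with probability at least $1/20-o(1)$. Step (ii) is the technical core of \Cref{sec:checking}: the subroutine draws samples from $\SAMP(h)$ and performs consistency checks that, upon acceptance with non-negligible probability, imply the existence of a conjunction $C^*$ on $\overline{\sR}$ --- formed essentially by recording the bits on which the $\SAMP(h)$ draws agree at coordinates outside $\sR$ --- for which $\rd(h,\, C^*\cdot \fJ)\le O(\eps)$. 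Step (iii) then combines this with $\rd(\fJ,\fD)\le \kappa$: using approximate symmetry (\Cref{lem:approx-symetric}) to transfer the bound on $\fJ$ vs.\ $\fD$ and then the approximate triangle inequality (\Cref{lem: approx triangle ineq}), we obtain
\[
\rd\big(h,\, C^*\cdot \fD\big)\;\le\; O(\eps) + (1+O(\eps))\cdot O(\kappa)\;=\;O(\eps)+O(\kappa).
\]

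Step (iv) is immediate: the product $C^*\cdot\fD$ is an $(r,\mu)$-factored-DNF whose head is the conjunction $C^*$ over variables in $\overline{\sR}$ and whose tail is the $r$-term, $\mu$-junta DNF $\fD$ supported in $\sR$, so the head and tail have disjoint variable sets. With $\kappa=\eps^2$ the bound above is strictly smaller than $\eps$ once constants are tuned, contradicting the $\eps$-farness hypothesis. The main obstacle is step (ii): one must convert a sampling-based acceptance event into a sharp \emph{relative}-distance bound between $h$ and $C^*\cdot \fJ$, and this is precisely the difficulty flagged in \Cref{sec:hurdle2} that distinguishes our setting from \cite{rel-error-conj-DL}. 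Because $\fJ$ can be extremely sparse, one cannot control disagreement via uniform samples; the sampling inside $\ConsCheck$ must be driven by $\SAMP(h)$ and the concentration analysis must be carried out relative to $|h^{-1}(1)|$ rather than $2^n$, which requires a considerably more delicate argument than the Chernoff-type reasoning that sufficed in the dense $1$-decision-list analog.
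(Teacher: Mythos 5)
Your contrapositive framing is a legitimate reorganization of the argument, and steps (i), (iii), and (iv) are essentially sound: step (i) matches the paper's observation that $\kappa=\eps^2$ makes the simulators' imperfection negligible over $\tilde O(1/\eps)$ calls, and steps (iii)--(iv) correctly observe that $C^*\cdot\fD$ is an $(r,\mu)$-factored-DNF when $C^*$ lives on $\overline{\sR}$ and $\fD$ is an $r$-term, $\mu$-junta DNF on $\sR$, so the approximate triangle inequality gives the contradiction.

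The genuine gap is step (ii), and the gap is not merely that you leave it unproven (you acknowledge it is ``the main obstacle'') but that your description of where the conjunction $C^*$ comes from does not correspond to what $\ConsCheck$ actually does. You claim $C^*$ is ``formed essentially by recording the bits on which the $\SAMP(h)$ draws agree at coordinates outside $\sR$,'' but $\ConsCheck$ never does this --- bit agreement on sampled points is the initialization phase of $\testdnf$ (forming $\bR$), not anything in $\ConsCheck$. The algorithm instead works with a fixed (unknown, defined only implicitly) threshold function $\Gamma:\{0,1\}^{\overline{\sR}}\to\{0,1\}$ determined by $h$ and $\fJ$, and the paper's proof proceeds by a case split on the function $\gamma(z)=\Gamma(z_{\sU})\wedge\fJ(z_{\sR})$: if $|h^{-1}(1)\setminus\gamma^{-1}(1)|$ is large, Phase~1 (lines 1--4) rejects; if $|\gamma^{-1}(1)\setminus h^{-1}(1)|$ is large, line~6 rejects; and otherwise $h\approx\gamma$, whence (Lemma \ref{lem: Gamma far from conj}) $\Gamma$ must be far from every conjunction --- else $h$ would be close to a factored-DNF --- so the Phase~2 call to $\ConjTest$ on (simulated oracles for) $\Gamma$ rejects. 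This structure, and in particular the intermediate function $\gamma$ and the reduction to a relative-error conjunction test applied to $\Gamma$, is precisely the content you would need to supply for step (ii); without it the proposal gives no argument that acceptance implies the existence of a witness conjunction, and the mechanism you sketch for producing $C^*$ is not the one the algorithm uses.
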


\subsection{Proof of \Cref{thm:factoredDNF}}

With the above pieces in place, the proof of \Cref{thm:factoredDNF} is simple.

\begin{proof}[Proof of \Cref{thm:factoredDNF}]
To bound the complexity of $\testdnf$, we note that (1) it makes $\poly(r,\mu,1/\eps)$ calls to $\SAMP(h)$
  during the initialization;
(2) $\DNFLearner$ makes $\poly(r,\mu,1/\eps)$ calls to 
  $\MQ(h)$ and $\SAMP(h)$; and 
(3) $\ConsCheck$ makes $\tilde{O}(1/\eps)$ calls to $\MQ(h),\SAMP(h),\MQ^*$ and $\SAMP^*$. 
Given that each call to $\MQ^*$ and $\SAMP^*$ makes
  at most $O(\mu\log(\mu/\eps))$ calls to $\MQ(h)$,
  the overall complexity of $\testdnf$ is $\poly(r,\mu,1/\eps)$.

For the correctness, when $h$ is a $(r,\mu)$-factored-DNF, it follows from \Cref{thm:learner_first_phase},
\Cref{thm: learner yes case} and \Cref{thm: ConsCheck: yes case} and a union bound that $\testdnf$ accepts with probability at least $0.9$.
When $h$ is $\eps$-far from $(r,\mu)$-factored-DNFs in relative distance, it follows from 
\Cref{thm: learner no case,thm: ConsCheck: no case} and a union bound that $\testdnf$ rejects with probability at least $0.9$.
\end{proof}

\section{The $\DNFLearner$ algorithm} \label{sec:learning}

The $\DNFLearner$ is presented in \Cref{algo:DNF-learner}. 
We start with some high-level intuition of its three phases.  Our high-level discussion focuses on the yes-case, in which $h$ is an $(r,\mu)$-factored-DNF and $\sR\subseteq [n]$ is a stable set (with respect to $h$).
Recall that in this case,
  we write $\fh:\{0,1\}^\sR\rightarrow \{0,1\}$ to denote the dominating function, which is a $r$-term, $\mu$-junta DNF and satisfies 
$$
\Prx_{\bz\sim h^{-1}(1)} \left[
h\upharpoonleft_{\bz_{\overline{\sR}}}\not\equiv \fh
\right]\le \xi.
$$

We first give some notation and setup, then analyze each of the three phases of the algorithm in turn, and finally prove \Cref{thm: learner yes case}, \Cref{thm: learner no case}, and \Cref{lem: complexity of Learner}.
{As will become clear, most of our effort in this section is on the yes-case.
  Arguments for the no-case (\Cref{thm: learner no case}) only occur in the analysis of Phase 3 in \Cref{sec:learner-phase-3}.}

\begin{figure}[t!]
\begin{algorithm}[H]
\caption{\protect\hypertarget{DNFLearner} {$\DNFLearner$}}\label{algo:DNF-learner}
\vspace{0.15cm}
\textbf{Input: }$\emptyset\ne \sR\subseteq [n]$, $\MQ(h)$ and $\SAMP(h)$ of some function $h: \zo^n \to \zo$.\\ %
\textbf{Output: }Reject or oracle simulators $(\SAMP^*, \MQ^*)$ for some function $\fJ: \zo^\sR \to \zo$. \\
\begin{tikzpicture}
\draw [thick,dash dot] (0,1) -- (16.5,1);
\end{tikzpicture}
    \begin{algorithmic}[1]

\Algphase{Phase $1$: Preprocessing}
\State Run $\FRB(\MQ(h), \SAMP(h),\sR)$.
\State Reject if it rejects; otherwise let $\bL$ be the list of local oracles returned. %
\State If $|\bL|=0$ return $(\MQ(\mathbf{1}_\sR), \SAMP(\mathbf{1}_\sR))$. \Comment{ $\mathbf{1}_\sR$ is the all $1$ function from $\zo^\sR$ to $\zo$} \vspace{0.1cm}%
\Algphase{Phase $2$: Finding a good approximator}
\State Run $\TrimCan( \SAMP(h), \bL,\sR) $.
\State  Reject if it rejects; otherwise let $\fJ$ be the function it returned (as a truth table).\vspace{0.1cm}
\Algphase{Phase $3$: Build oracles for the approximator}
\State Run $\CheckLit(\bL)$; reject if it rejects. 
\State Let $\SAMP^* \leftarrow \SimSAMPA(\sR,\bL,\fJ)$ and $\MQ^* \leftarrow \SimMQA(\cdot,\sR,\bL,\fJ)$
\State Return $(\SAMP^*, \MQ^*)$.
\vspace{0.1cm}  

\end{algorithmic}
\end{algorithm}
\caption[\DNFLearner: Given a factored DNF $f:=H \land t$, the algorithm returns $\MQ$ and $\SAMP$ oracles for a function $g$ which is relative-error close to $t$]{$\DNFLearner.$ The algorithm takes as input oracle access to $h$ and a set $\sR \subseteq [n]$. The goal of the algorithm is to return simulators for the $\MQ$ and $\SAMP$ oracles of a function $\fJ:\zo^\sR \to \zo$, where $\fJ$ is close to some $r$-term, $\mu$-junta DNF $\fD$. In the yes case, recall that $\fh$ denotes the dominating function over $\sR$, with high probability we have that $\fh$ is very close to $\fJ$ in relative-distance. }
\end{figure}

\subsection{Notation and Setup}

We start by defining a useful notion of a list of variable oracles:
 
\begin{definition} \label{def:listoforacles}
We say an ordered tuple $L=((\MQ(g^1),\sX_1),\ldots,(\MQ(g^\ell),\sX_\ell))$, for some $\ell\ge 0$, is a \emph{domain-disjoint list of oracles} if each $\MQ(g^i)$ is the membership oracle of some function $g^i:\{0,1\}^{\sX_i}\rightarrow \{0,1\}$, $\sX_i\subseteq \sR$, and the $\sX_i$'s are pairwise disjoint.

If in addition, 
  every function $g^{i}$ is $(1/30)$-close to a literal $x_{\tau(i)}$ or  $\overline{x_{\tau(i)}}$ for some $\tau(i)\in \sX_i$ \emph{under the uniform distribution}, we say $L$ is a \emph{list of variable oracles}. 
Furthermore, if every $g^{i}$ is exactly a literal, we say $L$ is a \emph{perfect list of variable oracles}.
(Note that an empty list  $L$ is trivially a perfect list of variable oracles.)

Given a list $L$ of variable oracles (so that the $\tau(i)$'s are uniquely defined), we denote by $\sigma_{L}$ the injective map  $\sigma_L:[\ell]\to \sR$ defined by $\sigma_L(i)=\tau(i)$ for each $i\in [\ell]$.
\end{definition}

Looking ahead, when building a list $L$ of variable oracles in Phase 1,
  each function $g^i$ is chosen to be
$h{\upharpoonleft_{v}}$ for some string $v\in \{0,1\}^{\overline{\sX_i}}$ found by the algorithm along the way and thus, $\MQ(g^i)$ can be easily simulated exactly, query by query, using $\MQ(h)$.

Recall that in the yes case, the dominating function $\fh$ that we aim to learn is always an $r$-term DNF over at most $\mu$ variables in $\sR$ (so a $\mu$-junta).
Motivated by the fact that the algorithm won't be able to guarantee to find all the relevant variables (as some of them may have low influence), we introduce 
a collection $\Approximator(\ell)$  of functions over $\zo^{\ell}$ for each $\ell\in [\mu]$.
Roughly speaking, functions in $\Approximator(\ell)$ approximate $r$-term, $\mu$-junta DNFs within relative error $\kappa/2$. 

Formally, given any $r$-term DNF $\fD : \{0,1\}^\mu \to \{0,1\}$, let ${\fD'}: \zo^{\mu} \to \zo$ be defined as: $${\fD'}(x)= \text{arg}\max_{b \in \{0,1\}}\left\{ \Prx_{\bw \sim \{0,1\}^{\mu-\ell}} \left[\fD\left(x_{[\ell]} \circ \bw\right)=b\right]\right\}.$$
Clearly $\fD'$ only depends on the first $\ell$
  variables. Let $\fJ_\fD:\{0,1\}^{\ell} \rightarrow \{0,1\}$ be the function 
  with $$\fJ_{\fD}(z)=\fD'\left(z\circ 0^{r-\ell}\right).$$
Then the function $\fJ_{\fD}$ is added to $\Approximator(\ell)$ iff $\rd(\fD,\fD')\le \kappa/2$, i.e.~$\Approximator(\ell)$ contains all the functions $\fJ_{\fD}$ as $\fD$ ranges over all $r$-term DNFs over $\zo^\mu$  for which $\rd(\fD,\fD') \leq \kappa/2$.

Note that an algorithm can build the sets
  $\Approximator(\ell)$ without making any queries.

Looking ahead, the pair $(\MQ^*,\SAMP^*)$ that $\DNFLearner$ outputs at the end will always~be oracle simulators of $\fJ_\sigma$\footnote{See the end of \Cref{sec:prelim-string-function} for a reminder of the meaning of the notation $\fJ_\sigma.$}  for some $\ell\in [\mu]$, $\fJ\in \Approximator(\ell)$ and injective map $\sigma:[\ell]\rightarrow \sR$ 
(except the trivial case on line 3 of $\DNFLearner$).
We record two lemmas about $\Approximator(\ell)$:

\begin{lemma}\label{lem:Approx-Size}
For any $\ell\in [\mu]$, we have $|\Approximator(\ell)| \leq 2^{2\mu r}$.
\end{lemma}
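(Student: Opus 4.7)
The plan is to bound $|\Approximator(\ell)|$ by counting the number of $r$-term DNFs over $\mu$ variables, since each element of $\Approximator(\ell)$ arises as $\fJ_\fD$ for some such DNF $\fD$ (satisfying the additional condition $\rd(\fD,\fD')\le \kappa/2$, which we simply discard for the upper bound).

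First I would count terms. A term over $\mu$ variables is specified by choosing, for each variable $x_1,\dots,x_\mu$, one of three possibilities: the positive literal appears, the negated literal appears, or the variable does not appear. Hence there are at most $3^\mu$ distinct terms over $\mu$ variables.

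Next, an $r$-term DNF $\fD:\{0,1\}^\mu\to\{0,1\}$ is determined by an (unordered, but here we may as well use an ordered) multiset of at most $r$ terms, so the number of such DNFs is at most $(3^\mu)^r = 3^{\mu r}$. Since the map $\fD \mapsto \fJ_\fD$ is a function (each $\fD$ yields at most one element of $\Approximator(\ell)$), we get
\[
|\Approximator(\ell)| \;\le\; 3^{\mu r} \;\le\; 4^{\mu r} \;=\; 2^{2\mu r},
\]
using $3\le 2^2$. This completes the proof.

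There is no real obstacle here: this is a counting lemma that follows directly from the definition of $\Approximator(\ell)$, and the stated bound $2^{2\mu r}$ is deliberately loose (the natural bound is $3^{\mu r}$) so that it can be quoted cleanly elsewhere (e.g., in the definition of $\xi$ in \Cref{eq:xi}).
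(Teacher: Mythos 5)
Your proof is correct and takes the same approach as the paper: bound $|\Approximator(\ell)|$ by the number of $r$-term DNFs over $\{0,1\}^\mu$, then count those DNFs by counting terms. The only cosmetic difference is that you count terms as $3^\mu$ (each variable appears positively, negatively, or not at all) whereas the paper uses the looser $2^{2\mu}$ (each of the $2\mu$ literals is either present or absent); both give the stated $2^{2\mu r}$.
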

\begin{proof}
By definition $|\Approximator( \ell)|$ 
can be bounded from above by the number of $r$-term DNFs $\fD$ over $\{0,1\}^\mu$. 
The latter can be bounded by $(2^{2\mu})^r$.
\end{proof}

\begin{lemma}\label{lem: dnf approx remapping}
    Given any function $\fJ \in \Approximator(\ell)$ and any injective map $\sigma:[\ell] \to \sR$, we have $\rd( \fJ_\sigma, \fD) \leq \kappa$ for some $r$ term, $\mu$-junta DNF $\fD:\{0,1\}^\sR \to \zo$.  
\end{lemma}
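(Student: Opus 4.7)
The plan is to unpack the definition of $\Approximator(\ell)$ to obtain a witnessing $r$-term DNF on $\{0,1\}^\mu$, then transport the closeness guarantee to $\{0,1\}^\sR$ through the injection $\sigma$ after extending it to all $\mu$ coordinates.

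First, by the construction of $\Approximator(\ell)$, there exists an $r$-term DNF $\fD:\{0,1\}^\mu\to \{0,1\}$ such that $\fJ=\fJ_\fD$ and $\rd(\fD,\fD')\le \kappa/2$, where $\fD'$ is the ``majority-voted'' function that depends only on the first $\ell$ coordinates. In particular, $\fJ_\fD(z)=\fD'(z\circ 0^{\mu-\ell})$ is just $\fD'$ viewed as a function of its first $\ell$ coordinates, since $\fD'$ is constant on its last $\mu-\ell$ inputs.

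Next, I would extend the injection $\sigma:[\ell]\to \sR$ to an injection $\tilde\sigma:[\mu]\to \sR$ (this uses $|\sR|\ge \mu$, which holds in the contexts where this lemma will be invoked; the edge case $|\sR|<\mu$ is handled analogously, since any function on $|\sR|<\mu$ variables is trivially a $\mu$-junta). The candidate DNF is then $\fD'':=\fD_{\tilde\sigma}:\{0,1\}^\sR\to \{0,1\}$, which is an $r$-term, $\mu$-junta DNF on $\{0,1\}^\sR$ because $\tilde\sigma$ is injective and $\fD$ has $r$ terms over $\mu$ variables.

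The main computation is then $\rd(\fJ_\sigma,\fD'')\le \kappa$. The key identification is $\fJ_\sigma = \fD'_{\tilde\sigma}$: both sides evaluate $\fD'$ at $(x_{\sigma(1)},\ldots,x_{\sigma(\ell)})$ because $\fD'$ ignores its last $\mu-\ell$ inputs, so the choice of $\tilde\sigma$ on $\{\ell+1,\ldots,\mu\}$ is immaterial. Thus $\rd(\fJ_\sigma,\fD'')=\rd(\fD'_{\tilde\sigma},\fD_{\tilde\sigma})$. A short direct calculation (in the spirit of Fact~\ref{fact:remap}) then gives $\rd(\fD'_{\tilde\sigma},\fD_{\tilde\sigma})=\rd(\fD',\fD)$: the ``unused'' coordinates in $\sR\setminus \tilde\sigma([\mu])$ contribute a common multiplicative factor of $2^{|\sR|-\mu}$ to both the symmetric difference and the support and thus cancel. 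Finally, by approximate symmetry (\Cref{lem:approx-symetric}), $\rd(\fD',\fD)\le 2\rd(\fD,\fD')\le \kappa$, using that $\rd(\fD,\fD')\le \kappa/2\le 1/2$.

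The main technical obstacle is the ``injection versus bijection'' gap in invoking Fact~\ref{fact:remap}: that fact is stated for a bijection between coordinate sets, whereas $\tilde\sigma$ is only an injection into $\sR$. The remedy is straightforward (both $\fD_{\tilde\sigma}$ and $\fD'_{\tilde\sigma}$ are independent of the coordinates in $\sR\setminus \tilde\sigma([\mu])$, so these coordinates contribute equally on both sides of the ratio), but it should be made explicit rather than cited as a direct corollary. The approximate symmetry step, which loses a factor of two, is exactly why the definition of $\Approximator(\ell)$ uses the threshold $\kappa/2$ rather than $\kappa$.
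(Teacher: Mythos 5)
Your proof is correct and follows the same route as the paper's: both extend $\sigma$ to an injection $[\mu]\to\sR$, identify $\fJ_\sigma$ with the remapped $\fD'$, reduce to $\rd(\fD,\fD')\le\kappa/2$ via invariance of relative distance under injective remapping, and finish with approximate symmetry (\Cref{lem:approx-symetric}). Your version is slightly more explicit about the injection-versus-bijection caveat in invoking \Cref{fact:remap} and about where the factor of two from symmetry is spent, but the mathematical content is identical.
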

\begin{proof}
    By the definition of $\Approximator(\ell)$, we have $\fJ=\fJ_\fD$ for some $r$-term DNF $\fD:\zo^{\mu} \to \zo$. 
  Consider any injective map $\pi : [\mu] \to \sR$ 
  with $\pi(i)=\sigma(i)$ for each $i\in [\ell]$.
 Then, we have %
 $$\rd\left( \fD_{\pi}, \fJ_\sigma\right )=\rd(\fD, \fD') \leq \kappa/2.$$ 
    By symmetry (see \Cref{lem:approx-symetric}), we have $\rd(\fJ_\sigma, \fD_{\pi})\leq \kappa$. And obviously $\fD_{\pi}$ is a $r$-term, $\mu$-junta DNF over $\zo^\sR$. 
\end{proof}

Finally, given a set $\sR$ we write $\mathbf{1}_\sR$ to denote the all $1$ function from $\zo^\sR \to \zo$. Note that an algorithm can trivially build $\MQ(\mathbf{1}_\sR)$ (always return $1$) and $\SAMP(\mathbf{1}_\sR)$ (return $\bz \sim \zo^{\sR}$).

\subsection{Phase 1: $\FRB$}
\label{sec:Approximate}
Consider the yes case. 
$\FRB$ takes as input $\MQ(h)$, $\SAMP(h)$ as well as  a set $\emptyset \neq \sR \subseteq [n]$ which we will assume to be stable. 
Let $\fh:\{0,1\}^\sR\rightarrow \{0,1\}$ be the 
  dominating function and let $\sS'$ be the set of 
  relevant variables of $h$ in $\sR$.
Given that $\var(H)\subseteq {\overline{\sR}}$ (as $\sR$ is stable), we have $|\sS'|\le \mu$.
We also have that all relevant variables of $\fh$ are in $\sS'$.

In the yes case, the goal of $\FRB$ is to find as many as possible of the at most $\mu$ relevant variables of   $\fh$. Note that we cannot afford to explicitly find any of these variables using~$O_{r,\mu, \epsilon}(1)$ queries. Instead, $\FRB$ starts by randomly partitioning $\sR$ into $\tau=O(\mu^2)$ disjoint blocks $\bX_1,\ldots,\bX_\tau$. By the birthday paradox, {at the cost of a small constant failure probability} we may assume that $|\sS'\cap \bX_i|\le 1$ for all $i\in [\tau]$.
$\FRB$ then repeatedly performs a binary search procedure over blocks to find (i) a new ``relevant block,'' i.e., one of those $\bX_i$ with $|\sS\cap \bX_i|=1$ and (ii) for each new ``relevant block'' $\bX_i$ found, a restriction $g^i:\{0,1\}^{\bX_i}\rightarrow \{0,1\}$ of the function $h$ that is exactly a literal $x_{\tau(i)}$ or $\overline{x_{\tau(i)}}$ for the unique 
  $x_{\tau(i)}$ in $\sS'\cap \bX_i$.
The pair $(\MQ(g^i),\bX_i)$ is then added to a list to obtain a \emph{perfect} list of variable oracles that $\FRB$ returns at the end.

As mentioned above a key subroutine our algorithm uses is binary search over ``blocks'' of variables to find new relevant blocks. In more detail, let $\bX_{1}, \ldots \bX_{\tau}$ be a partition of $\sR$ and let us fix some $I \subseteq [\tau]$ with $\bX=\cup_{\ell \in I} \bX_\ell$. Given $u \in \zo^n$ and $w \in \zo^{\sR \setminus \bX}$ such that $h(u) \neq h(z_\sU \circ z_\bX \circ w)$, it is straightforward, using binary search, to identify an $\bX_\ell, \ell \not \in I$, and a string $v \in \{0,1\}^n$ with $$h(v) \neq h\left(w_{\bX_\ell} \circ v_{\overline{\bX_\ell}}\right),$$ using $O(\log \tau)=O(\log \mu)$ queries to $h$.
The (simple and standard) idea is as follows: Let $\textsf{U}=[\tau] \setminus I$ be indices of blocks outside $\bX$, and let $a=z, b=z_\sU \circ z_\bX \circ w$. We repeat the following until $|\textsf{U}|=1$:
\begin{flushleft}\begin{quote}
Let $\textsf{U}'$ be the first half of $\textsf{U}$. 
Query $c$, which is $a$ with blocks in $\textsf{U}'$ set according to $b$. If $f(a) \neq f(c)$, we set $\textsf{U}=\textsf{U}'$ and $b=c$; otherwise we set $\textsf{U}=\textsf{U} \setminus \textsf{U}'$ and $a=c$. 
\end{quote}\end{flushleft}

Our goal will be to prove the following two lemmas for $\FRB$:

\begin{lemma}\label{lem:FBR complexity}
     $\FRB(\MQ(h), \SAMP(h),\sR)$ makes  $\poly(\mu,1/\xi)$ calls to $\SAMP(h)$, $\MQ(h)$.
\end{lemma}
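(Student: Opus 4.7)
The proof is a direct bookkeeping exercise over the structure of $\FRB$: the plan is to separately bound the cost of the initialization and the cost of the main loop, using that (i) the loop terminates after locating at most $\mu$ new blocks and (ii) each iteration performs one binary search of depth $O(\log \tau)$ over $\tau = \Theta(\mu^2)$ blocks.

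First I would handle the initialization. $\FRB$ begins by fixing a uniformly random partition of $\sR$ into $\tau = O(\mu^2)$ blocks $\bX_1,\ldots,\bX_\tau$; this step makes no oracle calls. It then draws an initial batch of $\SAMP(h)$-samples, the size of which (by inspection of the pseudocode) is $\poly(\mu,1/\xi)$. The $1/\xi$ dependence arises because the samples must be numerous enough so that, with high probability, every coordinate in $\sR$ on which $h^{-1}(1)$ is not essentially constant --- and in particular every relevant coordinate of the dominating function --- shows both values among the drawn samples, in the same spirit as the stability argument in \Cref{thm:learner_first_phase}.

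Next I would analyze the main loop. Each iteration either terminates $\FRB$ or else adds a new pair $(\MQ(g^i),\bX_i)$ to the output list $\bL$, with $\bX_i$ a block that has not been selected before. Since $\FRB$ is searching for at most $\mu$ relevant coordinates (the bound on $|\sS'|$), the loop runs for at most $\mu$ iterations. The cost of a single iteration splits into three parts: (a) drawing $\poly(\mu,1/\xi)$ fresh samples from $\SAMP(h)$ to produce an assignment pair on which $h$ disagrees through some block not yet processed; (b) running the binary-search procedure described immediately before the lemma, which uses $O(\log \tau)=O(\log \mu)$ calls to $\MQ(h)$ to localize the disagreement to a single new block $\bX_i$; and (c) recording the data needed to simulate $\MQ(g^i)$, which amounts to fixing the bits outside $\bX_i$ to values produced during the binary search and then using $O(1)$ calls to $\MQ(h)$ to answer each later query to $\MQ(g^i)$. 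Summing (a)--(c) over at most $\mu$ iterations and adding the initialization cost yields the desired $\poly(\mu,1/\xi)$ bound on the total number of calls to $\SAMP(h)$ and $\MQ(h)$.

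The main thing to be careful about is step (a): one must verify that the per-iteration sample size really stays $\poly(\mu,1/\xi)$ as the loop progresses. This is ensured by the invariant that, once some blocks have already been fixed, any remaining relevant block continues to produce disagreement on a noticeable fraction of fresh draws from $\SAMP(h)$ --- a consequence of the fact that the literal restrictions recorded so far are satisfied by essentially all of $h^{-1}(1)$, by stability of $\sR$. Given this, a standard union bound over the at most $\mu$ remaining relevant blocks delivers the required sample complexity per iteration, and the overall query complexity is $\poly(\mu,1/\xi)$ as claimed.
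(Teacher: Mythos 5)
Your proposed proof reaches the right bound but misreads the structure of $\FRB$ in a way that would not survive formalization.

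First, $\FRB$ has no ``initial batch of samples'': Phase~1 of \Cref{algo:FRB} only fixes the random partition of $\sR$, which costs nothing; $\bR$ is passed \emph{in} to $\FRB$, having been constructed earlier by the $O((\mu\log\mu)/\xi)$-sample step in line~1 of \Cref{alg:testfactoredDNF}. Invoking \Cref{thm:learner_first_phase} here conflates $\FRB$ with the enclosing $\testdnf$.

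The more serious issue is in how you bound the main loop. You bound the number of rounds by $\mu$ because ``$\FRB$ is searching for at most $\mu$ relevant coordinates (the bound on $|\sS'|$),'' and you bound the samples per round by a probabilistic argument that ``any remaining relevant block continues to produce disagreement on a noticeable fraction of fresh draws \dots by stability of $\sR$.'' Both of these appeal to yes-case structure of $h$, but \Cref{lem:FBR complexity} is a \emph{worst-case} query bound that must hold for arbitrary $h$ and arbitrary $\sR$ (in particular throughout the no-case analysis of \Cref{thm: learner no case}). If the per-round sample cost actually depended on a high-probability stability event, the lemma would simply be false in the no-case. What makes the bound unconditional is the counter $t$: each while-iteration increments $t$, $t$ is reset to $0$ only when a new block is found, the loop exits once $t>t^*=O(\log(\mu)/\xi)$, and line~8 halts once $|\bL|>\mu$. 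So the number of while-iterations is deterministically at most $(\mu+1)(t^*+1)$, each costs $O(1)$ queries plus an occasional $O(\log\mu)$-query binary search, giving $O(\mu t^* + \mu\log\mu)=\poly(\mu,1/\xi)$ with no appeal to stability or to $|\sS'|\le\mu$. Your argument, as written, leaves open the possibility of unbounded looping when $h$ is not a factored-DNF, which is exactly what the hard-coded $t^*$ guard is there to prevent.
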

\begin{proof}
    We repeat lines 3--9 until either $|\bL| > \mu$~or $t=t^*$.
In one iteration of the loop $\FRB$ draws one sample from $h$ on line~4.  If the ``if'' condition on line {5} is true, then line 6 is executed which makes $O(\log \tau)=O(\log \mu)$ queries, and $|\bL|$ increases by one and $t$ is set to $0$.  
And if the ``if'' condition on line {5} is false, then line {5} only makes one query and $|{\bL}|$
does not increase, but $t$ increases by one.
Therefore, the query complexity of $\FRB$ is 
$$O\pbra{\mu t^* + \mu \log \mu}=\poly(\mu/\xi).\qedhere$$
\end{proof}

\begin{lemma}\label{thm:FRB}
  Assume $h$ is a $(r,\mu)$-factored-DNF and $\sR$ is stable.
With probability at least $99/100$, $\FRB$ returns a perfect list $\bL$ of local oracles with $|\bL|\le \mu$ and the following properties: 
There exists a function $\fg \in \Approximator(|\bL|)$  such that %
$\rd(\fh,\fg_{\sigma_{\bL}})\leq \xi$;
for the special case when $\bL$ is empty, 
  we have $\rd(\fh,\mathbf{1}_\bR) \leq \xi$,
  where $\mathbf{1}_\bR$ is the all-$1$ function over $\{0,1\}^\sR$.
\end{lemma}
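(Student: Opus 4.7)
The plan combines a birthday-paradox argument, a structural observation about restrictions of $h$ on a single block, and a stopping-time argument that controls the influence of any unfound variables on $\fh$. Stability of $\sR$ forces $\var(H)\cap \sR=\emptyset$, so the dominating function $\fh$ depends only on the set $\sS'_\fh:=\sS\cap \sR$, which has size at most $\mu$. With $\tau=\Theta(\mu^2)$ random blocks, a standard birthday union bound gives $|\bX_i\cap \sS'_\fh|\le 1$ for every block $\bX_i$ with probability at least $1-1/400$, and I condition on this event.

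Under this conditioning, each successful binary-search step returns a perfect literal. The search produces a block $\bX_\ell$ and a point $v$ with $h\!\upharpoonleft_{v_{\overline{\bX_\ell}}}$ non-constant on $\zo^{\bX_\ell}$; but the only $h$-relevant variable inside $\bX_\ell$ is the unique element (if any) of $\sS\cap \bX_\ell$, so a non-constant restriction on one variable must equal $x_{\tau(\ell)}$ or $\overline{x_{\tau(\ell)}}$. Since each added literal consumes a fresh variable from $\sS'_\fh$, we automatically have $|\bL|\le \mu$ throughout and $\FRB$ never rejects for exceeding its budget.

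The heart of the proof is the approximation bound. Let $A:=\sigma_\bL([|\bL|])$ be the found variables and $\bX$ the union of found blocks (so $A=\sS\cap \bX$), and define $\alpha(x_A):=\Pr_{\bt\sim \zo^{\sR\setminus A}}[\fh(x_A\circ \bt)=1]$, the majority projection $\fh'(x_A):=\mathbf{1}[\alpha(x_A)\ge 1/2]$, and its lift $\tilde\fh$ to $\zo^\sR$. Set $p_1:=\Pr_{\bx\sim \fh^{-1}(1),\,\bt}[\fh(\bx_A\circ \bt)=0]$. Using that $\bx\sim \fh^{-1}(1)$ has marginal on $A$ proportional to $\alpha(\cdot)$, together with the identity $\sum_{x_A}\alpha(1-\alpha)=p_1\sum_{x_A}\alpha$ and $(1-\alpha)\le 2\alpha(1-\alpha)$ whenever $\alpha\ge 1/2$ (and its mirror for $\alpha<1/2$), I conclude $\rd(\fh,\tilde\fh)\le 4 p_1$. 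To control $p_1$ from the algorithm's stopping rule: by stability, the event $\by_\sV=u$ has probability at least $1-\xi$ under $\by\sim \SAMP(h)$; conditional on it $\by_\sR$ is uniform over $\fh^{-1}(1)$ and the ``success'' event $h(\by)\ne h(\by_{\sU\cup \bX}\circ \bw)$ coincides with $\fh(\by_\sR)\ne \fh(\by_\bX\circ \bw)$ (coordinates in $\bX\setminus A$ do not affect $\fh$). Hence the per-iteration success probability at any state differs from the corresponding $p_1$ by at most $\xi$. Combining $(1-p)^{t^\star}\le e^{-pt^\star}$ with a union bound over the at most $\mu+1$ possible values of $|\bL|$ at termination, and choosing $t^\star=\Theta((\log\mu)/\xi)$ with a sufficiently large constant, with probability at least $1-1/400$ the terminating state has $4p_1\le \xi$.

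Finally, I package $\tilde\fh$ as an element of $\Approximator(|\bL|)$ lifted through $\sigma_\bL$. Order the coordinates of $\zo^\mu$ so that the first $|\bL|$ positions correspond to $A$ in the order fixed by $\sigma_\bL$; this identifies $\fh$ with an $r$-term DNF $\fD$ on $\zo^\mu$ whose majority projection $\fD'$ corresponds to $\tilde\fh$. Since $\rd(\fD,\fD')=\rd(\fh,\tilde\fh)\le \xi\ll \kappa/2$, the function $\fg:=\fJ_\fD$ lies in $\Approximator(|\bL|)$, and $\fg_{\sigma_\bL}=\tilde\fh$, so $\rd(\fh,\fg_{\sigma_\bL})\le \xi$. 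The degenerate $\bL=\emptyset$ case reduces to $p_1=|\fh^{-1}(0)|/2^{|\sR|}$; smallness of $p_1$ forces $|\fh^{-1}(1)|\ge 2^{|\sR|-1}$, so $\tilde\fh\equiv \mathbf{1}_\sR$ and $\rd(\fh,\mathbf{1}_\sR)\le 2p_1\le \xi$. The main obstacle is the stopping-time step above: passing from the algorithm's empirical $t^\star$-failure condition, stated in terms of $h$ and unrestricted random inputs, to a multiplicative-error bound on the possibly sparse $\fh$ sampled from $\fh^{-1}(1)$ requires both stability (to swap distributions within additive $\xi$) and a careful union bound over the possible halting states.
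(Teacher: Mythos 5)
Your proof follows the same overall route as the paper's (birthday bound over blocks, binary search yields perfect literals, the stopping rule forces the terminating state to have small per-iteration success probability, stability transfers that bound to $\fh$, and the majority projection is shown to lie in $\Approximator$). Your derivation of $\rd(\fh,\tilde\fh)\le 4p_1$ via the identity $\sum_{x_A}\alpha(1-\alpha)=p_1\sum_{x_A}\alpha$ together with the pointwise inequalities $(1-\alpha)\le 2\alpha(1-\alpha)$ for $\alpha\ge 1/2$ (and the mirror) is a cleaner packaging of the paper's case analysis in \Cref{lem:apple}, and in fact gives the slightly tighter constant $2p_1$.

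There is, however, one real slip in the stopping-time step. You assert that ``the per-iteration success probability at any state differs from the corresponding $p_1$ by at most $\xi$'' and then conclude $4p_1\le\xi$ at termination. The additive bound $|q-p_1|\le\xi$ is correct but too weak for that conclusion: if $q\le\xi/5$ at termination, it only gives $p_1\le q+\xi\le 6\xi/5$, hence $\rd(\fh,\tilde\fh)\lesssim 24\xi/5$, not $\le\xi$. What you need, and what your own observations already give, is the \emph{multiplicative} bound: conditioning on $\by_\sV=u$ occurs with probability at least $1-\xi$, and under that conditioning the success probability equals $p_1$ exactly, so $q\ge(1-\xi)p_1$; then $q\le\xi/5$ forces $p_1\le \xi/\bigl(5(1-\xi)\bigr)\le\xi/4$ once $\xi$ is below a small constant (which the paper may assume). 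This is the route the paper takes in \Cref{lem:fhtoh}, and the fix to your write-up is immediate. Everything else — the birthday conditioning, the use of $|\sS'\cap\bX_i|\le 1$ to force each $g^i$ to be a literal, the union bound over the at most $\mu+1$ halting states, the identification of the majority projection with some $\fJ_{\fD}\in\Approximator(|\bL|)$ lifted along $\sigma_{\bL}$, and the trivial $\bL=\emptyset$ specialization — matches the paper's argument.
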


\begin{figure}[t!]
\begin{algorithm}[H]
\caption{\protect\hypertarget{Approximate}{\FRB}} \label{algo:FRB}
\vspace{0.15cm}\textbf{Input: }$\MQ(h)$ and $\SAMP(h)$ of some function $h: \zo^n \to \zo$ and $ \emptyset \neq \sR \subseteq[n]$.\\
\smash{\textbf{Output: }Either Reject or a list $\bL$ {of domain-disjoint oracles}.}

\begin{tikzpicture}
\draw [thick,dash dot] (0,1) -- (15.9,1);
\end{tikzpicture}
\begin{algorithmic}[1]
\Algphase{Phase~1: Partition $\sR$ into blocks}
    \State Let $\tau=\theta(\mu^2)$; randomly partition $\sR$ into blocks $\bX_1, \ldots, \bX_{\tau}$.\vspace{0.1cm} %
\Algphase{Phase~2: Find relevant blocks}
\State Set {$\bX=\emptyset$}; $\bL$ to be an empty list; $t=0$ and $t^*={5}\ln(200\mu)/\xi$.
\While{$t\le  t^*$}

\State  Draw $\smash{\bz \sim \SAMP(h)}$ and $\smash{\bw \sim \{0,1\}^{\sR \setminus \sX}}$, and set $t \leftarrow t+1.$
\If{$\smash{h(\bz_{\overline{\sR}\cup \bX} \circ \bw)= 0}$} \red{\Comment{Type~1 Query}} %
\State Binary Search to find a new relevant block $\bX_i$ and a string $\bv \in \{0,1\}^n$ such that $$1=h(\bv) \neq h\left(\bw_{\bX_i} \circ \bv_{\overline{\bX}_i}\right)=0.$$
\Statex \red{\Comment{Binary Search only uses Type~1 Queries}}
\Statex \hskip3em For notational convenience, we will assume without loss of generality that the 
$i$th new 
\Statex \hskip3em block found is indexed by $\bX_i$. Denote by $g^i : \{0,1\}^{\bX_i} \rightarrow \{0,1\}$ the following function $$g^i(z)=h\left(z \circ \bv_{\overline{\bX}_i}\right).$$ \red{\Comment{Every query to $g^i$ is a Type 1 query}}
\State Set $t\leftarrow 0$, $\bX \leftarrow \bX \sqcup \bX_i$ and add $(\MQ(g^i),\bX_i))$ to the list $\bL$.
\State If $|\bL|>\mu$ then output ``Reject'' and halt.
\EndIf
\EndWhile
\State Output $\bL$. 
\end{algorithmic} 
\end{algorithm}
\caption[$\FRB$]{$\FRB.$ The algorithm takes as input oracle access to $h$ and a set $\sR \subseteq [n]$. In the yes case, recall that $\fh$ is the dominating function over $\sR$ and it depends on at most $\mu$ variables. We cannot find these variables explicitly using $O_{r,\mu, \epsilon}(1)$ queries. Hence, the goal of the algorithm is to produce a perfect list of variable oracles $\bL$ which contains ``the most relevant'' variables $\fh$ depends on. In particular, when the algorithm returns, with high probability there is some function $\fg \in \Approximator(|\bL|)$ such that $\fh$ is close to $\fg$ under the remapping of the variables given by $\sigma_{\bL}$ (see \Cref{def:listoforacles} for the definition of $\sigma_{\bL}$).}
\end{figure}
We prove \Cref{thm:FRB} in the rest of this subsection. We always assume that $h$ is an $(r,\mu)$-factored-DNF and $\sR$ is stable.
We start with a simple lemma about the random partition:

\begin{lemma}\label{lem: FRB 1.1}
 We have $|\bX_i\cap \sS'|\le 1$ for all $i\in [\tau]$ with probability at least $199/200$. 
\end{lemma}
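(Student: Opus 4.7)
The plan is a straightforward second-moment / union-bound argument of birthday-paradox type. Recall that $|\sS'| \le \mu$ and that $\bX_1,\ldots,\bX_\tau$ is a uniformly random partition of $\sR$ into $\tau = \Theta(\mu^2)$ blocks. The event ``$|\bX_i \cap \sS'| \ge 2$ for some $i$'' is exactly the event that some pair of distinct variables $x,x' \in \sS'$ falls in the same block.

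First, I would bound the probability that a single fixed pair $\{x, x'\} \subseteq \sS'$ lies in the same block. For any such pair, by symmetry of the random partition, the probability that $x'$ lands in the same block as $x$ is at most $1/\tau$ (this is a standard fact for uniformly random balanced partitions, and in any case is $O(1/\tau)$ up to lower order terms that are negligible for the parameter regime we need). Then I would union-bound over the at most $\binom{\mu}{2} \le \mu^2/2$ pairs of distinct variables in $\sS'$ to get
\[
\Pr\big[\exists i : |\bX_i \cap \sS'| \ge 2\big] \;\le\; \binom{|\sS'|}{2}\cdot \frac{1}{\tau} \;\le\; \frac{\mu^2}{2\tau}.
\]

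Finally, choosing the hidden constant in $\tau = \Theta(\mu^2)$ large enough (say, $\tau \ge 100 \mu^2$, which is absorbed into the $\Theta(\mu^2)$ specification fixed in line~2 of $\FRB$) makes the right-hand side at most $1/200$. This yields the claimed $199/200$ success probability. There is no real obstacle here; the only thing to be careful about is verifying that the ``pair in the same block'' probability is indeed $\le 1/\tau$ for the precise random partition model used (assigning each coordinate in $\sR$ to a uniformly random block independently, or partitioning uniformly at random into $\tau$ blocks of roughly equal size). Either way the bound $O(1/\tau)$ is immediate, and the constant in $\tau$ can be adjusted accordingly.
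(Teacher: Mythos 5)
Your argument is correct and matches the paper's proof essentially verbatim: fix a pair in $\sS'$, bound the collision probability by $1/\tau$, and union-bound over the $\binom{\mu}{2}$ pairs, absorbing the needed constant into the $\Theta(\mu^2)$ choice of $\tau$. The one minor caveat you raise (the exact random-partition model) is handled identically in the paper, which also just asserts the $1/\tau$ collision probability; and since $\sS' \subseteq \sR$ by definition, every pair indeed lies in $\sR$.
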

\begin{proof}
    Fix any $j_1,j_2\in \sS'$. The probability that they lie in the same block is $1/\tau$ if they are both in $\sR$ and $0$ otherwise. By a union bound and the assumption that $|\sS'|\le \mu$, the probability of $|\bX_i\cap \sS'|>1$ for some $i$ is at most $${\mu \choose 2} \cdot \frac{1}{\tau} \leq 1/200. \qedhere$$ 
\end{proof}

From now on we fix the partition $\sX_1, \ldots , \sX_{\tau}$ of $\sR$ and assume that $|\sX_i \cap \sS'| \leq 1$ for every $i \in [\tau]$.
We write $\bL=((\MQ(g^1),\sX_1),\ldots,(\MQ(g^{\ell}),\sX_{\ell}))$ with $\ell=|\bL|$ to denote the list returned.

Next we prove a lemma about $\bL$ and $\bX$ when $\FRB$ terminates:

\begin{lemma}\label{lem: FRB 1.1.1}
Assume that $|\sX_i\cap \sS'|\le 1$ for all $i\in [\tau]$. Then $\FRB$ never rejects and always returns a perfect list $\bL$ of variable oracles with $|\bL|\le \mu$. %
Moreover, with probability at least $199/200$, 
when $\FRB$ terminates we have
    \begin{equation}\label{eq: xi/10}
\Prx_{\substack{\bz \sim h^{-1}(1)\\ \bw \sim\{0,1\}^{\sR \setminus \bX}}}\Big[h(\bz_{\sU\cup \bX} \circ \bw)=0\Big] \le \frac{\xi}{5}.
    \end{equation}
\end{lemma}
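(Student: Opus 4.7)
The plan is to prove the two assertions separately, with the deterministic structural claim following by chasing definitions and the probabilistic claim following from a Chernoff-type estimate plus a union bound over the bounded number of states the algorithm can visit.

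For the first assertion (never rejects, outputs a \emph{perfect} list of variable oracles with $|\bL|\le \mu$), I would argue as follows. Each time line~7 fires we add a pair $(\MQ(g^i),\bX_i)$ where $g^i(z) = h(z \circ \bv_{\overline{\bX_i}})$. Thus $g^i$ depends only on those variables of $h$ that lie inside $\bX_i$. Since $h = H \wedge (T_1 \vee \cdots \vee T_{\le r})$ and $\sR$ is stable, we have $\var(H) \subseteq \overline{\sR}$, so any relevant variable of $h$ that lies in $\bX_i \subseteq \sR$ must belong to $\sS'$. By the standing assumption $|\sS' \cap \bX_i| \le 1$, so $g^i$ has at most one relevant variable; since the binary search produced $\bv$ and $\bw_{\bX_i}\circ \bv_{\overline{\bX_i}}$ on which $h$ disagrees, $g^i$ is nonconstant and hence equals a single literal. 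Because the $\bX_i$'s are pairwise disjoint and each contributes a distinct element of $\sS'$, we obtain $|\bL| \le |\sS'| \le \mu$, so the rejection check on line~8 never triggers.

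For the second assertion, define for any state $\bX$ the quantity
\[
p_\bX \;:=\; \Prx_{\bz \sim h^{-1}(1),\;\bw \sim \{0,1\}^{\sR\setminus \bX}}\!\big[h(\bz_{\overline{\sR}\cup \bX}\circ \bw)=0\big].
\]
The while-loop exits exactly when $t$ reaches $t^*+1$, which requires $t^*+1$ consecutive loop iterations at the same $\bX$ in which the line-5 check failed, i.e.\ the queried point evaluated to $1$. Because each iteration draws fresh $(\bz,\bw)$, conditioned on the algorithm reaching some state $\bX$, the probability that it then terminates at $\bX$ if $p_\bX > \xi/5$ is at most
\[
(1-\xi/5)^{t^*+1} \;\le\; e^{-\xi t^*/5} \;=\; \frac{1}{200\mu}
\]
by the choice $t^* = 5\ln(200\mu)/\xi$. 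Since $|\bX|$ is monotone non-decreasing across the run and bounded by $\mu$, the algorithm visits at most $\mu+1$ distinct states; a union bound over these bounds the probability of terminating at a ``bad'' state by $(\mu+1)/(200\mu)$, which (after a mild constant adjustment in $t^*$, if necessary) is at most $1/200$. This is exactly the desired bound on \Cref{eq: xi/10}.

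The main subtlety is the union bound in the second part: both the sequence of intermediate states and the number of iterations spent in each are random. The clean way to handle this is to fix each possible terminal state $\bX$ and note that by the definition of the loop, the last $t^*+1$ iterations are i.i.d.\ samples at that fixed $\bX$, so the Chernoff-style bound applies directly and we only pay a factor of $\mu+1$ for the union bound. The first part is essentially bookkeeping once the structural consequences of stability of $\sR$ and the assumption $|\bX_i \cap \sS'|\le 1$ are unwound.
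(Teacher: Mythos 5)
Your proposal is correct and follows essentially the same argument as the paper: the first part is the same bookkeeping using stability and $|\sS'\cap\bX_i|\le 1$ to force each nonconstant $g^i$ to be a literal, and the second part uses the same $(1-\xi/5)^{t^*}$ estimate for a fixed terminal state followed by a union bound over the at most $\mu$ (or $\mu+1$) possible states $\bX$. Your extra care about the $t^*$ vs.\ $t^*+1$ off-by-one and the $\mu$ vs.\ $\mu+1$ count are harmless refinements that the paper glosses over.
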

\begin{proof}
Clearly the $\sX_i$'s are disjoint in $\bL$.
    For each pair $(\MQ(g^i),\sX_i)$  added to $\bL$, $\FRB$~has found a restriction $g^i$ of $h$  such that $g^i:\{0,1\}^{\sX_i} \to \zo$ is not a constant function. Since within $\sR$, $h$ only depends on variables in $\sS'$ and $|\sS'\cap \sX_i|\le 1$,
    we must have $|\sS'\cap \sX_i|=1$ and 
$g^i$ is a literal of that variable in $\sS'\cap \sX_i$. As a result, $\bL$ is a perfect list of variable oracles.
Given that $|\sS'|\le \mu$, 
$\bL$ can never be
   larger than $\mu$ and thus, $\FRB$ never rejects. 

To violate \Cref{eq: xi/10} when
  returning, %
 it must be the case that $\FRB$ has drawn $\smash{\bz\sim h^{-1}(1)}$ and $\smash{\bw\sim \{0,1\}^{\sR \setminus \bX}}$ for $t^*$ many times, and  $h(\bz_{\sU\cup \sX} \circ \bw)=1$ every time.
However,~given the violation of \Cref{eq: xi/10}, the probability that this happens is at most 

    $$
    \big(1-\xi/5\big)^{t^*}\le e^{-\ln(200\mu)}\le 1/200\mu.
    $$

By a union bound (over the at most $\mu$ 
  different $\bX$'s that are in play across all executions of line {5})  
  the probability of  $\FRB$
  returning while violating \Cref{eq: xi/10} is at most $1/200$. 
\end{proof}

The rest of the subsection is all about connecting 
\Cref{eq: xi/10} with our goal: There exists a function $\fg \in \Approximator(|\bL|)$ such that %
$\rd(\fh,\fg_{\sigma_{\bL}})\leq \xi$
  (and the special, easier~case when $\bL$ is empty).
To this end, given that $\sR$ is stable and  $\fh:\{0,1\}^{\sR}\rightarrow \{0,1\}$ is the dominating function,
  we first use \Cref{eq: xi/10} to obtain the following inequality about $\fh$.

\begin{lemma}\label{lem:fhtoh}
\Cref{eq: xi/10} implies that
    \begin{equation}\label{eq:bla11}
\Prx_{\substack{\by \sim \fh^{-1}(1) \\ \bw \sim \{0,1\}^{\sR \setminus \bX}}}\Big[\fh(\by_{\bX} \circ \bw)=0\Big] \le \xi/4,
\end{equation}
\end{lemma}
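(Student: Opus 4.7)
The plan is to relate the sampling distribution $\bz \sim h^{-1}(1)$ on the left-hand side of \Cref{eq: xi/10} to the sampling distribution $\by \sim \fh^{-1}(1)$ on the left-hand side of \Cref{eq:bla11} by conditioning on the ``good'' event that the restriction of $h$ obtained by fixing $\bz_{\overline{\sR}}$ coincides with the dominating function $\fh$. Concretely, let $E$ denote the event $\{\bz \in h^{-1}(1) : h\upharpoonleft_{\bz_{\overline{\sR}}} \equiv \fh\}$; by the definition of the dominating function we have $\Pr_{\bz \sim h^{-1}(1)}[\bz \in E] \ge 1 - \xi$.

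The key observation is that for every $\bz \in E$, the set of satisfying assignments of $h$ whose $\overline{\sR}$-restriction equals $\bz_{\overline{\sR}}$ is exactly $\{\bz_{\overline{\sR}}\} \times \fh^{-1}(1)$. Thinking of $\bz \sim h^{-1}(1)$ as first choosing $\bz_{\overline{\sR}}$ with probability proportional to $|h\upharpoonleft_{\bz_{\overline{\sR}}}^{-1}(1)|$ and then choosing $\bz_\sR$ uniformly from $h\upharpoonleft_{\bz_{\overline{\sR}}}^{-1}(1)$, I conclude that conditional on $E$, the marginal distribution of $\bz_\sR$ is uniform on $\fh^{-1}(1)$. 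In particular, the conditional distribution of $\bz_\bX$ agrees with the distribution of $\by_\bX$ for $\by \sim \fh^{-1}(1)$.

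Moreover, on the event $E$, for every $\bw \in \{0,1\}^{\sR \setminus \bX}$ we have
\[
h(\bz_{\sU \cup \bX} \circ \bw) \;=\; h(\bz_{\overline{\sR}} \circ \bz_\bX \circ \bw) \;=\; \fh(\bz_\bX \circ \bw),
\]
using $\sU = \overline{\sR}$. Chaining these observations gives
\[
\tfrac{\xi}{5} \;\ge\; \Prx_{\bz,\bw}\big[h(\bz_{\sU \cup \bX} \circ \bw)=0\big] \;\ge\; \Prx_{\bz,\bw}\big[\bz \in E \,\wedge\, \fh(\bz_\bX \circ \bw) = 0\big] \;\ge\; (1-\xi) \cdot \Prx_{\by, \bw}\big[\fh(\by_\bX \circ \bw) = 0\big],
\]
from which I obtain the desired bound $\Pr_{\by,\bw}[\fh(\by_\bX \circ \bw)=0] \le \xi / (5(1-\xi)) \le \xi/4$, using that $\xi$ is much smaller than $1$ by \Cref{eq:xi}.

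I do not expect any significant obstacle here: the step that requires the most care is verifying that conditioning on $E$ really does yield the uniform distribution on $\fh^{-1}(1)$ for the $\sR$-marginal (as opposed to some reweighted distribution), but this follows cleanly from the two-stage description of $\bz \sim h^{-1}(1)$ above and the fact that all ``good'' restrictions $h\upharpoonleft_v$ (those with $v$ causing the event $E$) have exactly the same satisfying set $\fh^{-1}(1)$ of identical size.
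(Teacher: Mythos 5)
Your proof is correct and follows essentially the same route as the paper: you condition on the good event $E$ that $h\upharpoonleft_{\bz_{\overline{\sR}}} \equiv \fh$ (which has probability $\ge 1-\xi$ by stability), observe that on this event $h(\bz_{\sU\cup\bX}\circ\bw) = \fh(\bz_\bX\circ\bw)$ and the conditional law of $\bz_\sR$ is uniform on $\fh^{-1}(1)$, and then chain the inequalities exactly as the paper does. Your extra remark justifying why the conditional $\sR$-marginal is \emph{exactly} uniform (all good restrictions $h\upharpoonleft_v$ share the same satisfying set, so the two-stage sampling puts equal weight on each good $v$) just makes explicit a step the paper states without elaboration.
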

\begin{proof}
Using that $\sR$ is stable, we have
\begin{align*}
    \Prx_{\substack{\bz \sim h^{-1}(1),\\ \bw \sim \zo^{\sR \setminus \bX}}}\Big[h(\bz_{\sU\cup \bX} \circ \bw)=0\Big]
    &\geq \Prx_{\bz,\bw}\Big[h(\bz_{\sU\cup \bX} \circ \bw)=0\hspace{0.06cm} \Big| \hspace{0.06cm}h{\upharpoonleft_{\bz_\sU}} \equiv \fh\Big] \times \Prx_{\bz \sim h^{-1}(1)}\big[ h{\upharpoonleft_{\bz_\sU}} \equiv \fh\big]\\
    &\geq (1-\xi)\cdot \Prx_{\bz,\bw}\Big[\fh(\bz_\bX \circ \bw)=0 \hspace{0.06cm}\Big| \hspace{0.06cm}h{\upharpoonleft_{\bz_\sU}} = \fh\Big] .
\end{align*}
But conditioned on $h{\upharpoonleft_{\bz_\sU}} \equiv \fh$, we have that $\bz_{\sR} \sim \fh^{-1}(1)$. Hence, the above becomes:
$$
\xi/5\ge \Prx_{\substack{\bz \sim h^{-1}(1),\\ \bw \sim \zo^{\sR \setminus \bX}}}\Big[h(\bz_{\sU\cup \bX} \circ \bw)=0\Big]\geq (1-\xi)\cdot \Prx_{\substack{\by \sim \fh^{-1}(1)\\ \bw \sim \zo^{\sR \setminus \bX}}}\Big[\fh(\by \circ \bw)=0\Big],$$
from which \Cref{lem:fhtoh} follows.
\end{proof}

 We note that the special case when $\bL$ is empty will follow from 
  \Cref{lem:fhtoh} so we assume that 
the list $\bL=((\MQ(g^1),\sX_1),\ldots,(\MQ(g^\ell),\sX_\ell))$ is not empty and $\ell=|\bL|>0$.
For each $i\in [\ell]$, let $\tau(i)$ be the unique variable in $\sX_i\cap \sS'$ such that $g^i$ is  $x_{\tau(i)}$ or $\overline{x_{\tau(i)}}$ and 
let $\sigma=\sigma_{\bL}$ be the injective map 
  with $\sigma(i)=\tau(i)$ for each $i\in [\ell]$.
Let $$\Sigma:=\big\{\sigma(i) :i\in [\ell]\big\}
\quad\text{with}\quad
\Sigma=\sS'\cap \bX$$ given that $\bX=\sX_1\cup \cdots\cup \sX_\ell$. Let $\fg:\{0,1\}^{\ell}\rightarrow \{0,1\}$ be the following $\ell$-variable function:
\begin{equation}\label{eq:defg}
\fg(z):= \text{arg}\max_{b \in \{0,1\}}\left\{ \Prx_{\bw \sim \{0,1\}^{\sR \setminus \Sigma}} \Big[\fh\left(\sigma(z) \circ \bw)\right)=b\Big]\right\}.
\end{equation} 

\begin{lemma}\label[lemma]{lem:apple}
\Cref{eq:bla11} implies that $\rd(\fh,\fg_{\sigma})\le \xi$.
\end{lemma}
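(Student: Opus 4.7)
My plan is to translate the hypothesis \eqref{eq:bla11} into a statement about the conditional biases of $\fh$ when the $\Sigma$-coordinates are fixed, and then to apply a Markov-type argument to each one-sided piece of the symmetric difference $\fh^{-1}(1) \,\triangle\, \fg_\sigma^{-1}(1)$. Recall $\Sigma := \{\sigma(i) : i \in [\ell]\} = \sS' \cap \bX$, and that $\fh$ depends only on the relevant variables in $\sS' \subseteq \sR$. For each $a \in \{0,1\}^\Sigma$ I will set
$$p_1(a) \;:=\; \Prx_{\bw'' \sim \{0,1\}^{\sS' \setminus \Sigma}}\bigl[\fh(a, \bw'') = 1\bigr],$$
where $(a, \bw'')$ denotes the obvious string in $\{0,1\}^{\sS'}$. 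Since the bits of $\bw$ outside $\sS'$ do not influence $\fh$ and $\sS' \setminus \Sigma \subseteq \sR \setminus \bX$, hypothesis \eqref{eq:bla11} rewrites as
$$\E_{\by \sim \fh^{-1}(1)}\bigl[1 - p_1(\by_\Sigma)\bigr] \;\le\; \xi/4,$$
and by inspection of the definition of $\fg$, one has $\fg_\sigma(x) = 1$ iff $p_1(x_\Sigma) \ge 1/2$. I will write $A_1 := \{a \in \{0,1\}^\Sigma : p_1(a) \ge 1/2\}$ and $A_0$ for its complement.

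The key arithmetic observation driving everything is that, for $\by \sim \fh^{-1}(1)$, the marginal of $\by_\Sigma$ is $\Pr[\by_\Sigma = a] = p_1(a)/S$ with $S := \sum_a p_1(a)$, and consequently the displayed hypothesis is exactly $\sum_a p_1(a)(1 - p_1(a)) \le (\xi/4)\,S$. I now split $|\fh^{-1}(1)\triangle \fg_\sigma^{-1}(1)|$ into its two one-sided pieces and bound each by $(\xi/2)\,|\fh^{-1}(1)|$. For the first piece, $\Prx_{\bx \sim \fh^{-1}(1)}[\fg_\sigma(\bx)=0] = \Prx_{\bx \sim \fh^{-1}(1)}[\bx_\Sigma \in A_0]$; on $A_0$ we have $1-p_1 > 1/2$, so Markov yields the bound $2 \cdot (\xi/4) = \xi/2$. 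For the second piece, a direct count gives $|\fh^{-1}(1)| = 2^{|\sR|-|\Sigma|}\,S$ and $|\fg_\sigma^{-1}(1) \setminus \fh^{-1}(1)| = 2^{|\sR|-|\Sigma|}\sum_{a \in A_1}(1 - p_1(a))$. On $A_1$ we have $1 - p_1(a) \le 2\,p_1(a)(1 - p_1(a))$, so
$$\sum_{a \in A_1}(1 - p_1(a)) \;\le\; 2\sum_a p_1(a)(1 - p_1(a)) \;\le\; (\xi/2)\,S,$$
and dividing by $S$ bounds this piece by $\xi/2$ as well. Adding the two pieces gives $\rd(\fh, \fg_\sigma) \le \xi$.

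I do not foresee any real obstacle: the proof is essentially a two-sided Markov argument. The only point that requires a moment's care is the reweighting of $\by_\Sigma$'s distribution by $p_1(\cdot)/S$, which is what allows the single quantity $\sum_a p_1(a)(1-p_1(a))$ to control both one-sided pieces with the same constant $\xi/2$.
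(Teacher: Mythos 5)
Your proof is correct and follows essentially the same idea as the paper's (leverage that $\Pr_{\by\sim\fh^{-1}(1)}[\by_\Sigma=a]\propto p_1(a)$, then Markov on each one-sided piece of the symmetric difference using the $1/2$-threshold property of $\fg$), though you package it as a direct argument controlled by the single quantity $\sum_a p_1(a)(1-p_1(a))$ whereas the paper proceeds by contradiction with two separate cases. The only cosmetic difference is that the paper's second case counts full points of $G\setminus F$ rather than summing $(1-p_1(a))$ over $a\in A_1$, but the two computations are the same.
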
 
\begin{proof}
Observe that we have:
$$
\fg_{\sigma}(z)= \text{arg}\max_{b \in \{0,1\}}\left\{ \Prx_{\bw \sim \{0,1\}^{\sR \setminus \Sigma}} \Big[\fh\left(z_{\Sigma} \circ \bw\right)=b\Big]\right\}.
$$

Assume for a contradiction that $\rd(\fh,\fg_\sigma)>\xi$. 
Let $F=\fh^{-1}(1)$ and $G=\fg_{\sigma}^{-1}(1)$. 
Given that $|F \ \triangle \  G|\ge \xi |F|$, we
  consider the two cases of $|F\setminus G|>\xi |F|/2$
    or $|G\setminus F|>\xi |F|/2.$
To this end, we consider the following equivalent way of drawing $\by_\bX\circ \bw$ with $\smash{\by\sim \fh^{-1}(1)}$ and $\smash{\bw\sim \{0,1\}^{\sR \setminus \bX}}$: draw $\by\sim \fh^{-1}(1)$ and $\smash{\bw\sim \{0,1\}^{\sR \setminus \Sigma}}$ and return $\by_\Sigma\circ \bw$.
The two distributions are the same because 
  $\Sigma=\bX\cap \sS'$ are the only relevant 
  variables in $\bX$ for $\fh$.%
 \begin{equation}\label{eq:temp404}
  \Prx_{\substack{\by \sim \fh^{-1}(1) \\ \bw \sim \{0,1\}^{\sR \setminus \bX}}}
  \Big[\fh\left(\by_{\sX} \circ \bw\right)=0\Big] =      \Prx_{\substack{\by \sim \fh^{-1}(1) \\ \bw \sim \{0,1\}^{\sR \setminus \Sigma}}}\Big[\fh\left(\by_{\Sigma} \circ \bw\right)=0\Big]. 
    \end{equation}

For the case when $|F\setminus G|>\xi |F|/2$, we have
$$\Prx_{\substack{\by \sim \fh^{-1}(1)\\ \bw \sim \zo^{\sR \setminus \Sigma} }}\Big[\fh\left(\by_{\Sigma} \circ \bw\right)=0\Big] \geq 
\Prx_{\by \sim \fh^{-1}(1) } \big[\by \in F\setminus G\big] \cdot \Prx_{\substack{\by \sim \fh^{-1}(1)\\ \bw \sim \zo^{\sR \setminus \Sigma}} }\Big[\fh\left(\by_{\Sigma} \circ \bw\right)=0 \hspace{0.08cm}\big|\hspace{0.08cm} \by \in F\setminus G \Big].$$
The first probability on the RHS is at least $\xi/2$;
  the second probability on the RHS is at least $1/2$ given the definition of
  $\fg_{\sigma}$ from $\fh$ and the condition that $\by\in F\setminus G$ and thus, $\fg_{\sigma}(\by)=0$.
This leads to a contradiction with 
  \Cref{eq:bla11} when combined with \Cref{eq:temp404}.

For the second case when $|G \setminus F| \geq \xi |F|/2$, consider any fixed $a \in G \setminus F$. Since $a \in G$, we have 
$$
\Prx_{\bw \sim \zo^{\sR \setminus \Sigma}}\Big[f\left(a_{\Sigma} \circ \bw\right)=1\Big] \geq 1/2
$$ and thus there must exist at least $2^{|\sR \setminus \Sigma|}/2$ many $y \in F$ with $y_{\Sigma}=a_\Sigma$. So 
$$\Prx_{\substack{\by \sim \fh^{-1}(1)\\ \bw \sim \{0,1\}^{\sR \setminus \Sigma}}}\big[\by_{\Sigma} \circ \bw = a\big] \geq \Prx_{\by \sim \fh^{-1}(1)}\big[\by_{\Sigma} = a_{\Sigma}\big] \cdot \Prx_{\bw \sim \zo^{\sR \setminus \Sigma}}\big[\bw=a_{\sR \setminus \Sigma}\big] \geq \frac{{}2^{|\sR \setminus \Sigma|}}{2|F|}\times 2^{-|\sR \setminus \Sigma|} =\frac{1}{2|F|}.$$
As a result, we have $$\Prx_{\substack{\by \sim \fh^{-1}(1)\\ \bw \sim \{0,1\}^{\sR \setminus \Sigma}}}\Big[\fh\left(\by_{\Sigma} \circ \bw\right)=0\Big] \geq \sum_{a \in G \setminus F} \left(\Prx_{\substack{\by \sim \fh^{-1}(1)\\ \bw \sim \{0,1\}^{\sR \setminus \Sigma}}}\big[\by_{\Sigma} \circ \bw=a\big]\right) \geq \frac{{\xi |F|}}{2} \cdot \frac{1}{2|F|} = \frac{\xi}{4}.$$
This finishes the proof of the lemma.
\end{proof}
\begin{lemma}\label{lem:meansinApprox}
    If $\rd(\fh, \fg_{\sigma}) \leq \xi$ then $\fg \in \Approximator(|\bL|)$.
\end{lemma}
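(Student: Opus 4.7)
The plan is to exhibit an $r$-term DNF $\fD : \{0,1\}^\mu \to \{0,1\}$ with the properties that $\fg = \fJ_\fD$ and $\rd(\fD, \fD') \leq \kappa/2$; these two conditions together place $\fg$ in $\Approximator(|\bL|)$ by definition.

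First I would extend the injective map $\sigma : [\ell] \to \sR$ (where $\ell = |\bL|$) to an injective map $\pi : [\mu] \to \sR$ whose image contains all the (at most $\mu$) relevant variables $\sS'$ of $\fh$. This is possible because $\sigma([\ell]) = \Sigma \subseteq \sS'$ already. Since $\fh$ depends only on coordinates in $\pi([\mu])$, I can then define $\fD : \{0,1\}^\mu \to \{0,1\}$ by $\fD(x) := \fh(\pi(x) \circ v)$ for any fixed $v \in \{0,1\}^{\sR \setminus \pi([\mu])}$; this produces an $r$-term DNF satisfying $\fh = \fD_\pi$.

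Next I would verify $\fg = \fJ_\fD$ directly from the definitions. For $z \in \{0,1\}^\ell$ and $\bw$ uniform on $\{0,1\}^{\sR \setminus \Sigma}$, the string $\pi^{-1}(\sigma(z) \circ \bw) \in \{0,1\}^\mu$ equals $z \circ \bw'$, where $\bw'_{i-\ell} = \bw_{\pi(i)}$ for $i \in [\ell+1,\mu]$. Because $\pi|_{[\ell+1,\mu]}$ is injective into $\sR \setminus \Sigma$, this $\bw'$ is uniform on $\{0,1\}^{\mu-\ell}$, and so
\[
\Prx_{\bw \sim \{0,1\}^{\sR \setminus \Sigma}}\bigl[\fh(\sigma(z) \circ \bw) = b\bigr] \;=\; \Prx_{\bw' \sim \{0,1\}^{\mu-\ell}}\bigl[\fD(z \circ \bw') = b\bigr].
\]
Comparing this with \eqref{eq:defg} and the definition of $\fJ_\fD$ shows that the two functions agree pointwise.

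Finally I would show $\rd(\fD, \fD') \leq \kappa/2$. An analogous unpacking of definitions gives $\fg_\sigma = \fD'_\pi$: both functions depend only on coordinates in $\Sigma$, and they match there because $\pi$ extends $\sigma$ and $\fD'$ depends only on its first $\ell$ inputs. Since $\pi$ is injective, each $x \in \fD^{-1}(1)$ lifts to exactly $2^{|\sR|-\mu}$ satisfying assignments of $\fh = \fD_\pi$ (obtained by varying coordinates in $\sR \setminus \pi([\mu])$ freely); the same factor appears for $\fD'^{-1}(1)$ and for the symmetric difference. These common factors cancel in the ratio, yielding $\rd(\fD, \fD') = \rd(\fh, \fg_\sigma) \leq \xi$, which is comfortably below $\kappa/2$ by the choice of $\xi$ in \eqref{eq:xi}. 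The only mildly delicate bookkeeping is keeping the two remappings $\sigma$ and $\pi$ straight and confirming that coordinates in $\sR \setminus \pi([\mu])$ are irrelevant to both $\fh$ and $\fg_\sigma$, so that they contribute an identical multiplicative factor on the top and bottom of the relative-distance ratio.
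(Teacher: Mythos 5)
Your proposal is correct and takes the same route as the paper's (very terse) proof: extend $\sigma$ to an injective $\pi:[\mu]\to\sR$ covering all relevant variables of $\fh$, pull $\fh$ back to an $r$-term DNF $\fD$ over $\{0,1\}^\mu$ via $\fD_\pi=\fh$, observe that $\fg=\fJ_\fD$ and $\fg_\sigma=\fD'_\pi$ by unpacking the $\arg\max$ definitions, and conclude via $\rd(\fD,\fD')=\rd(\fh,\fg_\sigma)\le\xi\ll\kappa/2$. You have simply filled in the bookkeeping that the paper's one-line ``permute its relevant variables to obtain $\fD$'' leaves implicit.
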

\begin{proof}
This follows directly from how $\Approximator(|\bL|)$ is defined. Given $\fh$, which is a $r$-term, $\mu$-junta DNF, we can permute its relevant variables to obtain $\fD$ as in the definition of $\Approximator$ so that $\fg$ is the $\fD'$ there. 
The lemma then follows from our choices of $\kappa$ and $\xi$ that $\xi\ll \kappa$.
\end{proof}

We finally combine all the lemmas to prove \Cref{thm:FRB}. 
\begin{proof}[Proof of \Cref{thm:FRB}]
    By a union bound over \Cref{lem: FRB 1.1} and  \Cref{lem: FRB 1.1.1} we have that $\FRB$ doesn't reject, and returns a perfect list $\bL$ of variable oracles with $|\bL|\le \mu$ such that
    \Cref{eq: xi/10} holds.
    The special case when $\bL$ is empty follows from 
    \Cref{lem:fhtoh}, $\bL$ being empty implies $\bX=\emptyset$, so we have: 
    $$\rd(\fh,\mathbf{1}_\sR)=\Prx_{\bw \sim \zo^{\sR}}[\fh(\bw)=0] \leq \xi/4.$$
    The general case when $\bL$ is not empty follows from \Cref{lem:fhtoh,lem:apple,lem:meansinApprox}.  
\end{proof}

\subsection{Learner phase 2: $\TrimCan$}

We now turn our attention to Phase~2. 
Before we can present $\TrimCan$, we present a simple helper subroutine $\Extract$. 
(This subroutine will play a crucial role in the analysis of the no case later in Phase 3, where $L$ may only be a list of variable oracles but not  perfect. So below we analyze its performance in both cases, even though we may assume $L$ to be perfect in the yes case.) 

\begin{figure}[t!]
\begin{algorithm}[H]
\caption{\protect\hypertarget{Extract}{$\Extract$}}
\vspace{0.15cm}
\textbf{Input: }A domain-disjoint list of  oracles $L=\smash{((\MQ(g^{1}),\sX_1), \ldots, (\MQ(g^{\ell}),\sX_{\ell}))}$, $w \in \zo^{\sR}$ and $\delta$. \\
 \textbf{Output:} A string  $\bz\in \zo^{\ell}$.\\
  \begin{tikzpicture}
\draw [thick,dash dot] (0,1) -- (16.5,1);
\end{tikzpicture}
\begin{algorithmic}[1]
\For{$i$ from $1$ to $\ell$}
\State Let $Y_{b}=\{j \in \sX_i: w_j = b\}$ for both $b \in \zo$ and set counters  $\bG_{0}=
\bG_{1}=0$.
\RepeatN{$O(\log(\mu/\delta))$}
 \State Draw $\ba^b \sim \{0,1\}^{Y_{b}}$ for both $b \in \zo$.
 \State If {$\smash{{g^i(\ba^0 \circ \ba^1 ) \neq g^i(\overline{\ba^0} \circ \ba^1)}}$}, 
$\bG_{0} \leftarrow \bG_{0}+1$; otherwise, $\bG_{1} \leftarrow \bG_{1}+1$.
\red{\Comment{Type~1 Query}} \End
  \State If $\bG_0>\bG_1$ set $\bz_i=0$; otherwise, set $\bz_i=1$.
\EndFor
\State Return $\bz$.
\end{algorithmic} 
\end{algorithm}
\caption[$\Extract$]{$\Extract$. The algorithm takes as input a domain-disjoint list of oracles $L$, a string $w \in \zo^{\sR}$, and a failure probability $\delta$. 
Its goal is to return  $\sigma_{L}^{-1}(w)\in \{0,1\}^\ell$ when $L$ is a list of variable oracles. (See \Cref{def:listoforacles} to recall the definition of $\sigma_L$). }
\end{figure}

{$\Extract$ takes as input a domain-disjoint list of oracles $L=((\MQ(g^1),\sX_1),\ldots,(\MQ(g^\ell),\sX_\ell))$ with $\ell\in [\mu]$, a string $w \in \zo^{\sR}$, and a failure probability $\delta$. 
Its goal is to return  $\sigma_{L}^{-1}(w)\in \{0,1\}^\ell$ when $L$ is a list of variable oracles.
(See \Cref{def:listoforacles} to recall the definition of $\sigma_L$, and recall that $\sigma_L$ is well defined when $L$ is a list of variable oracles.)}%

\begin{lemma}\label[lemma]{lem: Extract}
$\Extract(L,w, \delta)$ makes $O(\mu\log(\mu/\delta))$ queries on oracles in $L$ and satisfies: 
\begin{flushleft}\begin{enumerate}
  \item When  $L$ is a list of variable oracles, 
  $\Extract$ returns $\sigma_{L}^{-1}(w)$ with probability at least $1-\delta$;
\item When $L$ is perfect, $\Extract$ always returns $\sigma_{L}^{-1}(w)$.
\end{enumerate}\end{flushleft}
\end{lemma}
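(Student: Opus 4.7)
The plan is to analyze each iteration of the outer loop of $\Extract$ (for a fixed $i\in[\ell]$) in isolation, show that a single inner iteration ``votes correctly'' with probability bounded away from $1/2$ (with probability one in the perfect case), and then combine a Chernoff estimate over the $O(\log(\mu/\delta))$ inner repetitions with a union bound over the at most $\mu$ blocks.

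\textbf{Step 1 (Query complexity).} This is immediate from inspection: for each $i\in[\ell]$, the inner \textbf{RepeatN} loop is executed $O(\log(\mu/\delta))$ times and each iteration makes exactly two queries to $g^i$. Summing over the at most $\mu$ blocks $\sX_i$ gives $O(\mu\log(\mu/\delta))$ total queries.

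\textbf{Step 2 (Correctness of a single iteration).} Fix $i\in[\ell]$ and suppose $g^i$ is exactly a literal on variable $x_{\tau(i)}\in \sX_i$ (either $x_{\tau(i)}$ or $\overline{x_{\tau(i)}}$). Writing $b^\star := w_{\tau(i)}$, observe that $\tau(i)\in Y_{b^\star}$. Consider one draw of $\ba^0\sim\{0,1\}^{Y_0}$ and $\ba^1\sim\{0,1\}^{Y_1}$. The two query points $\ba^0\circ\ba^1$ and $\overline{\ba^0}\circ\ba^1$ agree on coordinates in $Y_1$ and disagree on coordinates in $Y_0$. Since $g^i$ depends only on coordinate $\tau(i)$, the two values $g^i(\ba^0\circ\ba^1)$ and $g^i(\overline{\ba^0}\circ\ba^1)$ differ iff $\tau(i)\in Y_0$, i.e.\ iff $b^\star=0$ (and this holds regardless of whether $g^i$ is the positive or negative literal, because complementing a bit in both arguments affects both outputs identically). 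Thus in the perfect case, every inner iteration increments $\bG_{b^\star}$, so $\bz_i=b^\star=w_{\tau(i)}=(\sigma_L^{-1}(w))_i$ with probability $1$; taking a union bound over $i\in[\ell]$ proves part~(2).

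\textbf{Step 3 (Noisy case via closeness).} Now assume only that $L$ is a list of variable oracles, so $g^i$ is $(1/30)$-close under the uniform distribution to some literal $\ell^i$ on $x_{\tau(i)}$. The key observation is that although $\ba^0\circ\ba^1$ and $\overline{\ba^0}\circ\ba^1$ are \emph{not} independent, each of them is marginally uniform on $\{0,1\}^{\sX_i}$. Hence by a union bound,
\begin{equation*}
\Pr\!\left[g^i(\ba^0\circ\ba^1)=\ell^i(\ba^0\circ\ba^1)\ \text{and}\ g^i(\overline{\ba^0}\circ\ba^1)=\ell^i(\overline{\ba^0}\circ\ba^1)\right]\ \ge\ 1-\tfrac{2}{30}\ =\ \tfrac{14}{15}.
\end{equation*}
Whenever this good event occurs, the Step~2 analysis applied to $\ell^i$ implies that the correct counter $\bG_{b^\star}$ is incremented. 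So each inner iteration increments $\bG_{b^\star}$ with probability at least $14/15$, and the iterations are mutually independent.

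\textbf{Step 4 (Chernoff and union bound).} Setting the number of inner iterations to be a sufficiently large constant multiple of $\log(\mu/\delta)$, a standard Chernoff bound gives that $\bG_{b^\star}>\bG_{1-b^\star}$, and hence $\bz_i=b^\star=w_{\tau(i)}$, with probability at least $1-\delta/\mu$. A union bound over $i\in[\ell]\subseteq[\mu]$ yields that $\bz=\sigma_L^{-1}(w)$ with probability at least $1-\delta$, proving part~(1).

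There is no real obstacle here: the only subtle point is noticing that $\ba^0\circ\ba^1$ and $\overline{\ba^0}\circ\ba^1$ are individually (but not jointly) uniform over $\{0,1\}^{\sX_i}$, which is all that is needed to apply the $(1/30)$-closeness of $g^i$ to $\ell^i$ to each point separately.
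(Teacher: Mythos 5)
Your proof is correct and follows essentially the same route as the paper: the key observations — that in the perfect case the literal's two evaluations differ iff $\tau(i)\in Y_0$, and that in the noisy case each of $\ba^0\circ\ba^1$ and $\overline{\ba^0}\circ\ba^1$ is marginally uniform so the $(1/30)$-closeness gives each inner iteration a $\ge 14/15$ chance of incrementing the correct counter — are exactly the ones the paper uses, followed by the same Chernoff-plus-union-bound finish.
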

\begin{proof}
The query complexity of $\Extract$ follows immediately by inspection and $\ell\le \mu$. 

When $L$ is perfect, every $g^i$ is a literal $x_{\tau(i)}$ or $\overline{x_{\tau(i)}}$ with $\sigma_L(i)=\tau(i)$. During each round~of lines~3--6, its $\smash{\bG_{w_{\tau(i)}}}$ that always gets incremented. Hence we always have $\bz_i=w_{\tau(i)}$, and  $\smash{\bz=\sigma_{L}^{-1}(w)}$. 

\def\aa{\ba}

Now, consider the case when $L$ is only promised to be a list of variable oracles. Fix $g^i$ which is $(1/30)$-close
  to either $x_{\tau(i)}$ or $\overline{x_{\tau(i)}}$ with $w_{\tau(i)}=0$ (meaning $\tau(i) \in Y_0$;
  the case when $w_{\tau(i)}=1$ is symmetric).
Then
$$
\Prx\big[g^i(\aa^0\circ\aa^1)= g^i(\overline{\aa^0}\circ\aa^1)\big]
\le \Prx\big[g^i(\aa^0\circ\aa^1)\ne \aa^0_i\big]
+\Prx\big[g^i(\overline{\aa^0}\circ\aa^1)\ne \overline{\aa^0_i}\big]\le 1/15,
$$
where the last inequality follows from the fact that 
  both the marginal distributions of $\aa^0\circ \aa^1$ and $\smash{\overline{\aa^0}\circ\aa^1}$ are uniform.
As a result, in each round, $\bG_0$ goes up by
  $1$ with probability at least $14/15$ and $\bG_1$ goes
  up by $1$ with probability at most $1/15$.
By a standard Chernoff bound in this case and by making the hidden constant large enough, we have that $\bz_i=0$ with probability at least $1-\delta/\mu$.
The lemma then follows from a union bound.
\end{proof}

\begin{figure}[t!]
\begin{algorithm}[H]
\caption{\protect\hypertarget{FindCandidate}{\TrimCan}} \label{alg:TrimCan}
\vspace{0.15cm}
\textbf{Input: }\smash{$\SAMP(h)$, a domain-disjoint list of oracles $L$ of length $\ell$ and $\sR\subseteq [n]$.} \\
\textbf{Output: }Either reject, or the truth table of a function $\fJ \in \Approximator(\ell)$. \\ 
\begin{tikzpicture}
\draw [thick,dash dot] (0,1) -- (16.5,1);
\end{tikzpicture}

\begin{algorithmic}[1]
\State Let $\bAA=\Approximator(\ell)$ 
\RepeatN{$\eta= 1/(200\xi)$} %
\State Draw $\bw \sim \SAMP(h)$ and let $\bz= 
\Extract(L,\bw_{\sR},0.5)$.
\State  Remove from $\bAA$ any $\fJ$ for which  $\fJ(\bz)=0$. \Comment{This does not query $\MQ(h)$}
\End
\State If $\bAA=\emptyset$, reject; otherwise,
  return the $\fJ\in \bAA$ with the smallest $|\fJ^{-1}(1)|$. 
\end{algorithmic} 
\end{algorithm}
\caption[$\TrimCan$]{$\TrimCan$. The algorithm takes as input sample access to $h$ a domain-disjoint list  of oracles $L$ and a set $\sR \subseteq [n]$. In the yes case, recall that $\fh$ is the dominating function over $\sR$. The goal of the algorithm is to find a function $\fJ \in \Approximator(|L|)$ such that $\fJ$ is close in relative-distance to $\fh$.}
\end{figure}
$\TrimCan$ takes as input $\SAMP(h)$,  a domain-disjoint list of oracles, and $\sR$. 
Again we focus on the yes case and show that, when $\sR$ is stable and $L$ is a perfect list that satisfies the conditions of \Cref{thm:FRB}, $\TrimCan$ returns a $\fJ\in \Approximator(\ell)$ satisfying $\rd(\fh, \fJ_{\sigma_L})\le \kappa$.

Formally, our goal is to prove the following:
\begin{lemma}\label{thm: a good approx is found}
Assume that $h$ is an $(r,\mu)$-factored DNF, $\sR$ is stable, $L$ is a perfect list of variable oracles with $\ell=|L|>0$, and there exists $\fg\in \Approximator(\ell)$ such that $\rd(\fh,\fg_{\sigma_L})\le \xi$. 
With probability at least $98/100$, $\TrimCan\hspace{0.02cm}(\SAMP(h),L,\sR)$ returns a function $\fJ\in \Approximator(\ell)$~that satisfies $\rd(\fh, \fJ_{\sigma_{L}}) \leq \kappa$.
\end{lemma}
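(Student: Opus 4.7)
The plan is to show (i) the ``good'' approximator $\fg$ survives all $\eta$ iterations of the loop with high probability, so $\TrimCan$ does not reject, and (ii) with high probability every $\fJ \in \Approximator(\ell)$ that ``misses'' many points of $\fh^{-1}(1)$ gets eliminated; combining these two facts with the tie-breaking rule (smallest $|\fJ^{-1}(1)|$) will give the desired relative-error bound.

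First, since $L$ is a perfect list of variable oracles, by part~(2) of \Cref{lem: Extract} each call to $\Extract(L,\bw_\sR,0.5)$ deterministically returns $\sigma_L^{-1}(\bw_\sR)$, so a fixed $\fJ\in\Approximator(\ell)$ is removed in a given iteration iff $\fJ_{\sigma_L}(\bw_\sR)=0$ for $\bw\sim h^{-1}(1)$. Stability of $\sR$ implies that, up to total variation at most $\xi$, the marginal distribution of $\bw_\sR$ is the uniform distribution on $\fh^{-1}(1)$: indeed, conditioned on $\bw_\sV=u$ (an event of probability $\ge 1-\xi$), the restriction $h\upharpoonleft_{\bw_{\overline{\sR}}}\equiv \fh$ so $\bw_\sR$ is uniform on $\fh^{-1}(1)$. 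Consequently, for every $\fJ\in\Approximator(\ell)$,
\[
\left|\,\Prx_{\bw\sim h^{-1}(1)}[\fJ_{\sigma_L}(\bw_\sR)=0]\ -\ \Prx_{\by\sim\fh^{-1}(1)}[\fJ_{\sigma_L}(\by)=0]\,\right|\le \xi.
\]

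For (i): since $\rd(\fh,\fg_{\sigma_L})\le\xi$, we have $\Pr_{\by\sim\fh^{-1}(1)}[\fg_{\sigma_L}(\by)=0]\le \xi$, so the per-iteration removal probability for $\fg$ is at most $2\xi$. A union bound over the $\eta=1/(200\xi)$ iterations gives that $\fg$ survives all of them with probability at least $1-2\xi\cdot\eta = 1-1/100$. In particular $\bAA\ne\emptyset$ at the end.

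For (ii): call $\fJ\in\Approximator(\ell)$ \emph{bad} if $|\fh^{-1}(1)\setminus \fJ_{\sigma_L}^{-1}(1)|>\tfrac{\kappa-\xi}{2}\cdot|\fh^{-1}(1)|$. Any bad $\fJ$ is removed in a given iteration with probability at least $\tfrac{\kappa-\xi}{2}-\xi\ge \kappa/3$ (using $\xi\ll\kappa$). Hence $\fJ$ survives all $\eta$ iterations with probability at most $(1-\kappa/3)^{\eta}\le e^{-\kappa/(600\xi)}$. By the choice $\xi = \kappa/(4000\log(100\cdot 2^{2\mu r}))$, this is at most $(100\cdot 2^{2\mu r})^{-1}/100$, so a union bound over the at-most $|\Approximator(\ell)|\le 2^{2\mu r}$ bad functions (\Cref{lem:Approx-Size}) shows that with probability at least $99/100$ no bad $\fJ$ survives.

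Finally, condition on both good events (total failure $\le 2/100$). The returned $\fJ$ then satisfies $|\fJ^{-1}(1)|\le |\fg^{-1}(1)|$, which translates to $|\fJ_{\sigma_L}^{-1}(1)|\le |\fg_{\sigma_L}^{-1}(1)|\le (1+\xi)|\fh^{-1}(1)|$ using $\rd(\fh,\fg_{\sigma_L})\le\xi$. Moreover $\fJ$ is not bad, so $|\fh^{-1}(1)\setminus \fJ_{\sigma_L}^{-1}(1)|\le \tfrac{\kappa-\xi}{2}|\fh^{-1}(1)|$. Plugging these into
\[
|\fJ_{\sigma_L}^{-1}(1)\setminus\fh^{-1}(1)| \;=\; |\fJ_{\sigma_L}^{-1}(1)| - |\fh^{-1}(1)| + |\fh^{-1}(1)\setminus\fJ_{\sigma_L}^{-1}(1)|
\]
yields $|\fJ_{\sigma_L}^{-1}(1)\setminus\fh^{-1}(1)|\le \bigl(\xi + \tfrac{\kappa-\xi}{2}\bigr)|\fh^{-1}(1)|$, and therefore $\rd(\fh,\fJ_{\sigma_L})\le \tfrac{\kappa-\xi}{2}+\xi+\tfrac{\kappa-\xi}{2}=\kappa$, as required.

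The main (and really the only) delicate point is the asymmetry of $\rd$: per-iteration removal only catches $\fJ$ whose $\fJ_{\sigma_L}^{-1}(1)$ \emph{misses} many points of $\fh^{-1}(1)$, not those whose $\fJ_{\sigma_L}^{-1}(1)$ has extraneous points. The ``smallest $|\fJ^{-1}(1)|$'' tie-break, coupled with $\fg$ surviving so that $|\fJ^{-1}(1)|\le|\fg^{-1}(1)|\approx|\fh^{-1}(1)|$, is exactly what controls the extraneous side, and choosing $\xi$ to be logarithmically small in $|\Approximator(\ell)|$ is what makes the union bound go through.
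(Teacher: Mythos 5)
Your proposal is correct and follows essentially the same approach as the paper's proof: show $\fg$ survives all $\eta$ iterations (so $\TrimCan$ does not reject), union-bound over $\Approximator(\ell)$ to eliminate every candidate that misses too much of $\fh^{-1}(1)$, and then use the minimal-$|\fJ^{-1}(1)|$ tie-break together with $|\fJ^{-1}(1)|\le|\fg^{-1}(1)|\le(1+\xi)|\fh^{-1}(1)|$ to control $|\fJ_{\sigma_L}^{-1}(1)\setminus\fh^{-1}(1)|$. The only differences are cosmetic constants (you use a $(\kappa-\xi)/2$ ``bad'' threshold and per-iteration removal probability $\ge\kappa/3$ where the paper uses $\kappa/10$ and $\kappa/20$), and your total-variation framing of stability, both of which lead to the same conclusion.
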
 
$\TrimCan\hspace{0.02cm}(\SAMP(h),L,\sR)$ starts by setting $\bAA$ to the set $\Approximator(\ell)$ of all candidate functions. It then draws samples from $\SAMP(h)$, remaps them using $\Extract$ and discards every candidate function  for which the remapped sample is not a satisfying assignment. 
$\TrimCan$ rejects if $\bAA$ becomes empty; otherwise, crucially, it returns (the truth table of) the candidate left in $\bAA$ that has the fewest satisfying assignments. 

We break the proof into two lemmas. First we show 
that most likely, $\fg$ survives in $\bAA$ at the end so that $\TrimCan$ does not reject:
\begin{lemma}\label{lem: perm_in_S}
Under the assumptions of \Cref{thm: a good approx is found}, $\fg$ remains in $\bAA$ with probability at least $99/100$.
\end{lemma}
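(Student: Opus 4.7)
The plan is to show that in each iteration of $\TrimCan$, $\fg$ is eliminated from $\bAA$ with probability at most $2\xi$, and then conclude by a union bound over the $\eta = 1/(200\xi)$ iterations.

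First, I would translate the elimination condition for $\fg$. Since $L$ is a \emph{perfect} list of variable oracles, item~2 of \Cref{lem: Extract} guarantees that the call $\Extract(L,\bw_\sR,0.5)$ always returns $\bz = \sigma_L^{-1}(\bw_\sR)$ (no randomness on the $\Extract$ side matters). By the definition of $\fg_{\sigma_L}$ from \Cref{sec:prelim-string-function}, we have $\fg(\sigma_L^{-1}(\bw_\sR)) = \fg_{\sigma_L}(\bw_\sR)$, so in any given iteration $\fg$ is removed from $\bAA$ if and only if $\fg_{\sigma_L}(\bw_\sR) = 0$.

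Next, I would bound $\Pr_{\bw \sim h^{-1}(1)}[\fg_{\sigma_L}(\bw_\sR) = 0]$. By stability of $\sR$ and the definition of the dominating function,
\[
\Pr_{\bw \sim h^{-1}(1)}\big[h\upharpoonleft_{\bw_{\overline{\sR}}} \not\equiv \fh\big] \le \xi.
\]
Now here is the key observation: conditioned on the event $h\upharpoonleft_{\bw_{\overline{\sR}}} \equiv \fh$, the marginal distribution of $\bw_\sR$ is a mixture over fixed values of $\bw_{\overline{\sR}}$, and for every such fixed value the conditional distribution of $\bw_\sR$ is uniform over $\fh^{-1}(1)$. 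Hence the conditional distribution is \emph{exactly} $\mathrm{Unif}(\fh^{-1}(1))$. Combined with the hypothesis $\rd(\fh,\fg_{\sigma_L}) \le \xi$, which translates into $\Pr_{\by \sim \fh^{-1}(1)}[\fg_{\sigma_L}(\by) = 0] \le \xi$, a union bound yields
\[
\Pr_{\bw \sim h^{-1}(1)}\big[\fg_{\sigma_L}(\bw_\sR) = 0\big] \;\le\; \xi \;+\; \xi \;=\; 2\xi.
\]

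Finally, applying a union bound over the $\eta = 1/(200\xi)$ independent iterations of lines~2--5, the probability that $\fg$ is removed in at least one iteration is at most $\eta \cdot 2\xi = 1/100$, so $\fg$ survives with probability at least $99/100$. I do not anticipate a serious obstacle here: the only slightly subtle point is verifying that the conditional distribution of $\bw_\sR$ is exactly uniform over $\fh^{-1}(1)$ (a simple mixture argument), after which the argument is just two one-line probability bounds and a union bound.
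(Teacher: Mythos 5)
Your proof is correct and follows essentially the same approach as the paper: you use perfectness of $L$ to reduce the elimination event to $\fg_{\sigma_L}(\bw_\sR)=0$, bound this by $2\xi$ via the stability of $\sR$ (conditioning on $h\upharpoonleft_{\bw_{\overline{\sR}}} \equiv \fh$) together with $\rd(\fh,\fg_{\sigma_L})\le\xi$, and finish with a union bound over the $\eta$ iterations. The only cosmetic difference is that the paper writes the intermediate bound as $\xi+(1-\xi)\xi$ while you bound it more crudely by $\xi+\xi$; both give $2\xi$.
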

\begin{proof}
Given that $\rd(\fh,\fg_{\sigma_L})\le \xi$, we have
$$
\Prx_{\by\sim \fh^{-1}(1)}\Big[\fg\left(\sigma_L^{-1}(\by)\right)=0\Big]=
\Prx_{\by\sim \fh^{-1}(1)}\Big[\fg_{\sigma_L}(\by)=0\Big]\le \xi.
$$
Given that $\sR$ is stable, we have 
$$
\Prx_{\bx\sim h^{-1}(1)}\Big[\fg\left(\sigma_L^{-1}(\bx_{\sR})\right)
=0\Big]\le\xi + (1-\xi)\cdot \Prx_{\by\sim \fh^{-1}(1)}\Big[\fg\left(\sigma_L^{-1}(\by)\right)=0\Big]\le 2\xi.
$$
Given that $\Extract$ always works correctly when $L$ is perfect,  
each iteration of the loop removes $\fg$ from $\bAA$ with probability at most $2\xi$.
    Hence, by a union bound over all $\eta$ loops, the probability that $\fg$ is removed is at most $2\eta\xi \leq 1/100$.
\end{proof}

We then show that most likely, every function 
  $\fJ$ that survives in $\bAA$ satisfies the following condition:

\begin{lemma}\label{lem: no_small_in_S}
Under the assumptions of \Cref{thm: a good approx is found}, with probability at least $99/100$, every function $\fJ$ that survives in $\bAA$  satisfies
    $$\Prx_{\by \sim \fh^{-1}(1)}\Big[\fJ_{\sigma_{L}}(\by)=0\Big] \le \frac{\kappa}{10}.$$
\end{lemma}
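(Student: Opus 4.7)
The plan is to show that any ``bad'' candidate, namely a function $\fJ \in \Approximator(\ell)$ with $\Prx_{\by \sim \fh^{-1}(1)}[\fJ_{\sigma_L}(\by)=0] > \kappa/10$, is removed from $\bAA$ with overwhelming probability, and then union-bound over the (small) set $\Approximator(\ell)$.

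The first step is to translate the above bad-event probability into a bound over draws from $h^{-1}(1)$, using stability of $\sR$. Specifically, the stability of $\sR$ together with the definition of the dominating function $\fh$ gives that, conditioned on the $\xi$-probability-$1$ event $\{h\upharpoonleft_{\bw_{\overline \sR}} \equiv \fh\}$, the marginal $\bw_\sR$ of $\bw \sim h^{-1}(1)$ is exactly a uniform draw from $\fh^{-1}(1)$. Hence for any bad $\fJ$,
\[
\Prx_{\bw \sim h^{-1}(1)}\big[\fJ_{\sigma_L}(\bw_\sR)=0\big] \geq (1-\xi)\cdot \Prx_{\by \sim \fh^{-1}(1)}\big[\fJ_{\sigma_L}(\by)=0\big] > (1-\xi)\cdot \frac{\kappa}{10} \geq \frac{\kappa}{20},
\]
using $\xi \ll \kappa$ from \eqref{eq:xi}.

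The second step exploits that $L$ is perfect: by part (2) of \Cref{lem: Extract}, for each iteration of the loop in $\TrimCan$ the string $\bz = \Extract(L, \bw_\sR, 0.5)$ equals $\sigma_L^{-1}(\bw_\sR)$ \emph{deterministically}, so $\fJ(\bz)=\fJ_{\sigma_L}(\bw_\sR)$. Combined with the previous step, each iteration removes a bad $\fJ$ from $\bAA$ with probability at least $\kappa/20$ (independently across iterations, since the $\bw$'s are independent draws from $\SAMP(h)$). Over the $\eta = 1/(200\xi)$ iterations, the probability a fixed bad $\fJ$ survives is at most
\[
\left(1 - \frac{\kappa}{20}\right)^{1/(200\xi)} \leq \exp\!\left(-\frac{\kappa}{4000\,\xi}\right) \leq \frac{1}{100 \cdot 2^{2\mu r}},
\]
where the last inequality uses the choice $\xi = \kappa/(4000\log(100\cdot 2^{2\mu r}))$ from \eqref{eq:xi}.

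Finally, \Cref{lem:Approx-Size} says $|\Approximator(\ell)| \leq 2^{2\mu r}$, so a union bound over all bad $\fJ \in \Approximator(\ell)$ gives that no bad $\fJ$ survives in $\bAA$ with probability at least $99/100$. The main (minor) obstacle is being careful about the stability reduction in the first step: one must verify that the standard conditioning argument applies cleanly even though $\sV$ (from the stability definition) may be a proper subset of $\overline{\sR}$, but this is immediate from the observation that the literal values on $\overline{\sR}\setminus \sV$ are irrelevant to both $h\upharpoonleft_{\bw_{\overline \sR}}$ being $\fh$ and to the event $\{\fJ_{\sigma_L}(\bw_\sR)=0\}$.
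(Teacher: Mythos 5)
Your proposal is correct and follows essentially the same route as the paper's proof: lower-bound $\Pr_{\bw \sim h^{-1}(1)}[\fJ_{\sigma_L}(\bw_\sR)=0]$ by $(1-\xi)$ times the target probability using stability, exploit the perfectness of $L$ so that $\Extract$ returns $\sigma_L^{-1}(\bw_\sR)$ deterministically, bound the survival probability of a bad $\fJ$ by $(1-\kappa/20)^\eta \le 1/(100\cdot 2^{2\mu r})$, and union bound over $|\Approximator(\ell)| \le 2^{2\mu r}$. The only difference is that you unpack the conditioning argument behind the stability step a bit more explicitly than the paper does, which is a fine (and harmless) elaboration.
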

\begin{proof}
Take any $\fJ\in \Approximator(\ell)$ that violates
  the inequality above. 
Given that $\sR$ is stable, %
$$
\Prx_{\bx\sim h^{-1}(1)}\Big[\fJ\left(\sigma_L^{-1}(\bx_\sR)\right)=0\Big]\ge (1-\xi)\cdot 
\Prx_{\by\sim \fh^{-1}(1)} \Big[\fJ_{\sigma_{L}}(\by)=0\Big]\ge \frac{\kappa}{20}.
$$
Given that $\Extract$ always works correctly, 
each iteration of the loop removes $\fJ$ with probability at least $\kappa/20$. Thus,
  $\fJ$ survives with probability at most
$$ \left(1-\frac{\kappa}{20}\right)^{\eta} \leq \frac{1}{100\cdot  2^{2\mu r}},$$
using $\eta=1/(200\xi)$ and our choice of $\xi$. 
The lemma follows from a union bound over at most $2^{2\mu r}$ many  functions $\fJ$ in $\Approximator(\ell)$ (see \Cref{lem:Approx-Size}).
\end{proof}

Combining the above, we can now prove 
\Cref{thm: a good approx is found}: %

\begin{proof}[Proof of \Cref{thm: a good approx is found}]
With probability at least $98/100$, the conclusions of both \Cref{lem: perm_in_S,lem: no_small_in_S} hold. Assume this occurs and let $\fJ$ be the function returned. As $\fg \in \bAA$, we have $|\fJ^{-1}(1)|\leq |\fg^{-1}(1)|$. So  
\begin{equation} \label{eq:likeable-equation}
\left|\fJ_{\sigma_{L}}^{-1}(1)\right|\le \left|\fg_{\sigma_{L}}^{-1}(1)\right|\quad\text{and}\quad
\frac{|\fh^{-1}(1)\setminus \fJ_{\sigma_{L}}^{-1}(1)|}{|\fh^{-1}(1)|}\le \frac{\kappa}{10}.
\end{equation} 
 Given that $\rd(\fh,\fg_{\sigma_{L}}) \leq \xi$, we have $$\left|\fJ_{\sigma_{L}}^{-1}(1)\right|\leq \left|\fg_{\sigma_{L}}^{-1}(1)\right| \leq (1+\xi)|\fh^{-1}(1)|.$$
Combining the above with the first inequality of \Cref{eq:likeable-equation} we have  \begin{equation}\label{eq: another likeable equation}
    \left|\fJ_{\sigma_{L}}^{-1}(1) \setminus \fh^{-1}(1)\right|=\left|\fJ_{\sigma_{L}}^{-1}(1)\right|
    -\left|\fh^{-1}(1)\right|+
    \left|\fh^{-1}(1)\setminus \fJ_{\sigma_{L}}^{-1}(1)\right|
    \le \left(\xi+\frac{\kappa}{10}\right)|\fh^{-1}(1)|.
\end{equation}

Combining the second inequality of \Cref{eq:likeable-equation} and \Cref{eq: another likeable equation}, we get that
$$\rd(\fh,\fJ_{\sigma_{L}}) \leq \kappa/10+ \xi + \kappa/10 \leq \kappa$$
using $\xi\ll \kappa$ from our choices of these two parameters.
This finishes the proof of the lemma.
\end{proof}

Finally, we also need the following simple lemma about the query complexity of $\TrimCan$.
\begin{lemma}\label{lem:TrimCan_complexity}
 $\TrimCan(\SAMP(h), L,\sR)$ makes   $\poly((\mu/\xi)\log \mu)$ calls to 
    $\SAMP(h),\MQ(h)$.
\end{lemma}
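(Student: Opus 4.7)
The proof is essentially a routine bookkeeping inspection of \Cref{alg:TrimCan}, so the plan is simply to total up the queries in each component. I would begin by noting that \TrimCan{} only interacts with the oracles through two operations: the draw $\bw\sim \SAMP(h)$ on line~3, and the call $\Extract(L,\bw_\sR,0.5)$ on the same line. Line~4 only inspects the truth table of the candidate $\fJ$ at the extracted string $\bz$, which by definition makes no queries to $\MQ(h)$ or $\SAMP(h)$. So each of the $\eta=1/(200\xi)$ iterations costs exactly one $\SAMP(h)$ query plus the queries made by \Extract{}.

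Next I would invoke \Cref{lem: Extract}, which bounds the query complexity of $\Extract(L,\bw_\sR,\delta)$ by $O(\mu\log(\mu/\delta))$ queries to the oracles $\MQ(g^1),\ldots,\MQ(g^\ell)$ in $L$. With $\delta=0.5$ this is $O(\mu\log \mu)$ oracle calls per invocation of \Extract. To convert these into calls to $\MQ(h)$, I would use the fact (visible from line~7 of $\FRB$) that each $g^i$ is simply the restriction $g^i(z)=h(z\circ\bv_{\overline{\bX_i}})$ for a string $\bv$ fixed when $(\MQ(g^i),\bX_i)$ was appended to $L$. Hence a single query to $\MQ(g^i)$ is answered with a single query to $\MQ(h)$, so the queries to $\MQ(h)$ induced by \Extract{} amount to $O(\mu\log\mu)$ per call.

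Summing over the $\eta=1/(200\xi)$ iterations of the repeat-loop, the total number of queries is at most
\[
\eta\cdot\bigl(1 + O(\mu\log\mu)\bigr)\;=\; O\!\left(\frac{\mu\log\mu}{\xi}\right),
\]
with $O(1/\xi)$ of them going to $\SAMP(h)$ and $O((\mu\log\mu)/\xi)$ going to $\MQ(h)$. This is $\poly((\mu/\xi)\log\mu)$ as claimed. There is no substantive obstacle here; the only mild care needed is to observe that the removal step on line~4 is purely a computation on explicitly-stored truth tables of functions in $\Approximator(\ell)$ and therefore contributes nothing to the query count, and that the $\delta$ passed to $\Extract$ is the constant $0.5$ rather than something scaling with $\mu$ or $\xi$, so the $\log(\mu/\delta)$ factor is just $O(\log\mu)$.
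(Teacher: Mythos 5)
Your proof is correct and follows essentially the same route as the paper's: count the $\eta = 1/(200\xi)$ iterations of the repeat-loop, charge each iteration one $\SAMP(h)$ call plus the $O(\mu\log\mu)$ calls from $\Extract$ (via \Cref{lem: Extract} with $\delta = 0.5$), and note that each call to an oracle $\MQ(g^i)$ in $L$ is simulated by a single call to $\MQ(h)$. The only difference is that you spell out a few of the bookkeeping observations (line~4 is query-free, $\delta$ is constant) that the paper leaves implicit.
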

\begin{proof}
  $\TrimCan$ draws $\eta=O(1/\xi)$ samples $\SAMP(h)$ and calls $\Extract$ on each of them. By \Cref{lem: Extract} each call to $\Extract$ uses $O(\mu\log(\mu))$ calls to oracles in $L$. Note that
each such call is simulated by a single call to $\MQ(h)$.
\end{proof}

\subsection{Learner phase $3$: Building oracles} \label{sec:learner-phase-3}
We now finally look at Phase 3 of $\DNFLearner$.
Phase 3 first runs a simple subroutine $\CheckLit$ to check that $L$ is a list of variable oracles (i.e., each $g^i$ in $L$ is $(1/30)$-close to a literal).
Note~that the yes case will pass this test trivially since we assume that the list $L$ passed down to this phase is perfect. 
As will become clear soon, 
  running $\CheckLit$ is really a service for the no case because we don't have any guarantees about what  the list $L$ contains in that case.
(Recall that all analysis of $\DNFLearner$ so far is about the yes case; for the no case, all we know is that either $\DNFLearner$ has already rejected or a domain-disjoint list of oracles $L$ has been passed down to Phase 3.)
Assuming that $L$ passes the test, Phase 3 builds a pair of oracles $(\MQ^*,\SAMP^*)$ and returns them to end $\DNFLearner$.
As will become clear, building these two oracles does not cost any queries. 
It suffices to bound the number of queries to $h$ per call to these two oracles, and show that they indeed simulate functions with desired properties in the two cases.

\subsubsection{The $\CheckLit$ subroutine}

Before presenting $\CheckLit$, we recall the following result, due to \cite{Blaisstoc09}, on uniform-distribution junta testing:  

\begin{theorem}[Theorem~1.1 of \cite{Blaisstoc09}] \label{thm: uniform junta blais}\hypertarget{UniformJunta}{}
    There is a one-sided adaptive algorithm, $\UniformJunta$ which $\epsilon$-tests $k$-juntas in the standard model.  $\UniformJunta(\MQ(f),k, \epsilon,\delta)$
makes $O((k/\epsilon+k\log(k)) \log(1/\delta))$ queries, always accepts if $f$ is a $k$-junta, and rejects $f$ with probability at least $1-\delta$ if $f$ is $\epsilon$-far from
every $k$-junta with respect to the uniform distribution.
\end{theorem}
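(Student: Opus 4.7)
The plan is to reduce junta testing to a combinatorial identification problem via a random partition of coordinates. First, randomly partition $[n]$ into $r = \Theta(k^2)$ blocks $I_1,\ldots,I_r$. By a birthday-bound argument, if $f$ is a genuine $k$-junta on relevant-variable set $V$ with $|V|\le k$, then with high probability over the partition each relevant variable lands in its own block. The testing task then reduces to estimating the number of \emph{influential blocks} --- blocks containing at least one relevant variable --- and accepting iff this count is at most $k$. This reduction is what enables an adaptive query complexity that depends polynomially on $k$ and only logarithmically on $n$ (actually, independent of $n$).

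The algorithm maintains an evolving set $J$ of blocks already identified as influential, initially empty. In each round it draws $\bx\sim \{0,1\}^n$ and $\by\sim \{0,1\}^{[n]\setminus J}$, and queries $f(\bx)$ and $f(\bx_J\circ \by)$. If these two queries disagree, then \emph{some} block in $[n]\setminus J$ must contain a relevant variable, and an adaptive binary search on blocks (using $O(\log r) = O(\log k)$ queries) pinpoints a new influential block to add to $J$. The algorithm rejects as soon as $|J| > k$, and accepts if it exhausts its sample budget without that happening. Completeness is immediate: when $f$ is a true $k$-junta, the blocks added to $J$ are always among the at most $k$ blocks containing relevant variables, so $|J|$ never exceeds $k$ and the algorithm always accepts.

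The main obstacle, and the heart of the soundness analysis, is a structural lemma stating that if $f$ is $\epsilon$-far from every $k$-junta (under the uniform distribution), then for every set $J$ of at most $k$ blocks, the ``residual influence''
\[
\Prx_{\bx\sim \{0,1\}^n,\, \by\sim \{0,1\}^{[n]\setminus J}}\big[f(\bx)\neq f(\bx_J\circ \by)\big] \;\ge\; \Omega(\epsilon).
\]
The proof strategy is to show the contrapositive: if this residual influence were small, then the function $g(x) := \mathrm{plurality}_{\by}\, f(x_J\circ \by)$, which depends only on coordinates in $\cup_{j\in J} I_j$, would be $o(\epsilon)$-close to $f$, and (using that with high probability over the random partition each of the $\le k$ relevant variables of $g$ occupies its own block) one can collapse $g$ into a $k$-junta that is $\epsilon$-close to $f$, contradicting farness. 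Care is required in handling the random partition and controlling the error in the ``collapse,'' but the bounds are all polynomial in $k$ and $1/\epsilon$.

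Combining these ingredients, each round either has probability $\Omega(\epsilon)$ of detecting a new influential block (if $|J|\le k$) or the algorithm has already rejected. To collect $k+1$ influential blocks with failure probability at most $\delta$ requires $O((k/\epsilon)\log(1/\delta))$ random $(\bx,\by)$ queries, and each of the at most $k+1$ binary searches costs $O(\log r) = O(\log k)$ further queries, yielding a total query complexity of
\[
O\!\left(\left(\tfrac{k}{\epsilon} + k\log k\right)\log(1/\delta)\right),
\]
as claimed. One-sidedness follows because rejection only occurs after witnessing $k+1$ disjoint blocks that provably contain relevant variables, which cannot happen if $f$ truly is a $k$-junta.
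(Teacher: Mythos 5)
The paper cites this result as Theorem~1.1 of \cite{Blaisstoc09} and does not prove it, so there is no in-paper proof to compare against; I will instead assess your sketch against the actual argument in Blais's work (building on \cite{FKRSS03}).

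Your high-level description of the algorithm is accurate: random partition of $[n]$ into $\Theta(k^2)$ parts, iterative detection of influential parts via a disagreement query followed by an $O(\log k)$-step binary search over parts, rejection once $k+1$ distinct influential parts are found, and the query-complexity accounting. Completeness is also handled correctly. The gap is in the soundness argument, specifically in the step you describe as ``collapsing $g$ into a $k$-junta.'' The function $g(x)=\mathrm{plurality}_{\by}f(x_J\circ\by)$ is a \emph{$k$-part junta}: it depends only on the coordinates of the $|J|\le k$ blocks, but those blocks contain $\Theta(n/k)$ coordinates between them, and $g$ can genuinely depend on all of them. It is not the case that $g$ has ``$\le k$ relevant variables'' each in its own block --- that reasoning is imported from the completeness case, where $f$ itself is a $k$-junta, and does not apply here. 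Consequently, low residual influence of $[n]\setminus\bigcup_{j\in J}I_j$ only yields that $f$ is close to a $k$-part junta, not to a $k$-junta, and the contradiction with farness does not follow by an elementary collapse.

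What is actually needed --- and what constitutes the technical heart of \cite{FKRSS03,Blaisstoc09} --- is a structural lemma showing that, with high probability over the random partition, $\epsilon$-farness from $k$-juntas implies $\Omega(\epsilon)$-farness from $k$-part juntas (equivalently, every union of at most $k$ parts leaves residual influence $\Omega(\epsilon)$). This is proved via Fourier analysis, by relating the influence of a set to the Fourier mass on characters intersecting it and showing that a random coarse partition spreads the ``junta-farness'' mass across parts. Your phrase ``care is required in \ldots controlling the error in the collapse'' frames this as a quantitative bookkeeping issue, but the obstacle is qualitative: without the Fourier-analytic lemma, a $k$-part junta need not be close to any $k$-junta, and the soundness proof does not go through.
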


\begin{figure}[t!]
\begin{algorithm}[H]
\caption{\protect\hypertarget{CheckLit}{$\CheckLit$}}
\textbf{Input: }A domain-disjoint list of oracles $L=(\MQ(g^{1}),\sX_1), \ldots, (\MQ(g^{\ell}),\sX_\ell)$ for some $\ell\in [\mu]$. \\
 \textbf{Output:} Accept or Reject.\\
 \begin{tikzpicture}
\draw [thick,dash dot] (0,1) -- (16.5,1);
\end{tikzpicture}
\begin{algorithmic}[1]
\For{each $i\in [\ell]$}
\State Reject if $\UniformJunta(\MQ(g^i), 1, 1/30, 1/100)$ rejects.\red{\Comment{Type~1 Query}}
\State Repeat $10$ times: Draw $\ba \sim \{0,1\}^{\sX_i}$ and reject if 
${{g^{i}}(\ba)= {g^{i}} (\overline{\ba})}$. \red{\Comment{Type 1 Query}}
\EndFor
\State Accept. 
\end{algorithmic} 
\end{algorithm}
\caption[$\CheckLit$]{$\CheckLit$. The algorithm takes as input a domain-disjoint list of oracles $L$, and checks that $L$ is a list of variable oracles (i.e., each $g^i$ in $L$ is $(1/30)$-close to a literal). }
\end{figure}

We prove the following lemma about \CheckLit:
\begin{lemma}\label{thm: checkIfLit}
    $\CheckLit(L)$ uses $O(\mu)$ queries to  oracles in $L$ and satisfies the following conditions:
    \begin{flushleft}\begin{enumerate}
        \item If $L$ is a perfect list of variable oracles, then $\CheckLit$ always accepts. 
        \item If there exists some $g^i$ in $L$ such that ${g^{i}}$ is $(1/30)$-far from every literal under the uniform distribution, then $\CheckLit$ rejects with probability at least $99/100$.
    \end{enumerate}\end{flushleft}
\end{lemma}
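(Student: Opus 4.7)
The plan is to prove each of the two items by leveraging the two checks in the algorithm (the call to \UniformJunta\ on line~2 and the antipodal-pair check on line~3). The query complexity is immediate: by \Cref{thm: uniform junta blais}, each call to $\UniformJunta(\MQ(g^i),1,1/30,1/100)$ uses $O((30 + \log 1)\log 100)=O(1)$ queries, and line~3 uses $20$ queries per $i$. Summing over the $\ell \leq \mu$ iterations yields $O(\mu)$ total queries.

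For item~1, suppose $L$ is a perfect list of variable oracles, so that each $g^i$ is exactly a literal $x_{\tau(i)}$ or $\overline{x_{\tau(i)}}$. Then $g^i$ is a $1$-junta, and by the one-sided guarantee of \Cref{thm: uniform junta blais}, the call to $\UniformJunta$ on line~2 always accepts. Moreover, for any $a\in\{0,1\}^{\sX_i}$, a literal satisfies $g^i(a) = 1 - g^i(\overline{a})$, so the condition $g^i(a)=g^i(\overline{a})$ on line~3 is never triggered. Hence \CheckLit\ accepts with probability $1$.

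For item~2, fix an index $i$ such that $g^i$ is $(1/30)$-far from every literal under the uniform distribution; it suffices to show that \CheckLit\ rejects in iteration $i$ with probability at least $99/100$. We split into two cases based on the structure of $g^i$. In \textbf{Case (a)}, $g^i$ is $(1/30)$-far from every $1$-junta; then by \Cref{thm: uniform junta blais}, $\UniformJunta$ on line~2 rejects with probability at least $99/100$, and we are done. In \textbf{Case (b)}, $g^i$ is $(1/30)$-close to some $1$-junta $J$. Since the only $1$-juntas are literals and the two constant functions, and $g^i$ is $(1/30)$-far from every literal, $J$ must be a constant function $c\in\{0,1\}$. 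Thus $\Pr_{\ba\sim\{0,1\}^{\sX_i}}[g^i(\ba)\neq c]\le 1/30$, and by a union bound,
\[
\Prx_{\ba\sim\{0,1\}^{\sX_i}}\big[g^i(\ba)\neq g^i(\overline{\ba})\big] \;\leq\; 2\cdot \tfrac{1}{30} \;=\; \tfrac{1}{15}.
\]
So each of the $10$ independent trials on line~3 fails to trigger rejection with probability at most $1/15$, and the probability none trigger rejection is at most $(1/15)^{10} \ll 1/100$.

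Neither step of the argument presents much of an obstacle; the main conceptual point is simply recognizing that ``far from every literal'' decomposes into the two disjoint scenarios ``far from every $1$-junta'' (caught by $\UniformJunta$) and ``close to a constant function'' (caught by the antipodal test), and that the algorithm runs both checks precisely to cover this dichotomy. Combining the two cases, in iteration $i$ the algorithm rejects with probability at least $99/100$, establishing item~2.
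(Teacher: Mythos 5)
Your proof is correct and follows essentially the same approach as the paper: the $O(1)$-per-call query count from \Cref{thm: uniform junta blais}, the one-sided acceptance of a literal plus the trivially falsified antipodal check for item~1, and the dichotomy between ``far from every $1$-junta'' (handled by $\UniformJunta$) and ``close to a constant'' (handled by the antipodal test with a union bound giving $1/15$ per round and $(1/15)^{10}$ overall) for item~2. Your phrasing of the case split via ``$(1/30)$-close to some $1$-junta $J$, which must then be a constant'' is a slightly cleaner way to state the same dichotomy the paper uses.
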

\begin{proof}
The query complexity follows from  \Cref{thm: uniform junta blais} given that each call to $\UniformJunta$~uses $O(1)$ queries. 
    If every ${g^{i}}$ in $L$ is a literal, $\UniformJunta$ never rejects because by  \Cref{thm: uniform junta blais} it is one-sided. And we always have ${g^{i}}(a)\ne {g^{i}}(\overline{a})$. Hence $\CheckLit$ always accepts.

    On the other hand, assume that some ${g^{i}}$ in $L$ is $(1/30)$-far from every literal with respect to the uniform distribution. Then it is either \begin{enumerate}
        \item $(1/30)$-far from every $1$-Junta under the uniform distribution; or
        \item $(1/30)$-far from every literal and $(1/30)$-close to a $0$-Junta (i.e., a constant function).
    \end{enumerate}
    In case $1$ by \Cref{thm: uniform junta blais}, $\UniformJunta$ rejects with probability at least $99/100$.
    In case $2$, if ${g^{i}}$ is $(1/30)$-close to a constant function (say, without loss of generality, the all-$0$ function), then %
    \begin{align*}
        \Prx_{\ba\sim \{0,1\}^{\sX_i}}\left[{g^{i}}(\ba) \neq {g^{i}}(\overline{\ba})\right] &= \Prx_{\ba \sim \{0,1\}^{\sX_i}}\left[{g^{i}}(\ba)=0 \land {g^{i}}(\overline{\ba}) =1\right]  + \Prx_{\ba \sim \{0,1\}^{\sX_i}}\left[{g^{i}}(\ba)=1 \land {g^{i}}(\overline{\ba}) =0\right]\\
        &\leq \Prx_{\ba \sim \{0,1\}^{\sX_i}}\left[ {g^{i}}(\overline{\ba}) =1\right]  + \Prx_{\ba \sim \{0,1\}^{\sX_i}}\left[{g^{i}}(\ba)=1\right] \leq 1/15,
    \end{align*}
    where the last line follows from the fact ${g^{i}}$ is $(1/30)$-close to the all-$0$ function under the uniform distribution and $\ba$, $\overline{\ba}$ are both sampled uniformly at random.
 Thus, we have that the probability we fail to reject ${g^{i}}$ on line 3 is at most $(1/15)^{10} \leq 1/100$.
\end{proof}

\subsubsection{Oracle simulators}

We start with the simpler $\MQ$ simulator called $\SimMQA$.
Given $\sR$, a domain-disjoint list of oracles $L=((\MQ(g^1),\sX_1),\ldots,(\MQ(g^\ell),\sX_\ell))$ and the truth table of some $\fJ:\{0,1\}^\ell\rightarrow \{0,1\}$, it answers a membership query on $x\in \{0,1\}^n$ by running $\Extract$ to obtain $z\in \{0,1\}^\ell$ and returning $\fJ(z)$.

We prove the following lemma about $\SimMQA$:

\begin{lemma}\label[lemma]{lem: MQ(A)}
Each call to $\SimMQA(x,\sR,L,\fJ)$ uses $O(\mu\log(\mu/{{\kappa}}))$  queries to  oracles in $L$,
and it satisfies the following conditions: %
\begin{flushleft}\begin{enumerate}
  \item When $L$ is a list of variable oracles, $\SimMQA$ is a $\kappa$-accurate simulator for $\MQ(\fJ_{\sigma_{L}})$; and
\item When $L$ is perfect, $\SimMQA$  simulates $\MQ(\fJ_{\sigma_{L}})$ perfectly.
\end{enumerate}\end{flushleft}
\end{lemma}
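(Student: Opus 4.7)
\textbf{Plan for the proof of \Cref{lem: MQ(A)}.} My plan is to simply unpack the definition of $\SimMQA$ and reduce everything to the already-established properties of $\Extract$ from \Cref{lem: Extract}. Recall that on input $(x,\sR,L,\fJ)$, the subroutine $\SimMQA$ computes $w = x_{\sR}$, runs $\bz \leftarrow \Extract(L, w, \kappa)$, and returns $\fJ(\bz)$.

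\textbf{Query complexity.} The only place where queries to oracles in $L$ are made is inside the single call to $\Extract(L, w, \kappa)$. By \Cref{lem: Extract}, this call uses $O(\mu \log(\mu/\kappa))$ queries, which gives the claimed bound immediately.

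\textbf{Correctness.} For both items, the key observation is that, by definition of the notation $\fJ_{\sigma_L}$, we have
\[
\fJ_{\sigma_L}(x_{\sR}) \;=\; \fJ\bigl(\sigma_L^{-1}(x_{\sR})\bigr).
\]
So to correctly simulate a query to $\MQ(\fJ_{\sigma_L})$ on $x$, it suffices for $\SimMQA$ to compute $\sigma_L^{-1}(x_{\sR})\in\{0,1\}^\ell$ and evaluate $\fJ$ there (which is free, since $\fJ$ is handed to the simulator as an explicit truth table). For part~(1), when $L$ is a list of variable oracles, $\sigma_L$ is well-defined, and by the first item of \Cref{lem: Extract} the call $\Extract(L, x_{\sR}, \kappa)$ returns $\sigma_L^{-1}(x_{\sR})$ with probability at least $1-\kappa$; whenever this happens, $\SimMQA$ returns $\fJ_{\sigma_L}(x_{\sR})$, which is exactly what $\MQ(\fJ_{\sigma_L})$ would return. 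So $\SimMQA$ is a $\kappa$-accurate simulator for $\MQ(\fJ_{\sigma_L})$. For part~(2), when $L$ is a perfect list of variable oracles, the second item of \Cref{lem: Extract} guarantees that $\Extract$ always returns $\sigma_L^{-1}(x_{\sR})$, and hence $\SimMQA$ perfectly simulates $\MQ(\fJ_{\sigma_L})$.

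\textbf{Anticipated obstacles.} I don't anticipate any real obstacles: all of the heavy lifting (both the query-complexity bound and the accuracy guarantee) has already been done inside \Cref{lem: Extract}. The only tiny point that needs care is keeping track of the distinction between viewing $\fJ_{\sigma_L}$ as a function on $\{0,1\}^\sR$ versus on $\{0,1\}^n$; this is harmless because $\fJ_{\sigma_L}$ only depends on the coordinates in the image of $\sigma_L\subseteq \sR$, so reading off $x_{\sR}$ from the full input $x$ before calling $\Extract$ loses no information. Setting the failure parameter of $\Extract$ to exactly $\kappa$ then yields the stated $\kappa$-accuracy in part~(1) with no slack to spare.
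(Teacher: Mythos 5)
Your proof is correct and takes essentially the same approach as the paper's own proof: both reduce the lemma directly to the accuracy and query-complexity guarantees of $\Extract$ from \Cref{lem: Extract}, and observe that evaluating the explicitly-given truth table $\fJ$ on $\Extract$'s output costs no additional oracle calls.
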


\begin{proof}
When $L$ is a list of variable oracles, we have from \Cref{lem: Extract} that $\Extract$ returns $\sigma_L^{-1}(x)$ with probability at least $1-\kappa$. 
When $L$ is perfect,   $\Extract$ always works correctly.

The query complexity follows from the fact that $\Extract$ makes $O(\mu\log(\mu/\kappa))$ queries. 
\end{proof}

The simulation of sample access, $\SimSAMPA$, to $\fJ_{\sigma_L}$ is more involved. Assume that for each $i \in [\ell]$ $g^i$ is $(1/30)$-close to $\tau(i)$ or $\overline{\tau(i)}$ under the uniform distribution and let $\Sigma:=\big\{\sigma(i) :i\in [\ell]\big\}$.
Given $y\in \{0,1\}^\sR$~and $z \in \zo^{\ell}$, let $y_{{\sigma_{L}},z}\in \zo^\sR$ denote the
  following string:
For $j \not \in \cup_{i\in [\ell]} \sX_{i}$, set $(y_{{\sigma_{L}},z})_j=y_j$;
$$
\text{the $\sX_{i}$ block of $y_{{\sigma_{L}},z}$}= \begin{cases} y_{\sR_{i}}& \text{if $y_{\tau(i)}=z_i$}\\[0.5ex]\overline{y_{\sR_{i}}} & \text{if $y_{\tau(i)}\ne z_i$}
\end{cases}
$$
for each $i\in [\ell]$.
A key observation is that, when $\by\sim \{0,1\}^\sR$ and $\bz \sim \fJ^{-1}(1)$ then $\by_{{\sigma_{L}},z}$ as defined above is distributed exactly the same as ${\sigma_{L}} (\bz)\circ \bw$
  with $\smash{\bw\sim\{0,1\}^{\sR \setminus \Sigma}}$, which corresponds exactly to a draw from  $\fJ_{{\sigma_{L}}}^{-1}(1)$.

\begin{figure}[t!]
\begin{algorithm}[H]
\caption{\protect\hypertarget{Sim-MQ-J}{$\SimMQA$}}  \label{alg: MQ(A)}
 \textbf{Input: }$x\in \{0,1\}^n$, $\sR\subseteq [n]$, a domain-disjoint list of oracles $L$ and truth table of $\fJ$ over $\zo^{|L|}$. \\
    \textbf{Output: }Either $0$ or $1$.\\
\begin{tikzpicture}
\draw [thick,dash dot] (0,1) -- (16.5,1);
\end{tikzpicture}
\begin{algorithmic}[1]
\State Let $z=\Extract(L,x,\kappa)$
\State Return $\fJ(z)$.
\end{algorithmic} 
\end{algorithm}
\caption[$\SimMQA$]{$\SimMQA$. The algorithm takes as input a domain-disjoint list of oracles $L$, the truth table of a function $\fJ : \zo^{|L|} \to \zo$. The goal of the algorithm is to simulate a $\MQ$ oracle for $\fJ_{\sigma_L}$ (assuming $L$ is a list of variable oracles). }
\end{figure}

\begin{figure}[t!]
\begin{algorithm}[H]
\caption{\protect\hypertarget{Sim-SAMP-J}{$\SimSAMPA$}}\label{alg:SAMPfA}
 \textbf{Input: }$\sR\subseteq [n]$, a domain-disjoint list of oracles $L$ and the truth table of $\fJ$ over $\{0,1\}^{|L|}$.\\ %
    \textbf{Output: }A string in $\{0,1\}^n$.\\
\begin{tikzpicture}
\draw [thick,dash dot] (0,1) -- (16.5,1);
\end{tikzpicture}

\begin{algorithmic}[1]

\State Draw $\by \sim \zo^{\sR}$ and $\bz \sim \fJ^{-1}(1)$.
\State Let $\bx=\Extract(L,w,\kappa)$.
\For{$i=1$ to $|L|$}

 \State Flip the $\sX_{i}$-block $\by_{\sX_i}$ of $\by$ to $\overline{\by_{\sR_{i}}}$ if $\bx_i\ne \bz_i$\;
 \EndFor
 
\State Return $\by$.

\end{algorithmic} 
\end{algorithm}
\caption[\SimSAMPA]{\SimSAMPA. The algorithm takes as input a domain-disjoint list of oracles $L$, the truth table of a function $\fJ : \zo^{|L|} \to \zo$. The goal of the algorithm is to simulate a $\SAMP$ oracle for $\fJ_{\sigma_L}$ (assuming $L$ is a list of variable oracles).}
\end{figure}

\begin{lemma}\label[lemma]{lem:mapback}
Each call to $\SimSAMPA(\sR,L,\fJ)$ uses $O(\mu\log(\mu/{{\kappa}}))$ queries to oracles in $L$, and it satisfies the following conditions:
\begin{flushleft}\begin{enumerate}
  \item When $L$ is a list of variable oracles, 
    $\SimSAMPA$ is a $\kappa$-accurate simulator for $\SAMP(\fJ_{\sigma_L})$; 
  \item When $L$ is perfect, 
    $\SimSAMPA$ simulates   $\SAMP(\fJ_{\sigma_{L}})$
    perfectly.
\end{enumerate}\end{flushleft}
\end{lemma}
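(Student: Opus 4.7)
The plan is to deduce both items directly from the observation stated immediately before the algorithm: when $\by\sim\{0,1\}^\sR$ and $\bz\sim \fJ^{-1}(1)$, the string $\by_{\sigma_L,\bz}$ (as defined in the paragraph preceding the algorithm) is distributed exactly as a uniform sample from $\fJ_{\sigma_L}^{-1}(1)$. Before invoking that observation I would sanity-check it: conditional on $\bz=z$, the map $\by\mapsto \by_{\sigma_L,z}$ flips certain blocks of a uniform string, which preserves uniformity on the flipped coordinates while pinning each coordinate $\sigma_L(i)=\tau(i)$ to $z_i$; averaging over $\bz\sim \fJ^{-1}(1)$ then yields the uniform distribution on $\fJ_{\sigma_L}^{-1}(1)$.

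The query-complexity claim is immediate: the only access to oracles in $L$ occurs through a single call to $\Extract(L,\by,\kappa)$, which uses $O(\mu\log(\mu/\kappa))$ queries by \Cref{lem: Extract}.

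For item (2), when $L$ is perfect, \Cref{lem: Extract} guarantees that $\bx=\Extract(L,\by,\kappa)$ equals $\sigma_L^{-1}(\by)$ with probability one, so $\bx_i=\by_{\tau(i)}$ for all $i$. I would then observe, by inspection of the loop in lines~3--4, that the returned string equals exactly $\by_{\sigma_L,\bz}$ (the block $\sX_i$ is flipped precisely when $\by_{\tau(i)}=\bx_i\ne \bz_i$, matching the definition of $\by_{\sigma_L,\bz}$). Applying the observation finishes the perfect case.

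For item (1), when $L$ is only a list of variable oracles, the same calculation goes through on the event $\{\bx=\sigma_L^{-1}(\by)\}$, which by \Cref{lem: Extract} has probability at least $1-\kappa$. A straightforward coupling (couple $\SimSAMPA$ to an idealized version that replaces $\bx$ by $\sigma_L^{-1}(\by)$) then shows that the total variation distance between the output of $\SimSAMPA$ and a true draw from $\SAMP(\fJ_{\sigma_L})$ is at most $\kappa$, giving the $\kappa$-accurate simulator property. No step here is expected to be a real obstacle; the only subtle point is verifying that the block-flipping procedure in lines~3--4 really implements the abstract operator $\by\mapsto \by_{\sigma_L,\bz}$, which is immediate from the definitions once one unpacks the fact that flipping an entire block $\sX_i$ toggles the coordinate $\tau(i)$ while preserving the uniform distribution on $\sX_i\setminus\{\tau(i)\}$.
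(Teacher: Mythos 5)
Your proof is correct and follows essentially the same route as the paper's: the query bound comes from the single call to $\Extract$ with $\delta=\kappa$, the perfect case follows from $\Extract$ being deterministic and the block-flipping loop reproducing the abstract map $\by\mapsto\by_{\sigma_L,\bz}$, and the $\kappa$-accuracy in item~(1) comes from coupling to the ideal run on the event that $\Extract$ succeeds, which has probability at least $1-\kappa$. The only thing you add beyond what the paper records is the short sanity check that the per-block flip pins $\by_{\tau(i)}$ to $\bz_i$ while preserving uniformity on $\sX_i\setminus\{\tau(i)\}$ (the paper takes the ``key observation'' stated just before the algorithm as given); this is a useful verification but not a different method.
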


\begin{proof}
The query complexity follows from that of $\Extract$, and 
  our choice of $\delta={{\kappa}}$.

   Whenever $\bx=\sigma_{L}^{-1}(\by)$, we have that on line $6$ the algorithm returns $\by_{\sigma_{L},\bz}$, which corresponds to a uniformly random string in $\fJ_{\sigma_{L}}^{-1}(1)$ (by our above discussion). 
   By \Cref{lem: Extract}, if $L$ is perfect, we always have $\smash{\bx=\sigma_{L}^{-1}(\by)}$ and if $L$ is a list of variable oracles,   $\smash{\bx=\sigma_{L}^{-1}(\by)}$ with probability $1-\kappa$.
\end{proof}

\subsection{Putting it all together}

We now prove \Cref{thm: learner yes case} and \Cref{thm: learner no case}.   We recall these theorems below:

\learnerYes*

\begin{proof}
The theorem follows by 
  combining \Cref{thm:FRB},
  \Cref{thm: a good approx is found},
  \Cref{lem: MQ(A)} and \Cref{lem:mapback}.
\end{proof} 

\learnerNo*

\begin{proof}%
$\DNFLearner$ either (1) returns $(\MQ(\mathbf{1}_\sR), \SAMP(\mathbf{1}_\sR))$ on line 3 (for which the condition of the theorem holds), or (2) rejects during Phase 1 or 2, or (3)
reaches Phase 3 with a domain-disjoint list of oracles $L$ and a function $\fJ\in \Approximator(|L|)$. By \Cref{lem: dnf approx remapping}, we have for any injective map $\sigma:[|L|]\rightarrow \sR$, $\rd(\fJ_\sigma,\fD)\le \kappa$ for some $r$-term, $\mu$-junta DNF $\fD$. 

By \Cref{thm: checkIfLit}, it is unlikely for $L$ to pass the test if any $g^i$ in it is $(1/30)$-far from any literal under the uniform distribution. 
On the other hand, when $L$ is a list of variable oracles, we have from 
  \Cref{lem: MQ(A),lem:mapback} that 
  $\SimMQA$ and $\SimSAMPA$ are $\kappa$-accurate
  simulators for $\fJ_{\sigma_L}$.
\end{proof}

It remains only to bound the query complexity of $\DNFLearner$. We recall 
\Cref{lem: complexity of Learner}:

\learnerComplexity*
\begin{proof}
The number of calls to $\MQ(h)$ and $\SAMP(h)$ during Phase 1 and Phase 2 can be bounded using \Cref{lem:FBR complexity} and \Cref{lem:TrimCan_complexity}
as well as our choices of parameters. 
In Phase 3, $\CheckLit$ uses $O(\mu)$ many queries to   oracles in $L$, each is simulated by a call to $\MQ(h)$.
    
The query complexity of each call to $\SimMQA$ and $\SimSAMPA$ is given in 
  \Cref{lem: MQ(A),lem:mapback}.
Note that $\kappa=\eps^2$ and each call to an oracle in the list is simulated by a call to $\MQ(h)$.
\end{proof}

\section{Consistency Checking} \label{sec:checking}

The goal of this section is to establish the results we need about \ConsCheck, namely \Cref{lem: complexity of ConsCheck}, \Cref{thm: ConsCheck: yes case}, and \Cref{thm: ConsCheck: no case}.

$\ConsCheck$ takes as input a set $\sR\subseteq [n]$, 
  oracles $\MQ(h)$ and $\SAMP(h)$ for the input function
  $h$ and a pair of oracle simulators $\MQ^*$ and $\SAMP^*$ for some function $\fJ:\{0,1\}^{\sR}\rightarrow \{0,1\}$.
(Recall that in the yes case, they are perfect simulators for $\fJ$ and in the no case, they are $\kappa$-accurate simulators for $\fJ$ with $\kappa:=\eps^2$.)
Let us recall the results about $\ConsCheck$ that we will prove in this section:

\begin{figure}[t!]
\begin{algorithm}[H]
\caption{\protect\hypertarget{ConsCheck}{\ConsCheck}}\label{algo: consistency_checker}
 \textbf{Input: }$\emptyset \neq \sR  \subseteq [n], \epsilon, \MQ(h),\SAMP(h), \MQ^*$ and $ \SAMP^*$. \\
 \textbf{Output:} Accept or Reject.\\ 
 \begin{tikzpicture}
\draw [thick,dash dot] (0,1) -- (16.5,1);
\end{tikzpicture}
\begin{algorithmic}[1]
     \Algphase{Phase 1: Verify $h$} 
     \State Repeat $O(1/\epsilon)$ times: 
     \State \hskip2em Draw $\bz \sim \SAMP(h)$; reject
       if $\MQ^*(\bz_{\sR})=0$. %
     \State \hskip2em Draw $100$ points $\bv \sim \SAMP^*$.
     \State \hskip2em Reject if ${\MQ(h)(\bz_\sU \circ \bv)=0}$ for any of the points $\bv$ drawn on line 3.
    \Statex \cyan{\Comment{Type~2 Query}}
    \State Repeat $O(1/\epsilon)$ times: 
     \State \hskip2em Draw $\bz \sim \SAMP(h)$ and $\bv \sim \SAMP^*$; reject if ${\MQ(h)(\bz_\sU \circ \bv)=0}$.\vspace{0.1cm} %
     \cyan{\Comment{Type~2 Query}}
    
    \Algphase{Phase 2: Is this a conjunction?}
    \Statex Let
    $\Gamma : \{0,1\}^{ \sU} \to \{0,1\}$ be the following function:
    $$\Gamma(\alpha)=\begin{cases}
        1 &\text{if $\Prx_{\bz \sim \fJ^{-1}(1)}[h(\alpha \circ \bz)] \geq 0.9$} \\
        0& \text{otherwise}
    \end{cases}$$

    \State Let $\MQ'\leftarrow \MQGamma(\cdot,\MQ(h),\SAMP^*)$
    and $\SAMP'\leftarrow \SampleSub(\SAMP(h), \sU)$.
    \State Run  $\ConjTest(\MQ',\SAMP',\eps/10)$ and return the same answer. %
\end{algorithmic}
\end{algorithm}
\caption[\ConsCheck: Given functions $f$ and $g$, the algorithm checks if $f$ is relative error close to $C \land g$ for some {conjunction} $C$]{\ConsCheck. The algorithm takes as input $\sR \subseteq [n], \epsilon>0$, oracle access to $h$ and oracles $\MQ^*, \SAMP^*$ to some function $\fJ:\zo^\sR \to \zo$. The algorithm checks if there is some conjunction $C$ such that $f$ is relative error close $C(x_{\overline{R}}) \land \fJ(x_\sR)$. In particular, the algorithm will always be used with $\fJ$ being close to some DNF $r$-term, $\mu$-junta DNF, hence if $f$ is $\epsilon$ far in relative distance from every $(r,\mu)-$factored DNFs, the algorithm rejects with high probability. 
}
\end{figure}

\begin{figure}[t!]
\begin{algorithm}[H]\caption{\protect\hypertarget{Sim-MQ-$\Gamma$}{\MQGamma}}\label{algo:gamma_query} \textbf{Input: }$\alpha \in \{0,1\}^{ \sU}$, $\MQ(h)$ and $\SAMP^*$.\\ 
\textbf{Output: }Either Reject or $1$. \\
 \begin{tikzpicture}
\draw [thick,dash dot] (0,1) -- (16.5,1);
\end{tikzpicture}
    \begin{algorithmic}[1]
        \State Draw $\bw \sim \SAMP^*$ for $O(\log(1/\epsilon))$ %
        many times. 
        \State Return $1$ if ${\MQ(h)(\alpha \circ {\bw})=1}$ for all $\bw$ drawn in the previous step. \cyan{\Comment{Type~2 Query}}
        \State Otherwise, halt and reject.
    \end{algorithmic}
\end{algorithm}
\caption[\MQGamma]{\MQGamma. An algorithm to simulate an $\MQ$ oracle for the function $\Gamma$ defined in \ConsCheck.}
\end{figure}

\setcounter{algorithm}{0}
\begin{algorithm}[H]
\caption{\protect\hypertarget{Sim-SampV2}{\protect\hyperlink{Sim-SampV2}{\textcolor{violet}{\textbf{Sim-SAMP}}}}}
\textbf{Input: }$\SAMP(h)$ of some function $h:\zo^n \to \zo$ and $\emptyset \neq \mathsf{Y} \subseteq [n]$.  \\
\textbf{Output: }a point in $\{0,1\}^\mathsf{Y}$.\\
\begin{tikzpicture}
\draw [thick,dash dot] (0,1) -- (16.5,1);
\end{tikzpicture}
    \begin{algorithmic}[1]
    \State Draw $\bz \sim \SAMP(h)$.
    \State Return $\bz_\mathsf{Y}$.
    \end{algorithmic}
\end{algorithm}

\constestComplexity*
\conscheckYes*
\conscheckNo*

$\ConsCheck$ is presented in \Cref{algo: consistency_checker}. 
We quickly give a proof of \Cref{lem: complexity of ConsCheck}. For this we recall the query complexity of $\ConjTest$ from \cite{rel-error-conj-DL}:

\begin{theorem} \label{thm:a}
    $\ConjTest(\MQ(f), \SAMP(f), \epsilon)$ makes $O(1/\epsilon)$ queries to  $\MQ(f)$ and $\SAMP(f)$. 
\end{theorem}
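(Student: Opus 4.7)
The plan is to invoke the result of \cite{rel-error-conj-DL} as a black box and verify the query-count bookkeeping; the statement of \Cref{thm:a} is essentially a quotation of the query complexity of the $\ConjTest$ procedure designed there, so no new argument is required, only a recap of the algorithm's structure sufficient to see that all three of its phases are within an $O(1/\eps)$ query budget.

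First I would recall the high-level structure of $\ConjTest$. On input $(\MQ(f), \SAMP(f), \eps)$, the algorithm draws $m = \Theta(1/\eps)$ samples $\bz^{(1)}, \ldots, \bz^{(m)} \sim \SAMP(f)$ and computes the set of ``forced'' coordinates $\bU = \{i \in [n] : \bz^{(1)}_i = \cdots = \bz^{(m)}_i\}$, together with the common bit on each such coordinate. These forced coordinates implicitly define a candidate conjunction $C^*$: a coordinate $i \in \bU$ whose common value is $b$ contributes the literal $x_i$ if $b=1$ and $\overline{x_i}$ if $b=0$, while coordinates outside $\bU$ contribute nothing. Crucially, this phase does not call $\MQ(f)$ at all and uses $O(1/\eps)$ calls to $\SAMP(f)$.

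Next, $\ConjTest$ runs two consistency checks on $C^*$. To test that $C^* \subseteq f$ in the sense that $C^*(x)=1 \Rightarrow f(x)=1$, the algorithm issues $O(1/\eps)$ membership queries to $\MQ(f)$ on strings that are uniformly random subject to agreeing with $C^*$ on its literals; to test that $f \subseteq C^*$ up to relative error $\eps$, it draws $O(1/\eps)$ fresh samples from $\SAMP(f)$ and checks that each satisfies every literal of $C^*$. Each of these two phases stays within an $O(1/\eps)$ query budget by direct inspection. Summing the three phases gives a total of $O(1/\eps)$ calls to $\MQ(f)$ and $\SAMP(f)$, which is the bound asserted in \Cref{thm:a}.

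The main place to be careful is the bookkeeping of any confidence-amplification overhead hidden inside \cite{rel-error-conj-DL}: since \Cref{thm:a} only asserts query complexity (not correctness, which is handled separately in \cite{rel-error-conj-DL}), and since we only need constant target success probability, any such overhead is absorbed into the implicit constant of the $O(\cdot)$. Thus the theorem follows directly from the description and analysis of $\ConjTest$ in \cite{rel-error-conj-DL}.
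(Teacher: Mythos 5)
Your conclusion is right, and your overall stance matches the paper's: \Cref{thm:a} is simply a quotation of the query complexity of $\ConjTest$ from \cite{rel-error-conj-DL}, and the paper offers no proof beyond the citation (plus a reproduction of the algorithm in \Cref{ap:overview-conjtest} from which the bound can be read off by inspection). The one caveat is that your ``recap'' of the algorithm's structure is not the actual $\ConjTest$: the real algorithm does not compute a set of forced coordinates from $\Theta(1/\eps)$ samples and build a candidate conjunction $C^*$ to be consistency-checked. Instead, it draws a single $y \sim f^{-1}(1)$, passes to the shifted function $f_y(x)=f(x\oplus y)$, and runs an anti-monotone-conjunction tester whose Phase~1 makes $O(1/\eps)$ XOR-type membership queries on pairs of fresh samples and whose Phase~2 makes only $O(1)$ further queries; it is non-adaptive and accepts iff every membership query is answered $1$. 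Since the statement at hand asserts only the $O(1/\eps)$ query bound, this discrepancy is harmless here—the bound follows from the cited analysis either way, and your point that constant success probability absorbs any amplification into the $O(\cdot)$ is fine—but the structural properties of the true algorithm (non-adaptivity, queries answered $1$ iff accept) are exactly what the surrounding paper relies on elsewhere (e.g., in \Cref{thm: conjtest: accept weaker assumption samples}), so the mis-description would become a genuine gap if it were load-bearing rather than decorative.
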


\begin{proofof}{\Cref{lem: complexity of ConsCheck}}
    Phase~1 makes $O(1/\epsilon)$ queries. 
    In Phase~2, $\ConjTest$ uses $O(1/\epsilon)$ calls to~$\MQ'$ and $\SAMP'$.
    $\SAMP'$ uses $\SampleSub(\SAMP(h),\sU)$, so each call to it uses one call to $\SAMP(h)$. 
    $\MQ'$ uses $ \MQGamma(\cdot,\MQ(h),\SAMP^*)$, so each call to $\MQ'$ uses $O(\log(1/\epsilon))$ calls to $\MQ(h)$ and $\SAMP^*$, respectively. 
    Hence, the total number of oracle calls  is $O(\log(1/\epsilon)/\epsilon)$. 
\end{proofof}

It remains to prove \Cref{thm: ConsCheck: yes case} and \Cref{thm: ConsCheck: no case}. Before doing this, we record a ``robustness'' property of the $\ConjTest$ algorithm that we will need.

\subsection{Robustness of $\ConjTest$}
\label{sec:robustness-conjtest}

We recall a few results about $\ConjTest$ from %
\cite{rel-error-conj-DL}.
The first one says that if $\ConjTest$ is run on a conjunction, it accepts with probability 1 even if the sampling oracle is defective (as long as it only returns satisfying assignments of $f$):

\begin{theorem}[Theorem~27 of
\cite{rel-error-conj-DL}]
\label{thm: CPPR conj test}
    Let $f: \zo^n \to \zo$ be a conjunction.
    Let $\SAMP^\ast(f)$ be an oracle which, when called, always returns some string in $f^{-1}(1)$.
Then \ConjTest, running on $\MQ(f)$ and $\SAMP^*(f)$,
accepts $f$ with probability 1.
\end{theorem}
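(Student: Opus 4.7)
The plan is to establish this robustness property by a close inspection of the \ConjTest\ algorithm from \cite{rel-error-conj-DL} and observing that every one of its rejection criteria is triggered only by evidence that is provably inconsistent with $f$ being a conjunction. Since such evidence can never arise when $f$ truly is a conjunction (regardless of the distribution of positive examples returned), the tester must accept with probability one.

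More concretely, I would first recall the structure of \ConjTest: at a high level, it uses samples from $\SAMP(f)$ to identify, for each coordinate $i\in [n]$, a ``forced value'' $b_i \in \{0,1,\star\}$ (where $\star$ indicates that both values have been observed among positive samples, so $x_i$ cannot be part of any conjunction describing $f$). This produces a candidate conjunction $C$ consisting of exactly those literals $x_i=b_i$ for coordinates with $b_i \neq \star$. The tester then makes $\MQ(f)$ queries to verify that $C$ is consistent with $f$: roughly, it checks on positive samples that $C$ evaluates to $1$ (which is automatic, since $C$ is defined from positive samples), and it checks on additional constructed inputs that $f$ agrees with $C$.

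Now suppose $f = \ell_1 \wedge \cdots \wedge \ell_k$ is a genuine conjunction. The key observation is that \emph{every} $x \in f^{-1}(1)$ sets each literal $\ell_j$ to $1$, so across any collection of positive samples (even drawn from an arbitrary distribution over $f^{-1}(1)$), the coordinates appearing in the true conjunction are always fixed to the ``correct'' value. Consequently, whatever candidate conjunction $C$ the tester forms from samples returned by $\SAMP^*(f)$, it must be a \emph{relaxation} of $f$ — that is, $C$ contains a subset of the literals of $f$, so $f^{-1}(1) \subseteq C^{-1}(1)$. I would then go through each rejection step of \ConjTest\ in turn and verify that it cannot fire: any positive example trivially satisfies $C$, and any membership query used for verification will return an answer consistent with $f$'s true conjunction structure, which is also consistent with $C$ being a super-conjunction.

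The main (and only) obstacle is a careful case analysis of the specific rejection rules inside \ConjTest. The proof is essentially a one-sidedness argument, but to make it rigorous I would need to open up the pseudocode of \ConjTest\ from \cite{rel-error-conj-DL} and confirm that every rejection is triggered by a pair $(x,y)$ with $f(x)=1$ but $f(y)=0$ (or similar evidence) that can be certified \emph{without} any distributional assumption on the sampling oracle. Given the one-sided-error design philosophy of the tester, this should go through essentially verbatim from the analysis in \cite{rel-error-conj-DL}, with the only modification being to replace ``draws from the uniform distribution on $f^{-1}(1)$'' with ``arbitrary draws from $f^{-1}(1)$'' in the portions of the argument establishing completeness.
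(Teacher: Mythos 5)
Your high-level strategy --- that \ConjTest\ is one-sided and that one should verify by inspection that every rejection criterion is triggered only by evidence incompatible with $f$ being a conjunction --- is the right one, and indeed this theorem is cited from \cite{rel-error-conj-DL} rather than re-proved here, so inspection is all that is required. However, your guessed internal structure of \ConjTest\ is not accurate: \ConjTest\ does not compute per-coordinate ``forced values'' $b_i\in\{0,1,\star\}$ or explicitly build and verify a candidate conjunction $C$. As reproduced in Appendix~C (\Cref{algo: mono conjunction tester}), it first draws a single $y^*\sim\SAMP^*(f)$, then tests whether $f_{y^*}(x):=f(x\oplus y^*)$ is an anti-monotone conjunction; its \emph{only} rejection rule is that some membership query returns $0$, where the queried point is always the XOR of two positive samples (Phase~1) or the XOR of a random point below a positive sample with another positive sample (Phase~2). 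When $f$ is a conjunction, $f_{y^*}^{-1}(1)$ is a downward-closed coordinate subcube, which is closed under both XOR of its own elements and under taking points below them, so every queried point is provably in $f_{y^*}^{-1}(1)$ no matter what distribution on $f^{-1}(1)$ the oracle uses. That gives acceptance with probability $1$. So the key idea you state --- that rejection requires a witness that cannot arise for a true conjunction, independently of the sampling distribution --- is exactly right, but the concrete argument is simpler and more direct than the ``forced-values / candidate conjunction'' mechanism you hypothesize; you would need to consult the actual pseudocode (which this paper helpfully reproduces) to close the gap.
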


The second result says that if $\ConjTest$ is run on a function $f$ that is relative-error far from every conjunction, then it rejects with high probability as long as it is run with the $\MQ(f)$ oracle and a version of the $\SAMP(f)$ oracle that is ``not too inaccurate'' (in the sense of putting non-trivial weight on every satisfying assignment):

\begin{theorem} [Theorem~28 of %
\cite{rel-error-conj-DL}]
\label{thm: conjtest: reject weaker assumption}
    Suppose that $f: \zo^n \to \zo$ satisfies $\rd(f,f') > \epsilon$ for every conjunction $f'$. Let 
    $\SAMP^\ast(f)$ be a sampling oracle for $f$ such that the underlying~distribution 
    $\mathcal{D}$ (the distribution that $\SAMP^\ast(f)$ generates samples from) satisfies
\[\Prx_{\bx \sim {\mathcal{D}}}\big[\bx = z\big] \geq \frac{1}{20|f^{-1}(1)|},\quad\text{for every $z\in f^{-1}(1)$}.\]
    Then {$\ConjTest$}, running on $\MQ(f)$ and $\SAMP^*(f)$,    rejects $f$ with probability at least $99/100$. 
\end{theorem}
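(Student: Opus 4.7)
The plan is to establish Theorem~\ref{thm: conjtest: reject weaker assumption} by inspecting the original uniform-sampling correctness proof of \ConjTest\ in \cite{rel-error-conj-DL} and showing that it is robust to replacing $\SAMP(f)$ by any sampling oracle whose induced distribution dominates the uniform distribution on $f^{-1}(1)$ up to a multiplicative factor of $1/20$.

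First I would examine the structure of \ConjTest. Together with Theorem~\ref{thm: CPPR conj test}, the one-sided property of \ConjTest\ ensures that it only ever rejects via a membership query returning $0$ at a point where its implicit ``candidate conjunction'' demands a $1$. Concretely, the algorithm alternates between (i) using a few draws $\bz \sim \SAMP^*(f)$ to identify which coordinates are ``fixed'' and which are ``free'' in the target conjunction, thereby learning a candidate $C_{\mathrm{cand}}$, and (ii) testing, via queries to $\MQ(f)$, that $f$ evaluates to $1$ on random extensions of the sampled points within $C_{\mathrm{cand}}^{-1}(1)$.

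Next I would verify that in the ``no'' case, with probability at least $99/100$ some membership query catches a discrepancy. Since every draw of $\SAMP^*(f)$ lies in $f^{-1}(1)$, the candidate $C_{\mathrm{cand}}$ is always at most as restrictive as the minimal conjunction $C^*$ containing $f^{-1}(1)$, hence $f^{-1}(1) \subseteq C_{\mathrm{cand}}^{-1}(1)$. The hypothesis $\rd(f,C_{\mathrm{cand}}) > \epsilon$ then yields $|C_{\mathrm{cand}}^{-1}(1) \setminus f^{-1}(1)| > \epsilon\,|f^{-1}(1)|$, so a non-trivial fraction of random extensions of sampled points evidence $f(\bw)=0$. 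In the original CPPR argument, each of $O(1/\epsilon)$ rounds catches such a violation with probability $\Omega(\epsilon)$ under uniform sampling; under our weakened oracle, since each $z \in f^{-1}(1)$ still receives probability at least $1/(20|f^{-1}(1)|)$, the round-wise success probability degrades by at most a factor of $20$ and remains $\Omega(\epsilon)$.

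The main obstacle will be to retrace the original CPPR analysis line by line to confirm that every place where uniformity of $\SAMP(f)$ was invoked can be replaced by the weaker ``pointwise $1/(20|f^{-1}(1)|)$'' bound with only constant-factor degradation, and to verify in particular that both phases, candidate-formation and verification, tolerate this degradation. Since \ConjTest\ makes only $O(1/\epsilon)$ queries (Theorem~\ref{thm:a}) and the target confidence $99/100$ is a constant, the hidden absolute constants in the sample complexity easily absorb the factor of $20$; a standard Chernoff bound over $O(1/\epsilon)$ independent rounds then yields the required rejection probability.
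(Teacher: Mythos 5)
This theorem is cited from \cite{rel-error-conj-DL} (as their Theorem~28) rather than proved in the present paper, so there is no in-paper proof to compare against. Your high-level strategy --- retrace the original analysis and argue that replacing uniform sampling on $f^{-1}(1)$ by a distribution dominating it up to a $1/20$ factor only degrades the per-round rejection probability by a constant --- is the natural one. But the proposal stops exactly where the work begins: the step you flag as ``the main obstacle'' (verifying line by line that each use of uniformity tolerates the weakened guarantee) \emph{is} the proof, and it is deferred.

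Two further concerns. First, your description of $\ConjTest$ as alternating between learning a candidate conjunction $C_{\mathrm{cand}}$ and verifying $f\equiv 1$ on random extensions within $C_{\mathrm{cand}}^{-1}(1)$ does not match the algorithm reproduced in Appendix~C (Algorithm~19): that tester never constructs an explicit candidate; Phase~1 simply checks $f_y(\bx\oplus\by)=1$ for pairs of samples, and Phase~2 checks a similar closure condition on random substrings, both directly. The robustness argument that actually needs to be ported therefore has a different shape than your sketch assumes, and the clean ``$f^{-1}(1)\subseteq C_{\mathrm{cand}}^{-1}(1)$'' picture is not available. Second, the claimed ``factor of 20'' degradation is too optimistic: a single Phase~1 round involves two independent draws from $\SAMP^*$, and $\ConjTest$ additionally draws the shift point $y$ from $\SAMP^*$, so the probability of hitting a witness event can degrade by $20^2$ or $20^3$ rather than $20$. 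This is still a constant and does not break the approach, but it is precisely the kind of quantitative detail that must be checked rather than asserted when one claims the original analysis goes through.
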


For our purposes we will also need a ``robustness'' guarantee on the performance of $\ConjTest$, when it is run on a conjunction with imperfect sampling and membership-query oracles, which goes beyond the guarantee provided by \Cref{thm: CPPR conj test}. We establish the guarantees that we will need in \Cref{thm: conjtest: accept weaker assumption samples}, and for completeness we reproduce the $\ConjTest$ algorithm in \Cref{ap:overview-conjtest}:

\begin{lemma}\label{thm: conjtest: accept weaker assumption samples}
    Let $C: \zo^n \to \zo$ be a conjunction.
    Assume the oracle $\SAMP^\ast(C)$ always returns some string in $C^{-1}(1)$. 
    Then $\ConjTest$ on $\SAMP^\ast(C)$, $\MQ^\ast(C)$ has the following properties:
    
    (1) It only calls $\MQ^\ast(C)$ on  points that have $C(x)=1$. 
    
    (2) If  all these queries are answered with $1$, then $\ConjTest$ accepts with probability $1$
\end{lemma}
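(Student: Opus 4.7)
The plan is to prove both parts by direct inspection of the $\ConjTest$ algorithm (reproduced in Appendix~\ref{ap:overview-conjtest}) and then leverage \Cref{thm: CPPR conj test} as a black box. For part (1), I would exploit the key structural property of $\ConjTest$: every string $x$ submitted to its membership oracle is constructed as a hybrid $x = y_S \circ z_{\overline{S}}$ of points $y, z$ previously obtained from the sampling oracle (possibly with $y=z$). Since we assume $\SAMP^\ast(C)$ always returns points in $C^{-1}(1)$, both $y$ and $z$ satisfy every literal of the conjunction $C$. The crucial observation is that $C^{-1}(1)$ is closed under such hybrids: for each literal $\ell$ of $C$ over some variable $x_i$, the required value of $x_i$ is already taken by both $y_i$ and $z_i$, hence by any coordinate-wise combination of them. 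Therefore every query to $\MQ^\ast(C)$ lies in $C^{-1}(1)$, establishing (1).

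For part (2), the argument is a straightforward coupling. Suppose every $\MQ^\ast(C)$ response in the run equals $1$. By part (1), each queried string $x$ satisfies $C(x) = 1$, so the answer a perfect $\MQ(C)$ oracle would have returned is also $1$. Hence the entire execution of $\ConjTest$ on $(\MQ^\ast(C), \SAMP^\ast(C))$ is indistinguishable, on the sample path realized by the internal randomness, from its execution on $(\MQ(C), \SAMP^\ast(C))$. Invoking \Cref{thm: CPPR conj test}, which guarantees that $\ConjTest$ accepts with probability $1$ when run on any sampling oracle that outputs points in $C^{-1}(1)$ together with the true $\MQ(C)$ oracle, we conclude that $\ConjTest$ accepts with probability $1$ in our setting as well.

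The only non-routine step is verifying the structural claim in (1), namely that all membership queries in $\ConjTest$ are hybrids of sampled points. This will require unpacking the pseudocode from~\cite{rel-error-conj-DL}; once the queries are seen to have the stated hybrid form, both parts follow almost immediately from the closure property of conjunctions and from \Cref{thm: CPPR conj test}. I do not anticipate any additional technical obstacle beyond this inspection, since the conjunction-closure property is the defining algebraic feature that makes the \cite{rel-error-conj-DL} tester work in the first place.
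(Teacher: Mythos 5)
Your high-level plan — show that every membership query lands inside $C^{-1}(1)$, then invoke \Cref{thm: CPPR conj test} — is sound, and your coupling argument for part (2) is correct. However, your key structural claim for part (1) is factually wrong. You assert that every query to the membership oracle is a \emph{hybrid} $y_S \circ z_{\overline{S}}$ of sampled points. Inspecting the pseudocode in \Cref{ap:overview-conjtest}, the anti-monotone conjunction tester's queries are XOR-combinations: Phase~1 queries $\bx \oplus \by$ with $\bx,\by \sim \SAMP$, and Phase~2 queries $\by \oplus \bu$ where $\by$ is a uniformly random point below a sampled $\bx$ (so $\by$ is not even a sampled point). An XOR of two satisfying assignments is generally \emph{not} a hybrid of them: e.g.\ $\bx=(1,1,0)$, $\by=(1,0,0)$ gives $\bx\oplus\by=(0,1,0)$, which differs from both on the first coordinate and so lies outside $\cube(\bx,\by)$. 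Your claimed closure property thus does not apply as stated.

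The good news is that the conclusion you want still holds, because $C^{-1}(1)$ is closed under these XOR operations for a different reason: every satisfying assignment of a conjunction has the same fixed value on each relevant coordinate, and (for the anti-monotone case after the shift by $y$) that fixed value is $0$, so XOR and the $\preceq$ operation both preserve membership in $C^{-1}(1)$. So a corrected version of your direct inspection would go through. Note that the paper's own proof of part (1) sidesteps this entirely: it argues by contradiction that ConjTest cannot query any $z$ with $C(z)=0$ when run with $\SAMP^\ast$ and the \emph{true} $\MQ(C)$, because any such query would be answered $0$, force a rejection, and contradict \Cref{thm: CPPR conj test}; combined with non-adaptivity, this gives (1) without ever needing to characterize the query structure. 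That black-box route is both shorter and immune to the kind of slip in your proposal.
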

\begin{proof}

By inspection of $\ConjTest$ (see Algorithm~2 and Remark~10 of \cite{rel-error-conj-DL}, or \Cref{ap:overview-conjtest}), we see that $\ConjTest$ accepts if and only if every call to $\MQ^*(C)$ that it makes is answered with~$1$; this establishes (2). 

We now turn to (1). First, observe that (again by inspection of $\ConjTest$) $\ConjTest$ is non-adaptive, and hence the calls to $\MQ^*(C)$ that it makes do not depend on whether it is run with $\MQ(C)$ or $\MQ^\ast(C)$. Hence, it suffices to show that given as oracles $\MQ(C)$ and an oracle $\SAMP^\ast(C)$ which only returns strings in $C^{-1}(1)$, $\ConjTest$ only calls $\MQ(C)$ on points $x$ with $C(x)=1$.

Assume, for the sake of contradiction, that there exists some conjunction $g$ such that $\ConjTest$, when run on $\MQ(g)$ and $\SAMP^*(g)$ (which only returns strings in $g^{-1}(1)$), {has a nonzero probability of} calling $\MQ(g)$ on some point $z$ with $g(z)=0$. Since $\ConjTest$ accepts if and only if every call to $\MQ(g)$ it makes is answered with $1$, this would imply that $\ConjTest$ rejects {with nonzero probability}. This contradicts \Cref{thm: CPPR conj test}, and hence the proof of the lemma is complete.
\end{proof}

\subsection{The yes case: Proof of \Cref{thm: ConsCheck: yes case}}

We recall the assumptions of \Cref{thm: ConsCheck: yes case}:  $\sR$ is stable, and   $\smash{(\MQ^*,\SAMP^*)}$ are perfect simulators for some function $\fJ:\{0,1\}^\sR\rightarrow \{0,1\}$ with $\rd(\fh,\fJ)\le \kappa.$ 

Since $\kappa \ll 1/2$, by symmetry (see \Cref{lem:approx-symetric}), we have $\rd(\fJ,\fh) \leq 2\kappa$. By definition, since $\sR$ is stable, we have that 
there there exists $u \in \zo^\sV$, where $\sV:=\overline{\sR}\cap (\sS\cup \var(H))$, such that
$$
\Prx_{\bz\sim h^{-1}(1)}\big[\bz_\sV\ne u\big]\le \xi,$$
which implies that
$$
\Prx_{\bz \sim h^{-1}(1)}\big[h{\upharpoonleft_{\bz_{\overline{\bR}}}}\not\equiv \fh\big] \leq {{{\xi}}}.$$
In particular, whenever we sample $\bz \sim h^{-1}(1)$ with probability at least $1-\xi$ we have  $h{\upharpoonleft_{\bz_\sU}}\equiv\fh$ and $\bz_\sR \sim \fh^{-1}(1)$.  
Given the bounds on $\rd(\fh,\fJ)$ and $\rd(\fJ,\fh)$, we will use the following: $$\Prx_{\bw \sim \fJ^{-1}(1)}\big[\fh(\bw)=1\big]=O(\kappa)\quad\text{and}\quad \Prx_{\bz \sim \fh^{-1}(1)}\big[\fJ(\bz)=1\big] = O(\kappa).$$

With these in mind, we first prove the following:

\begin{lemma}\label{lem: ConsCheck good 1-8}
    $\ConsCheck$ rejects on lines 1--6 with probability at most $1/80$.  
\end{lemma}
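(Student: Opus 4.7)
The plan is to establish, for each rejection event in lines 1--6, an $O(\kappa + \xi)$ per-iteration bound, so that a union bound over the $O(1/\epsilon)$ iterations yields an $O(\epsilon)$ total bound, which is at most $1/80$ since we may assume $\epsilon$ is sufficiently small by \Cref{rem:eps-small}. The key structural fact I will invoke is that $\sR$ being stable supplies a string $u \in \{0,1\}^\sV$ (with $\sV = \overline{\sR} \cap (\sS \cup \var(H))$) such that $\Pr_{\bz \sim h^{-1}(1)}[\bz_\sV \neq u] \leq \xi$; and, conditional on the complementary event $\bz_\sV = u$, the restriction $\bz_\sR$ is distributed exactly as a uniform draw from $\fh^{-1}(1)$. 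This holds because variables in $\overline{\sR} \setminus \sV$ are irrelevant to $h$ (they appear neither in the head $H$ nor in the tail), so fixing $\bz_\sV = u$ is equivalent to restricting $h$ by $u$, which by definition yields $\fh$.

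For line 2, I will use that $\MQ^*$ perfectly simulates $\MQ(\fJ)$, so the rejection event becomes $\fJ(\bz_\sR) = 0$. Conditioning on $\bz_\sV = u$ gives $\bz_\sR \sim \fh^{-1}(1)$, so a single iteration rejects with probability at most $\xi + \Pr_{\bw \sim \fh^{-1}(1)}[\fJ(\bw)=0] \leq \xi + \rd(\fh,\fJ) \leq \xi + \kappa$. For lines 4 and 6, I will use the companion identity that, conditional on $\bz_\sV = u$, the value $h(\bz_\sU \circ \bv)$ equals $\fh(\bv)$ (again because variables in $\overline{\sR}\setminus \sV$ are irrelevant). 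Since $\bv \sim \SAMP^* \equiv \SAMP(\fJ)$, \Cref{lem:approx-symetric} gives $\Pr[\fh(\bv)=0] \leq \rd(\fJ,\fh) \leq 2\kappa$, so each pair $(\bz,\bv)$ contributes at most $\xi + 2\kappa$; a small-constant union bound handles the $100$ draws of $\bv$ on line 3.

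Summing the per-iteration bounds over all $O(1/\epsilon)$ iterations of both the line 1--4 loop and the line 5--6 loop gives a total rejection probability of $O(1/\epsilon) \cdot O(\xi + \kappa) = O(\epsilon)$, where I use $\kappa = \epsilon^2$ and $\xi \ll \kappa$ from \Cref{eq:xi}. Choosing $\epsilon$ below a suitable absolute constant (justified by \Cref{rem:eps-small}) gives the claimed bound of $1/80$.

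The main — though mild — subtlety is the verification that the conditional distribution of $\bz_\sR$ given $\bz_\sV = u$ really is uniform on $\fh^{-1}(1)$. This depends on the precise definition of the dominating function $\fh$ together with the fact that $\var(H) \subseteq \overline{\sR}$, which is part of the stability hypothesis. Once that is in hand, the rest of the proof is a clean bookkeeping exercise combining the perfect-simulator property of $(\MQ^*,\SAMP^*)$ with the approximate symmetry of relative distance.
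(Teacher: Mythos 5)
Your proposal is correct and follows essentially the same argument as the paper: both condition on the event that the $\overline{\sR}$-portion of each drawn $\bz$ agrees with the fixed string $u$ (which fails with probability at most $\xi$ per sample, by stability), use that under this conditioning $\bz_\sR$ is uniform over $\fh^{-1}(1)$ and $h(\bz_\sU \circ \cdot) \equiv \fh$, and then bound the rejection probabilities on lines 2, 4, and 6 by $O(\kappa)$ via $\rd(\fh,\fJ) \le \kappa$ and approximate symmetry, summing to $O((\xi+\kappa)/\eps) \le 1/80$. The only cosmetic difference is that you fold the $\xi$ bad-event probability into each iteration's bound rather than union-bounding over all samples up front as the paper does.
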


\begin{proof}
Between lines 1--6, $\ConsCheck$ draws $O(1/\epsilon)$ samples $\bz \sim \SAMP(h)$. By a union bound, with probability at least $1-O(\xi/\eps)$,  every $\bz$  satisfies
$h{\upharpoonleft_{\bz_\sU}}=\fh$ and thus, $\bz_\sR$ is drawn from $\fh^{-1}(1)$. 
 
So assume the above holds. We now look at the probability to reject on lines~2,~4 and~6. Each time line $2$ is executed, the algorithm rejects if $\fJ(\bz_\sR)=0$ where $\bz_\sR \sim \fh^{-1}(1)$, so this happens with probability at most $O(\kappa)$. 
On line $4$, the algorithm rejects if $\fh(\bv)=0$ when $\bv\sim \fJ^{-1}(1)$, which happens with probability at most $O(\kappa)$ as well.
By the same argument, each time line $6$ is executed, the algorithm rejects with probability at most $O(\kappa)$. Hence, by a union bound, the algorithm rejects on these lines with probability at most $O(\kappa/\epsilon).$ 

Overall the algorithm rejects on lines 1--6 with probability $O(\kappa/\eps)+O(\xi/\eps)\le 1/80$.
\end{proof}

Now, let $C^*: \zo^\sU\rightarrow \zo$ be the conjunction defined by $C^*(\alpha)=1$ if and only $\alpha_\sV=u$. We now prove that line~8 rejects with small probability.
\begin{lemma}\label{lem: conj-test accept whp}
   The call to $\ConjTest$ on line~12 rejects  with probability at {most $1/80$}.
\end{lemma}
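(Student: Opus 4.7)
The plan is to invoke \Cref{thm: conjtest: accept weaker assumption samples} with the conjunction $C^\ast$, which requires verifying two things: (i) every string returned by $\SAMP'$ lies in $(C^\ast)^{-1}(1)$, and (ii) on every point $\alpha$ with $C^\ast(\alpha)=1$ on which $\ConjTest$ queries $\MQ'$, the oracle $\MQ'$ returns~$1$. Once both hold, \Cref{thm: conjtest: accept weaker assumption samples} guarantees that $\ConjTest$ accepts with probability $1$.

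First I will handle (i). Recall that $\SAMP'=\SampleSub(\SAMP(h),\sU)$ returns $\bz_\sU$ for $\bz\sim \SAMP(h)$, and that $C^\ast(\bz_\sU)=1$ iff $\bz_\sV=u$. By the definition of stability of $\sR$, a single draw satisfies $\bz_\sV=u$ with probability at least $1-\xi$. Since $\ConjTest$ makes only $O(1/\eps)$ calls to $\SAMP'$ (\Cref{thm:a}), a union bound shows that every sample returned by $\SAMP'$ lies in $(C^\ast)^{-1}(1)$ except with probability $O(\xi/\eps)$.

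Next I will handle (ii). Conditioned on the event in (i), part~(1) of \Cref{thm: conjtest: accept weaker assumption samples} implies that $\ConjTest$ only queries $\MQ'$ at points $\alpha$ with $C^\ast(\alpha)=1$, i.e.\ with $\alpha_\sV=u$. For any such $\alpha$ and any $w\in\{0,1\}^\sR$, all relevant variables of $h$ in $\overline{\sR}$ lie in $\sV$ (since $\var(H)\cup(\sS\cap\overline{\sR})\subseteq \sV$), so $h(\alpha\circ w)$ equals the restriction of $h$ obtained by fixing $\sV$ to $u$, which is precisely $\fh(w)$. In particular $\MQGamma(\alpha,\MQ(h),\SAMP^\ast)$ returns~$1$ iff $\fh(\bw)=1$ for each of the $O(\log(1/\eps))$ draws $\bw\sim \SAMP^\ast$. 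Since $\SAMP^\ast$ perfectly simulates $\SAMP(\fJ)$ and $\rd(\fh,\fJ)\le \kappa$, by \Cref{lem:approx-symetric} we have $\Pr_{\bw\sim \fJ^{-1}(1)}[\fh(\bw)=0]\le 2\kappa$. A union bound over all at most $O(\log(1/\eps)/\eps)$ draws of $\bw$ across all $\MQ'$ queries shows that all of them satisfy $\fh(\bw)=1$ except with probability $O(\kappa\log(1/\eps)/\eps)$, in which case every query to $\MQ'$ is answered~$1$.

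Combining these two bounds, $\ConjTest$ rejects with probability at most $O(\xi/\eps)+O(\kappa\log(1/\eps)/\eps)$. Recalling $\kappa=\eps^2$ and the definition of $\xi$ in \Cref{eq:xi}, both terms are $o(1)$ and the total is at most $1/80$. The only mildly delicate point is the use of non-adaptivity of $\ConjTest$ in part~(1) of \Cref{thm: conjtest: accept weaker assumption samples} to conclude that the set of membership queries is determined before we inspect whether $\MQ'$ answers correctly; this is exactly what that lemma was stated to provide.
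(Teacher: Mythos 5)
Your proof is correct and follows essentially the same route as the paper's: both condition on every $\SAMP'$ sample lying in $(C^\ast)^{-1}(1)$ (union bound over $O_\eps(1)$ calls, failure $O(\xi/\eps)$), then use the robustness lemma to restrict attention to membership queries at points $\alpha$ with $\alpha_\sV=u$, observe that $h(\alpha\circ w)=\fh(w)$ for such $\alpha$, and bound the failure probability of $\MQ'$ via $\rd(\fh,\fJ)\le\kappa$ and a union bound over all $\SAMP^\ast$ draws inside $\MQGamma$. The only cosmetic difference is that you track the two union bounds slightly more carefully (getting $O(\xi/\eps)$ rather than $O(\xi\log(1/\eps)/\eps)$ for the sampling step), but this does not change the argument.
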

\begin{proof}
    During this phase, $\ConjTest$ will make $O(1/\epsilon)$ many calls to $\MQ'$ and $\SAMP'$ (see \Cref{thm:a}). Since we're using $\MQGamma$ and $\SampleSub$ to simulate the oracles, running $\ConjTest$ will use $O(\log(1/\epsilon)/\epsilon)$ many samples from $\SAMP(h)$ and $\SAMP^*$. 

    Hence, with failure probability $O(  \xi \cdot \log(1/\epsilon) /\epsilon) \leq 1/160$, every sample  $\bw \sim \SAMP(h)$ satisfies $\bw_\sV=u$. 
    When this happens, we have that $\SAMP^*(\Gamma)=\SampleSub(\SAMP(h), \sU)$ only returns strings in $(C^*)^{-1}(1)$.  Hence by \Cref{thm: conjtest: accept weaker assumption samples} $\ConjTest$ only queries points $\alpha$ with $C^*(\alpha)=1,$ and if $\MQ'$ answers $1$ on every of these points, $\ConjTest$ accepts. 
    
    Now, note that for any $\alpha$ with $C^*(\alpha)=1$, we have that $\MQ'$ returns $1$ unless there is some $\bw \sim \SAMP^*$ (which is the same as $\SAMP(\fJ)$) such that $h(\alpha \circ \bw)=0$.  Since $\alpha_\sV=u$, this is equivalent to $\fh(\bw)=0$ and happens with probability $O(\kappa)$. Hence, by union bound over all at most $O(\log(1/\epsilon)/\epsilon)$ samples $\bw \sim \SAMP^*$, with probability at least $1-1/160$ every query to $\MQ'$ is answered with $1$, and $\ConjTest$ accepts. 

    Hence, $\ConjTest$ accepts with probability at least $1-1/80$. 
\end{proof}

Finally, we can prove \Cref{thm: ConsCheck: yes case}:
\begin{proof}[Proof of \Cref{thm: ConsCheck: yes case}]
    By \Cref{lem: ConsCheck good 1-8}, $\ConsCheck$ rejects on lines~1--6 with probability at most $1/80$, and by \Cref{lem: conj-test accept whp} $\ConsCheck$ rejects on line~8 with probability at most $1/80$. So by a union bound, $\ConsCheck$ rejects with probability at most $1/40.$
\end{proof}

\subsection{The no case: Proof of \Cref{thm: ConsCheck: no case}}

Recall that in the no case, 
  $h$ is $\eps$-far from every $(r,\mu)$-factored DNF in 
  relative distance and $(\MQ^*,\SAMP^*)$ are $\kappa$-accurate oracle simulators for some function $\fJ:\{0,1\}^{\sR}\rightarrow \{0,1\}$ with $\rd(\fJ,\calD)\le \kappa$ for some $r$-term, $\mu$-junta DNF $\fD$\footnote{Recall that this means $\fD$ depends on at most $\mu$ variables.}.
Given that $\kappa=\eps^2$ is much smaller than the number of calls made by $\ConsCheck$ on $\MQ^*$ and $\SAMP^*$ (\Cref{lem: complexity of ConsCheck}), we may assume without loss of generality that they are perfect simulators for $\MQ(\fJ)$ and $\SAMP(\fJ)$, and prove under this assumption that $\ConsCheck$ rejects with probability at least $0.96$.

To analyze $\ConsCheck$, we introduce a new function $\gamma:\zo^n \to \zo$ defined as
\begin{equation} \label{eq:gammadef}
\gamma(z)=\begin{cases} 1 &
        \text{if $\Gamma(z_\sU)=1$ and $\fJ(z_{\sR})=1$} \\
        0 &\text{otherwise}
    \end{cases}
\end{equation}
where $\Gamma:\{0,1\}^{\overline{\sR}}\rightarrow \{0,1\}$ was defined at the beginning of Phase 2 in $\ConsCheck$.

We now enter into the formal proof. We first show that if $h$ has ``many'' satisfying assignments that are not satisfying assignments of $\gamma$, then $\ConsCheck$ rejects with high probability:
\begin{lemma}\label{lem: constcheck line 6}
    Suppose that $$|h^{-1}(1) \setminus \gamma^{-1}(1)| \geq \frac{\epsilon {|h^{-1}(1)|}}{20}. $$ Then $\ConsCheck$ rejects on lines 1--4 with probability at least $99/100$.
\end{lemma}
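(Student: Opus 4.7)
The plan is to lower bound the per-iteration rejection probability and then amplify by the $O(1/\eps)$ repetitions.

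First, by the hypothesis, a draw $\bz\sim\SAMP(h)$ satisfies $\gamma(\bz)=0$ with probability at least $\eps/20$. By the definition of $\gamma$ in \Cref{eq:gammadef}, such a $\bz$ falls into (at least) one of two cases: either (a) $\fJ(\bz_\sR)=0$, or (b) $\Gamma(\bz_\sU)=0$. Since we are assuming (without loss of generality, as discussed before the lemma) that $\MQ^*$ and $\SAMP^*$ perfectly simulate $\MQ(\fJ)$ and $\SAMP(\fJ)$, in case (a) the check on line~2 will detect $\fJ(\bz_\sR)=0$ and reject immediately.

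In case (b), by the definition of $\Gamma$ at the start of Phase~2, we have $\Pr_{\bv\sim\fJ^{-1}(1)}[h(\bz_\sU\circ\bv)=1]<0.9$. Hence, for each of the $100$ independent draws $\bv\sim\SAMP^*$ on line~3, we have $h(\bz_\sU\circ\bv)=0$ with probability more than $0.1$, independently. So the probability that \emph{every} one of the $100$ samples satisfies $h(\bz_\sU\circ\bv)=1$ is at most $0.9^{100}\ll 10^{-4}$; with probability at least $1-0.9^{100}$, at least one $\bv$ witnesses $h(\bz_\sU\circ\bv)=0$ and line~4 rejects.

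Combining the two cases, conditioned on $\gamma(\bz)=0$, a single iteration of lines 2--4 rejects with probability at least $1-0.9^{100}\ge 0.99$. Hence, in any single iteration of the loop, the probability of rejecting is at least $(\eps/20)\cdot 0.99 \ge \eps/21$. Repeating $\Theta(1/\eps)$ times with a sufficiently large constant, the probability of not rejecting in any iteration is at most $(1-\eps/21)^{\Theta(1/\eps)}\le 1/100$, which proves the lemma.

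The argument is essentially routine given the structure already in place; there is no serious obstacle. The one subtlety worth flagging is that we are relying on the assumption (made at the start of the no-case analysis) that $\MQ^*,\SAMP^*$ can be treated as perfect simulators for $\fJ$, since the total number of calls made in lines 1--4 is $O(1/\eps)\ll 1/\kappa$. This lets case (a) trigger an immediate rejection on line~2 without any extra error terms.
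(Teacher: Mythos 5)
Your proof is correct and takes essentially the same approach as the paper's: both argue that a $\gamma(\bz)=0$ draw occurs with per-sample probability at least $\eps/20$, split by whether $\fJ(\bz_\sR)=0$ (immediate rejection on line~2) or $\Gamma(\bz_\sU)=0$ (rejection on line~4 with probability at least $1-0.9^{100}$), and amplify over the $O(1/\eps)$ repetitions. The only cosmetic difference is that you lower-bound the per-iteration rejection probability and then take a product, while the paper first bounds the probability of encountering at least one bad sample and then bounds the conditional rejection probability; these are interchangeable.
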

\begin{proof}
    Since $|h^{-1}(1) \setminus \gamma^{-1}(1)| \geq \epsilon {|h^{-1}(1)|}/20$, 
whenever we sample $\bz \sim h^{-1}(1)$ on line $2$, with probability at least $\epsilon/20$ we have $\gamma(\bz)=0$. Hence with failure probability at most $(1-\epsilon/20)^{{O(1/\epsilon)}} \leq 1/200$, we get such a sample $\bz^*$ during some iteration of lines 1--4. For this $\bz^*$, if $\fJ(\bz^*_{\sR})=0$ then the algorithm rejects on line $2$. Otherwise, if $\fJ(\bz^*_{\sR})=1$, recalling \Cref{eq:gammadef} we must have $\Gamma(\bz^*_{\sU})=0$, but in this case $\ConsCheck$ rejects on line $4$ with probability at least $\smash{1-(0.9)^{O(1)} \geq 1-1/200}$. Hence, the algorithm rejects $h$ during lines 1--4 with probability at least $1-1/100$. 
\end{proof}

Next, we show that if almost all satisfying assignments of $h$ are satisfying assignments of $\gamma$, then each satisfying assignment of $\Gamma^{-1}$ is ``fairly likely'' to occur as the $\sU$-portion of a uniform satisfying assignment of $h$:

\begin{lemma}\label{lem:almost_uniform_for_Gamma}
    Assume that $|h^{-1}(1) \setminus \gamma^{-1}(1)| \leq {\epsilon {|h^{-1}(1)|}}/{20}.$
    For any $\alpha\in \{0,1\}^{\sU}$ with $\alpha \in \Gamma^{-1}(1)$, %
$$
\Prx_{\bz \sim h^{-1}(1)}\big[\bz_\sU= \alpha\big] \geq \frac{1}{2{|\Gamma^{-1}(1)|}}.$$
\end{lemma}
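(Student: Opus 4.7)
The plan is to give a short counting argument that directly lower bounds the numerator and upper bounds the denominator of the probability $\Pr_{\bz \sim h^{-1}(1)}[\bz_\sU = \alpha]$ using the two ingredients we have: the definition of $\Gamma$ (which controls how many $w \in \fJ^{-1}(1)$ make $h(\alpha \circ w) = 1$) and the hypothesis on $|h^{-1}(1) \setminus \gamma^{-1}(1)|$ (which says $\gamma^{-1}(1)$ captures almost all of $h^{-1}(1)$).

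\medskip

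\noindent\textbf{Step 1: Lower bound the slice.} Fix $\alpha \in \Gamma^{-1}(1)$. By the definition of $\Gamma$,
\[
\Prx_{\bw \sim \fJ^{-1}(1)}\big[h(\alpha \circ \bw) = 1\big] \geq 0.9,
\]
so the number of strings $w \in \fJ^{-1}(1)$ with $h(\alpha \circ w) = 1$ is at least $0.9 |\fJ^{-1}(1)|$. Each such $w$ yields a distinct $z = \alpha \circ w \in h^{-1}(1)$ with $z_\sU = \alpha$, hence
\[
\big|\{z \in h^{-1}(1) : z_\sU = \alpha\}\big| \geq 0.9 \cdot |\fJ^{-1}(1)|.
\]

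\medskip

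\noindent\textbf{Step 2: Upper bound $|h^{-1}(1)|$.} Since $\gamma(z) = 1$ iff $\Gamma(z_\sU) = 1$ and $\fJ(z_\sR) = 1$, and these two conditions depend on disjoint coordinate blocks $\sU$ and $\sR$, we have $|\gamma^{-1}(1)| = |\Gamma^{-1}(1)| \cdot |\fJ^{-1}(1)|$. Using the hypothesis of the lemma,
\[
|h^{-1}(1)| = |h^{-1}(1) \cap \gamma^{-1}(1)| + |h^{-1}(1) \setminus \gamma^{-1}(1)| \leq |\gamma^{-1}(1)| + \frac{\epsilon}{20}|h^{-1}(1)|,
\]
which rearranges to $|h^{-1}(1)| \leq |\Gamma^{-1}(1)| \cdot |\fJ^{-1}(1)|/(1 - \epsilon/20)$.

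\medskip

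\noindent\textbf{Step 3: Combine.} Putting Steps 1 and 2 together,
\[
\Prx_{\bz \sim h^{-1}(1)}[\bz_\sU = \alpha] \geq \frac{0.9 |\fJ^{-1}(1)|}{|h^{-1}(1)|} \geq \frac{0.9 (1 - \epsilon/20)}{|\Gamma^{-1}(1)|} \geq \frac{1}{2|\Gamma^{-1}(1)|},
\]
where the last inequality uses $\epsilon \leq 1/2$ (or more generously, $\epsilon \leq 1$), so that $0.9 (1 - \epsilon/20) \geq 1/2$. This is the desired bound.

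\medskip

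There is essentially no obstacle here; the argument is a short two-line counting exercise once one observes the key structural fact that $\gamma = \Gamma \wedge \fJ$ factors over disjoint variable sets $\sU$ and $\sR$, so that $|\gamma^{-1}(1)|$ is an exact product. The role this lemma plays downstream is to verify that the distribution generated by $\SAMP' = \SampleSub(\SAMP(h), \sU)$, when restricted to $\Gamma^{-1}(1)$, assigns every point in $\Gamma^{-1}(1)$ probability at least $\Omega(1/|\Gamma^{-1}(1)|)$, which is exactly the hypothesis needed to apply \Cref{thm: conjtest: reject weaker assumption} to $\Gamma$ in the no-case analysis of Phase 2.
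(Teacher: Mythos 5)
Your proof is correct and takes essentially the same route as the paper: lower-bound the $\alpha$-slice of $h^{-1}(1)$ by $0.9\,|\fJ^{-1}(1)|$ via the definition of $\Gamma$, use the product structure $|\gamma^{-1}(1)| = |\Gamma^{-1}(1)|\cdot|\fJ^{-1}(1)|$ together with the hypothesis to upper-bound $|h^{-1}(1)|$, and combine. (Your write-up also silently corrects a typo in the paper's proof, which writes $|\gamma^{-1}(1)|=|h^{-1}(1)|\cdot|\Gamma^{-1}(1)|$ where $|\fJ^{-1}(1)|$ is clearly intended.)
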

\begin{proof}
    By assumption we have that $|\gamma^{-1}(1)| \geq {|h^{-1}(1)|}(1-\frac{\epsilon}{20})$, and since $|\gamma^{-1}(1)|={|h^{-1}(1)|}\cdot {|\Gamma^{-1}(1)|}$ we have $1.1 \cdot  {|\fJ^{-1}(1)|}\cdot {|\Gamma^{-1}(1)|} \geq {|h^{-1}(1)|}$. So consider any $\alpha \in \zo^\sU$ with $\Gamma(\alpha)=1$. By the definition of $\Gamma$ we have
    $$|\{ z \in \zo^n \mid \fJ(z_{\sR})=1 \land z_\sU= \alpha \} \cap h^{-1}(1)| \geq 0.9 {|\fJ^{-1}(1)|}.$$
   This implies that
     $$\Prx_{\bz \sim h^{-1}(1)}[\bz_\sU = \alpha] \geq  \frac{0.9{|\fJ^{-1}(1)|}}{{|h^{-1}(1)|}} \geq \frac{0.9}{1.1 {|\Gamma^{-1}(1)|}} \geq \frac{1}{2{|\Gamma^{-1}(1)|}}.\qedhere$$
\end{proof}

Next, we show that even if almost all satisfying assignments of $h$ are also satisfying assignments of $\gamma$, if there are ``many'' satisfying assignments of $\gamma$ that are not satisfying assignments of $h$ then $\ConsCheck$ is likely to reject:
\begin{lemma}\label{lem: constcheck line 8}
    Assume that $\ConsCheck$ reaches line $5$ with $|h^{-1}(1) \setminus \gamma^{-1}(1)| \leq {\epsilon {|h^{-1}(1)|}}/{20}.$ If  $$|\gamma^{-1}(1) \setminus h^{-1}(1)| \geq \frac{\epsilon {|h^{-1}(1)|}}{20},$$  then $h$ gets rejected on line $6$ with probability at least $99/100$.
\end{lemma}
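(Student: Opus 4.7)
The plan is to show that each iteration of lines~5--6 rejects with probability $\Omega(\eps)$, so that by standard amplification over the $O(1/\eps)$ independent iterations the overall rejection probability exceeds $99/100$.

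First I would analyze the joint distribution of the pair $(\bz_{\sU}, \bv)$ generated on line~5, where $\bz \sim h^{-1}(1)$ and $\bv \sim \SAMP^\ast = \SAMP(\fJ)$ (we may assume $\SAMP^\ast$ is perfect here, since $\kappa = \eps^2$ is much smaller than $1/\eps$). By \Cref{lem:almost_uniform_for_Gamma}, every $\alpha \in \Gamma^{-1}(1)$ satisfies $\Pr[\bz_{\sU} = \alpha] \geq 1/(2|\Gamma^{-1}(1)|)$, and $\bv$ is independent of $\bz$ and uniform on $\fJ^{-1}(1)$. Therefore for every $(\alpha,\beta) \in \Gamma^{-1}(1) \times \fJ^{-1}(1)$,
\[
\Prx[(\bz_\sU,\bv) = (\alpha,\beta)] \;\geq\; \frac{1}{2\,|\Gamma^{-1}(1)|\cdot|\fJ^{-1}(1)|} \;=\; \frac{1}{2\,|\gamma^{-1}(1)|},
\]
using $|\gamma^{-1}(1)| = |\Gamma^{-1}(1)|\cdot|\fJ^{-1}(1)|$ from the definition \eqref{eq:gammadef}.

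Next I would observe that the pairs $(\alpha,\beta)$ for which line~6 rejects are exactly the pairs in $\Gamma^{-1}(1) \times \fJ^{-1}(1)$ with $h(\alpha\circ\beta)=0$, and these are in bijection with $\gamma^{-1}(1)\setminus h^{-1}(1)$. Summing the lower bound above over these pairs gives
\[
\Prx\big[h(\bz_\sU\circ\bv)=0\big] \;\geq\; \frac{|\gamma^{-1}(1)\setminus h^{-1}(1)|}{2\,|\gamma^{-1}(1)|}.
\]

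The key remaining step is to show that $|\gamma^{-1}(1)|$ is not much larger than $|h^{-1}(1)|$. This follows directly from the definition of $\Gamma$: for each $\alpha \in \Gamma^{-1}(1)$, at least $0.9|\fJ^{-1}(1)|$ of the points $\alpha\circ\beta$ with $\beta \in \fJ^{-1}(1)$ satisfy $h$. Hence $|\gamma^{-1}(1)\cap h^{-1}(1)| \geq 0.9|\gamma^{-1}(1)|$, which gives $|\gamma^{-1}(1)| \leq (10/9)|h^{-1}(1)| \leq 1.2\,|h^{-1}(1)|$.

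Combining the two bounds with the hypothesis $|\gamma^{-1}(1)\setminus h^{-1}(1)|\geq \eps|h^{-1}(1)|/20$ gives
\[
\Prx\big[h(\bz_\sU\circ\bv)=0\big] \;\geq\; \frac{\eps\,|h^{-1}(1)|/20}{2\cdot 1.2\,|h^{-1}(1)|} \;=\; \Omega(\eps).
\]
Thus each of the $O(1/\eps)$ independent iterations on lines~5--6 rejects with probability at least a fixed constant multiple of $\eps$, so by choosing the hidden constant in the $O(1/\eps)$ repetitions sufficiently large, the probability that \emph{none} of them triggers a rejection is at most $(1-\Omega(\eps))^{O(1/\eps)} \leq 1/100$. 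I don't expect any serious obstacle here; the only subtle point is the bound $|\gamma^{-1}(1)|\leq 1.2|h^{-1}(1)|$, which is what makes the factor of $1/(2|\gamma^{-1}(1)|)$ in the pointwise lower bound translate into an $\Omega(\eps)$ bound on the rejection probability, and this bound is an immediate consequence of the $0.9$ threshold built into the definition of $\Gamma$.
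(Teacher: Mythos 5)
Your proof is correct and follows essentially the same approach as the paper: it uses \Cref{lem:almost_uniform_for_Gamma} to lower-bound the probability that the algorithm's joint draw $(\bz_\sU,\bv)$ hits any given point of $\gamma^{-1}(1)$, and then sums over $\gamma^{-1}(1)\setminus h^{-1}(1)$. The one place where you diverge is in how you convert the hypothesis $|\gamma^{-1}(1)\setminus h^{-1}(1)|\ge \eps|h^{-1}(1)|/20$ into an $\Omega(\eps)$ per-iteration rejection probability: the paper does a short case analysis on whether $|\gamma^{-1}(1)|\ge 2|h^{-1}(1)|$, whereas you observe directly from the $0.9$ threshold in the definition of $\Gamma$ that $|\gamma^{-1}(1)\cap h^{-1}(1)|\ge 0.9|\gamma^{-1}(1)|$ and hence $|\gamma^{-1}(1)|\le (10/9)|h^{-1}(1)|$, eliminating the case split. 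This is a modest but genuine simplification of that step; both routes yield the same conclusion.
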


\begin{proof}

Assume that $|\gamma^{-1}(1) \setminus h^{-1}(1)| \geq \epsilon {|h^{-1}(1)|}/20$. This can be rewritten as
$$\Prx_{\bz \sim \gamma^{-1}(1)}[h(\bz)=0] \geq \frac{\epsilon {|h^{-1}(1)|}}{20|\gamma^{-1}(1)|}.$$
If $|\gamma^{-1}(1)| \geq 2{|h^{-1}(1)|},$ then we trivially have $\Prx_{\bz \sim \gamma^{-1}(1)}[h(\bz)=0] \geq 1/2 \geq \epsilon/40.$ %
Otherwise,
$$\Prx_{\bz \sim \gamma^{-1}(1)}[h(\bz)=0] \geq \frac{\epsilon  |\gamma^{-1}(1)|} {40  |\gamma^{-1}(1)|} = \epsilon/40,$$
so in both cases we have
$\Prx_{\bz \sim \gamma^{-1}(1)}[h(\bz)=0] \geq \epsilon/40.$

   Now, observe that the following process generates a uniformly random sample $\bw \sim \gamma^{-1}(1)$: 
   Independently sample $\boldsymbol{\alpha} \sim \Gamma^{-1}(1)$ and $\bv \sim \fJ^{-1}(1)$, and return $\bw=\boldsymbol{\alpha} \circ \bv$.
   Hence, by \Cref{lem:almost_uniform_for_Gamma} we can conclude that for any fixed $w \in \gamma^{-1}(1),$ the following process returns $w$ with probability at least $1/(2|\gamma^{-1}(1)|)$: independently sample $\bz \sim h^{-1}(1)$ and  $\bv \sim \fJ^{-1}(1)$, and return $\bw=\bz_\sU \circ \bv$. 
   It follows that
each time line~$8$ is executed, we have that $\Pr[h(\bz_\sU \circ \bv)=0] \geq \epsilon/80$. Hence, the algorithm rejects on line~8 with probability at least  $$1 - (1-\epsilon/80)^{O(1/\epsilon)} \geq 1 - 1/100. \qedhere$$
\end{proof}

The following lemma, roughly speaking, tells us that if $h$ reaches line $7$ with high probability then $\Gamma$ must be far from every conjunction (which is later used to show that $\ConjTest$ rejects on line 8 with high probability): 
\begin{lemma}\label{lem: Gamma far from conj}
    Assume that $\rd(h,g)>\epsilon$ for every $(r,\mu)$-factored-DNF $g$, and that moreover the following conditions all hold:
    \begin{itemize}
        \item [(a)] $\rd(\fJ,\fD)\leq \epsilon/10$ for some $r$-term, $\mu$-junta DNF $\fD$. 
        \item [(b)] $|h^{-1}(1) \setminus \gamma^{-1}(1)| \leq {\epsilon {|h^{-1}(1)|}}/{20}$.
        \item [(c)] $|\gamma^{-1}(1) \setminus h^{-1}(1)| \leq {\epsilon {|h^{-1}(1)|}}/{20}$.
    \end{itemize}
    Then $\rd(\Gamma,C)>\epsilon/10$ for every conjunction $C$.
\end{lemma}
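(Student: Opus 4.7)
The plan is to argue by contrapositive: assuming there is a conjunction $C:\zo^\sU\to\zo$ with $\rd(\Gamma,C)\le \eps/10$, I will exhibit an $(r,\mu)$-factored-DNF $g:\zo^n\to\zo$ with $\rd(h,g)<\eps$, contradicting the hypothesis that $h$ is $\eps$-far from every such factored-DNF. Let $\fD:\zo^\sR\to\zo$ be the $r$-term, $\mu$-junta DNF provided by hypothesis (a), so $\rd(\fJ,\fD)\le\eps/10$, and define
\[
g(x) \;:=\; C(x_\sU) \,\wedge\, \fD(x_\sR).
\]
Since $\sU=\overline{\sR}$ and $\fD$ depends only on variables in $\sR$, the variables of the head $C$ are disjoint from the at most $\mu$ variables of the tail $\fD$, so $g$ is indeed an $(r,\mu)$-factored-DNF (head $C$, tail the at most $r$ terms of $\fD$).

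To bound $\rd(h,g)$, I would introduce two intermediate functions on $\zo^n$: recall $\gamma(x)=\Gamma(x_\sU)\wedge \fJ(x_\sR)$ from \Cref{eq:gammadef}, and set $\gamma'(x):=C(x_\sU)\wedge \fJ(x_\sR)$. The key observation is that each of $\gamma,\gamma',g$ has a satisfying set of product form in $\zo^\sU\times\zo^\sR=\zo^n$; for example $\gamma^{-1}(1)=\Gamma^{-1}(1)\times \fJ^{-1}(1)$. A short calculation from this product structure yields $\rd(\gamma,\gamma')=\rd(\Gamma,C)$ and $\rd(\gamma',g)=\rd(\fJ,\fD)$, because the symmetric difference also factors and the ``unchanged'' factor ($\fJ^{-1}(1)$, resp.\ $C^{-1}(1)$) cancels with the corresponding factor of $|\gamma^{-1}(1)|$ (resp.\ $|{\gamma'}^{-1}(1)|$). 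Thus both of these relative distances are at most $\eps/10$. Meanwhile hypotheses (b) and (c) give directly
\[
\rd(h,\gamma) \;=\; \frac{|h^{-1}(1)\triangle\gamma^{-1}(1)|}{|h^{-1}(1)|} \;\le\; \frac{\eps}{10}.
\]

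Chaining the three bounds $\rd(h,\gamma),\rd(\gamma,\gamma'),\rd(\gamma',g)\le \eps/10$ by applying the approximate triangle inequality (\Cref{lem: approx triangle ineq}) twice yields
\[
\rd(h,g) \;\le\; \tfrac{\eps}{10} \,+\, \bigl(1+\tfrac{\eps}{10}\bigr)\Bigl(\tfrac{\eps}{10}+\bigl(1+\tfrac{\eps}{10}\bigr)\tfrac{\eps}{10}\Bigr) \;<\; \eps,
\]
where the last inequality uses that we may assume $\eps$ is at most a small absolute constant (\Cref{rem:eps-small}). This contradicts $\rd(h,g')>\eps$ for every $(r,\mu)$-factored-DNF $g'$, and completes the proof. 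There is no substantive obstacle: essentially all the content is in recognizing that the product structure of $\gamma,\gamma',g$ makes the two ``factor-replacement'' relative distances exactly $\rd(\Gamma,C)$ and $\rd(\fJ,\fD)$, so no slack is introduced at those steps and the constants in (a)--(c) are comfortably tight enough to finish via the triangle inequality.
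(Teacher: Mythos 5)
Your proof is correct and takes essentially the same route as the paper: both arguments proceed by contrapositive, assume a conjunction $C$ with $\rd(\Gamma,C)\le\eps/10$, define $g(x)=C(x_\sU)\wedge\fD(x_\sR)$, and chain $h$ to $g$ through two intermediate functions via the approximate triangle inequality. The only cosmetic difference is the order of the two factor swaps: you pass through $\gamma'=C\wedge\fJ$ (swapping the head first), while the paper passes through $\psi=\Gamma\wedge\fD$ (swapping the tail first); in both cases the product structure of the satisfying sets makes each swap contribute exactly $\rd(\Gamma,C)$ or $\rd(\fJ,\fD)$, so the two derivations are interchangeable.
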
 

\begin{proof}
    Assume for contradiction that  $\rd(\Gamma,C)\leq \epsilon/10$
  for some conjunction $C$ over $\{0,1\}^{\sU}$.
We argue, using conditions (a)---(c), that $\rd(h,g)< \epsilon$ for some $(r,\mu)$-factored-DNF $g$, giving the desired contradiction with the lemma's assumption that $\rd(h,g)>\epsilon.$

First, using items (b) and (c), we have $\rd(h,\gamma)\leq \epsilon/10$. 

The proof proceeds in two steps, defining two Boolean functions $\psi$ and $g$ over $\{0,1\}^n$, where $g$ will be a $k$-term DNF.
    We will prove the following inequalities: 
$$ \text{(i)~} \rd(\gamma,\psi)\le \epsilon/10, \quad\text{and}\quad \text{(ii)~} \rd(\psi,g)\le \epsilon/10 .$$
Combining with $\rd({h},\gamma)\le \epsilon/10 $, it follows from  
  the approximate triangle inequality  for relative distance (\Cref{lem: approx triangle ineq})
  that  $\rd({h},g)<\epsilon$; so it remains to establish (i) and (ii).
    
We first establish (i).  By condition (a), we have $\rd(\fJ,\fD)\le \epsilon/10 $ for some $r$-term, $\mu$-junta DNF $\fD$. Our first function
$\psi$ over $\{0,1\}^n$ is defined as %
    $$\psi(z) :=\Gamma(z_\sU) \wedge \fD(z_{\sR}).$$    
Observe that for any $\alpha \in \{0,1\}^{\sU}$, if $\Gamma(\alpha)=0$ then $\gamma{\upharpoonleft_\alpha} \equiv \psi{\upharpoonleft_\alpha} \equiv 0$. And if $\Gamma(\alpha)=1$, then $\gamma{\upharpoonleft_\alpha} \equiv \fJ$ while $\psi{\upharpoonleft_\alpha} \equiv \fD$, in which case ${\rd(\gamma{\upharpoonleft_\alpha}, \psi{\upharpoonleft_\alpha} )}=\rd(\fJ, \fD) \leq \epsilon/10 $. We thus have:
     \begin{align*}
        \rd(\gamma, \psi)&=\frac{1}{|\gamma^{-1}(1)|} \sum_{\alpha: \Gamma(\alpha)=1} \left| \big\{w \in \{0,1\}^{\sR}:\gamma(\alpha\circ w) \neq \psi(\alpha\circ w)\big\}\right| \\
        &= \frac{1}{|\gamma^{-1}(1)|} \sum_{\alpha:\Gamma(\alpha)=1} \rd(\gamma{\upharpoonleft_\alpha}, \psi{\upharpoonleft_\alpha}) \cdot \big|(\gamma{\upharpoonleft_\alpha)^{-1}}(1)\big| \\
        &\leq \frac{{\epsilon}}{10 }\cdot \frac{1}{|\gamma^{-1}(1)|} \sum_{\alpha:\Gamma(\alpha)=1} \big|\gamma{\upharpoonleft_\alpha^{-1}}(1)\big|.
    \end{align*}
Since $|\gamma^{-1}(1)|=\sum_{\alpha }|(\gamma{\upharpoonleft_\alpha)^{-1}}(1)|$, we get that $\rd(\gamma, \psi) \leq \epsilon/10$, giving (i).

Turning to (ii), first recall that $C$ is a conjunction over $\{0,1\}^{\sU}$ that satisfies
  $\rd(\Gamma,C)\le \epsilon/10$.
The last function $g:\{0,1\}^n\rightarrow \{0,1\}$ is defined  as
    $$g(z):=C(z_\sU) \wedge\fD(z_{\sR})$$
    (i.e.~we are swapping out $\Gamma(z_\sU)$ for $C(z_\sU)$ in the definition of $\psi$).
    Note that since $\fD$ is a $r$-term, $\mu$-junta DNF, $g$ is a $(r,\mu)$-factored-DNF.
    Below we bound $\rd(\psi,g)$.

    Observe that for each $\alpha \in \{0,1\}^{ \sU}$, if $\Gamma(\alpha)=C(\alpha)$ then $g{\upharpoonleft_\alpha} \equiv \psi{\upharpoonleft_\alpha}$. Otherwise, we have that~one of  $g{\upharpoonleft_\alpha}$ or $\psi{\upharpoonleft_\alpha}$ is $\fD$ while the other is the all-$0$ function. 
    So we have 
    \begin{align*}
        \rd(\psi, g)&=\frac{1}{\left|\psi^{-1}(1)\right|} \cdot \left|\big\{\alpha\in \{0,1\}^{\sU}: \Gamma(\alpha)\ne C(\alpha)\big\}\right|\cdot \big|\fD^{-1}(1)\big|.
    \end{align*}
Since $ 
         |\psi^{-1}(1) |
         = | \Gamma^{-1}(1) |\times  |\fD^{-1}(1) |,
    $  
    we get that
    \begin{align*}
          \rd(\psi, g)
          = \frac{ | \{\alpha\in \{0,1\}^{\sU}: \Gamma(\alpha)\ne C(\alpha) \} |}{|\Gamma^{-1}(1)|}= \rd(\Gamma,C)\le \frac{\epsilon}{10},
           \quad \text{giving (ii).}
    \end{align*}
    This finishes the proof of the lemma.
\end{proof}

By \Cref{thm:a} we have that when $\ConsCheck$ is run in Phase~2 with distance $\epsilon/10$, it will make $O(1/\epsilon)$ queries to $\MQ', \SAMP'$. Let $c_0/\epsilon$ be an upper bound on this number of queries. We require the following simple lemma establishing the correctness of $\MQ'$: 

\begin{lemma}\label{thm: no query mistake when far}
   For any $\alpha\in \{0,1\}^{\sU}$, 
    $\MQ'(\alpha)$ satisfies
    $$
    \Prx\big[\MQ'(\alpha) \text{ does not reject 
     and  returns $1-\Gamma(\alpha)$}\big]
    \le \frac{\epsilon}{100c_0}.$$
\end{lemma}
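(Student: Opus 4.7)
The plan is to split into two cases based on the value of $\Gamma(\alpha)$, and observe that by inspection of \MQGamma, the simulator $\MQ'$ has a crucial one-sided property: it either rejects outright or returns $1$; it never returns $0$.

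\textbf{Case 1:} $\Gamma(\alpha) = 1$. Here $1 - \Gamma(\alpha) = 0$, so the event ``$\MQ'(\alpha)$ does not reject and returns $0$'' is impossible by the observation above, giving probability $0$.

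\textbf{Case 2:} $\Gamma(\alpha) = 0$. Here we need to bound the probability that $\MQ'(\alpha)$ returns $1$. By the definition of $\Gamma$ in Phase 2 of \ConsCheck, we have
$$\Prx_{\bz \sim \fJ^{-1}(1)}\big[h(\alpha \circ \bz) = 1\big] < 0.9.$$
Recall that in this subsection we are assuming without loss of generality that $\SAMP^*$ is a perfect simulator for $\SAMP(\fJ)$, so each $\bw \sim \SAMP^*$ drawn inside \MQGamma\ is a uniform draw from $\fJ^{-1}(1)$ and these draws are mutually independent. Therefore the probability that $h(\alpha \circ \bw) = 1$ for all of the $O(\log(1/\eps))$ draws $\bw$ made by \MQGamma\ is at most $(0.9)^{O(\log(1/\eps))}$. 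By choosing the hidden constant in the $O(\log(1/\epsilon))$ bound (from the code of \MQGamma) sufficiently large in comparison to the absolute constant $c_0$, this bound can be made at most $\epsilon/(100 c_0)$, finishing this case.

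Combining the two cases yields the lemma. The only ``work'' is to ensure the constant inside the $O(\log(1/\eps))$ number of repetitions in \MQGamma\ is large enough; no substantial obstacle arises, since the one-sided nature of \MQGamma\ trivializes Case~1 and a standard independence argument (enabled by the assumption that $\SAMP^*$ perfectly simulates $\SAMP(\fJ)$) handles Case~2.
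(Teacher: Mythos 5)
Your proof is correct and follows essentially the same approach as the paper's: split on $\Gamma(\alpha)$, use the one-sided nature of \MQGamma\ to dispose of the $\Gamma(\alpha)=1$ case, and bound the $\Gamma(\alpha)=0$ case by $(0.9)^{O(\log(1/\eps))}$ using the definition of $\Gamma$ and independence of the draws from $\fJ^{-1}(1)$. The only difference is that you make the independence of the $\SAMP^*$ draws and the impossibility of returning $0$ a bit more explicit, which the paper leaves implicit.
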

\begin{proof}
   Recall we have $\MQ'\leftarrow \MQGamma(\cdot,\MQ(h),\SAMP^*, \epsilon)$, where $\SAMP^*$ is just $\SAMP(\fJ)$. We first suppose that $\Gamma(\alpha)=0$; this implies that $\Prx_{\bz \sim \fJ^{-1}(1)}[h(\alpha \circ {\bz})] \leq 0.9$. Consequently the probability that \Cref{algo:gamma_query} 
  returns $1$ without rejecting $h$  is at most
$(0.9)^{O(\log(1/\epsilon))}\le \frac{\epsilon}{100c_0}$ by picking the hidden constant large enough. The other possibility is that $\Gamma(\alpha)=1$; in this case 
  the algorithm either returns $1$ or rejects. 
\end{proof}

\begin{lemma}\label{lem: ConjTest rejects whp}
    Assume that $\rd(\Gamma, C) \geq \epsilon/10$ for every conjunction $C$ and that $|h^{-1}(1) \setminus \gamma^{-1}(1)| \leq {\epsilon {|h^{-1}(1)|}}/{20}$. Then if $\ConsCheck$ reaches line 7, the call to $\ConjTest$ on line 8 rejects with probability at least $49/50$.
\end{lemma}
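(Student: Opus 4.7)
The plan is to reduce to the guarantee of $\ConjTest$ in the no case given by \Cref{thm: conjtest: reject weaker assumption}, applied to the function $\Gamma : \zo^\sU \to \zo$, with the sampling oracle $\SAMP'$ and an idealized perfect membership oracle $\MQ(\Gamma)$; we then bridge from $\MQ(\Gamma)$ to the imperfect $\MQ'$ via a union bound.

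First I would verify that $\SAMP' = \SampleSub(\SAMP(h), \sU)$ meets the distributional lower-bound hypothesis of \Cref{thm: conjtest: reject weaker assumption}. Under the assumption $|h^{-1}(1) \setminus \gamma^{-1}(1)| \leq \eps |h^{-1}(1)|/20$, \Cref{lem:almost_uniform_for_Gamma} yields
$$\Prx_{\bz \sim h^{-1}(1)}\big[\bz_\sU = \alpha\big] \geq \frac{1}{2|\Gamma^{-1}(1)|} \geq \frac{1}{20|\Gamma^{-1}(1)|}$$
for every $\alpha \in \Gamma^{-1}(1)$. Combined with the other hypothesis $\rd(\Gamma, C) > \eps/10$ for every conjunction $C$, \Cref{thm: conjtest: reject weaker assumption} implies that if $\ConjTest$ were run on $\SAMP'$ together with the true oracle $\MQ(\Gamma)$ (and distance parameter $\eps/10$), it would reject with probability at least $99/100$.

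Next I would handle the discrepancy between $\MQ(\Gamma)$ and the simulator $\MQ' = \MQGamma(\cdot, \MQ(h), \SAMP^*)$. By \Cref{thm:a}, the $\ConjTest$ invocation on line 8 makes at most $c_0/\eps$ membership queries to $\MQ'$. By \Cref{thm: no query mistake when far}, each such call either rejects (which immediately halts $\ConsCheck$ with a rejection, the desired outcome) or returns a value that equals $\Gamma(\alpha)$ except with probability at most $\eps/(100 c_0)$. A union bound over all $c_0/\eps$ queries yields that, with probability at least $1 - 1/100$, every non-rejecting answer from $\MQ'$ agrees with $\MQ(\Gamma)$.

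Conditioned on this last event, the trace of $\ConjTest$ using $(\MQ', \SAMP')$ is identical to its trace using $(\MQ(\Gamma), \SAMP')$, so it rejects with probability at least $99/100$. A final union bound combines the two failure modes (a silent mistake by $\MQ'$, or $\ConjTest$ accepting despite all answers being correct) to give an overall rejection probability of at least $1 - 1/100 - 1/100 = 49/50$, as desired. The proof is essentially a routine composition; the only subtlety is to observe that a $\MQ'$-rejection is a \emph{favorable} event for the no case, so the coupling argument only has to worry about the case when $\MQ'$ silently returns a wrong bit.
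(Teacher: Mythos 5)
Your proposal is correct and follows essentially the same route as the paper's proof: invoke \Cref{lem:almost_uniform_for_Gamma} to verify the distributional hypothesis of \Cref{thm: conjtest: reject weaker assumption}, then use \Cref{thm: no query mistake when far} and a union bound over the at most $c_0/\eps$ membership queries to couple $\MQ'$ with the ideal $\MQ(\Gamma)$, combining the two $1/100$ failure probabilities. The paper states the coupling step more tersely, but your explicit observation that an $\MQ'$-rejection is a favorable event is exactly what makes the one-sided union bound legitimate.
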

\begin{proof}
    We set $\SAMP'=\SampleSub(\SAMP(h), \sU)$, so by \Cref{lem:almost_uniform_for_Gamma}, we have for each $\alpha \in \Gamma^{-1}(1)$, $$\Prx\big[ \SAMP' \text{ returns }\alpha\big] \geq \frac{1}{2|\Gamma^{-1}(1)|}.
    $$
If all the (at most $c_0/\epsilon$ many)  membership queries on $\MQ'$ made by $\ConjTest$ are answered correctly, then
  by \Cref{thm: conjtest: reject weaker assumption} $\ConjTest$ rejects $\Gamma$ 
  with probability at least $99/100$.
By \Cref{thm: no query mistake when far} and a union bound, with probability at least $99/100$, either $\MQ'$ decides to reject or returns the correct answer.
As a result, line 8 of \ConsCheck rejects with probaiblity at least $49/50$.
\end{proof}

Finally, we can put all the pieces together.
\begin{proof}[Proof of \Cref{thm: ConsCheck: no case}]
    If $|h^{-1}(1) \setminus \gamma^{-1}(1)| \geq \epsilon {|h^{-1}(1)|}/20$, then $\ConsCheck$  rejects during lines 1--4 with probability at least $99/100$ by \Cref{lem: constcheck line 6}, so we may suppose that $\ConsCheck$ reaches line~5 with $|h^{-1}(1) \setminus \gamma^{-1}(1)| \leq {\epsilon {|h^{-1}(1)|}}/{20}.$ In this case, if $|\gamma^{-1}(1) \setminus h^{-1}(1)| \geq \epsilon {|h^{-1}(1)|}/20$ then by \Cref{lem: constcheck line 8} $\ConsCheck$ rejects on line~$6$ with probability at least $99/100$.
So we may suppose that we have both $|h^{-1}(1) \setminus \gamma^{-1}(1)|,|h^{-1}(1) \setminus \gamma^{-1}(1)| \leq \epsilon {|h^{-1}(1)|}/20$.  Then, by \Cref{lem: Gamma far from conj}, we have that $\rd(\Gamma,C)>\epsilon/10$ for every conjunction $C$, and hence by \Cref{lem: ConjTest rejects whp}, $h$ is rejected on line $9$ with probability at least $49/50$. Combining all the failure probabilities, we get that $\ConsCheck$ rejects with probability at least $1 - (1/100 + 1/100 + 1/50)= 0.96.$
\end{proof}

\clearpage \ \ \  \vspace{0.5cm}
\begin{figure}[h!]
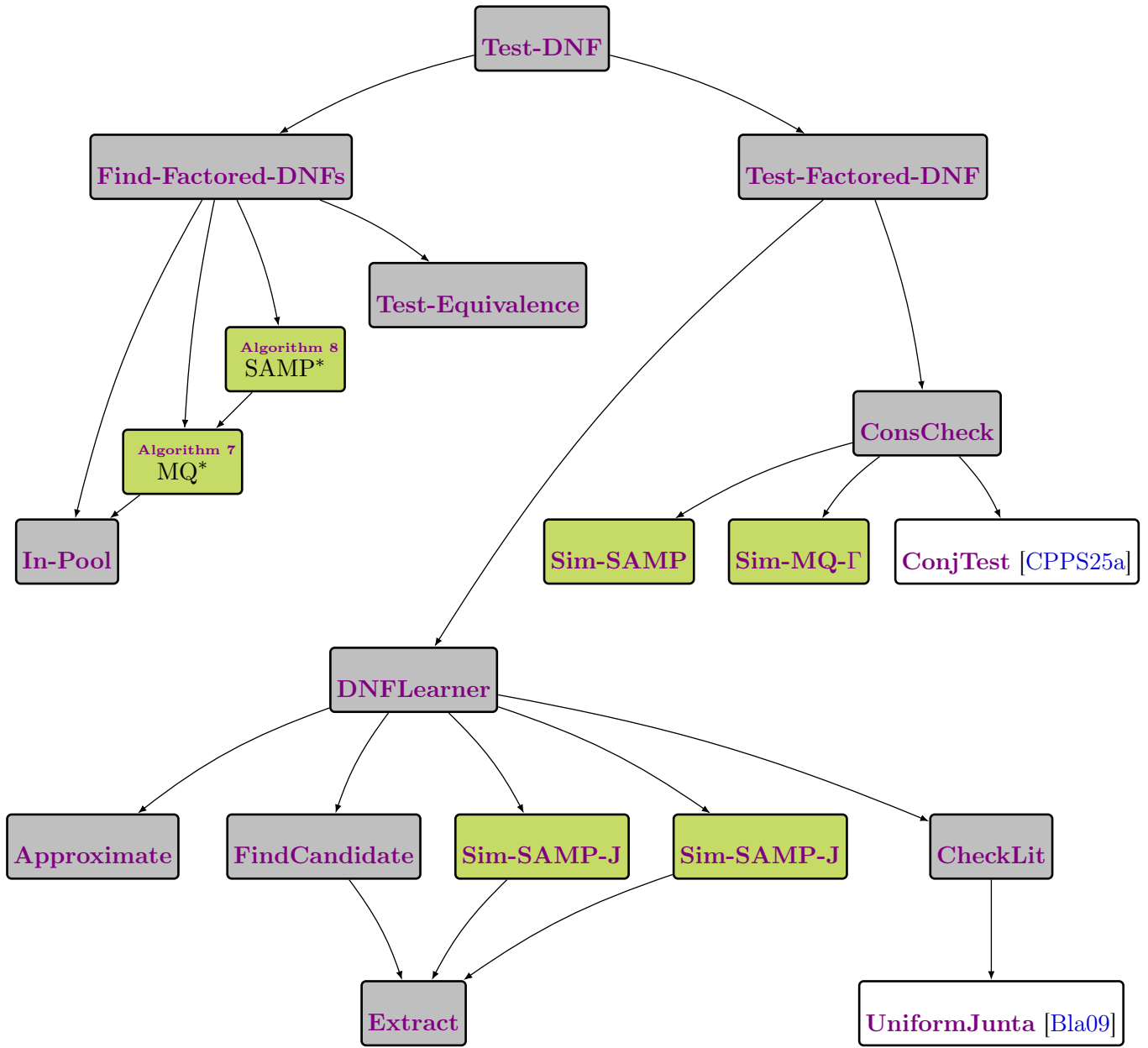

       \caption*{Figure 2: The structure of our algorithm $\testdnf$. An arrow from Algorithm A to Algorithm B means that A is calling B. Algorithms in green are used to simulate various $\SAMP$ and $\MQ$ oracles throughout $\testdnf$. Algorithms in white come from previous work in the literature. \vspace{-1cm}}
\begin{chart}
\reqhalfcoursec 40,60:{}{\maindnf}{}{lightgray}
\reqhalfcoursec 15,50:{}{\CreateOracles}{}{lightgray}
\reqhalfcoursec 65,50:{}{\testdnf}{}{lightgray}

\prereq 40,60,15,50:
\prereq 40,60,65,50:

\reqhalfcoursec 3,20:{}{\inpool}{}{lightgray}

\reqhalfcoursec 35,40:{}{\testeq}{}{lightgray}

\reqhalfcoursec 20,35:{}{\textcolor{black}{$\SAMP^*$}}{\protect\hyperlink{algo:simulator-for-samp}{\textbf{\color{violet}{Algorithm 8}}}}{SpringGreen}

\reqhalfcoursec 12,27:{}{\textcolor{black}{$\MQ^*$}}{\protect\hyperlink{algo:simulator-for-MQ}{\textbf{\color{violet}{Algorithm 7}}}}{SpringGreen}

\prereq 20,35,12,27:
\prereq 12,27,3,20:

\prereq 15,50,3,20:
\prereq 15,50,35,40:

\prereq 15,50,20,35:
\prereq 15,50,12,27:

\reqhalfcoursec 30,10:{}{\DNFLearner}{}{lightgray}
\reqhalfcoursec 70,30:{}{\ConsCheck}{}{lightgray}

\prereq 65,50,30,10:
\prereq 65,50,70,30:

\reqhalfcoursec 60,20:{}{\MQGamma}{}{SpringGreen}

\reqhalfcoursec 77,20:{}{\ConjTest\ {\color{black}\cite{rel-error-conj-DL}}}{}{white}
\reqhalfcoursec 46,20:{}{\SampleSub}{}{SpringGreen}

\prereq 70,30,60,20:
\prereq 70,30,77,20:
\prereq 70,30,46,20:

\reqhalfcoursec 5,-3:{}{\FRB}{}{lightgray}
\reqhalfcoursec 23,-3:{}{\TrimCan}{}{lightgray}
\reqhalfcoursec 40,-3:{}{\SimSAMPA}{}{SpringGreen}

\reqhalfcoursec 57,-3:{}{\SimSAMPA}{}{SpringGreen}
\reqhalfcoursec 75,-3:{}{\CheckLit}{}{lightgray}

\prereq 30,10,5,-3:
\prereq 30,10,23,-3:
\prereq 30,10,40,-3:
\prereq 30,10,57,-3:
\prereq 30,10,75,-3:

\reqhalfcoursec 30,-16:{}{\Extract}{}{lightgray}

\prereq 23,-3,30,-16:
\prereq 40,-3,30,-16:
\prereq 57,-3,30,-16:

\reqhalfcoursec 75,-16:{}{\UniformJunta\  {\color{black}\cite{Blaisstoc09}}}{}{white}

\prereq 75,-3,75,-16:
\end{chart}
\end{figure}\label{fig:diagram}

\clearpage
\begin{flushleft}
\bibliographystyle{alpha}
\bibliography{allrefs}
\end{flushleft}

\appendix


\section{A $n^{O(\log(s/\epsilon))}$-query relative-error tester for $s$-term DNFs}
\label{ap:DDS}

\begin{observation} \label{obs:DDS}
For $0<\eps \leq 1/2$, there is an $\eps$-relative-error testing algorithm for the class of $s$-term DNF formulas over $\{0,1\}^n$ which makes $n^{O(\log(s/\eps))}$ calls to the sampling oracle $\SAMP(f)$ and $O(1/\eps)$ calls to the membership query oracle $\MQ(f)$. 
\end{observation}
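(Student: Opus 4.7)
The plan is a learning-theoretic reduction: I would combine an Occam-style search, using only $\SAMP(f)$ queries, with a short verification step using $\MQ(f)$ queries to check one-sidedness. The core idea is that it suffices to find an $s$-term DNF $h$ such that (essentially) $h^{-1}(1) \subseteq f^{-1}(1)$ and $h$ covers all but an $O(\eps)$-fraction of $f^{-1}(1)$; such an $h$ witnesses that $\rd(f,h) \leq \eps$, and in the yes case $h = f$ works trivially.

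The structural observation that makes the sample bound $n^{O(\log(s/\eps))}$ achievable is that any $s$-term DNF $f$ is $(\eps/2)$-close in relative distance to its ``heavy'' sub-DNF $f^*$, obtained by discarding terms $T$ with $|T^{-1}(1)| < (\eps/(2s)) |f^{-1}(1)|$. The total mass of light terms in $f^{-1}(1)$ is at most $(\eps/2) |f^{-1}(1)|$ by a union bound. Since $|f^{-1}(1)| \geq 2^{n - w_{\min}}$, where $w_{\min}$ is the smallest term width in $f$, each heavy term has width at most $w_{\min} + \log(2s/\eps)$, so $f^*$ lies in a hypothesis class $\mathcal{H}$ of $s$-term DNFs whose terms are ``narrow'' relative to $w_{\min}$. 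The algorithm then draws $m = n^{O(\log(s/\eps))}$ positive examples from $\SAMP(f)$, enumerates over $h \in \mathcal{H}$ (iterating over guesses of $w_{\min}$) to find one that is satisfied by every sample, and verifies one-sidedness by drawing $O(1/\eps)$ uniform samples from the explicit set $h^{-1}(1)$ --- efficient because $h$ is a known $s$-term DNF --- and querying $\MQ(f)$ on them, rejecting if too many return $0$. Standard Occam/VC generalization then implies that any $h$ surviving both steps satisfies $\rd(f,h) \leq \eps$ with high probability.

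I expect the main obstacle to be the precise bound on the hypothesis class size. A naive enumeration of $s$-term DNFs with terms of width at most $w_{\min} + O(\log(s/\eps))$ gives a class of size $n^{O(s w_{\min})}$, which is far too large. Obtaining the advertised $n^{O(\log(s/\eps))}$ bound would likely require either (i) invoking a known positive-example learning result for $s$-term DNFs tailored to the relative-error setting, or (ii) switching to a richer-but-smaller hypothesis class, e.g., decision trees of depth $O(\log(s/\eps))$, which are known to $\eps$-approximate any $s$-term DNF and which form a concept class of effective size $n^{O(\log(s/\eps))}$ (after accounting for realizations by narrow structure). Once the hypothesis-class size is under control, the rest of the argument --- Occam generalization over positive examples, plus the $O(1/\eps)$ $\MQ(f)$-based one-sidedness check --- is standard.
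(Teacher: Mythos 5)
Your proposal correctly identifies the two ingredients a relative-error DNF tester needs --- (a) positive-example learning of $f$ to small relative error, and (b) a one-sidedness check via $O(1/\eps)$ membership queries --- and your structural observation that the light terms (those with $|T^{-1}(1)| < (\eps/2s)|f^{-1}(1)|$) can be discarded, leaving only terms of width at most $w_{\min} + O(\log(s/\eps))$, is sound and is indeed a key starting point. However, you yourself flag the genuine gap: bounding the effective hypothesis class. Naively, $s$-term DNFs with each term of width $\leq w_{\min} + O(\log(s/\eps))$ still form a class of size $n^{O(s \cdot w_{\min})}$ since $w_{\min}$ can be as large as $\Theta(n)$. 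Neither of your two proposed fixes resolves this cleanly. Fix (ii) (depth-$O(\log(s/\eps))$ decision trees) does not work: the classical statement that such decision trees $\eps$-approximate an $s$-term DNF is for \emph{absolute} error under the uniform distribution, and for sparse DNFs (large $w_{\min}$) that approximation is trivially the constant-$0$ function, which is useless in the relative-error regime. Fix (i) (invoking a positive-example learner tailored to relative error) is the right move, and it is what the paper does.

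The paper's proof is a short black-box reduction to the result of De, Diakonikolas, and Servedio~\cite{DDS15}, which shows that an unknown $s$-term DNF can be learned to $\eps$ relative error from $n^{O(\log(s/\eps))}$ uniform positive samples. The paper (1) runs that learner to get a hypothesis $h'$ (not necessarily an $s$-term DNF), (2) brute-force searches, query-free, over all $s$-term DNFs $h$ for one with $\max\{\rd(h,h'),\rd(h',h)\}$ small (rejecting if none exists), and (3) spends $O(1/\eps)$ oracle calls checking cross-consistency of samples from $f^{-1}(1)$ and $h^{-1}(1)$. The real technical content --- obtaining the $n^{O(\log(s/\eps))}$ bound --- is entirely inside DDS15 and is nontrivial; it involves exactly the kind of ``factor out a common conjunction from near-minimum-width terms'' structure that your heavy-term observation gestures toward, but carrying that through requires a careful partitioning of terms (in the spirit of the $K$-clustering developed in \Cref{sec:clustering}) and is well beyond the ``standard Occam generalization'' you hope to invoke. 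So: your outer loop is essentially correct and close in spirit to the paper's, but the central lemma you need --- a relative-error learner for $s$-term DNFs with $n^{O(\log(s/\eps))}$ sample complexity --- is left as a gap rather than filled, and the one alternative you offer that might fill it from first principles (decision trees) fails in this relative-error setting.
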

 
 \begin{proof}[Proof Sketch]
 
 In \cite{DDS15} De et al.~gave an algorithm that uses $n^{O(\log(s/\eps))}$  independent samples from $f^{-1}(1)$ to \emph{learn} any unknown $s$-term DNF formula $f$ over $\zo^n$ to $\eps$-accuracy in relative error (see Theorem~1.4 of \cite{DDS15}).  Such a learning algorithm easily yields a testing algorithm in our relative-error model as follows:
 
 \begin{enumerate}
     \item 
     First, run the learning algorithm with accuracy parameter $\eps/10000$ to obtain a hypothesis function $h': \zo^n \to \zo$.  (Note that this hypothesis $h'$ need not be an $s$-term DNF.) 

     \item Next, perform a brute-force search over all $s$-term DNF formulas $h$ to find the one that minimizes $\max\{\rd(h,h'),\rd(h',h)\}$.  If no $s$-term DNF $h$ has $\max\{\rd(h,h'),\rd(h',h)\}$ $ \leq \eps/1000$ then halt and reject, otherwise continue.

     \item Finally, draw $10/\eps$ random samples from $f^{-1}(1)$ (respectively $h^{-1}(1)$) and check that they are also satisfying assignments of $h$ (respectively $f$); if this is the case then accept, otherwise reject.

\end{enumerate}

It is clear that the number of calls to $\SAMP(f)$ and $\MQ(f)$ are as claimed; we briefly sketch the argument establishing correctness.
     
First, suppose that $f$ is indeed an $s$-term DNF formula. Then the learning algorithm will w.h.p.~(say, at least $99/100$) construct a hypothesis $h'$ such that $\rd(f,h') \leq \eps/10000.$  In step~2, the brute-force search will find an $s$-term DNF $h$ such that $\rd(h,h'),\rd(h',h) \leq \eps/1000$ (since $f$ itself is such an $s$-term DNF), so the procedure will continue to step~3.  Since $\rd(f,h') \leq \eps/10000$ and $\rd(h',h),\rd(h,h') \leq \eps/1000$, using \Cref{lem:approx-symetric,lem: approx triangle ineq} we get that  $\rd(f,h),\rd(h,f) \leq \eps/100$. This  implies that in step~3, there is at most a $1/10$ chance that any of the $10/\eps$ random samples from $f^{-1}(1)$ fails to satisfy $h$, and likewise with the roles of $f$ and $h$ reversed; so with probability at least $79/100$ the algorithm will accept.

Now, suppose that $f$ has $\rd(f,g)>\eps$ for every $s$-term DNF formula $g$.  If the procedure rejects in step~2 then we are done, so suppose that step~2 succeeds in finding an $s$-term DNF $h$ with $\rd(h,h'),\rd(h',h) \leq \eps/1000$.  Now, since  $h$ is an $s$-term DNF we must have that $\rd(f,h)>\eps$. 
 From this it is easy to show that with high probability (at least $2/3$), either some sample among the $10/\eps$ random samples from $f^{-1}(1)$ will fail to satisfy $h$ or some sample among the $10/\eps$ random samples from $h^{-1}(1)$ will fail to satisfy $f$. Hence the procedure rejects with probability at least $2/3$, and we are done.
 \end{proof}

\section{Distribution-free testing and relative-error testing are incomparable} \label{sec:incomparable}

In this section we record the fact that distribution-free testing and relative-error testing are incomparable models:  for each of these two models, there is a class of functions which is easy to test in the first model (i.e.~testable with only constantly many queries) but hard to test in the other one (i.e.~the query complexity must grow with the dimension $n$). The following observation makes this quantitatively precise:

\begin{observation} [Distribution-free testing and relative-error testing are incomparable]
\label{obs:incomparable}
~
\begin{enumerate}

    \item There is a class of functions ${\cal F}_1$ which can be $\eps$-tested in the relative-error model using only $O(1/\eps)$ samples and membership queries, but which requires $\tilde{\Omega}(n^{1/3})$ samples and membership queries for $\eps_0$-testing in the distribution-free model, for some absolute constant $\eps_0>0$.
    
    \item There is a class of functions ${\cal F}_2$ which can be $\eps$-tested in the distribution-free model, for any $\eps >2^{-n/3}$, using zero samples and membership queries, but which requires $\Omega(2^{n/3})$ samples and membership queries for $\eps$-testing in the relative-error model for any $\eps<1$.
\end{enumerate}
\end{observation}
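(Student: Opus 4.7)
My plan is to prove the two parts of the Observation by exhibiting explicit classes $\mathcal{F}_1$ and $\mathcal{F}_2$ and verifying the query complexity in each model. The two parts are essentially independent, so I would handle them separately.

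For Part 1, I would take $\mathcal{F}_1$ to be the class of (monotone) Boolean conjunctions over $\{0,1\}^n$. The upper bound of $O(1/\eps)$ queries for relative-error testing is immediate from the $\tilde{O}(1/\eps)$-query relative-error tester for Boolean conjunctions established in \cite{rel-error-conj-DL}. The distribution-free lower bound of $\tilde{\Omega}(n^{1/3})$ queries, for some absolute constant $\eps_0 > 0$, is exactly the lower bound for distribution-free testing of monotone conjunctions proved in \cite{GlasnerServedio:09toc} and \cite{CX16}, which holds even with both sample access to the unknown distribution and $\MQ(f)$ access. Since the paper already notes this contrast in \Cref{sec:related-work}, Part 1 is essentially a matter of formally assembling the two cited results.

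For Part 2, I would construct an artificial class $\mathcal{F}_2$ designed to exploit the structural asymmetry between the two models: relative-error distance is sensitive to small supports, whereas distribution-free distance is sensitive only to the unknown distribution $D$. A natural candidate is to take $\mathcal{F}_2$ to consist of all functions $g$ with $|g^{-1}(1)| \geq 2^{2n/3}$, augmented so that for every $D$ and every $f$, one can modify $f$ at $O(2^{2n/3})$ coordinates \emph{outside} $\mathrm{supp}(D)$ (when such room exists) to land inside $\mathcal{F}_2$. The upper bound (zero queries for $\eps > 2^{-n/3}$) would be proven by a short case analysis showing that for every $f$ and every $D$, there exists $g \in \mathcal{F}_2$ with $\Pr_{x \sim D}[f(x) \ne g(x)] \leq 2^{-n/3}$: one inserts enough extra $1$'s at low-$D$-mass coordinates to reach support size $\geq 2^{2n/3}$. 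The relative-error lower bound would be proven by a Yao's-minimax argument: compare the ``yes'' distribution concentrated on functions $f$ with $|f^{-1}(1)| = 2 \cdot 2^{2n/3}$ (random supports) against the ``no'' distribution on functions $f$ with $|f^{-1}(1)|$ so small that $\rd(f, \mathcal{F}_2) \geq 1 - o(1)$; a birthday-style argument then shows that $\SAMP(f)$ alone cannot distinguish these support sizes with fewer than $\Omega(\sqrt{2^{2n/3}}) = \Omega(2^{n/3})$ queries, and $\MQ(f)$ queries at random points fare even worse since they hit $f^{-1}(1)$ with negligible probability.

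The main obstacle is calibrating Part 2 so that both bounds hold simultaneously: the class must be ``rich'' enough that every function has a $(2^{-n/3})$-approximator in $D$-distance under \emph{every} (including adversarial) distribution $D$, yet ``structured'' enough that certain $f$ (those with small $|f^{-1}(1)|$) remain $\eps$-far in relative distance for every $\eps < 1$, and moreover that telling these far-$f$ apart from some $f' \in \mathcal{F}_2$ requires genuinely $\Omega(2^{n/3})$ oracle calls. If the direct construction above fails to satisfy the zero-query distribution-free bound under worst-case $D$ (e.g., $D$ with full support highly concentrated on $f^{-1}(1)$), the fix is to enlarge $\mathcal{F}_2$ further by including all ``patch'' functions obtained from members of the base class by modifying them on arbitrary small subsets — large enough for $(2^{-n/3})$-coverage under every $D$, but still leaving the distinguishing functions $f$ at relative distance close to $1$.
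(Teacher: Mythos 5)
Your Part~1 matches the paper exactly: $\mathcal{F}_1$ is the class of conjunctions, the upper bound is the \cite{rel-error-conj-DL} conjunction tester, and the lower bound is the distribution-free lower bound of \cite{GlasnerServedio:09toc,CX16}. Nothing to add there.

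For Part~2 your choice of $\mathcal{F}_2 = \{g : |g^{-1}(1)| \geq 2^{2n/3}\}$ and your birthday-paradox lower-bound sketch are both aligned with the paper (the paper simply cites Appendix~A of \cite{CDHLNSY25} for the lower bound). The gap is in your treatment of the distribution-free upper bound: you correctly identify the construction --- force $f$ to $1$ on low-$D$-mass points --- but then hedge and say the direct construction might fail under a worst-case $D$ ``with full support highly concentrated on $f^{-1}(1)$,'' and you propose augmenting $\mathcal{F}_2$ with patch functions as a fallback. That worry is unfounded and no augmentation is needed. For any distribution $D$ whatsoever, the $2^{2n/3}$ points of smallest $D$-mass have total mass at most $2^{2n/3}/2^n = 2^{-n/3}$, by a one-line averaging argument (the masses sum to $1$ over $2^n$ points, so the lightest $2^{2n/3}$ cannot exceed a $2^{2n/3}/2^n$ fraction of the total). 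Letting $S$ be this set and $g$ agree with $f$ off $S$ and equal $1$ on $S$, one gets $\Pr_{\bx\sim D}[f(\bx)\neq g(\bx)] \leq D(S) \leq 2^{-n/3} < \eps$ and $|g^{-1}(1)| \geq |S| = 2^{2n/3}$, so $g \in \mathcal{F}_2$. Whether or not $D$ concentrates on $f^{-1}(1)$ is irrelevant --- you are not trying to find points \emph{outside} $\mathrm{supp}(D)$, just points of small total mass, which always exist. So your proposal is correct in its choices but you failed to close the upper bound and inserted an unnecessary ``enlarge the class further'' escape hatch that the clean argument makes moot.
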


For part~1, we can take ${\cal F}_1$ to be the class of all conjunctions over $\zo^n$.  The positive result for relative-error testing is established as Theorem~1 of \cite{rel-error-conj-DL}, and the negative result is Theorem~1.4 of \cite{CX16}.

For part~2, we take ${\cal F}_2$ to be the class of all $g: \zo^n \to \zo$ that have $|g^{-1}(1)| \geq 2^{2n/3}.$
For the lower bound, an easy birthday paradox argument given in Appendix~A of \cite{CDHLNSY25} shows that for any $\eps < 1$, any $\eps$-relative-error testing algorithm for ${\cal F}_2$ must make $\Omega(2^{n/3})$ black-box queries or samples from $f^{-1}(1)$.
For the upper bound, fix any $\eps=\omega(2^{-n/3})$ and any distribution ${\cal D}$ over $\zo^n$.  We claim that every $f: \zo^n \to \zo$ is $\eps$-close to an element of ${\cal F}_2$ under ${\cal D}$ (so a testing algorithm can simply accept without making any black-box queries or samples from $f^{-1}(1)$).  

To see why the claim is true, let $S \subset \zo^n$ be the $2^{2n/3}$ points of $\zo^n$ that have the smallest weight under ${\cal D}$ (breaking ties arbitrarily). We have that ${\cal D}(S) \leq 2^{2n/3}/2^n=2^{-n/3} < \eps$.  So letting $g$ be the function obtained from $f$ by fixing the value to 1 on every point in $S$, we have that $\Pr_{\bx \sim {\cal D}}[f(\bx) \neq g(\bx)] < \eps$. Since $g$ has $|g^{-1}(1)| \geq 2^{2n/3}$ we have $g \in {\cal F}_2$, and the claim is proved.

\section{An overview of $\ConjTest$ from \cite{rel-error-conj-DL}}\hypertarget{ConjTest}{}
\label{ap:overview-conjtest}

In this section, we present the relative-error tester for conjunction of \cite{rel-error-conj-DL} called $\ConjTest$. The authors first note that for any conjunction $f$ and for any $y \in f^{-1}(1)$, the function
\[f_{y}(x) :=f(x \oplus y)\] 
is an anti-monotone conjunction \footnote{A conjunction is an anti-monotone conjunction if it only contains negated literals.}. It's also is easy to see that if $f$  is $\epsilon$-relative-error far~from every conjunction, then $f_{{y}}$ must be $\epsilon$-relative-error far from every anti-monotone conjunction.

As such, the authors do not explicitly give $\ConjTest$, but instead give the following tester for anti-monotone conjunctions. To obtain $\ConjTest$, it suffices to simply draw a single $y \sim f^{-1}(1)$ and then run the tester for anti-monotone conjunctions (\Cref{algo: mono conjunction tester}) on $f_{y}$, {where $\MQ(f_y)$ and $\SAMP(f_y)$ can easily be simulated  using 
 $\MQ(f)$ and $\SAMP(f)$, respectively}.
\setcounter{algorithm}{18}
\begin{algorithm}[t!] 
\caption{Relative-error Anti-monotone Conjunction Tester. }
\label{algo: mono conjunction tester}
\vspace{0.15cm}\textbf{Input: } $\MQ(f)$ and $\SAMP(f)$ of some function $f:\zo^n \to \zo$ and $\epsilon$. \\
\textbf{Output: }``Reject'' or ``Accept.''

\begin{tikzpicture}
\draw [thick,dash dot] (0,1) -- (15.9,1);
\end{tikzpicture}
\begin{algorithmic}[1] \vspace{-0.15cm}
\Algphase{Phase 1: \vspace{-0.1cm}}
    \State Repeat the following $O(1/\epsilon)$ times :
    \State \hskip4em Draw $\bx,\by \sim \SAMP(f)$.
    \State \hskip4em If $\MQ(f)(\bx \oplus \by)=0$, halt and reject $f$.\vspace{-0.15cm}

\Algphase{Phase 2:\vspace{-0.15cm}}
\State Repeat the following $O(1)$ times :
    \State \hskip4em Draw $\bx \sim \SAMP(f)$.
    \State \hskip4em Draw a uniform random $\by \preceq \bx$.
    \State \hskip4em Draw $\bu \sim \SAMP(f)$.
    \State \hskip4em If $\MQ(f)(\by \oplus \bu)=0$, halt and reject $f$.

\State Accept $f$.\vspace{0.15cm}
\end{algorithmic} 
\end{algorithm}

From the above, it's easy verify that $\ConjTest$ indeed accepts if and only if every $\MQ$ queries it makes is answered with $1$.

\end{document}